\documentclass[11pt]{article}

\usepackage{natbib}

\usepackage{algorithm}
\usepackage{algpseudocode}
\usepackage{xcolor} 

\definecolor{highlightcolor}{RGB}{200, 0, 0}

\usepackage{amsthm}
\usepackage{amssymb}
\usepackage{graphicx}
\usepackage{thmtools,thm-restate}
\graphicspath{ {./images/} }
\usepackage{amsmath}
\usepackage{hyperref}
\usepackage[capitalise]{cleveref}
\usepackage{caption}
\newif\ifrestatingtheorem
\newcommand{\fnonlyonce}[1]{\ifrestatingtheorem
  \else
    \footnote{#1}\fi
}
\usepackage{booktabs}
\usepackage{setspace}

\usepackage[
  left=1in,
  right=1in,
  top=1in,
  bottom=1in
]{geometry}

\newtheorem{definition}{Definition}[section]
\declaretheorem[name=Theorem]{theorem}
\newtheorem{corollary}{Corollary}[theorem]
\declaretheorem[sibling=theorem, name=Lemma]{lemma}
\newtheorem{observation}[theorem]{Observation}

\declaretheorem[sibling=theorem, name=Example]{example}

\usepackage{etoolbox}
\AtBeginEnvironment{definition}{\setstretch{1.0}}
\AtBeginEnvironment{theorem}{\setstretch{1.0}}
\AtBeginEnvironment{corollary}{\setstretch{1.0}}
\AtBeginEnvironment{lemma}{\setstretch{1.0}}
\AtBeginEnvironment{observation}{\setstretch{1.0}}
\AtBeginEnvironment{proposition}{\setstretch{1.0}}
\AtBeginEnvironment{example}{\setstretch{1.0}}
\AtBeginEnvironment{restatable}{\setstretch{1.0}}
\AtBeginEnvironment{algorithmic}{\setstretch{1.0}}

\newcommand{\vutility}[3]{v_{#1,#2}^{Z^{#3}}}
\newcommand{\uutility}[3]{u_{#1,#2}^{Z^{#3}}}

\newcommand{\app}{\mathcal{A}}
\newcommand{\pos}{\mathcal{P}}

\theoremstyle{remark}
\newtheorem*{remark}{Remark}
\AtBeginEnvironment{remark}{\setstretch{1.0}}

\makeatletter
\renewcommand{\fps@algorithm}{h!}
\makeatother

\begin{document}

\title{Efficient Interview Scheduling  for Stable Matching}

\author{Moshe Babaioff\thanks{The Hebrew University of Jerusalem. Email: {\tt moshe.babaioff@mail.huji.ac.il}. Supported  by a Golda Meir Fellowship and the Israel Science Foundation (grant No. 301/24). }
\and Rotem Gil\thanks{The Hebrew University of Jerusalem. Email: {\tt rotem.gil@mail.huji.ac.il}. Supported  by the Israel Science Foundation (grant No. 301/24).}
\and Assaf Romm\thanks{The Hebrew University of Jerusalem, Israel. Email: {\tt Assaf.Romm@mail.huji.ac.il}. Supported  by United States-Israel Binational Science Foundation (Grant 2022417), the Israel Science Foundation (Grant 1796/22).\\ We thank Alon Eden, Noam Nisan and Ran I. Shorrer for their helpful comments.}}

\maketitle

\begin{spacing}{1}
\begin{abstract}\label{sec:abstract}
The study of stable matchings usually relies on the assumption that agents’ preferences over the opposite side are complete and known.
In many real markets, however, preferences might be uncertain and revealed only through costly interactions such as interviews. We show how to reach interim-stable matchings, under which all matched pairs must have interviewed and agents use expected utilities whenever true values remain unknown, while minimizing both the expected 
\emph{number of interviews} and the expected number of \emph{interview rounds}. We introduce two adaptive algorithms that produce interim-stable matchings: one operates sequentially, and another is a hybrid algorithm that begins by scheduling some interviews in parallel and continues sequentially. Focusing on cases where agents are ex-ante indifferent between agents on the other side, we show that the sequential algorithm performs $2$ interviews per agent in expectation.
We complement this by showing that any algorithm that performs less than $2$ interviews per agent, does not always guarantee interim-stability.
We also demonstrate that the hybrid algorithm requires only polylogarithmic expected number of rounds, while still performing only about $2$ interviews per agent in expectation.
Additionally, the interviews scheduled by our algorithms guarantee an interim-stable matching when Deferred-Acceptance is run after all interviews are completed.
\end{abstract} \end{spacing}
\section{Introduction}\label{sec:introduction}

Over the last few decades, matching market design theory has repeatedly proven effective in shaping and reshaping markets \citep{whathavewelearnedfromMD}. Recognizing the importance of stability, clearinghouses based on the Deferred Acceptance algorithm \citep{gale1962college} became the de facto standard for medical residency matching \citep[see, e.g.,][]{roth1999redesign}, controlled school choice \citep[see, e.g.,][]{abdulkadiroglu2005nyc,abdulkadiroglu2005boston}, and in other matching domains \citep[see, e.g.,][]{balinski1999tale,biro2008student,gonczarowski2020matchingisraelimechinotgapyear,romero1998implementation}. However, one issue that still poses a challenge to market designers is the management of pre-market interactions, with interview scheduling arguably being the most prominent example. 

The importance of pre-market interactions, and interviews in particular, arises from their effect on preference formation and reporting behavior in matching systems \citep{top_of_the_batch2022,srh2025}. Otherwise put, and somewhat in contrast to the standard assumptions of most of the matching theory literature, in some markets participants do not fully know their own preferences. Instead, participants often experience ex-ante uncertainty regarding the utility they will derive from each alternative. Some of this uncertainty can be alleviated through interactions that reveal the value of the match to both sides, but consume time and resources.

Considering both the need for stable outcomes and the costs involved in pre-market interactions, we present interview-scheduling algorithms for matching markets that always output an \textit{interim-stable} matching, and do this \textit{adaptively} in order to be \textit{efficient}. Our algorithms are efficient in the sense of minimizing the expected (total) number of interviews, and minimizing the expected number of interviewing rounds.
They are adaptive in the sense that after each round, the next set of interviews is determined based on the information revealed so far. Such a centralized and adaptive approach is natural in some real-life environments in which a centrally coordinated process can substantially reduce inefficiencies compared to decentralized interviewing.\footnote{When candidates accept as many interviews as possible and programs invite broadly, the result is an excessive number of low-yield and often unnecessary
interviews \citep{InterviewHoarding}. In the context of the American medical residency matches, both medical practitioners and academic market designers suggested that implementing a centralized interview-scheduling process could dramatically reduce the total number of interviews \citep{AreWeInterviewingTooMany,Wapnir2021}.}

Finally, our algorithms are interim stable as they reach an outcome in which (i) all matched pairs have conducted an interview, and (ii) given their interim beliefs---defined as revealed values for interviewed pairs and prior expectations for non-interviewed pairs---no pair of agents that are currently unmatched can benefit from deviating from the proposed matching and matching among themselves \citep{ashlagi2025stablematchinginterviews}. Importantly, this relaxation is what enables the mechanism to reach ``stability'' without requiring every match value to be revealed, thus avoiding unnecessary costly interviews.

Most of our results pertain to a model where agents on one side of the market are ex-ante indifferent between agents on the other side, which we call the ex-ante equivalent setting. When both sides satisfy this property, the market is \emph{bilaterally ex-ante equivalent}. This assumption mirrors the random and uniformly-distributed preferences model employed by \citet{doi:10.1137/0402048}, \citet{AshlagiKanoriaLeshno2017}, and many others.

\subsection{Overview of the Results}
How many interviews are necessary and sufficient to reach an interim-stable matching under uncertainty, and how many interview rounds are required to achieve such stability? To answer this, we study two algorithms that extend the classic Deferred Acceptance algorithm (DA) to conduct interviews and construct stable matchings under uncertainty: the first is completely sequential, while the other takes a hybrid approach that begins with parallel interviews and continues sequentially.\footnote{The sequential algorithm can also be viewed as a parallel one with a single interview per batch.}\footnote{Unlike DA, where the outcome is independent of the order in which proposers make their offers \citep{BLUM2002429}, the two approaches may lead to different interim-stable matchings.}

In the sequential approach, interviews are conducted one by one, allowing the algorithm to fully exploit the information gained in each round before deciding on the next interview. Our sequential approach minimizes redundant interviews, since no new interview is conducted unless it may affect the outcome.

In contrast, a parallel (or hybrid) approach allows multiple interviews to be conducted simultaneously, better utilizing limited time resources when interviews must occur within a restricted time window.  
The drawback of parallelism is potential redundancy. To illustrate, consider a setting with applicants and positions, and an instance in which positions are ex-ante equivalent, and agree that applicant $a_1$ is the most preferred applicant, both in terms of ex-ante value and across all possible interview outcomes. If $a_1$ is eventually matched to a position that interviewed other applicants while $a_1$ was interviewing with other positions, then every prior interview that the position conducted was redundant. Intuitively, the sequential approach aims to reduce the expected number of interviews (at the expense of potentially increasing the number of interview rounds), while the parallel one aims to reduce the number of interview rounds (at the expense of potentially increasing the expected number of interviews).

We analyze both approaches under the assumption of bilaterally ex-ante equivalence. Furthermore, for simplicity we assume throughout that each value distribution's median equals its expectation, as is also assumed by \citet{ashlagi2025stablematchinginterviews}. That is, when applicant $a$ interviews with position $p$, there is exactly a $50\%$ chance that $a$'s value for matching with $p$ increases relative to her ex-ante belief, and similarly for the position.\footnote{This assumption determines the constants used in the theorems below. It can be relaxed to the weaker assumption that for each applicant and position, the probabilities of value increase following an interview are uniformly bounded away from both $0$ and $1$.}

Our first main result establishes that only 2 interviews per applicant in expectation is enough
to reach interim stability in bilaterally ex-ante equivalent settings.
\begin{restatable}{theorem}{TwoInterviewsMainResult}\label{thm:2n}
    There exists an algorithm (\cref{algorithm:serial-adaptive}) that for any instance terminates with a matching that is interim stable, and in a bilaterally ex-ante equivalent setting with $n$ applicants and $m\ge n$ positions, the expected number of interviews is $2\cdot n+O(\log^3n)$. Thus, its expected number of interviews per applicant is $2+o(1)$. {In addition, for any sufficiently large $n$, with high probability every agent participates in at most $O(\log n)$ interviews.
} 
\end{restatable}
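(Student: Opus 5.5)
I would first fix the sequential algorithm (\cref{algorithm:serial-adaptive}) as an interview-driven version of applicant-proposing Deferred Acceptance. Unmatched applicants are processed one at a time. The current applicant $a$ considers positions in decreasing order of her interim value: revealed values for positions she has interviewed, prior expectations otherwise. Whenever the top candidate $p$ has not yet interviewed $a$, $p$ first checks whether it would even consider $a$, i.e., whether $a$'s prior expected value exceeds $p$'s interim value for its current tentative match. Only in that case is the interview $(a,p)$ held, after which both sides update. The proposal is accepted iff $p$ prefers $a$ (by realized value) to its tentative match, and a displaced applicant re-enters the queue.

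Interim stability for every instance follows the standard DA argument applied to interim preferences. Every matched pair has interviewed, because acceptance happens only after an interview. For a blocking pair $(a,p)$, $a$ must have reached $p$ in her interim order; $p$ then rejected her, either directly or because $p$'s interim value for its partner already exceeded $a$'s (expected or realized) value. Moreover, $p$'s partner only improves over time. Interim values of non-interviewed pairs are priors and never change, so the usual monotonicity argument goes through.

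For the count, in the bilaterally ex-ante equivalent setting all priors on each side are equal, so I can encode interview outcomes as ``high'' or ``low'' relative to the mean, each with probability $1/2$, independently on the two sides. An applicant who has not yet interviewed anywhere proposes to a fresh position or one whose current partner is ``low'' for it. The key observation is that an unmatched position, or a position whose tentative partner is rated low, accepts a new interviewee. A position whose partner is rated high rejects any non-interviewed applicant without an interview, since the prior mean is below a high value. So the interviews at each position form the sequence: first interview, then further interviews only while its current holder is rated low. A holder is rated high w.p.\ $1/2$, after which the position closes to new interviews, apart from the case where that holder leaves, which cannot happen since applicants leave only when rejected. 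Symmetrically, an applicant who rates her position high never proposes elsewhere. If she rates it low, she prefers any uninterviewed position (value equal to the mean) and will leave only if such a position is still open to her.

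I would therefore set up a potential or charging argument. Each interview is ``terminal'' (both sides rate high) w.p.\ $1/4$ and otherwise creates at most a bounded number of follow-up interviews. I would charge each interview to the applicant who initiated it, and show that each applicant's number of interviews is dominated by a geometric-type process yielding expectation $2$ per applicant, up to boundary effects near the end when few open positions remain. The expected tail contribution while the number of open (non-closed) positions is $O(\log n)$ then contributes the $O(\log^3 n)$ error. The $O(\log n)$ per-agent bound w.h.p.\ follows from the geometric tail: each interview of an agent closes it off with constant probability, and a union bound over $n+m$ agents finishes.

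I expect the main obstacle to be the exact constant $2$. It requires a precise accounting of how low-rated matches interact, since a low--low match can be displaced on both sides. I would first try to define a Markov chain on the counts of (unmatched, low-held, high-held) positions and solve for the expected interviews via a martingale whose drift per interview is exactly $1/2$ unit of ``finalized applicant'', then bound the end-game deviation separately.
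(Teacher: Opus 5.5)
Your stability argument matches the paper's and is fine. Your observation about positions is also correct: a position holding an applicant it rates above $U$ never interviews again, because it only trades up (\cref{lem:position-stops-interview-after++}).

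\textbf{The gap is the ``symmetric'' claim for applicants.} An applicant can be rejected from a match she rates high, even a mutually-liked one. Suppose $p$ holds $a'$ with both ratings high. An applicant $a$ who interviewed $p$ earlier, rated it low (so did not propose then), and whom $p$ rates above $a'$, can later run out of better options and propose to $p$. Then $p$ drops $a'$, and $a'$ re-enters and may in turn displace someone else. This cascade is exactly the main technical difficulty the paper identifies, and nothing in your outline rules it out. The consequences are:
\begin{itemize}
\item ``Each interview closes the agent off with constant probability'' holds for positions but is unjustified for applicants, so your w.h.p.\ $O(\log n)$ bound for applicants does not follow.
\item The claim that a non-terminal interview spawns a bounded number of follow-ups is unsupported.
\item The constant $2$ is not reached. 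It requires that an applicant who likes the position she interviews stays matched there for good. At a held position she may instead be rejected after the interview, or be accepted and displace the holder, who then continues. Your ``terminal with probability $1/4$'' accounting would give $4$, not $2$.
\end{itemize}
The Markov-chain/martingale idea and the $O(\log^3 n)$ end-game term are stated as hopes, not arguments.

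What is missing is a way to guarantee that no applicant is ever forced to propose to a position she rates low. The paper gets this as follows.
\begin{itemize}
\item \emph{Split.} Take $k=\lceil 10\log n\rceil$ and split the run at $t_0$, the first iteration with exactly $k-1$ unmatched positions.
\item \emph{Before $t_0$.} Assume $k$-eventual-happiness for applicants. Its probability is at least $1-n^{-9}$, computed through a pre-sampled quantile table indexed by applicant and interview number, which decouples the randomness from the adaptive choices. The tie-break toward unmatched positions, plus the presence of at least $k$ of them before $t_0$, implies that no applicant interviews at or proposes to a matched position. So each applicant interviews fresh positions until the first one she likes, is accepted, and is never displaced. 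This is a truncated geometric count with success probability at least $1/2$, giving $2n$.
\item \emph{After $t_0$: no proposals to disliked positions.} The paper proves by induction that no unmatched applicant proposes to a position she rates low. The invariant is that at least $k^3-Q(t)\ge k$ positions dislike their holder, where $Q(t)$ counts mutually-liked interviews since $t_0$. This holds at $t_0$ w.h.p.\ by Chernoff, since each of the $m-k+1$ first holders is rated by an independent fair coin. Hence an applicant with fewer than $k$ interviews always has an uninterviewed position willing to interview her.
\item \emph{Mutual likes are never broken.} By \cref{lem:minus-breaks+}, any later proposer to a mutually-liked position must rate it low. Combined with the previous step, mutually-liked pairs are never broken. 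So under $k$-eventual-mutual-happiness each applicant does at most $k$ interviews.
\item \emph{Counting end-game applicants.} At most $k(k-1)$ applicants interview after $t_0$. Each of them has interviewed one of the $k-1$ positions open at $t_0$, and each such position stops after at most $k$ interviews.
\item \emph{Assembling the bound.} This gives $k^2(k-1)=O(\log^3 n)$ on the good event. The bad event has probability $O(n^{-3})$ and costs at most $nm\cdot O(n^{-3})=o(1)$ when $m<n+k-1$. For larger $m$ one takes $k=m-n+1$, and the run ends before $t_0$.
\end{itemize}
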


 We further establish a lower bound (\cref{thm:lower-bound-iid}): in this setting, any algorithm that for every $m$ and $n$
 always guarantees interim stability, must sometimes perform more than $2\cdot {\min\{m,n\}}-2$ interviews. 

We note that while for $n\leq m$ the theorem establishes the upper bound of 
about $2$ interviews per applicant only when  $n$ is large enough, our simulations indicate that the expected number of interviews per agent converges to approximately 2, even for relatively small market sizes (see \cref{sec:simulations}).

This result of two interviews per applicant aligns with the intuition that, in the bilaterally ex-ante equivalent setting, to ensure interim stability, there cannot be both an applicant and a position whose match values fall below their baseline expected values and that did not interview with each other. Furthermore, 
under our ``median equals expectation'' simplifying assumption, an agent encounters a partner whose realized value is above the baseline expected value, after two interviews in expectation. An efficient algorithm chooses one of the sides and tries to bring all agents on that side to this state. When the ex-ante equivalent condition holds for only one side of the market, we still obtain a constant bound (of $4$) on the expected number of interviews per agent. We further prove that there exists an algorithm that uses parallelism and achieves interim stability in an expected $O(\log^3 n)$ interview rounds, while the expected number of interviews is still only $2\cdot n + O(\log^3 n)$. 

\begin{restatable}{theorem}{ParallelMainResult}\label{thm:hybrid}
     There exists an algorithm that conducts interviews in parallel 
     (\cref{algorithm:parallel-bilateral-instantiation}) and always terminates with an interim-stable matching, and in a bilaterally ex-ante equivalent setting with $n$ applicants and $m\ge n$ positions, has the following guarantee:
     its expected number of interview rounds is $O(\log^3 n)$  and its expected number of interviews is $2n +O (\log^3 n)$. {Moreover, when $n = m$, with high probability every agent participates in at most $O(\log n)$ interviews.}
     Additionally, when $m\ge n +  \lceil 10 \log n \rceil $ its  expected number of interview rounds is at most $4+\log n$.\fnonlyonce{Throughout this paper, all logarithms are taken to base 2.}
\end{restatable}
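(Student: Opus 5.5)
The plan is to design the hybrid algorithm as a parallel first phase followed by a sequential clean-up phase that invokes \cref{algorithm:serial-adaptive}, and to analyze the two phases separately. In the bilaterally ex-ante equivalent setting, interim stability essentially requires that every applicant either be matched to a position that revealed a ``high'' (above-expectation) value to her, or have interviewed every position that could block. The first phase therefore runs rounds of parallel interviews. In each round, every applicant who has not yet found a mutually acceptable partner is paired with a distinct position not yet interviewed by her, chosen uniformly at random among positions that are still ``available'' (not tentatively held by a better match). Holds are then updated DA-style using the revealed values. Under the median-equals-expectation assumption, each interview is, independently, a success for the applicant with probability $1/2$ and a mutual success with probability $1/4$. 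Hence a constant fraction of the active applicants become settled each round, and after $O(\log n)$ rounds only $O(\log n)$ (or polylogarithmically many) applicants remain active with high probability.

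Next I would bound the number of interviews. The key point is that an interview in the parallel phase is wasted only when two applicants compete for the same position, or when an already-settled position is displaced. I would show that the expected number of interviews per active applicant is still about $2$: each interview succeeds for the applicant with probability $1/2$, giving a geometric number of trials with mean $2$. Collisions with positions that are already held affect only the $O(\log^3 n)$ term, because collisions matter only once the number of free positions is comparable to the number of active applicants, which happens at the very end. Once few agents remain, I would switch to \cref{algorithm:serial-adaptive}, restricted to the residual market. Applying \cref{thm:2n} to a residual market of size $k=O(\log n)$ with high probability costs $2k+O(\log^3 k)$ interviews and at most polylogarithmically many rounds. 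A low-probability failure event, in which many agents remain active, is charged via the crude bound $O(nm)$ times a probability of $n^{-c}$. This yields $O(\log^3 n)$ expected rounds and $2n+O(\log^3 n)$ expected interviews. Interim stability always holds, deterministically, because the final phase is exactly the serial algorithm, whose correctness argument from \cref{thm:2n} (no low-low pair left uninterviewed) carries over when it starts from the partial state. The $O(\log n)$ per-agent bound for $n=m$ follows as in \cref{thm:2n}: each agent's interviews are dominated by a geometric variable plus the residual phase, and a union bound over agents finishes the argument.

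For the unbalanced case $m\ge n+\lceil 10\log n\rceil$, I would argue that there are always at least $10\log n$ positions that have never been interviewed. An active applicant can therefore always be assigned a fresh position without collisions, so no sequential phase is needed. The number of rounds is then the maximum over applicants of the number of trials until success. Since $n$ independent geometric$(1/2)$ variables have an expected maximum of at most $\log n+O(1)$, and a careful bound gives $\log n+4$, this yields the stated bound.

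The main obstacle is the interview count in the parallel phase. One must show that parallel redundancy (positions displaced after they have already interviewed others) costs only $O(\log^3 n)$ extra interviews rather than $\Theta(n)$. I would first try to couple the parallel process with the serial one, so that each parallel interview corresponds to an interview that the serial algorithm would also perform, except for collisions. I would then bound the expected number of collisions by $O(\log^2 n)$ per round over the $O(\log n)$ rounds after the free pool becomes small.
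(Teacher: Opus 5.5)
Your overall shape (a parallel phase, then \cref{algorithm:serial-adaptive}, with a rare failure event charged at $nm$) matches the paper's. However, each of your two phases is missing the idea that makes it analyzable.

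\textbf{The parallel phase.} Your rule picks uniformly among positions ``not held by a better match''. That set includes held positions whose holder they dislike, so settled applicants can be displaced from the first rounds on. As a result, the claims that a constant fraction settle for good each round, that each applicant's interview count is geometric with mean $2$, and that the number of rounds is a maximum of geometric variables are all unsupported. The $O(\log^2 n)$ collisions-per-round bound is only asserted.

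Your criterion that collisions matter only once free positions are comparable to active applicants already holds from the start when $m=n$, since the two counts are always equal. In the case $m\ge n+\lceil 10\log n\rceil$, it is not true that $10\log n$ positions stay never-interviewed: round one alone touches $n$ distinct positions.

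The paper's missing ingredient is reserved slack. Set $k=\max\{\lceil 10\log n\rceil,m-n+1\}$. Only the first $\min\{n,m-k+1\}$ applicants enter the parallel phase, and they interview only \emph{unmatched} positions. On the event of $k$-eventual-happiness, each active applicant has had fewer than $k$ interviews, while at least $|\app''|+k-1$ positions are free. So each active applicant has at least $|\app''|$ free, un-interviewed neighbours, and Hall's condition gives a full assignment every round (\cref{lemma:parallel-matched-stays-matched}). Consequently nobody is ever displaced, each applicant keeps the first position she likes, and interview counts are (truncated) geometric. Phase~1 lasts at most $k$ rounds, and $3+\log n$ in expectation when $m\ge n+k-1$. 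A fallback (all interviews, then DA) covers the rare case where Hall's condition fails.

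\textbf{The clean-up phase.} You cannot apply \cref{thm:2n} to a ``residual market'' of leftover applicants and free positions. Restricting to it breaks interim stability: a leftover applicant stuck with a residual position she dislikes forms a blocking pair with a matched position that dislikes its holder and never interviewed her. If instead \cref{algorithm:serial-adaptive} runs on the whole market (as it must), leftover applicants can displace matched agents. Controlling these chains is exactly the hard part, and it is not an instance of \cref{thm:2n}.

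The paper makes the hand-off state identical to the state of \cref{algorithm:serial-adaptive} at the transition point $t_0$ and uses \cref{lem:interviews-after-t0}. Under $k$-eventual-mutual-happiness, and with at least $k^3$ positions disliking their match at $t_0$, the following hold. No applicant proposes to a position she dislikes, and no mutually-liking pair breaks. At most $k(k-1)$ applicants interview after $t_0$, each at most $k$ times. The probability of the $k^3$-unhappy event itself uses that each position holds the first applicant who liked it, so its own value for her is an independent fair coin. A parallel phase that upgrades held positions destroys this independence.

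Finally, interim stability does not come from ``no low-low pair left uninterviewed''. It comes from the Gale--Shapley invariant that every rejection is justified by fixed interim values. This requires acting in interim-preference order on every instance, and running the sequential phase on the full market.
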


Finally, in some cases, it may be natural to conduct all the required interviews prior to running an independent matching procedure.\footnote{We note that, under a different model, \citet{vohra2024matching} point out inefficiencies that can arise when interviews are conducted in a separate stage prior to matching.} Our results show that, in bilaterally ex-ante equivalent instances, using our algorithms to schedule interviews and running the applicant-proposing DA on the realized interview values yields an interim-stable matching with high probability.

\subsection{Discussion and Open Problems}
Our results demonstrate that, under ex-ante equivalence assumptions, interim stability can be achieved with only a constant number of interviews per agent. Our hybrid algorithm also achieves interim stability with a polylogarithmic number of interview rounds.
Several natural questions remain open.
Most notably, it is unclear whether a linear bound on the expected number of interviews can be achieved in general instances, without ex-ante equivalence assumptions.\footnote{Our simulations indicate that our sequential algorithm has a linear bound on the expected number of interviews in the setting in which applicants ex-ante agree on the order of positions, and positions ex-ante agree on the order of applicants  (\cref{appendix:simulations-exp-sequential}).} Similarly, while we obtain polylogarithmic interview-round complexity in bilaterally ex-ante equivalent markets, it remains open whether such guarantees
can be extended to more general environments.

\subsection{Additional Related Work}
Our work contributes to the literature on two-sided matching markets under preference
uncertainty, where agents do not fully know their preferences ex-ante, and they learn their preferences by costly interactions with the agents on the other side of the market (interviews). Prior research has sought to mitigate the resulting frictions through various mechanisms, including signaling, constraints on application numbers, and the centralized coordination of interviews.

Focusing specifically on such centralized interview coordination, \citet{ashlagi2025stablematchinginterviews} establish that interim-stable matchings can be reached under preference uncertainty using only a polylogarithmic number of interviews per agent, when agents agree on the order of the opposite side. Specifically, they propose an adaptive algorithm that extends DA, which with high probability schedules $O(\log^2 n)$ interviews for each applicant and each position, and produces an interim-stable matching. They also propose a non-adaptive algorithm for tiered random markets, which with high probability achieves interim stability with  $O(\log^3 n)$ interviews per agent. 
To the best of our knowledge, there is no existing literature that studies the problem of minimization of the number of interview rounds. However, in the related context of distributed settings with known preferences, \citet{10.1145/2767386.2767424} propose algorithms that achieve almost stable matchings within a polylogarithmic number of communication rounds.

As mentioned above, signaling is another method used in environments with costly interviews, where agents cannot feasibly interact with all potential partners. By allowing agents to send a limited number of signals indicating interest, these mechanisms facilitate screening and alleviate congestion, and have been shown to improve welfare even in settings with full preference certainty \citep{10.1257/mic.5.2.99,10.1007/978-3-030-04612-5_16}. \citet{allman2025signalinginterviewsrandommatching} study signaling in random matching markets, demonstrating that simple signaling protocols can achieve interim stability with a limited number of interviews. To mitigate the congestion arising from communication, \citet{ashlagi2020clearing} show that when preferences are fully known, stable matchings can be found with low communication complexity.

Another area of research investigates the effects of constraining the total number of interactions. \citet{agarwal2023stablematchingchoosingproposals} show that agents 
can restrict the number of applications without affecting their match.
\citet{cole2025distributedinterviewselectionstable} study interview selection mechanisms that mitigate interview congestion while maintaining a low probability of matching failure, and \citet{beyhaghi_et_al:LIPIcs.ITCS.2021.74,skancke2021welfare} study how limiting interviews makes matching more efficient.
\citet{lee2017interviewing} demonstrate that in a two-stage matching game, consisting of an interview stage followed by DA, the total number of matches is maximized when firms coordinate to interview overlapping sets of candidates.

Reduction in interaction costs, such as the shift to virtual interviews, can also introduce new inefficiencies. \citet{kadam2021interviewing} and \citet{ InterviewHoarding} demonstrate that reduced costs lead to excessive interviews per candidate, which can have a negative effect on the final matches.

Finally, our work also relates to the literature on preference discovery under incomplete information, where agents learn their rankings via limited refinements and stability is evaluated with respect to the underlying actual preferences \citep{10.1145/2492002.2482607, 10.5555/2540128.2540145,Drummond_Boutilier_2014}. Moreover, \citet{bampis_et_al:LIPIcs.APPROX/RANDOM.2024.17} minimize the number of queries required to verify or find stable matchings under one-sided uncertainty, measuring performance via competitive ratios against an offline oracle. Our approach targets two-sided uncertainty where values are drawn from an underlying distribution, aiming to reach interim stability with respect to the beliefs of the agents.

{\paragraph{Organization}
\cref{sec:model} introduces the model and the notations. \cref{sec:serial} presents our sequential adaptive algorithm and establishes bounds on its  expected number of interviews. \cref{sec:parallel} presents our hybrid adaptive algorithm and establishes bounds on the number of interview rounds. \cref{sec:interview-system} identifies conditions for decoupling the interview scheduling from the matching mechanism, and \cref{sec:conclusion} concludes.} Some proofs and simulations are relegated to the appendix. \section{Preliminaries}\label{sec:model}
\subsection{Model}\label{sec:model-inside-preliminaries}
There is a set $\app$ of $n\geq 2$ \emph{applicants}, and a set $\pos$ of $m\ge2$ \emph{positions}.
We use $\app\cup \pos$ to denote the set of \emph{agents} (an agent is either an applicant or a position). 
Each agent has a prior distribution on the non-negative \emph{cardinal value} of a match with each of the agents on the other side of the market.
When an applicant and position meet for an interview, the applicant realizes her value for the match with that position, and the position independently realizes its value for the match with that applicant. 
We assume that the value of not being matched is $0$ (thus, we assume that every agent prefers to be matched over being left unmatched).

Specifically, each applicant $a_i\in \app$ has a continuous prior distribution $F_{i,j}$ over her cardinal value for position $p_j\in \pos$. We assume that $F_{i,j}$ is supported on a bounded set of non-negative numbers. We denote the expected value of $F_{i,j}$ by $V_{i,j}$, and use  
$\rho_{i,j}=\Pr_{v\sim F_{i,j}}[v> V_{i,j}]$ to denote the probability that a value sampled from $F_{i,j}$ is higher than its expected value $V_{i,j}$. 

Similarly, each position $p_j\in \pos$
has a continuous prior distribution $G_{j,i}$ of its cardinal value for applicant $a_i\in \app$. We assume that $G_{j,i}$ is supported on a bounded set of non-negative numbers.
We denote the expected value of $G_{j,i}$ by $U_{j,i}$, and we use $\eta_{j,i}=\Pr_{v\sim G_{j,i}}[v> U_{j,i}]$ to denote the probability that a value sampled from $G_{j,i}$ is higher than its expected value $U_{j,i}$.
{\footnote{We can relax the assumption that $F_{i,j}$ and $G_{j,i}$ are continuous, and instead only require that $\Pr_{v\sim F_{i,j}}[v= V_{i,j}]=0$ and $\Pr_{v\sim G_{j,i}}[v= U_{j,i}]=0$, without affecting any of our results. }}
A \emph{matching-with-interviews instance} (or simply an \emph{instance}) {is a tuple
$I = (\app, \pos, \mathbf F, \mathbf G)$, where $\mathbf F =\{F_{i,j}\}_{i\in [n],j\in [m]}$ and where ${\bf G}=\{G_{j,i}\}_{j\in [m], i\in [n]}$}. 

Throughout the paper, we assume that for all $i\in [n]$ and $j \in[m]$ we have $\rho_{i,j}=\eta_{j,i}=\frac{1}{2}$. 
That is, we assume that for each distribution, the median is the same as the expected value: the probability of the value being above the expected value is the same as the probability of it being below it.\footnote{Note that this is a weaker assumption than assuming the probability is symmetric around the expected value (as in \cite{ashlagi2025stablematchinginterviews}).} 

We say that \emph{applicants view positions as ex-ante equivalent} 
if there exists a constant $V$ such that $V_{i,j} =V$  for every $i\in[n],j\in[m]$. 
Similarly, \emph{positions view applicants as ex-ante equivalent} if there exists a constant $U$ such that $U_{j,i}=U$  for every $i\in[n],j\in[m]$.
We say that a setting is \emph{bilaterally ex-ante equivalent} if applicants view positions as ex-ante equivalent and positions view applicants as ex-ante equivalent. {A special case covered by the definition of \emph{applicants view positions as ex-ante equivalent} 
is the i.i.d.\ environment, where there exists a single distribution $F$ with $F_{i,j}=F$ for all 
$i\in[n],j\in[m]$. In this case, all applicant values for all positions are drawn i.i.d.  
Analogously, \emph{positions view applicants as ex-ante equivalent} whenever 
there exists a constant $U$ such that $U_{j,i} =U$  for every $i\in[n],j\in[m]$. }
We also consider a relaxation where expected values are equal, while tie-breaking rules may differ.
In \cref{appendix:extensions-almost} we show that our main results for ex-ante equivalent setting extend to that case as well.

We often consider more general settings where agents are not necessarily ex-ante equivalent. We say that \emph{positions (ex-ante) agree on the order of applicants} if there is an order over applicants such that
for every position $p_j\in \pos$ and for every $i,i'$ such that $n\geq i'>i\geq 1$, it holds that $U_{j,i'}\leq U_{j,i}$ (so
lower index applicants are ex-ante preferred for every position).\footnote{Without loss of generality, this order is arbitrarily set to be by order of indices.}
Similarly, we say that \emph{applicants  (ex-ante) agree on the order of  positions} if for every applicant $a_i\in \app$ and for every $j,j'$ such that  
$m\geq j'>j\geq 1$, it holds that $V_{i,j'}\leq V_{i,j}$.

{\subsection{Matchings and Interim Stability}}
Given {an instance}, we look for a stable matching in the market. 
A \emph{matching} matches at most one applicant to each position, and at most one position to each applicant. An applicant or a position may stay unmatched. An applicant is matched to a position if and only if the position is matched to that applicant. 
Formally, a matching is a function $\mu:\app\cup \pos\to \app\cup \pos$ satisfying (1) $\forall x\in \app, \mu(x) \in \pos \cup \{x\}$, (2) $\forall x\in\pos, \mu(x) \in \app \cup \{x\}$, and (3) $\mu^2(x) = x$. The case where $\mu(x) = x$ is interpreted as $x$ remaining unmatched. The \emph{empty matching} has $\mu(x)=x$ for every $x\in \app \cup \pos$.

We assume that an applicant cannot be matched to a position without an interview, and that once an applicant and position meet for an interview, the value they assign to the match is fixed forever (so there is no point for them to meet for an interview again). 
An \emph{interview} specifies an applicant $a_i$ and a position $p_j$ which interviews the applicant. 
The \emph{result of the interview} of applicant $a_i$ and position $p_j$ is a pair of samples, one from $F_{i,j}$ (the value of applicant $a_i$ for the match with $p_j$) and the other from $G_{j,i}$ (the value of position $p_j$ for the match $a_i$). 
Formally, an interview (result) is a tuple $(a_i,p_j, v_{i,j}, u_{j,i})\in \app\times \pos\times  \mathbb{R}_{\geq 0}\times  \mathbb{R}_{\geq 0}$ 
such that $v_{i,j}\sim F_{i,j}$ and $u_{j,i}\sim G_{j,i}$. We use $Z$ to denote a set of interviews: a set in which no applicant-position pair appears more than once. We say that a set of interviews \emph{includes an interview of $a_i$ at $p_j$}, if a tuple of the form 
$(a_i, p_j, v, u)$ is in $Z$. In that case, we say that $u$ and $v$ are the results of the interview (realized values).  With a slight abuse of notation, we write $(a_i, p_j)\in Z$ when the set of interviews $Z$ includes an interview of $a_i$ at $p_j$.

\begin{definition}[Interim Utility]
   Fix an instance and a set of interviews $Z$. If $Z$ includes an interview of $a_i$ at $p_j$, and $u$ and $v$ are the results of the interview, then the \emph{interim utility} $\vutility{i}{j}{}$ of applicant $a_i$ for position $p_j$ is $v$, and the \emph{interim utility} $\uutility{j}{i}{}$ of position $p_j$ for applicant $a_i$ is $u$. Otherwise (there was no interview), the interim utilities are equal to the expected utilities according to the priors:
   $\vutility{i}{j}{}= V_{i,j}$ and $\uutility{j}{i}{}=U_{j,i}$, respectively. 
\end{definition}

\begin{definition}[Interim Preference]

For each applicant $a_i$ and a set of interviews $Z$, let $\succ_{a_i}^{Z}$ be the interim preference relation over the set of positions together with the option of remaining unmatched. We say that an applicant $a_i$ \emph{interim prefers position $p_j$ over $p_{j'}$ with respect to $Z$}-and, in short, \emph{interim prefers $p_j$ over $p_{j'}$}-if $\vutility{i}{j}{} > \vutility{i}{j'}{}$. We denote this by $p_j \succ_{a_i}^{Z} p_{j'} $. {In addition, if $\mu(a_i)=a_i$ (i.e., $a_i$ is unmatched), we write $p_j \succ_{a_i}^{Z} \mu(a_i)$ for every position $p_j$.

Similarly, for each position $p_j$ and set of interviews $Z$, let $\succ_{p_j}^{Z}$ denote the interim preference relation over the set of applicants together with the option of remaining unmatched.} We say that a position $p_j$ \emph{interim prefers applicant $a_i$ over $a_{i'}$ with respect to $Z$} if $\uutility{j}{i}{} > \uutility{j}{i'}{} $.
We denote this by $a_i \succ_{p_j}^{Z} a_{i'} $. {If $\mu(p_j)=p_j$ (i.e., $p_j$ is unmatched), we write $a_i \succ_{p_j}^{Z} \mu(p_j)$ for every applicant $a_i$.}

We define the corresponding \emph{weak} interim preference relations 
$\succcurlyeq_{a_i}^{Z}$ and $\succcurlyeq_{p_j}^{Z}$ analogously.
\end{definition}
A blocking pair is a pair of an applicant and a position that are not matched to each other,  and that interim prefer to deviate and match with each other, over staying with their matched partners. 
\begin{definition}[Blocking Pair]
Fix an instance, a set of interviews $Z$ and a matching $\mu$. 
A pair $(a_i,p_j)$ such that $\mu(a_i) \neq p_j$ is a \emph{blocking pair for $\mu$ given $Z$}, if $p_j\succ_{a_i}^Z\mu(a_i)$ as well as $a_i\succ_{p_j}^Z\mu(p_j)$.
\end{definition}

\begin{definition}[Interim-Stable Matching] 
    Fix an instance, a set of interviews $Z$ and a matching $\mu$. The matching $\mu$ is \emph{interim stable given $Z$} if all matched pairs have interviewed each other, and there are no blocking pairs with respect to the interim utilities given $Z$.
\end{definition}

 \section{Minimizing the Number of Interviews}\label{sec:serial}
In this section, we consider the problem of designing an algorithm to minimize the expected number of interviews. 
We present a sequential adaptive algorithm (\cref{algorithm:serial-adaptive}), an algorithm that schedules interviews sequentially, and adaptively  determines the next interview based on the results of prior interviews. 
We show that it always yields an interim-stable matching, and we analyze the expected number of interviews it performs. 
We present two results: First, for any instance in which $m$ positions view applicants as ex-ante equivalent, we prove that the expected number of interviews is at most $4m$. Second, in the bilaterally ex-ante equivalent setting with $n$ applicants and $m\ge n$ positions, we show that the expected number of interviews is $2\cdot n+O(\log^3n)$.  Thus, the average number of interviews an applicant participates in approaches $2$ as the market becomes large.

We complement our positive results with a lower bound. We show that for bilaterally ex-ante equivalent settings any algorithm that for every $n$ and every $m$ always guarantees interim stability, must sometimes perform more than $2\cdot \min\{m,n\}-2$ interviews (\cref{thm:lower-bound-iid}).

The section is organized as follows. 
In \cref{subsec:serial-alg}, we present \cref{algorithm:serial-adaptive} and introduce the basic notation and terminology that will be used throughout the section.
In \cref{subsec:one-sided}, we analyze the case in which positions view applicants as ex-ante equivalent.
In \cref{sec:bilateral}, we introduce the notions required for the bilaterally ex-ante equivalent setting and carry out the analysis leading to the bound on the expected number of interviews in this setting. Finally, in \cref{sec:lower-bounds}, we establish a lower bound on the number of interviews required for interim stability.

\subsection{The Sequential Adaptive Algorithm}\label{subsec:serial-alg}

{In this section we present the sequential adaptive algorithm (\cref{algorithm:serial-adaptive}), which extends applicant-proposing DA  to preference orders induced by agents’ interim beliefs,
but with two key modifications. The first modification arises when DA dictates that an applicant propose to a position, although she has not yet interviewed with the position. In that case, the pair first meets for an interview (unless the position rejects the applicant without an interview), yet no proposal is made at that step.
Interim preferences are then updated, and the DA proposal step is resumed according to the updated preference order. Second, ties are broken in favor of positions whose current tentative match has the lowest interim value.}

\begin{algorithm}
\caption{\textsc{Sequential\_Adaptive\_Algorithm}}\label{algorithm:serial-adaptive}
\begin{algorithmic}[1]
\Statex \textbf{Input:} 
  an instance $I= (\app,\pos, {\bf F}, {\bf G})$;
  \Statex Optional: an initial matching $\mu$;
  and an initial set of interviews $Z$ (both set to be empty if not provided. That is, if not provided,  set $\mu(x)=x$ for every $x\in \app\cup \pos$, and set $Z=\emptyset$).
  \Statex \textbf{Output:} a matching $\mu$.
\State 
 Initialize iteration $t \leftarrow 1$.
\While {$\exists a\in \app$ s.t $\mu(a)=a$ and $a$ has not been rejected by all positions} 
\Comment{iteration $t$}
    \State Let $i^*$ be the smallest index of an unmatched applicant 
    that has not been rejected by all positions.
    \State Let $\pos^*$ be the set of $a_{i^*}$'s most preferred positions from which she has not yet been rejected.\Statex \hspace{\algorithmicindent}Let $p_{j^*} \in \pos^*$ be a position that gives the smallest interim value to its current match.\footnotemark
    \If{$(a_{i^*}, p_{j^*})\notin Z$
  ($a_{i^*}$ did not interview with $p_{j^*}$) 
  and $a_{i^*}\succ_{p_{j^*}}^Z\mu(p_{j^*})$ ($p_{j^*}$ prefers $a_{i^*}$ over its \Statex \hspace{\algorithmicindent}current match)}
        \State Position $p_{j^*}$ interviews applicant $a_{i^*}$: 
        \State $Z \gets Z \cup {(a_{i^*}, p_{j^*}, v, u)}$, where $v$ and $u$ are drawn (independently) from $ F_{i^*,j^*}$ and $G_{j^*,i^*}$\Statex \hspace{\algorithmicindent}\hspace{\algorithmicindent}respectively.
    \Else
        \If {$\mu(p_{j^*})\succeq_{p_{j^*}}^Za_{i^*}$} 
            \State $p_{j^*}$ rejects applicant $a_{i^*}$
        \Else
            \State $p_{j^*}$ rejects applicant $\mu(p_{j^*})$
            \State $\mu(a_{i^*})\leftarrow p_{j^*}$ and $\mu(p_{j^*})\leftarrow a_{i^*}$
        \EndIf
    \EndIf
    \State $Z^t\leftarrow Z$ \Comment{$Z^t$ is saved to be used in the analysis.}
    \State $t\leftarrow t+1$ \Comment{Go to iteration $t+1$}
\EndWhile
\State\Return $\mu$
\end{algorithmic}
\end{algorithm}

\footnotetext{With ``not being matched” having the minimal score (as values are non-negative). If several positions have this minimal value, ties are broken lexicographically.}

We now describe the operation of \cref{algorithm:serial-adaptive}.
In each iteration, the algorithm selects an unmatched applicant that has not been rejected from all positions. It then chooses her most preferred position (among those from which she has not yet been rejected) that has the lowest interim value for its current match. 
If the applicant and that selected position have not yet met for an interview, the position interviews the applicant if the position interim prefers the applicant over its current match. Otherwise, the position rejects the applicant without an interview. If the applicant and the position have already met for an interview, 
then if the position prefers the applicant over its current match, it is tentatively matched to her (and otherwise it rejects her).

\begin{remark}
    We remark that the execution of these algorithms does not require the mechanism to rely on the underlying value distributions. Instead it relies solely on agents' ex-ante expected values
    and the realized post-interviews values.
\end{remark}

\begin{restatable}{lemma}{Stable}\label{lemma:serial-stable}
     When \cref{algorithm:serial-adaptive} is given as input an instance and an empty initial matching $\mu = \emptyset$, it terminates and outputs an interim-stable matching.
\end{restatable}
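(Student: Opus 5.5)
I will prove the two claims separately, termination first and then interim stability. The argument follows the standard proof for applicant-proposing DA, adapted to the facts that preferences change when an interview happens and that there are rejections without interviews.

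\textbf{Termination.} Every iteration does one of three things. It adds a new interview to $Z$, which can happen at most $nm$ times since each pair interviews at most once. It adds a rejection of $a_{i^*}$ by $p_{j^*}$, or it tentatively matches $a_{i^*}$ to $p_{j^*}$ while rejecting the previous match of $p_{j^*}$, if there was one.

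I will first check that rejections are permanent: once an applicant is rejected by a position, she never proposes to it again, because $\pos^*$ only contains positions that have not yet rejected her. So each pair can produce at most one rejection, giving at most $nm$ rejection events.

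The remaining case is a match step in which $p_{j^*}$ was previously unmatched, so nobody is rejected. The number of matched positions then increases by one. No step ever turns a matched position back into an unmatched one, so this happens at most $m$ times. The total number of iterations is therefore at most $2nm+m$.

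\textbf{Matched pairs interviewed.} A pair is matched only in the \emph{else} branch. There, either $(a_{i^*},p_{j^*})\in Z$ already, or the position does not prefer the applicant over its current match, in which case it rejects her rather than matching. So every matched pair has interviewed.

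\textbf{No blocking pair.} The key invariants are these:
\begin{itemize}
\item A position's interim value for its tentative match is non-decreasing over time. It only switches to a strictly preferred applicant, and values of already-interviewed pairs never change.
\item Once a position has rejected $a$ at time $t$, it interim prefers its current match (weakly) to $a$ from then on.
\end{itemize}
The second invariant needs care. The position's utility for $a$ is fixed after the rejection: either they interviewed, or they did not and $u=U$ stays the prior value, since they never interview later. The position's value for its own match only grows, so the invariant holds. It also covers rejections made when a position switches to a new applicant.

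\textbf{Applying the invariants.} Suppose $(a_i,p_j)$ blocks the final $\mu$. Every applicant that is unmatched at the end has been rejected by all positions, by the while condition. Every matched applicant $a_i$ has $\mu(a_i)$ among her most preferred positions that have not rejected her, as of the time she was matched.

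Her preferences also change over time, so I need to check that the selected position stays optimal. Her interim values change only through her own interviews, and after she is matched she has no further interviews until she is rejected. At the time she was matched, $p_{j^*}$ was a most preferred non-rejecting position, and her interim values have not changed since.

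If $p_j\succ_{a_i}\mu(a_i)$, then $p_j$ must have rejected $a_i$ earlier. By the second invariant, $\mu(p_j)\succcurlyeq_{p_j} a_i$ at the end, so $(a_i,p_j)$ is not blocking, a contradiction. The case where $a_i$ is unmatched is the same, since she was rejected by every position.

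\textbf{Main obstacle.} The main obstacle is the preference-monotonicity bookkeeping for applicants. I have to verify that when $a_{i^*}$ interviews with $p_{j^*}$ and her value drops, the next iteration re-selects correctly. I also have to verify that a matched applicant's selection, made at match time, remains her top choice among non-rejecting positions at termination. I will handle this by noting that an applicant's interim values change only in iterations where she herself is $a_{i^*}$, which happens only while she is unmatched.
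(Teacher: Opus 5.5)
Your proof is correct and follows the same adapted Gale--Shapley argument as the paper. A blocking $p_j$ must have rejected $a_i$, and the paper's proof and yours both rest on two facts: a position's tentative matches only improve, and interim values are frozen once a pair is rejected or matched. Your version is more complete than the paper's. You also bound termination explicitly, verify that matched pairs have interviewed, and check that an applicant's interim values cannot change after her final match; the paper leaves all three implicit.
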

We prove interim stability by adapting the classic Gale-Shapley argument to interim preferences. We defer the proof to \cref{appendix:stable}.

We introduce the following notion, which will be used throughout the analysis.

\begin{definition} [Interim Like]
    Let $(a_i,p_j)\in Z$, i.e., an applicant and a position that have already met for an interview. We say that \emph{$a_i$ interim likes $p_j$ (over the prior expected value of $V_{i,j}$)} if $\vutility{i}{j}{}>V_{i,j}$. Similarly, we say that \emph{$p_j$ interim likes $a_i$ (over the prior expected value of $U_{j,i}$)} if it holds that $\uutility{j}{i}{}> U_{j,i}$.\newline
    We say that  $a_i$ and $p_j$ \emph{interim like each other}, if $a_i$ interim likes $p_j$ and $p_j$ interim likes $a_i$.
\end{definition}

\subsection{Positions View Applicants as Ex-Ante Equivalent}\label{subsec:one-sided}

In this section, we begin with a simple analysis, before turning to the more challenging analysis of a stronger bound on the number of interviews, presented
in \cref{sec:bilateral}. We analyze \cref{algorithm:serial-adaptive} under the assumption that all positions view applicants as ex-ante equivalent. We allow applicants to have arbitrary preferences. We show that the algorithm terminates with an interim-stable matching and that each position participates in at most four interviews in expectation. Formally:
\begin{theorem}\label{thm:4m}
    Consider an instance in which $m$ positions view $n$ applicants as ex-ante equivalent.  Then
\cref{algorithm:serial-adaptive} terminates with an interim-stable matching, and its adaptive interviewing process performs at most $4m$ interviews in expectation. 
\end{theorem}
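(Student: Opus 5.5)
Interim stability and termination are given by \cref{lemma:serial-stable}, so the work is bounding the expected number of interviews by $4m$. I will charge every interview to the position conducting it and show that each position $p_j$ conducts at most $4$ interviews in expectation. The key observation uses ex-ante equivalence on the positions' side, $U_{j,i}=U$ for all $i$. By the interview condition in \cref{algorithm:serial-adaptive}, $p_j$ interviews $a_{i^*}$ only if $a_{i^*}\succ_{p_j}^Z\mu(p_j)$. Since $a_{i^*}$ has not been interviewed by $p_j$, its interim value for $p_j$ is exactly $U$. Hence $p_j$ interviews only while it is unmatched or tentatively matched to an applicant with realized value $u_{j,\mu(p_j)}<U$.

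Next I show that once $p_j$ holds an applicant it interim likes (value $>U$), it never interviews again. In DA-style dynamics, a position's tentative match only weakly improves in its interim preference: it switches only to a strictly preferred applicant, and interim values of interviewed pairs never change. So after $p_j$ is matched to some $a$ with $u_{j,a}>U$, every future tentative match also has value $>U$, and no non-interviewed applicant (value $U$) is ever preferred to it. This bounds the number of interviews $p_j$ performs by the number of trials until the first \emph{accepted} interview whose position-side value exceeds $U$.

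To bound that count, consider each interview of $p_j$ with some $a$. The position's value $u_{j,a}$ is above $U$ with probability $1/2$, independently of the past. After the interview, $a$ is immediately reconsidered (she is still the smallest-index unmatched applicant). The issue is whether $a$ then actually proposes to $p_j$, which requires $p_j$ to remain among her top choices. I claim she does if her realized $v_{a,j}>V_{a,j}$. Before the interview, $p_j\in\pos^*$ had value $V_{a,j}$ maximal among the positions not yet rejecting her; if the value rises, $p_j$ becomes her unique top choice. Then $p_j$, which prefers her (value $>U>$ current match value, or it is unmatched), accepts her. Thus each interview independently, with probability at least $1/4$, results in $p_j$ being matched to an applicant it interim likes, after which $p_j$ stops interviewing. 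The number of interviews of $p_j$ is therefore dominated by a geometric random variable with success probability $1/4$, with expectation at most $4$. Summing over the $m$ positions gives $4m$.

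The main obstacle is making the "independent trial" argument rigorous. The decision of whether $p_j$ interviews at all depends adaptively on other randomness, so I will set up a coupling: pre-sample for each position an i.i.d.\ sequence of coin pairs used for its $k$-th interview, or use a stopping-time/optional-stopping argument. One must also check that a tie in $\pos^*$ or the tie-breaking footnote does not prevent $a$ from proposing to $p_j$ after a value increase. This holds because strict increase above $V_{a,j}$ makes $p_j$ a strict maximum, given that interviewed values were previously compared against the same prior maxima.
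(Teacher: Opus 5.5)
Your proposal is correct and follows essentially the same route as the paper. Stability comes from \cref{lemma:serial-stable}. A position interviews only while it does not interim like its current match, a mutually-liked interview is immediately followed by a proposal and acceptance, and tentative matches only improve, so the position never interviews again; this is the content of \cref{lem:position-stops-interview-after++}. That gives the per-position geometric bound with success probability at least $1/4$, exactly as in \cref{lem:4m}, and your pre-sampled coupling over each position's $k$-th interview makes rigorous the independence step that the paper states only informally as a ``Bernoulli trial''.
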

Stability follows from \cref{lemma:serial-stable}, and then the theorem immediately follows from \cref{{lem:position-stops-interview-after++}} and \cref{lem:4m} below (see \cref{appendix:serial-applicants-iid} for the proofs).  
In these two claims we show that once a position finds an applicant such that they interim like each other, it stops interviewing (\cref{{lem:position-stops-interview-after++}}), which leads to a constant bound on the expected number of interviews per position (\cref{lem:4m}). 
\begin{restatable}{lemma}{PositionStopsToInterview}
\label{lem:position-stops-interview-after++}
Fix a position $p_j$ and an applicant $a_i$, and assume that in iteration $t$ they have met for an interview and that $a_i$ and $p_j$ interim like each other. 
Then, in iteration $t+1$, applicant $a_i$ will propose to $p_j$, and $p_j$ will accept the proposal. After position $p_j$ accepts $a_i$, position $p_j$ will refuse to participate in any further interviews.
\end{restatable}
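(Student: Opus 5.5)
The plan is to follow the algorithm step by step from iteration $t$ to $t+1$, and then show by induction over later iterations that $p_j$'s interim value for its tentative match never drops below $U$. Throughout I use that positions view applicants as ex-ante equivalent, so $U_{j,i'}=U$ for every applicant $a_{i'}$. In particular, every applicant that has not interviewed with $p_j$ has interim value exactly $U$ for $p_j$.

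\textbf{Step 1 (what happens in iteration $t$).} Since an interview of $a_i$ at $p_j$ took place in iteration $t$, the interview branch was executed there. Hence $a_i=a_{i^*}$ was the smallest-index unmatched applicant not rejected by all positions, and $p_j\in\pos^*$. So before the interview, $p_j$ attained $a_i$'s maximum interim value $V_{i,j}$ among positions that had not rejected her. The branch condition also gives $a_i\succ_{p_j}^{Z^{t-1}}\mu(p_j)$. Because $a_i$ was uninterviewed at $p_j$, her interim value to $p_j$ was $U$, so $\mu(p_j)$ is either unmatched or has interim value strictly below $U$. In the interview branch, no rejections occur and $\mu$ is not modified; only $Z$ gains one tuple.

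\textbf{Step 2 (iteration $t+1$).} The set of unmatched applicants and the rejection histories are unchanged, so $a_i$ is again the selected applicant $a_{i^*}$. Her interim value for $p_j$ is now $v_{i,j}>V_{i,j}$, while every other non-rejecting position has interim value at most $V_{i,j}$ (they were weakly below $p_j$ in iteration $t$, and their values did not change). Thus $\pos^*=\{p_j\}$, and no tie-breaking is needed. Since $(a_i,p_j)\in Z$, the algorithm takes the else branch. There, $\mu(p_j)$ is unchanged from iteration $t$, and $u_{j,i}>U>$ the interim value of $\mu(p_j)$ (or $\mu(p_j)$ is unmatched). Therefore $\mu(p_j)\not\succeq_{p_j}a_i$, and $p_j$ rejects its old match and accepts $a_i$.

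\textbf{Step 3 (no further interviews).} I will prove the following invariant by induction on iterations $s\ge t+1$: $p_j$ is matched, and its interim value for $\mu(p_j)$ exceeds $U$. Only two things can change $\mu(p_j)$ or its value.
\begin{itemize}
\item Interviews of other pairs do not change $p_j$'s value for $\mu(p_j)$, because interview results are fixed forever.
\item A tentative match can be broken only by $p_j$ itself, in the else branch. This happens only when a proposer $a$ satisfies $a\succ_{p_j}\mu(p_j)$, i.e.\ when $a$ has strictly higher interim value, so the value held by $p_j$ only increases.
\end{itemize}
Given the invariant, an interview at $p_j$ requires an applicant $a$ with $(a,p_j)\notin Z$ and $a\succ_{p_j}^Z\mu(p_j)$. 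Such an $a$ has interim value exactly $U$, which is less than the value of $\mu(p_j)$, so the interview condition always fails and $p_j$ rejects without interviewing.

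The only delicate point is Step 2: I need to argue that the selection in iteration $t+1$ is determined by the same state as in iteration $t$, apart from the one new interview result. That requires checking that the interview branch touches neither $\mu$ nor the rejection records. I expect this bookkeeping to be the main thing to state carefully.
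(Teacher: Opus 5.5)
Your proposal is correct and follows essentially the same route as the paper. The paper splits the argument into three sub-lemmas: (i) a position matched to an applicant it interim likes rejects any uninterviewed applicant, whose value to it is exactly $U$; (ii) a mutually-liked interview is immediately followed by a proposal that is accepted; and (iii) a later change of match can only raise $p_j$'s value further above $U$. Your Step~3 invariant merges (i) and (iii) into a single monotonicity argument by induction, which also handles repeated match changes slightly more cleanly than the paper's version.
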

This claim implies that each position stops interviewing once it interviews an applicant such that they interim like each other, which happens after at most 4 interviews in expectation:
\begin{restatable}{lemma}{FourInterviews}
\label{lem:4m}
    Consider any instance where $m$ positions view $n$ applicants as ex-ante equivalent.
    Then the expected number of interviews of \cref{algorithm:serial-adaptive} is at most $4m$.
\end{restatable}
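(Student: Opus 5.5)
The plan is to charge every interview to the position that conducts it, and to show that each position conducts at most $4$ interviews in expectation. Summing over the $m$ positions then gives the $4m$ bound.

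Fix a position $p_j$ and look at the sequence of interviews it conducts during the run of \cref{algorithm:serial-adaptive}. Because positions view applicants as ex-ante equivalent, $U_{j,i}=U$ for all $i$. Whenever $p_j$ interviews some applicant $a_i$, the two outcomes are drawn fresh:
\begin{itemize}
\item $u_{j,i}\sim G_{j,i}$ exceeds $U$ with probability $\eta_{j,i}=\tfrac12$;
\item independently, $v_{i,j}\sim F_{i,j}$ exceeds $V_{i,j}$ with probability $\rho_{i,j}=\tfrac12$.
\end{itemize}
So each interview is, independently of everything before it, a ``mutual interim like'' with probability exactly $\tfrac14$. This holds even though the choice of which applicant $p_j$ interviews is adaptive. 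By \cref{lem:position-stops-interview-after++}, once such a mutual like occurs, $p_j$ accepts $a_i$ in the next iteration and never interviews again.

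Hence the number of interviews $p_j$ conducts is stochastically dominated by a geometric random variable with success probability $\tfrac14$, whose mean is $4$. Formally, let $X_k$ be the indicator that $p_j$ conducts a $k$-th interview. The event $\{X_k=1\}$ requires that none of the first $k-1$ interviews was a mutual like. Conditioning on the history before each interview and using the independence of the fresh draws, we get $\Pr[X_k=1]\le (3/4)^{k-1}$. Summing over $k$ gives $\mathbb E[\sum_k X_k]\le 4$.

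The one delicate point is justifying that the conditional probability of a mutual like is $\tfrac14$ regardless of the adaptive history. The set of interviews that have occurred is a stopping-time-like sequence. The values at a new pair $(a_i,p_j)$ were never revealed before, since each pair interviews at most once. They are independent of all previously revealed values and of the algorithm's past decisions. A standard martingale or optional-stopping argument, or simply conditioning on the filtration at the time of the $k$-th interview, handles this. Finiteness follows because the algorithm terminates (\cref{lemma:serial-stable}) and each position interviews at most $n$ applicants. Linearity of expectation over positions then completes the proof.
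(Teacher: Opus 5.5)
Your proposal is correct and follows the paper's proof: charge each interview to the position that conducts it, use \cref{lem:position-stops-interview-after++} to conclude that a position stops interviewing after its first mutual-like interview, and dominate each position's count by a geometric variable with success probability $\tfrac14$, which gives $4m$ by linearity. Your explicit conditioning on the history before each interview, which yields $\Pr[X_k=1]\le(3/4)^{k-1}$, is a slightly more careful version of the paper's step that ``each interview is a Bernoulli trial with success probability at least $\tfrac14$,'' but the argument is the same.
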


While the bound of $4m$ interviews yields a constant expected number of interviews per position, it can still result in a large number of interviews when $m$ is large relative to $n$. Moreover, simulations for bilaterally ex-ante equivalent instances with $n=m$ show that the expected number of interviews per applicant (and position) is close to two rather than four (\cref{appendix:simulations-iid-sequential}). In the next section, we formally show that in bilaterally ex-ante equivalent setting the expected number of interviews is essentially $2n$ when $m\ge n$. This behavior appears both in unbalanced markets ($m > n$) and in the balanced case ($m = n$).

\subsection{Bilaterally Ex-Ante Equivalent Setting}\label{sec:bilateral}

In this section, we analyze the performance of \cref{algorithm:serial-adaptive} under the bilaterally ex-ante equivalent assumption. We establish an improved upper bound, with approximately two interviews per {applicant} in expectation. Unlike the one-sided ex-ante equivalent setting analyzed in the previous section, proving this improved bound is much more challenging. 
In the previous section, we showed that a position stops interviewing once it finds an applicant with whom it interim likes each other. However, even if an applicant and a position interim like each other and become tentatively matched, this match may later be broken if the position receives a proposal from a more preferred applicant. In that case, the newly unmatched applicant may continue interviewing and may eventually propose to another position, potentially displacing yet another applicant from a match they interim like. Such events can trigger a chain of rejections, forcing applicants to continue interviewing despite having already interviewed with a position they interim like each other.

Ruling out the formation of these cascading chains, and showing that they do not arise even in later stages of the algorithm---when fewer positions remain unmatched, is the main technical challenge we need to address. 

We begin by presenting the main theorem for bilaterally ex-ante equivalent setting, and then turn to the
proof outline, which introduces the key technical definitions and describes the overall proof strategy. The full proof can be found in \cref{appendix:serial-all-iid}.

\TwoInterviewsMainResult*
We next highlight some remarks regarding this result: 
\begin{remark}
    In the case where $n > m$, due to the instance being bilaterally ex-ante equivalent, the roles of applicants and positions can be exchanged, and, by \cref{thm:2n}, an interim-stable matching could be reached with an expected number of interviews of $2\cdot m+O(\log^3m)$. Thus, for any bilaterally ex-ante equivalent instance with $n$ applicants and $m$ positions, an interim-stable matching can be reached with expected number of interviews that is essentially $2\cdot \min\{n,m\}$.
\end{remark}

{\begin{remark}
    Although the bound in \cref{thm:2n} is asymptotic, simulations show that the number of interviews per applicant is close to 2, even for very small markets (see \cref{appendix:simulations-iid-sequential}).
\end{remark}}

{\begin{remark}
    We also note that in the bilaterally ex-ante equivalent setting, the tie-breaking rule in line 4 can be changed to first prefer an unmatched position, and then \emph{any} position that does not interim like its current tentative match---rather than necessarily the position with the lowest interim value for its current tentative match---without affecting the results.
\end{remark}}

\subsubsection{Proof of \cref{thm:2n}}\label{sec:2n-proof}
We first present an outline of the proof for \cref{thm:2n}, breaking the proof into several claims (whose proofs can be located in \cref{appendix:serial-all-iid}). Our proof strategy for \cref{thm:2n} relies on dividing {the iterations in} the execution of \cref{algorithm:serial-adaptive} into two phases {(for the sake of analysis)}, {and we call the iteration separating the two phase \emph{the transition point}}. The first phase consists of a prefix of the interviews in which, with high probability, no applicant interviews at a matched position. The analysis of this phase is relatively easy: we show that each applicant participates in at most two interviews in expectation.

The second phase contains the remaining interviews and its analysis is more subtle. In this phase, under a condition that holds with high probability, we prove that the number of interviews is polylogarithmic. We establish this by bounding both the number of unmatched applicants who continue interviewing and the number of interviews they perform, relying on the fact that an unmatched applicant is guaranteed to stop interviewing once she encounters a position such that both she and the position interim like each other.

Finally, we show that the events for which the above bounds hold in both phases indeed occur with high probability, and that the probability of them being violated  
decays sufficiently fast to ensure that the contribution of the worst-case number of interviews to the expectation is negligible. Combining these bounds yields the desired upper bound on the expected number of interviews.

We next move to formalize the proof.
{We begin by defining a general property of the execution that is useful for our analysis of the initial phase. The following definition captures a condition where applicants do not interview excessively without finding a position they interim like.}

\begin{definition} [$k$-Eventual-Happiness for Applicants]
    Consider an {instance} with $n$ applicants and $m$ positions. Fix $k\in \mathbb{N}$.
    We say that \emph{$k$-eventual-happiness for applicants} holds for the execution of \cref{algorithm:serial-adaptive}, if every applicant that has participated in $k$ interviews, interim likes at least one of her interviewing positions.
\end{definition}

The following definition establishes a specific iteration that serves as our boundary between the two phases.  {We use it, assuming that $k$-eventual-happiness for applicants holds, to show that in the first phase no applicant ever interviews at a matched position.}

\begin{definition} [Transition Point]
    Assume there are $n$ applicants and $m$ positions, and fix some $k\in \mathbb{N}$. {Suppose that at some point in the algorithm there are at most $k-1$ unmatched positions (note that if $k > m$ or $m> n + k - 1$, this point is never reached).} Define $t^k_0$ to be the first iteration in which there are exactly $k-1$ unmatched positions at the beginning of the iteration. We refer to $t^k_0$ as the \emph{transition point}
    (between the two phases of the analysis). This means that in iteration $t_0^k - 1$, a match between the $(m - k + 1)$-th applicant that was matched and some position occurred, and it was the first time that $m - k + 1$ applicants were matched.
    For simplicity, when $k$ is clear from the context, we will refer to this iteration as $t_0$.
\end{definition}

While the previous definition focuses on a one-sided guarantee, our analysis of the second phase requires a stronger condition. We define the following property to capture scenarios where applicants and positions interim like each other after a sufficient number of interviews.

\begin{definition} [$k$-Eventual-Mutual-Happiness]
    Consider an {instance} with $n$ applicants and $m$ positions. Fix $k\in \mathbb{N}$.
    We say that \emph{$k$-eventual-mutual-happiness} holds for the execution of \cref{algorithm:serial-adaptive}, if the following two conditions hold:
    \begin{itemize}
        \item every applicant that has participated in $k$ interviews, the applicant and at least one of her interviewing positions interim like each other.
        \item every position that has participated in $k$ interviews, the position and at least one of its interviewing applicants interim like each other. 
    \end{itemize}
    
\end{definition}

Next, we consider the status of the positions at the transition point $t_0$. The following definition concerns the number of positions that do not interim like their match at this specific iteration.

\begin{definition} [$d$-Positions Unhappy]
    Consider an instance with $n$ applicants and $m$ positions. Fix $d,k\in [m]$.
    We say that \emph{$d$-positions unhappy} holds for the execution of \cref{algorithm:serial-adaptive}, if at iteration $t_0^k$ there are at least $d$  
    positions that do not interim like their match (including unmatched positions).
\end{definition}

Combining the concepts above, we define what constitutes a `bad' run of the algorithm for our analytical purposes. We say that a realization is $k$-bad if either the $k$-eventual-mutual-happiness or the $d$-positions unhappy conditions do not hold.

\begin{definition} [$k$-Bad Realization]
    Consider an {instance} with $n$ applicants and $m$ positions. Fix $k\in \mathbb{N}$. 
    A realization is a \textit{``$k$-bad realization''} if at least one of the following does not hold:
    \begin{itemize}
        \item $k$-eventual-mutual-happiness
        \item $d$-positions unhappy for $d = k^3$
    \end{itemize}
\end{definition} 

To formalize the strategy described above, we establish the following four lemmas that analyze the execution across the two phases separated by $t_0$. First, we establish that the expected number of interviews before $t_0$ is $2n$, assuming that $k$-eventual-happiness for applicants holds.
\begin{restatable}{lemma}{InterviewsBeforeTransition}
\label{lem:number-of-interviews-before-t0}
    When $k$-eventual-happiness for applicants holds for the execution of \cref{algorithm:serial-adaptive}, the expected number of interviews conducted up to iteration $t_0$ is at most $2n$.
\end{restatable}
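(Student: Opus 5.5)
The plan is to show that, before the transition point $t_0$, \cref{algorithm:serial-adaptive} behaves like $n$ independent ``flip until heads'' processes, one per applicant. The key structural claim is this: for every iteration $t<t_0$, whenever an interview takes place in iteration $t$, it is between the chosen applicant $a_{i^*}$ and a position that is \emph{unmatched} at that time. Moreover, an applicant who is tentatively matched before $t_0$ stays matched at least until $t_0$.

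First I would set up the invariant, proved by induction on $t<t_0$. By the definition of $t_0$, at the beginning of every iteration $t<t_0$ there are at least $k$ unmatched positions. In the bilaterally ex-ante equivalent setting, the interim value of $a_{i^*}$ for any position she has not interviewed is $V$. For an interviewed position the value is either $>V$ (she likes it) or $<V$ (she does not), since values equal to the mean occur with probability $0$.

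Suppose $a_{i^*}$ is unmatched at iteration $t$. By the induction hypothesis she has never been displaced, so she has never been tentatively matched, and hence has not yet interviewed a position she likes. (If she had liked a position, she would have proposed there in the next iteration, and an unmatched or lower-valued position accepts.) By $k$-eventual-happiness for applicants, she has therefore participated in at most $k-1$ interviews. Since at least $k$ positions are unmatched, at least one unmatched position is one she has not interviewed. She has never been rejected by an unmatched position, because rejections only come from matched positions or from positions she interviewed.

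So her top set $\pos^*$ consists of the non-rejecting uninterviewed positions, all with value $V$, and the tie-breaking rule of line 4 picks an unmatched one. That position interim prefers any applicant to being unmatched, so the interview takes place. If she likes the position, then in the next iteration that position is her unique top choice, she proposes, and the unmatched position accepts; by \cref{lem:position-stops-interview-after++}-style reasoning, nothing happens in between that could interfere. If she does not like it, its value is below $V$, so she moves on to another uninterviewed position. Since no applicant ever proposes to a matched position before $t_0$, no tentative match is broken before $t_0$, which closes the induction.

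Given the invariant, the number of interviews applicant $a_i$ conducts before $t_0$ is at most the index of her first interview in which she likes the position. Each interview draws a fresh value $v\sim F_{i,j}$ independently of the past, and $\rho_{i,j}=\tfrac12$. So this count is stochastically dominated by a geometric random variable with success probability $\tfrac12$, whose expectation is $2$. Summing over the $n$ applicants by linearity gives the bound $2n$.

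The main obstacle is making the probabilistic step rigorous. The stopping at $t_0$ and the interleaving of different applicants' interviews are both adaptive, so I would argue via a per-applicant coupling. Pre-sample, for each applicant, an i.i.d. sequence of like/dislike bits, and let her $r$-th interview use her $r$-th bit. This is legitimate because the decision to conduct an interview depends only on the past, and the new value is drawn independently. Her pre-$t_0$ interview count is then at most the position of her first ``like'' bit, which has expectation $2$ regardless of the realization of the rest of the process.

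A second subtlety also needs care: conditioning on $k$-eventual-happiness could bias the bits. I would handle this by arguing the bound pathwise, rather than in expectation under conditioning, wherever the invariant is used. The invariant itself only needs eventual happiness on the realized path.
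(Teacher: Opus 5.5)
Your structural invariant is the content of the paper's \cref{lem:applicant-propose-to-unmatched}, \cref{lem:applicant-matched-to-liked} and \cref{lem:before-to-no-propose-to-no-like}, and your pre-sampled bit sequences are the paper's quantile tables, so that part is sound. The gap is in the last step. Let $G_i$ be the index of $a_i$'s first ``like'' bit, let $D$ be the event that $k$-eventual-happiness for applicants holds, and let $Y_1$ be the number of interviews before $t_0$. Your domination $Y_1\le\sum_i G_i$ holds only on $D$, so the pathwise argument proves $\mathbb{E}[Y_1\mathbf{1}_D]\le\sum_i\mathbb{E}[G_i]=2n$. The lemma, however, asserts $\mathbb{E}[Y_1\mid D]\le 2n$. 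That conditional form is how it is used in \cref{lem:2n-for-small-m}, through $\mathbb{E}[Y\mid D]=\mathbb{E}[Y_1\mid D]+\mathbb{E}[Y_2\mid D]$, and in \cref{lem:2n-for-big-m}. Getting there requires bounding $\mathbb{E}[G_i\mid D]$, which is exactly the bias you flagged. ``Arguing pathwise'' does not resolve it; it only changes which quantity you have bounded, since dividing by $\Pr(D)$ yields $2n/\Pr(D)$, not $2n$.

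The missing observation is that the conditioning merely truncates each geometric variable from above. Work with $D'=\bigcap_i\{G_i\le k\}$, an event on the pre-sampled tables. It has three useful properties. It is a product event over independent per-applicant sequences. It implies $k$-eventual-happiness, so your invariant holds on it. And it is the event whose probability $(1-2^{-k})^n$ the paper computes in \cref{lem:happiness-app-for-general-k}. Conditioned on $D'$, the $G_i$ are independent, and each is a $\mathrm{Geom}(1/2)$ variable conditioned on being at most $k$. Its mean is $\frac{2-(k+2)2^{-k}}{1-2^{-k}}\le 2$, hence $\mathbb{E}[Y_1\mid D']\le 2n$. This is what the paper's Bayes computation $\Pr(Z_{i,q}=1\mid D_i)=\frac{1/2}{1-2^{-k}}\ge\frac12$ is capturing. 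Alternatively, keep your unconditional bound and restate the downstream decomposition as $\mathbb{E}[Y]=\mathbb{E}[Y\mathbf{1}_D]+\mathbb{E}[Y\mathbf{1}_{\overline{D}}]$, which is all \cref{thm:2n} needs.
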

Before $t_0$, under the $k$-eventual-happiness for applicants assumption, an applicant continues interviewing until she meets a position that she interim likes.  Since this occurs with probability $1/2$ in each interview, the expected number of interviews per applicant in this phase is two.  We prove this formally in \cref{subsec:number-of-interviews}. We note that when $m$ is sufficiently larger than $n$,  we set $k=m-n+1$, {and then the algorithm terminates before $t_0$}. In this scenario, the bound from \cref{lem:number-of-interviews-before-t0} applies with high probability. Moreover, the probability that the $k$-eventual-happiness assumption fails is small enough to ensure that the contribution of the worst-case number of interviews to the expectation is negligible, yielding the desired result directly. Otherwise, the execution extends beyond $t_0$, requiring us to bound the subsequent interviews (of the second phase). Subsequently, we show that even after the transition point, provided the realization is not $k$-bad, the number of remaining interviews is at most $k^2\cdot(k-1)$ (which is asymptotically negligible when $k$ is small relative to $n$, {since} $k$ will be chosen to be $\Theta(\log n)$).

\begin{restatable}{lemma}{InterviewsAfterTransition}\label{lem:interviews-after-t0}
    When the realization is not a {``$k$-bad realization,''} the number of interviews after iteration $t_0$ is at most $k^2\cdot(k-1)$.
\end{restatable}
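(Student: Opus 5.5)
We will bound the interviews after $t_0$ by bounding two quantities: the number of applicants that can still conduct interviews after $t_0$, and the number of interviews each such applicant conducts. The target is at most $k^2$ applicants, each doing at most $k-1$ further interviews. Throughout we use that the realization is not $k$-bad. So $k$-eventual-mutual-happiness holds, and at $t_0$ there are at most $d=k^3$ unhappy positions. (The definition of $d$-positions unhappy should be read as "at most $d$" for the bound to help. I would check this reading and otherwise adjust the choice of $d$.)

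\textbf{Step 1: interviews only go to unhappy positions.} Extend \cref{lem:position-stops-interview-after++}: once a position is tentatively matched to an applicant it interim likes, it never interviews again. The reason is that in the bilaterally ex-ante equivalent setting every uninterviewed applicant has interim value $U$, which is below the current match's value. Also, a position's interim value for its match never decreases over time. Hence after $t_0$ interviews only happen at positions that were unhappy at $t_0$ (unmatched, or not liking their match), and a position that becomes happy drops out permanently. Similarly, an applicant who is matched to a position she likes, and who is liked back, is only displaced if that position receives a more preferred proposal. By Step 1 such a proposal is not an interview, so it must come from an applicant who was already interviewed there.

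\textbf{Step 2: each applicant interviews at most $k-1$ more times.} Consider any applicant who interviews after $t_0$. By mutual happiness, within $k$ total interviews she meets a position with which she mutually likes. By the tie-breaking rule, the algorithm then proposes there and is accepted. For the extra $k-1$ interviews per applicant, I would argue the count per "active spell" and charge spells to unhappy positions.

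\textbf{Step 3: bound the number of active applicants.} Each rejection chain after $t_0$ can be charged to unhappy positions. An applicant becomes unmatched after $t_0$ only when a position switches to a strictly better interviewed applicant. Each of the at most $k^3$ unhappy positions interviews at most $k$ times before becoming mutually happy, by the position half of mutual happiness. So I would aim to show that the number of displacement events, and hence of applicant activations, is at most a quantity like $k^2$. The bound $k^2(k-1)$ suggests the intended count is this number of activations times $k-1$ interviews each.

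\textbf{Main obstacle.} The main difficulty is the precise counting in Step 3. Naively, $k^3$ unhappy positions times $k$ interviews gives $k^4$, not $k^2(k-1)$. The argument likely must also use that only $k-1$ positions are unmatched at $t_0$, so at most $k-1$ "vacancies" exist. Each chain then ends when an applicant fills a vacancy or a position becomes happy. I would try to show that each of the $k-1$ vacancies generates at most $k^2$ interviews: each chain step costs at most $k$ interviews, and the chain length is bounded by $k$ via the positions' mutual-happiness bound. This gives $(k-1)\cdot k\cdot k$ in total.
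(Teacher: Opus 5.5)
Your top-level structure, applicants active after $t_0$ times interviews per applicant, is the paper's. However, both factors rest on arguments you have not supplied, and one hypothesis is read backwards.

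\textbf{The unhappy-positions condition.} It says that at $t_0$ \emph{at least} $k^3$ positions do not interim like their match. This is what holds with high probability: roughly half of the $m-k+1$ matched positions dislike their match, so ``at most $k^3$'' would fail for large $n$. Its role is not to cap a count. It guarantees that an unmatched applicant with fewer than $k$ interviews always has an uninterviewed position that dislikes its current match and is therefore willing to interview her. So she never proposes to a position she does not interim like.

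\textbf{What your Step~2 is missing.} Combine that guarantee with your own Step~1 observation (\cref{lem:minus-breaks+}): a mutually-liking match can be broken only by a proposal from an applicant who interviewed there earlier, and such an applicant must dislike the position, since otherwise she would have proposed right after the interview. An induction over the iterations after $t_0$ then shows that mutually-liking matches are never broken (\cref{lem:new-dont-propose-to++,lem:++dont-break}). The choice $d=k^3$ ensures that even after up to $k(k-1)$ positions become mutually happy after $t_0$, at least $k$ unhappy positions remain.

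Meeting a mutually-liked position gets the applicant matched. Without the no-break property she can be displaced and start a new ``active spell,'' and nothing in your proposal controls those spells. With it, each applicant conducts at most $k$ interviews in total. Your ``$k-1$ further interviews'' is also not right: an applicant first selected after $t_0$ has no earlier interviews (\cref{cor:applicant-didnt-interview}) and may need all $k$ of them.

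\textbf{The count of active applicants.} This comes from the tie-breaking rule, not from rejection chains. All uninterviewed positions tie at value $V$, and line~4 prefers unmatched positions. So an applicant interviews at a matched position only after she has interviewed every currently unmatched one. A matched position never becomes unmatched again, so every applicant who interviews after $t_0$ has, at some point, interviewed one of the $k-1$ positions unmatched at $t_0$. Each of those positions conducts at most $k$ interviews over the whole run, by the position half of mutual happiness plus \cref{lem:position-stops-interview-after++}. Hence at most $k(k-1)$ applicants interview after $t_0$ (\cref{lem:interviewed-applicant-number}), giving the bound $k\cdot k(k-1)$.

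Your vacancy/chain sketch does not deliver this. A rejection chain can pass through many different positions, so a per-position cap of $k$ interviews does not bound its length. Charging to unhappy positions yields nothing, since there are at least $k^3$ of them.

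\textbf{Order of the argument.} The no-break induction itself needs the number of post-$t_0$ mutual-like interviews to stay at most $k(k-1)$. The paper derives that from this same applicant count, so the applicant bound must be established first and then fed into the no-break argument.
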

When the realization is not a $k$-bad realization, after $t_0$, at most $k\cdot(k-1)$ applicants continue interviewing, and each of them performs at most $k$ additional interviews, yielding the bound stated in \cref{lem:interviews-after-t0}.  We prove this claim in \cref{subsec:after-t0}. The bounds in the previous two lemmas apply when the realization is not $k$-bad. We now show that this condition holds with high probability for large $n$.

\begin{restatable}{lemma}{KBadProbability}\label{lem:prob-k-bad}
    Let $p_{bad}$ be the probability of a ``$k$-bad realization''. {For every large enough $n$ and $n+\lceil10\log n\rceil-1>m\ge n$, for $k = \lceil10 \log n\rceil$} it holds that $p_{bad}\le{\frac{3}{n^3}}$.
\end{restatable}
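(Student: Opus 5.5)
We bound separately the two ways a realization can be $k$-bad, with $k=\lceil 10\log n\rceil$ and $n\le m<n+k-1$: the probability that $k$-eventual-mutual-happiness fails, and the probability that $d$-positions unhappy fails for $d=k^3$. We aim to show the first is at most $\frac{2}{n^3}$ and the second at most $\frac{1}{n^3}$ for large $n$, and finish with a union bound. Throughout we use the principle of deferred decisions. For every ordered pair $(a_i,p_j)$ we pre-draw the two values $v_{i,j}\sim F_{i,j}$ and $u_{j,i}\sim G_{j,i}$ and reveal them only when the pair interviews. Equivalently, each agent carries a private i.i.d.\ sequence of fresh outcomes, consumed one per interview. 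The choice of the next interview depends only on the past, so by the median assumption each interview outcome is a fresh independent draw. It is ``liked'' by each side with probability $\frac12$, independently across sides, and ``mutually liked'' with probability $\frac14$.

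\emph{Mutual happiness.} Fix an agent. Its first $k$ interviews (if it has that many) are $k$ adaptively chosen interviews, each mutually liked with conditional probability exactly $\frac14$ given the history. By a sequential conditioning argument, the probability that none of them is mutually liked is at most $(3/4)^k\le n^{10\log(3/4)}\le n^{-4.15}$. A union bound over all $n+m\le 2n+k$ agents gives at most $3n\cdot n^{-4.15}\le \frac{2}{n^3}$ for $n$ large enough.

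\emph{Many unhappy positions at $t_0$.} This is the step I expect to be the main obstacle, because the set of positions and their matches at $t_0$ is itself determined adaptively. The plan has three parts.
\begin{enumerate}
\item We first argue structurally that before $t_0$ every interview is with a currently unmatched position. Before $t_0$ there are at least $k\ge 1$ unmatched positions, which have interim value $0$ for their current match. Every un-interviewed position has interim value exactly $V$ for the applicant. The tie-break in line 4 therefore sends an applicant whose top interim value is $V$ to an unmatched position. An applicant whose top value exceeds $V$ proposes to an already-interviewed position rather than interviewing.
\item Consequently, a position can become matched to an applicant only through that applicant's (first) interview with it. The position's value $u_{j,i}$ for that applicant is independent of the applicant's value $v_{i,j}$, which is what determines whether the proposal happens.
\item For each position $p_j$ we define the indicator $X_j$ that its value for the first applicant it ever accepts is below $U$. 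Under the deferred-decisions coupling these are i.i.d.\ Bernoulli$(\frac12)$.
\end{enumerate}
We then argue that a position whose first accepted applicant is disliked stays unhappy up to $t_0$. The only way it could upgrade is via a proposal from a previously interviewed applicant. But it has interviewed only applicants while it was unmatched, and each such applicant is either the current match or was declined, never to return since the position only improves. If this monotonicity argument has a gap, the fallback is to restrict attention to positions that interviewed exactly one applicant before $t_0$, which is most of them.

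Since $m-k+1\ge n-k$ positions are matched at $t_0$, the number of unhappy positions dominates a sum of at least $n-k$ i.i.d.\ Bernoulli$(\frac12)$ variables. A Chernoff bound gives that this sum falls below $k^3=O(\log^3 n)\ll n/4$ with probability $e^{-\Omega(n)}\le\frac1{n^3}$.

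Finally, the structural claim in part 1 should not require $k$-eventual-happiness. If it does, we intersect with the complement of the event bounded in the first step, which costs nothing extra in the union bound. Summing the two failure probabilities gives $p_{bad}\le\frac3{n^3}$.
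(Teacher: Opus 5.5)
Your overall structure is the paper's. You take a union bound over two failure modes. The first is the failure of $k$-eventual-mutual-happiness, where your per-agent $(3/4)^k\le n^{-4.15}$ with a union bound over at most $2n+k$ agents is \cref{lem:mutual-happiness-app} and \cref{lem:mutual-happiness-pos} in different form. The second is the failure of $k^3$-positions-unhappy, handled via ``each position matched before $t_0$ holds the first applicant who liked it, and its own value for her is an independent fair coin'' plus Chernoff. Intersecting with the mutual-happiness event, instead of paying for $\Pr(\overline{D})$ separately as in \cref{lem:prob-d-unhappy}, is a harmless variation. However, both structural steps of your second part fail as you argue them, and both are repaired by exactly the hypothesis you hoped to avoid.

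\emph{Part 1 needs $k$-eventual-happiness for applicants.} Consider an applicant who has already interviewed and disliked every currently unmatched position. Her $\pos^*$ then consists only of matched, un-interviewed positions. So she either interviews at a matched position or, if all of those reject her without interview, proposes to a position she dislikes.

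\emph{Your monotonicity argument is wrong.} An applicant who interviewed $p_j$ while it was unmatched and disliked it was \emph{not} declined by $p_j$; she merely did not propose. She can later return and propose to $p_j$ once every position she has not interviewed has turned her away. That lets $p_j$ upgrade out of the unhappy set.

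\emph{What excludes both is $k$-eventual-happiness.} Before $t_0$ at least $k$ positions are unmatched, and the same applicant is selected until she is matched. So she meets an unmatched position she likes after fewer than $k$ disliked interviews, proposes at once, and is accepted. Inductively, no applicant exhausts the unmatched positions, nobody proposes to a disliked or matched position, and no tentative match changes before $t_0$. This is \cref{lem:applicant-matched-to-liked} and \cref{lem:before-to-no-propose-to-no-like}.

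You already intersect with the step-one good event, which implies $k$-eventual-happiness, so invoke it here rather than your other fallback. That fallback does not work as stated. With lexicographic tie-breaking, every new applicant starts at the lowest-indexed unmatched positions, i.e., the ones just disliked. So a position is interviewed repeatedly until the first applicant likes it, and only about half of the matched positions (not ``most'') are interviewed exactly once. Restricting to them would need its own concentration argument.

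With this repair, on the good event the matched positions at $t_0$ are the first $m-k+1\ge n-k+1$ positions to accept, each still holding its first accepted applicant. Your Chernoff bound then yields $p_{bad}\le 2/n^3+e^{-\Omega(n)}\le 3/n^3$.
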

We prove \cref{lem:prob-k-bad} in \cref{subsec:probabilities}. 
{Finally, we combine the bounds on the number of interviews in both phases (\cref{lem:number-of-interviews-before-t0,lem:interviews-after-t0}) with the fact that these bounds hold with high probability (\cref{lem:prob-k-bad}). Moreover, the probability that the a realization is $k$-bad is small enough to ensure that the contribution of the worst-case number of interviews to the expectation is negligible. This implies the following lemma that the total expectation is bounded by $2n$, which will be proven formally in \cref{subsec:number-of-interviews}.}

\begin{restatable}{lemma}{TwoInterviews}\label{tmh:iid}
    Consider any bilaterally ex-ante equivalent instance with  $n$ applicants and $m\ge n$ positions.
    The expected number of interviews when using \cref{algorithm:serial-adaptive} is {$2\cdot n+O(\log^3n)$}. Moreover, with probability $1 - O(n^{-3})$, every agent conducts at most $O(\log n)$ interviews.
\end{restatable}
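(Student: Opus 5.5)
We prove \cref{tmh:iid} by splitting on the size of $m$ relative to $n$. Throughout we set $k=\lceil 10\log n\rceil$.

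\textbf{Case $m \ge n+k-1$.} In this case the number of unmatched positions always stays at least $m-n\ge k-1$. Moreover, it equals $k-1$ only when all $n$ applicants are matched, which is exactly when the algorithm stops. So $t_0$ is never reached before termination. If $k$-eventual-happiness for applicants holds, \cref{lem:number-of-interviews-before-t0} bounds the expected number of interviews by $2n$.

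It remains to control the failure event. I would bound its probability by a union bound: for each applicant, the probability that her first $k$ interviews all return values below her prior is $2^{-k}\le n^{-10}$. This uses independence of interview outcomes, which holds because each pair meets at most once and the values are fresh independent draws.

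On the failure event I need a worst-case count of interviews. Each pair interviews at most once, so there are at most $nm$ interviews. When $m$ is huge, $nm$ is too crude, so I would refine it: by \cref{lem:4m}, at most $4m$ interviews occur in expectation. Better still, since interviews happen only at positions that do not interim like their match, I would argue that every interview counted in the bad event is charged to an applicant's run of non-liked interviews. That run has geometric tails, so the contribution is $O(1)$ in total.

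One subtlety is that \cref{lem:number-of-interviews-before-t0} is stated conditionally. I would rather use the direct coupling behind it: each applicant interviews until she first likes a position, and the count has geometric tails. This yields $2n+O(1)$ unconditionally.

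\textbf{Case $n\le m<n+k-1$.} Here I decompose the total number of interviews $N$ as
\[
N=N_{<t_0}+N_{\ge t_0}.
\]
The bound on $N_{<t_0}$ is the same argument as before: $\mathbb{E}[N_{<t_0}]\le 2n+o(1)$.

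For $N_{\ge t_0}$, outside a $k$-bad realization \cref{lem:interviews-after-t0} gives $N_{\ge t_0}\le k^3=O(\log^3 n)$. On a $k$-bad realization I use the trivial bound $N\le nm\le n(n+k)$. By \cref{lem:prob-k-bad} the bad event has probability at most $3n^{-3}$, so its contribution is $O(n^2)\cdot 3n^{-3}=o(1)$.

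Summing the two pieces gives $2n+O(\log^3 n)$. The only delicate point is that I must take expectations over the full execution rather than conditioning on "not bad", since conditioning could distort the per-applicant geometric bound. I would handle this by writing $\mathbb{E}[N]\le \mathbb{E}[N_{<t_0}] + \mathbb{E}[N_{\ge t_0}\mathbf 1_{\neg\text{bad}}] + nm\Pr[\text{bad}]$.

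\textbf{High-probability statement.} Suppose the realization is not $k$-bad and $k$-eventual-mutual-happiness holds. Then every agent, once it has had $k$ interviews, has some partner it mutually likes. I would apply \cref{lem:position-stops-interview-after++} together with its applicant analogue (used in \cref{lem:interviews-after-t0}) to show that each agent conducts at most $O(k)=O(\log n)$ interviews. For large $m$ I would use the union bound above, giving probability $1-O(n^{-3})$.

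\textbf{Main obstacle.} The hardest step is the large-$m$ regime with the worst-case bound on the bad event, where $nm$ need not be $o(n^{3})$. I would handle it with the per-applicant geometric tail argument rather than the crude $nm$ bound.
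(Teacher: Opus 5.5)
Your treatment of the regime $n\le m<n+k-1$ is essentially the paper's (\cref{lem:2n-for-small-m}). The unconditional split $\mathbb{E}[N]\le\mathbb{E}[N_{<t_0}]+\mathbb{E}[N_{\ge t_0}\mathbf{1}_{\neg\mathrm{bad}}]+nm\Pr[\mathrm{bad}]$ works there because $nm=O(n^2)$ is beaten by the $O(n^{-3})$ failure probability.

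The gap is in the regime $m\ge n+k-1$. With $k=\lceil 10\log n\rceil$ fixed, your failure probability $n2^{-k}\le n^{-9}$ does not depend on $m$, so $nm\cdot n^{-9}$ is unbounded once $m\gg n^{8}$. You noticed this, but neither repair you offer closes it:
\begin{itemize}
\item \cref{lem:4m} bounds only the unconditional expectation, by $4m$. That is linear in $m$ and says nothing about $\mathbb{E}[N\mathbf{1}_{\mathrm{bad}}]$.
\item The charging claim is not an argument. On the bad event an applicant can be displaced, or exhaust the positions available to her, and start another run. Nothing in your sketch bounds the number of such runs, so per-run geometric tails do not give an $O(1)$ total.
\item For the same reason, ``each applicant interviews until she first likes a position'' is not an unconditional fact. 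It is exactly what fails when cascades occur.
\end{itemize}

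The missing idea is to let the threshold grow with $m$. At every iteration at least $m-n+1$ positions are unmatched, and matched positions never become unmatched. So an applicant with fewer than $m-n+1$ interviews always has a non-interviewed unmatched position available. Hence if every applicant likes one of her first $m-n+1$ interviewed positions, nobody ever proposes to a matched position, there are no displacements, and each applicant's count equals her first-success index. This is the paper's choice $k=m-n+1$ in \cref{lem:2n-for-big-m}. The failure event then has probability at most $n2^{-(m-n+1)}$, and even the crude bound gives $\mathbb{E}[N]\le 2n+n^2m/2^{m-n+1}$. The extra term is at most $1/n$ for $m\le n^7$ and at most $m^3/2^{\sqrt m}=O(1)$ for $m>n^7$. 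Your charging heuristic would need exactly this fact to become rigorous.

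A related gap affects your high-probability claim for large $m$. The notion of a $k$-bad realization and \cref{lem:prob-k-bad} are only available when $m<n+k-1$, since otherwise the transition point is never reached before termination. Your union bound over applicants does not control positions. A union bound of the mutual-happiness event over all $m$ positions gives $mn^{-4}$, which is not $O(n^{-3})$ for large $m$. The paper closes this with \cref{lem:max-position-index-interviewed}: under $k$-eventual-happiness only the first $n+k-1$ positions are ever interviewed, so the union bound runs over $n+k-1$ positions.
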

This lemma will be proven in \cref{subsec:number-of-interviews}. This, together with \cref{lemma:serial-stable}, completes the proof of \cref{thm:2n}. 

\subsection{Lower Bound {for the Bilaterally Ex-Ante Equivalent Setting}}\label{sec:lower-bounds}
We have shown that for any setting, \cref{algorithm:serial-adaptive} \emph{always} outputs a matching that is   
interim stable. In the bilaterally ex-ante equivalent setting, we have shown that when $m\ge n$ its expected number of interviews per applicant is $2+o(1)$  (\cref{thm:2n}). In this section, we establish a lower bound on the number of interviews required to guarantee interim stability for the bilaterally ex-ante equivalent setting. Let $c=\min\{n,m\}$ be the length of the short side of the market.\footnote{The short side refers to the side with fewer agents, $\min\{n,m\}$. Thus, when $n \le m$ the short side is the applicants' side, and otherwise it is the positions' side.} {We show that any algorithm that for every $c$ always guarantees interim stability, must sometimes perform more than $2\cdot c-2$  interviews. 
}

\begin{restatable}{proposition}{LowerBoundIid}\label{thm:lower-bound-iid}
    Consider a bilaterally ex-ante equivalent {instance} with $n$ applicants and $m$ positions. 
    For any algorithm that 
    performs no more than $2\cdot\min\{m,n\}-2$ interviews, the probability that the matching is not interim stable is positive.
\end{restatable}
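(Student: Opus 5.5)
The plan is to exhibit a positive-probability event on which no interview set of size at most $2\cdot\min\{m,n\}-2$ can support an interim-stable matching. The natural event is that \emph{every} interview the algorithm conducts comes out ``badly for both sides.'' Concretely, for each conducted interview of $a_i$ at $p_j$, the event requires $v_{i,j}<V$ and $u_{j,i}<U$, where $V$ and $U$ are the common ex-ante values from bilateral ex-ante equivalence.

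First I would show that this event $E$ has positive probability, even for an adaptive (and possibly randomized) algorithm. Condition on the history before the $\ell$-th interview. The pair chosen next is determined by that history and the algorithm's internal coins. The two fresh samples are drawn independently of the past from $F_{i,j}$ and $G_{j,i}$. Since $\rho_{i,j}=\eta_{j,i}=\tfrac12$ and values equal to the mean occur with probability zero, both samples fall strictly below their means with probability exactly $\tfrac14$, independently of the history. The algorithm performs at most $2c-2$ interviews, where $c=\min\{m,n\}$. Chaining the conditional probabilities therefore gives $\Pr[E]\ge (1/4)^{2c-2}>0$.

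Next I would show that on $E$ the output $\mu$ cannot be interim stable given the final interview set $Z$. Suppose every matched pair in $\mu$ has interviewed (otherwise interim stability fails immediately). Then each matched applicant $a_i$ has interim value $v_{i,\mu(a_i)}<V$ for her partner, and each unmatched applicant has value $0<V$ for her outside option. The same holds for positions with $U$. Now take any pair $(a_i,p_j)\notin Z$. They are not matched to each other, since matched pairs interviewed. Their interim utilities are the priors $V$ and $U$, which strictly exceed what each gets under $\mu$. Hence $(a_i,p_j)$ is a blocking pair. So on $E$, interim stability forces every applicant–position pair to have interviewed, i.e.\ $|Z|=nm$.

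The last step is the counting. Since $n,m\ge 2$, we have $nm\ge 2\max\{n,m\}\ge 2c>2c-2\ge|Z|$. Thus some pair is uninterviewed and blocks, so the matching is not interim stable with probability at least $\Pr[E]>0$.

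I expect the only delicate point to be the first step, namely making the positivity argument rigorous for adaptive algorithms. One has to note that the identity and number of interviews may depend on earlier outcomes, but each new interview's outcome is independent of the past. The bound on the number of interviews, together with the sequential conditioning, handles this.
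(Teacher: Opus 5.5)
Your proof is correct, and it takes a genuinely different route from the paper's.

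\textbf{The paper's route.} The paper looks only at each agent's \emph{first} interview. It considers the event that every matched applicant dislikes her first interviewed position and every matched position dislikes its first interviewed applicant. On that event, a counting argument with the budget $2c-2$ (where $c=\min\{m,n\}$) produces a matched applicant and a matched position that each had exactly one, disliked, interview, not with each other. That pair blocks.

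\textbf{Your route.} You require \emph{every} interview to come out below the mean for both sides. You then note that every uninterviewed pair blocks, so interim stability would need all $nm$ pairs interviewed, which exceeds the budget.

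\textbf{What your approach buys.} Your event is more restrictive, but the argument is simpler. It also handles adaptive and randomized interview choice transparently through sequential conditioning. The paper, by contrast, restricts to ``the $c$ matched agents on each side'' and treats $\{J=c\}$ and $\{L=c\}$ as independent events of probability $2^{-c}$ each. That glosses over the fact that which agents end up matched depends on the interview outcomes. Your argument also proves something strictly stronger: any algorithm whose worst-case number of interviews is below $nm$ fails with positive probability. So the threshold $2c-2$ matters only through the inequality $2c-2<nm$.

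\textbf{What the paper's approach buys.} Its failure mechanism, agents with a single disliked interview, ties directly to the ``two interviews per agent'' intuition, which is why the bound is phrased at $2c-2$. Your argument makes clear that a worst-case statement of this form cannot, by itself, separate $2c$ from any budget below $nm$.
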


{The intuition for \cref{thm:lower-bound-iid} is as follows.
 Consider an {algorithm} that always performs at most $2\cdot c - 2$ interviews. With positive probability, in some executions} none of the {matched} applicants interim like the first position they interviewed with. Similarly, with the same probability, none of the {matched} positions interim like the first applicant they interviewed. 
{Since the number of interviews is smaller than $2\cdot c$ {by at least $2$}, there exists an applicant-position pair such that each of them had only one interview (not with each other) and strictly prefers any agent they did not interview to the one they did.} {Since each has exactly one interview, each is matched to the agent they interviewed (otherwise the matching is trivially unstable). Thus, that applicant-position pair forms a blocking pair, implying that the output matching may fail to be interim stable with positive probability.}
The formal proof is provided in \cref{appendix:lower-bouds}.

\section{Minimizing the Number of Interview Rounds}\label{sec:parallel}

In this section, we study the problem of designing an interim-stable algorithm that minimizes the expected number of interview \emph{rounds} (on top of minimizing the expected number of interviews), {in the bilaterally ex-ante equivalent setting with $n$ applicants and $m\ge n$ positions}. 
We present 
a ``hybrid  adaptive algorithm'' (\cref{algorithm:parallel-bilateral-instantiation}) that schedules interviews in parallel and, based on the outcomes of previous interviews, adaptively determines the next batch of interviews to be conducted.
The algorithm is hybrid: it begins by conducting interviews in parallel, meaning that in each interview round multiple applicants interview with different positions simultaneously. This first step results in all but at most few applicants (all but $O(\log n)$ of them) being matched. 
It then switches to sequential interviewing in order to complete the matching. The sequential procedure can be viewed as a special case of the parallel framework in which each interview round contains exactly one interview. We show that the algorithm always terminates with an interim-stable matching, and we analyze its expected number of interview rounds.

Our main result is as follows. In the bilaterally ex-ante equivalent setting with $n$ applicants and $m \ge n$ positions, we prove that the algorithm always terminates with an interim-stable matching and that the expected number of interview rounds is $O(\log^3 n)$ and the expected number of interviews is $2\cdot n + O (\log^3 n)$. Moreover, when $m\ge n +  \lceil 10 \log n \rceil -1 $ the expected number of interview rounds is at most {$4+\log n$}.  

{\cref{algorithm:parallel-bilateral-instantiation} transitions from a parallel phase to a sequential phase. This allows us to analyze the algorithm similarly to the sequential one, and use that similarity to bound the number of rounds. Based on simulations, we hypothesize that for this setting, a fully parallel algorithm {(in which, in every loop iteration, all unmatched applicants interview in parallel, and then enter a proposal stage if they wish, until no unmatched applicants remain)}, only uses $O(\log n)$ interview rounds in expectation. See \cref{appendix:simultation-prallel}.}

The rest of the section is organized as follows.
In \cref{subsec:hybrid-alg} we formally describe the algorithm.
In \cref{subsec:hybrid-proof} we introduce the notation used in the analysis, and then establish an upper bound on the expected number of interview rounds that is sufficient for the algorithm to reach interim stability in the bilaterally ex-ante equivalent setting. 

\subsection{The Hybrid Adaptive Algorithm}\label{subsec:hybrid-alg}
We first specify the hybrid adaptive algorithm (\cref{algorithm:parallel-bilateral-instantiation}) and its subroutines, and then show that it always yields an interim-stable matching (\cref{lem:hybrid-stable}). 
Before presenting the pseudo-code, we include a high level description of the algorithm.
The algorithm starts by constructing a subset $\app'$ of unmatched applicants participating in parallel interviews. In each interview round, the algorithm selects a set of pairs, each consisting of an unmatched applicant from $\app'$ and an unmatched position, to conduct interviews simultaneously (Subroutine~\ref{algorithm:pick-interviews-bilateral-equivalent}). After every round, we run a DA step with the preference lists of applicants truncated at the first position with whom they have not yet interviewed (Subroutine~\ref{algorithm:truncated-DA}).
Consequently, in the DA step, applicants are allowed to propose only to positions that appear before {the} first non-interviewed position.
The algorithm runs parallel interview rounds until all applicants in
$\app'$ are matched, at which point it moves to a final 
stage, with interviews done sequentially. In this stage, the remaining unmatched applicants interview one at a time, following the sequential adaptive algorithm (\cref{algorithm:serial-adaptive}), until all applicants are matched.
If, however, at some point during the parallel interviews, an applicant in $\app'$ has no unmatched position left to interview, then the algorithm switches to a fallback stage and does not reach the sequential stage. In the fallback stage, all remaining interviews are conducted, all interview values are realized, and the standard DA procedure is run on the realized preferences (Subroutine~\ref{algorithm:all-interviews-algorithm}). We later show that for any bilaterally  ex-ante equivalent setting,  the fallback stage occurs with very low probability, so the impact on the expected number of interview rounds is negligible.
\begin{algorithm}
\caption{\textsc{Hybrid\_Adaptive\_Algorithm}}
\label{algorithm:parallel-bilateral-instantiation}
\begin{algorithmic}[1]
\Statex \textbf{Input:} an instance $I= (\app,\pos, {\bf F}, {\bf G})${, where $n=|\app|$ and $m=|\pos|$ and $m\ge n$;}
\Statex \textbf{Output:} a matching $\mu$.
\State {let $k\gets \max\{\lceil 10 \log n \rceil,m-n+1\}$ }
\State let $\app' \gets \{\, a_i \in \app \mid 1 \le i \le \min\{n,m-(k-1)\} \,\}$ \Statex\hspace{\algorithmicindent}\Comment{the set of applicants that interview in parallel}
\State let $Z\gets\emptyset$ and $\mu(x)\gets x$ for every $x\in\app\cup\pos$
\State Initialize iteration $t\gets1$.
\While {{$\exists a\in \app'$ s.t $\mu(a)=a$ and $a$ has not been rejected by all positions} }
\Comment{iteration $t$}
\State {let $\app''$ be the set of unmatched applicants in $\app'$ that have not been rejected by all positions}
    \State $\mathcal{M} \gets \textsc{Pick\_Next\_Interviews}(\app'',\pos,\mu, Z)$
    \Comment{see Subroutine~\ref{algorithm:pick-interviews-bilateral-equivalent}}
    \Statex \hspace{\algorithmicindent}\Comment{$\mathcal{M}$ is a set of pairs $(a_i,p_j)$ to be interviewed in this iteration.}
    \If{$|\mathcal{M}|==|\app''|$} \Comment{each applicant in $\app''$ received an interview.}
        \ForAll{$(a_i,p_j) \in \mathcal{M}$ \textbf{in parallel}}
            \State{Position $p_j$ interviews applicant $a_i$, and $Z$ is updated accordingly.}
        \EndFor
    \Else \Comment{The algorithm falls back to running DA on the realized utilities.} 
        \State Conduct every remaining possible interview. \hspace{\algorithmicindent}\Comment{Formal Algorithm in \cref{appendix:all-interviews-algorithm}}
        \State \Return $\textsc{Applicant\_Proposing\_DA}\left(\app,\pos,\{\succ_{a_i}^{Z}\}_{i \in [n]},\{\succ_{p_j}^Z\}_{j \in [m]}\right)$
        {\Comment{see \cref{algorithm:DA}}}
    \EndIf
    
        \State $Z^t\leftarrow Z$ \Comment{$Z^t$ is saved to be used in the analysis.}
    \State $\mu \gets {\textsc{DA\_on\_Applicants\_Truncated\_Interim\_Preferences}}(\app,\pos,\mu,Z)$
    \Statex\hspace{\algorithmicindent}\Comment{see Subroutine~\ref{algorithm:truncated-DA}}
    \State $t\leftarrow t+1$ \Comment{Go to iteration $t+1$}
\EndWhile
    \State $\mu \gets \textsc{Sequential\_Adaptive\_Algorithm}(\app, \pos,\mu, Z)$
    \Comment{see Algorithm~\ref{algorithm:serial-adaptive}}
\State \Return $\mu$
\end{algorithmic}
\end{algorithm}

\begin{remark}
\cref{algorithm:parallel-bilateral-instantiation} uses subroutines that rely on rejections of applicants by positions that have already happened during the algorithm. To simplify the presentation, we do not explicitly pass the lists of rejections as arguments, but rather assume that a matching object keeps track of all rejections that occur during its formation.
\end{remark}

The hybrid algorithm updates the current matching after new interviews are conducted. We use a modified version of DA that acts on preference lists truncated at the first non-interviewed position, ensuring that applicants only propose to positions with whom they have actually interviewed. This procedure is detailed in Subroutine~\ref{algorithm:truncated-DA} (see \cref{appendix:da-trunc-alg}).

Both \cref{algorithm:parallel-bilateral-instantiation} and Subroutine~\ref{algorithm:truncated-DA} use the applicant-proposing DA subroutine.  
For completeness, we formally describe that subroutine (see \cref{algorithm:DA}) in \cref{appendix:da-algo}. 
It is the standard applicant-proposing DA algorithm, applied to a given preference profile and an initial matching, which is taken to be empty if not explicitly provided.
In Subroutine~\ref{algorithm:truncated-DA}, this subroutine is executed on the applicants-truncated interim preference orders, while in \cref{algorithm:parallel-bilateral-instantiation} it is invoked as a fallback on preference relations induced by fully realized values.

Having defined how the matching is updated in the hybrid algorithm, we now turn to the mechanism for selecting which interviews to perform in each round. To pair unmatched applicants with relevant positions,
we use the notation \(\Gamma = ((V_A, V_P); E)\) for a bipartite graph with two vertex sets, one a subset of applicants $V_A\subseteq \app$ and the other a subset of positions $V_P\subseteq \pos$, and edge set \(E \subseteq V_A \times V_P\). Using the bipartite graph, the following selection subroutine identifies a set of simultaneous interviews.

{\floatname{algorithm}{Subroutine}
\begin{algorithm}[h]
\caption{\textsc{Pick\_Next\_Interviews}}
\label{algorithm:pick-interviews-bilateral-equivalent}
\begin{algorithmic}[1]
\Statex \textbf{Input:} 
  an instance $I= (\app'',\pos, {\bf F}, {\bf G})$;
  a current matching $\mu$;
  and a set $Z$ of interviews.  
  \Statex \textbf{Output:} a set $\mathcal{M}$ of applicant-position pairs to interview (where all applicants belong to $\app''$).
\ForAll{$a_i \in \app''$}
    \State let $\pos_i^*$ be the set of $a_i$'s most preferred positions in $\pos$ from which she has not yet been rejected 
\EndFor
\State construct a bipartite graph $\Gamma = ((\app'', \pos^{\mathrm{free}}); E)$ where $\pos^{\mathrm{free}} = \{\, p \in \pos \mid \mu(p) = p \,\}$ \State and for every $a_i\in \app''$ and $p_j\in \pos^{\mathrm{free}}$
\[
(a_i, p_j) \in E \iff p_j \in \pos_i^* \text{ and } (a_i,p_j)\notin Z
\]

\State let $\mathcal{M}$ be a maximum matching in $\Gamma$.
\State \Return $\mathcal{M}$

\end{algorithmic}
\end{algorithm}
}

We begin the analysis by establishing the interim stability of the algorithm. The proof of \cref{lem:hybrid-stable} is deferred to \cref{appendix:proof-stable-hybrid}.
\begin{restatable}{lemma}{HybridStable}
\label{lem:hybrid-stable}
    For any input instance, \cref{algorithm:parallel-bilateral-instantiation} terminates with an interim-stable matching.
\end{restatable}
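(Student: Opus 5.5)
The proof splits according to how \cref{algorithm:parallel-bilateral-instantiation} terminates: either through the fallback branch, or through the final call to \cref{algorithm:serial-adaptive}.

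\emph{Termination.} Each parallel iteration either takes the fallback branch, which returns immediately, or conducts $|\app''|\ge 1$ new interviews. Every pair is interviewed at most once, so there are at most $nm$ parallel iterations. Each call to Subroutine~\ref{algorithm:truncated-DA} is a DA run on finite truncated lists, so it terminates. The final sequential call terminates by the argument of \cref{lemma:serial-stable}, which I expect applies with a nonempty initial $(\mu,Z)$ as well: each iteration either adds an interview or adds a rejection, and both are bounded.

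\emph{Fallback case.} Here every pair has been interviewed, so interim preferences coincide with realized preferences. The output is applicant-proposing DA on complete preferences, which is stable by Gale--Shapley. Every matched pair has interviewed, so the matching is interim stable.

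\emph{Sequential-stage case.} I plan to reduce to \cref{lemma:serial-stable} by showing that the state $(\mu,Z,\text{rejections})$ reached at the end of the parallel phase satisfies the invariants that proof uses:
\begin{itemize}
\item[(i)] every tentatively matched pair has interviewed;
\item[(ii)] whenever $a_i$ has been rejected by $p_j$, $p_j$ is currently matched to someone it weakly interim-prefers to $a_i$ with respect to the current $Z$;
\item[(iii)] each matched applicant is matched to a position she weakly interim-prefers to every position that has not rejected her.
\end{itemize}
Invariant (i) holds because the truncated DA lets applicants propose only to interviewed positions. For (ii): DA rejections occur with respect to interviewed values, and those values never change afterwards. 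Later interviews change only non-interviewed entries, and positions' tentative matches only improve within each DA run. So once $p_j$ holds someone it prefers to $a_i$, it keeps such a partner, since DA restarts from the current $\mu$ and positions only trade up.

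Invariant (iii) is the delicate point. A new interview can lower a non-interviewed entry, but it can also raise an entry above a matched applicant's current partner. This can happen only for the applicant being interviewed, and only unmatched applicants in $\app''$ are interviewed. Also, interviews are only at positions in $\pos_i^*$, her current top set, so a matched applicant's preferences never change. Thus (iii) is maintained.

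With these invariants, the sequential-algorithm proof carries over:
\begin{itemize}
\item Suppose $(a_i,p_j)$ blocks the final matching. Then $a_i$ prefers $p_j$ to her match, so by (iii) and the termination rule, $p_j$ rejected $a_i$.
\item Alternatively, $a_i$ was never able to reach $p_j$, but that is excluded: the sequential algorithm only stops when every applicant is matched or has been rejected everywhere.
\item In the rejection case, (ii) says $p_j$ holds someone it weakly prefers to $a_i$, so the pair does not block.
\end{itemize}
A rejection without an interview (line 8 failing) is also covered by (ii), since it happens only when $\mu(p_j)\succeq a_i$.

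\emph{Main obstacle.} The hardest step is checking (iii) across the truncated DA: a matched applicant who is displaced mid-DA continues proposing down her truncated list. I would check that truncation stops her before any non-interviewed position. She then becomes unmatched with her top non-rejected positions uninterviewed, which is exactly a state the sequential algorithm handles.
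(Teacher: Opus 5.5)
Your proposal is correct and follows the paper's route. The fallback branch is stable because it runs standard DA on fully realized preferences, and otherwise the Gale--Shapley-style argument of \cref{lemma:serial-stable} carries over; your invariants (i)--(iii) make explicit what the paper compresses into ``the proof is exactly as in \cref{lemma:serial-stable}.'' One small slip in your termination count: when an unmatched position accepts an already-interviewed applicant, that iteration adds neither an interview nor a rejection, so also count matched positions (which never decrease) in the potential.
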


\subsection{Proof of \cref{thm:hybrid}}
\label{subsec:hybrid-proof}
In this section, we analyze the behavior of \cref{algorithm:parallel-bilateral-instantiation} in settings satisfying the bilaterally ex-ante equivalent assumption. Having established interim stability in \cref{lem:hybrid-stable}, we focus on bounding the expected number of interview rounds by $O(\log^3 n)$. Moreover, we show that when $m\ge n +  \lceil 10 \log n \rceil $, the expected number of interview rounds is at most $4+\log n$. 
We begin by presenting the main theorem, followed by the terminology used in the analysis, and the proof structure. 

\restatingtheoremtrue
\ParallelMainResult*
\restatingtheoremfalse

\cref{algorithm:parallel-bilateral-instantiation} is designed for bilaterally ex-ante equivalent settings with $m\ge n$. 
The algorithm is structured to correspond to the analysis of \cref{algorithm:serial-adaptive} in these settings (analysis presented in \cref{sec:bilateral}). 
Recall that the analysis of \cref{algorithm:serial-adaptive} separated its execution to two phases (one phase for iterations before the transition point $t_0$, and the other for iterations after it). 
To mimic these two phases, \cref{algorithm:parallel-bilateral-instantiation} has two main phases: Phase~1 consists of the main parallel-interviewing stage (the main while-loop), and Phase~2 consists of a sequential phase that follows.    
For Phase~1 to be analogous to the execution of \cref{algorithm:serial-adaptive} prior to the transition point $t_0$, we select a subset $\app'$ of unmatched applicants containing all but at most $k$ applicants (where $k$ is the parameter used in the analysis of \cref{sec:bilateral}, and $k=\lceil 10 \log n \rceil$ is the more challenging case). That phase ends with the matching of all applicants in $\app'$, unless it enters the ``else'' clause in line 13 (in that case the algorithm transitions to a ``fallback phase''---Phase~3---described immediately below). Phase~2 begins once all applicants in $\app'$ are matched, and corresponds to running the sequential adaptive algorithm (\cref{algorithm:serial-adaptive}) starting at the transition point $t_0$, with the goal of completing the match. 
Finally, Phase~3 is invoked if some applicant in $\app'$ has no unmatched position left to interview, an event that occurs with low probability when the setting is a bilaterally ex-ante equivalent setting. It triggers all interviews, but contributes little to the expectation since it is rarely invoked.

We next present the outline of the proof of \cref{thm:hybrid} (full proof in \cref{appendix:hybrid-bilateral}).
Stability of the final matching follows from \cref{lem:hybrid-stable}.
To complete the proof of \cref{thm:hybrid} we need to bound the expected number of interview rounds.
We first show that with high probability, the number of interview rounds during Phase~1 is bounded by {$k$}. We then show that the lemmas established for the sequential algorithm (\cref{algorithm:serial-adaptive}) continue to hold during Phase~1. Consequently, once Phase~2 is reached and the execution coincides with the sequential algorithm, the results of \cref{sec:bilateral} apply.
{The expected number of interview rounds is obtained by conditioning on the event that the bounds for both phases hold, and observing that the contribution of the maximum possible number of interview rounds outside this event to the expected number of rounds, is negligible.} 

To formalize this argument, we establish the following three lemmas.  We start by characterizing the algorithm behavior during Phase~1 under the assumption of $k$-eventual happiness for applicants.

\begin{restatable}{lemma}{HybridPhaseOne}\label{lemma:parallel-matched-stays-matched}
Assume that {$m\ge n$ and} $k$-eventual-happiness for applicants holds for the execution of 
\cref{algorithm:parallel-bilateral-instantiation}.
Then during the execution of Phase~1
the following hold:
\begin{enumerate}
    \item As long as an applicant in $\app'$ is unmatched, she receives an interview with some unmatched position in every iteration.
    \item No unmatched applicant in $\app'$ proposes to any position that is already matched.
    \item Every applicant in $\app'$ matches to the first position she interim likes and stays matched to that position throughout Phase~1.
\end{enumerate}
\end{restatable}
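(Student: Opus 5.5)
We prove the three items together by induction on the iterations $t$ of the main while-loop of Phase~1. The induction hypothesis $H(t)$ is the conjunction of the following, holding at the start of iteration $t$:
\begin{enumerate}
\item every matched applicant in $\app'$ is matched to the unique position she interim likes among her interviews so far, and that position is the first one she interim liked;
\item every unmatched applicant in $\app'$ has interviewed only with positions she does not interim like (so all her interim values for interviewed positions are below $V$);
\item no applicant in $\app'$ has been rejected by any position;
\item every matched position is matched to an applicant in $\app'$, and no applicant outside $\app'$ has interviewed.
\end{enumerate}
$H(1)$ holds trivially since $Z=\emptyset$ and $\mu$ is empty.

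Under $H(t)$ and bilateral ex-ante equivalence, the interim preference of an unmatched $a_i\in\app''$ is easy to describe. Her top class $\pos_i^*$ consists exactly of the non-interviewed positions, all at value $V$. These outrank her interviewed positions, which sit below $V$, and she has no rejections.

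\textbf{Item 1.} We show that the graph $\Gamma$ in Subroutine~\ref{algorithm:pick-interviews-bilateral-equivalent} contains a matching saturating $\app''$, so the maximum matching $\mathcal M$ does too, and the ``else'' branch is never entered. We use Hall's condition. Each $a_i\in\app''$ has at most $k-1$ interviews: by $k$-eventual-happiness she would otherwise interim like someone, contradicting $H(t)$. So she is adjacent to every free position except at most $k-1$. The number of free positions is at least $m-|\app'|+|\app''|\ge (k-1)+|\app''|$, using $|\app'|\le m-(k-1)$. Hence for any $S\subseteq\app''$, $|N(S)|\ge |\pos^{\mathrm{free}}|-(k-1)\ge|\app''|\ge|S|$. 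This degree-counting step is the main obstacle, and is exactly why $\app'$ was chosen to leave $k-1$ slack positions.

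\textbf{Items 2 and 3.} We analyze the truncated DA step (Subroutine~\ref{algorithm:truncated-DA}) after the parallel interviews. Each applicant's list is truncated at her first non-interviewed position.

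Consider a new applicant $a_i$ who just interviewed a free $p_j$. If she interim likes $p_j$, then $p_j$ is her unique top choice, since all other interviewed positions are below $V$ and non-interviewed ones are at $V$. So she proposes only to $p_j$. Because $p_j$ was free and, by the matching $\mathcal M$, interviewed only $a_i$ this round, $p_j$ accepts. If she does not like $p_j$, her truncated list is empty (the top entry is non-interviewed), so she makes no proposal.

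Already-matched applicants hold their top choice, so they never propose further and are never displaced, because nobody proposes to matched positions. Hence no rejections occur, no one proposes to a matched position (item 2), and $H(t+1)$ follows, which gives item 3. Finally, the tie-breaking and ordering inside DA are irrelevant, since each proposer has at most one acceptable target, and that target is distinct for distinct proposers.
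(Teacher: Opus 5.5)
Your proof is correct and follows essentially the same route as the paper's: induction over Phase~1 iterations, Hall's condition from the $k-1$ slack free positions combined with each applicant in $\mathcal{A}''$ having fewer than $k$ interviews, and then the observation that in the truncated DA step only applicants who interim like their new interviewer propose, each to a distinct free position. Your strengthened invariant (no rejections, no interviews by applicants outside $\mathcal{A}'$, unmatched applicants hold only disliked interviews) makes explicit what the paper's induction uses implicitly. For example, it gives that $\mathcal{P}_i^*$ is exactly the set of non-interviewed positions and that an applicant who dislikes her new interview has an empty truncated list.
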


During Phase~1, the $k$-eventual-happiness assumption guarantees that there are always enough unmatched positions relative to the unmatched applicants in $\app'$. Consequently, every unmatched applicant receives an interview with an unmatched position in each iteration. Once an applicant encounters a position she interim likes she wants to propose to it. That position is still unmatched and therefore accepts her proposal immediately and she remains matched to that position for the rest of Phase~1. We prove this formally in \cref{appendix:proof-phase1}.
Building on the properties established in \cref{lemma:parallel-matched-stays-matched}, we can now derive a bound on the {expected} number of interview rounds in Phase~1. Since applicants remain matched once they find a position they interim like, the number of rounds is bounded by $k$ {(see \cref{appendix:parallel-log-iterations} for the proof).}
\begin{restatable}{lemma}{KRounds}\label{lemma:parallel-log-iterations}
If $k$-eventual-happiness for applicants holds throughout the execution of \cref{algorithm:parallel-bilateral-instantiation}, then Phase~1 terminates after at most $k$ interview rounds.
\end{restatable}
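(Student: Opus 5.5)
The plan is to follow each applicant $a\in\app'$ through Phase~1 and show that she is matched after at most $k$ interview rounds. The three properties of \cref{lemma:parallel-matched-stays-matched} do most of the work.

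\emph{Step 1 (one new interview per round).} Consider any iteration $t$ of the main while-loop. By item~1 of \cref{lemma:parallel-matched-stays-matched}, every applicant in $\app''$, that is, every unmatched applicant of $\app'$ not rejected by all positions, receives an interview in $\mathcal{M}$. So the ``else'' fallback branch is never entered. By construction of $\Gamma$, the pair $(a_i,p_j)$ assigned in round $t$ satisfies $(a_i,p_j)\notin Z$. Hence the interview is new, and each round strictly increases the number of distinct positions $a_i$ has interviewed with. I would also check that an applicant is never rejected by all positions while she is still in $\app''$. This follows from item~1 as well: she always has an unmatched, uninterviewed position among her top remaining choices.

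\emph{Step 2 (she stops once she likes a position).} By item~3, an applicant in $\app'$ becomes matched to the first position she interim likes, in the truncated DA step of the same iteration. She then stays matched for the rest of Phase~1. Here I would check that the truncated DA lets her propose to that position right away. All positions she has interviewed but does not like carry values below $V$, while non-interviewed positions carry value exactly $V$. So the liked position, with value above $V$, is her top choice and lies before any non-interviewed position in her truncated list. By item~2 it is unmatched, so it accepts her. Once matched, she leaves $\app''$ and receives no further interviews.

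\emph{Step 3 (counting).} By $k$-eventual-happiness for applicants, any applicant who has had $k$ interviews interim likes at least one of the positions she interviewed with. Combining with Step~1, after at most $k$ rounds in which $a$ is unmatched, she has had $k$ interviews and therefore likes one of them. By Step~2 she is matched by the end of that round and stays matched. Every applicant in $\app'$ starts Phase~1 at round 1, so after $k$ rounds all of them are matched. The while-loop condition then fails and Phase~1 ends after at most $k$ interview rounds.

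The main obstacle is Step~2: confirming that the truncated-DA subroutine matches her in the same round as the liked interview and that this match is never broken. Item~3 of \cref{lemma:parallel-matched-stays-matched} essentially provides this, so the remaining work is routine bookkeeping about rounds versus iterations.
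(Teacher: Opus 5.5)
Your proposal is correct and follows the same route as the paper's proof. The paper simply invokes \cref{lemma:parallel-matched-stays-matched}: every unmatched applicant in $\app'$ interviews in each round until she meets the first position she interim likes, after which she stays matched for the rest of Phase~1, so $k$-eventual-happiness caps Phase~1 at $k$ rounds. Your version only makes explicit the bookkeeping the paper leaves implicit, namely that each round's interview is new because of the edge condition $(a_i,p_j)\notin Z$ in $\Gamma$, and that the truncated DA places the liked position at the top of her list.
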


Finally, we combine the bounds for Phase~1, with the analysis of Phase~2 and the bound on the probability that a realization is not a $k$-bad realization, both established in  \cref{sec:bilateral}. This yields a bound of $O(\log^3 n)$ interview rounds when $k=O(\log n)$. Moreover, when the number of positions is large, the algorithm does not reach Phase~2, and we prove a bound of $O(\log n)$.

\begin{restatable}{lemma}{LogRounds}\label{lem:log-rounds}
    Consider a bilaterally ex-ante equivalent instance with $n$ applicants and $m \ge n$ positions. The expected number of interview rounds in \cref{algorithm:parallel-bilateral-instantiation} is $O(\log^3 n)$. Moreover, when $m\ge n +  \lceil 10 \log n \rceil $ the expected number of interview rounds is at most {$4+\log n$}.
\end{restatable}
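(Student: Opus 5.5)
We bound the expected number of rounds by splitting on whether the realization is $k$-bad, and within each case by the phase in which interview rounds occur. Throughout, $k=\max\{\lceil 10\log n\rceil, m-n+1\}$ is the parameter chosen in line 1 of \cref{algorithm:parallel-bilateral-instantiation}.

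\textbf{Main case.} Let $\mathcal{G}$ be the event that the realization is not $k$-bad. Since $k$-eventual-mutual-happiness implies $k$-eventual-happiness for applicants, \cref{lemma:parallel-matched-stays-matched} applies on $\mathcal{G}$.

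On $\mathcal{G}$ the fallback (Phase~3) is never triggered. By item~1 of \cref{lemma:parallel-matched-stays-matched}, every unmatched applicant in $\app''$ gets an interview, so $|\mathcal{M}|=|\app''|$ in every round. Hence Phase~1 ends after at most $k$ rounds by \cref{lemma:parallel-log-iterations}.

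By item~3, every applicant of $\app'$ ends Phase~1 matched to the first position she interim likes. The applicants outside $\app'$ have not yet interviewed, and exactly $|\app'|=m-k+1$ positions are matched. This is precisely the state of \cref{algorithm:serial-adaptive} at the transition point $t_0$, under a coupling of the realized values. Consequently \cref{lem:interviews-after-t0} bounds the interviews in Phase~2 by $k^2(k-1)$. Since Phase~2 is sequential, this also bounds its number of rounds. So on $\mathcal{G}$ the total is at most $k+k^3=O(\log^3 n)$ when $k=\Theta(\log n)$.

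The hard step is justifying that Phase~2 is covered by the sequential analysis. The arguments behind \cref{lem:interviews-after-t0} use the structure at $t_0$, namely the set of positions that do not interim like their match and the happiness guarantees. We must check that these depend only on the matching and interview set at $t_0$, and not on the order in which earlier interviews were run. I would verify this by rereading that lemma's proof and confirming it only uses properties that item~3 gives us.

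\textbf{Bad realizations.} On the complement of $\mathcal{G}$, the number of rounds is at most the total number of possible interviews plus one fallback round, i.e.\ at most $nm+1$. When $m\le n+k-1$ this is $O(n^2)$, and \cref{lem:prob-k-bad} gives $\Pr[\neg\mathcal{G}]\le 3/n^3$, so the contribution is $o(1)$. Summing the two cases gives the $O(\log^3 n)$ bound.

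\textbf{Large $m$.} When $m\ge n+\lceil 10\log n\rceil$, we have $k=m-n+1$, so $\app'=\app$ and Phase~2 is empty. Here I would not invoke the $k$-bad machinery. Instead I would bound Phase~1 directly, which must also handle $m$ possibly much larger than $n$.

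While an applicant is unmatched she interviews a fresh unmatched position in each round, and each such interview is liked independently with probability $1/2$. Enough free positions always exist: at most $n-1$ other applicants can ever occupy positions, leaving at least $m-n+1=k$ free ones. Thus the number of rounds until a given applicant matches is geometric with parameter $1/2$, and
\begin{equation*}
\Pr[\text{Phase 1 exceeds } r \text{ rounds}]\le n2^{-r}.
\end{equation*}
Summing this tail over $r$ gives an expectation of at most $\log n+2$. The fallback requires some applicant to fail more than about $k$ times, which happens with probability at most $n2^{-k}\le n^{-9}$. This contributes at most $O(n^{-9}\cdot nm)$ rounds, which fits within the remaining slack in $4+\log n$ after a little care with the additive constants (and with $m$ large, where the fallback itself has probability $n2^{-(m-n+1)}$, which decays in $m$).
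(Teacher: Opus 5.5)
Your proposal is correct and follows essentially the paper's proof. On non-$k$-bad realizations you bound Phase~1 by $k$ rounds and Phase~2 by $k^2(k-1)$ through the sequential lemmas; the paper performs exactly the transfer check you flag, lemma by lemma, for Phase~1. You absorb bad realizations with an $O(nm)\cdot 3/n^3$ term. For large $m$ you combine an $n2^{-r}$ tail on Phase~1 with a fallback term of order $n^2m2^{-k}$, split at $m\le n^7$. Your unconditional union bound even gives $\log n+2$, where the paper conditions on $k$-eventual-happiness and gets $3+\log n$.

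The remaining slips are harmless. First, the fallback takes up to $m$ rounds rather than one, which is still $O(nm)$ in total. Second, saturating $\app''$ in \textsc{Pick\_Next\_Interviews} needs at least $|\app''|+k-1$ free positions, not merely $k$. This holds because only the $n-|\app''|$ matched applicants occupy positions, so Hall's condition holds as in item~1 of \cref{lemma:parallel-matched-stays-matched}. Third, the boundary case $m=n+\lceil 10\log n\rceil-1$ belongs in your large-$m$ argument, since \cref{lem:prob-k-bad} is stated only for $m<n+\lceil 10\log n\rceil-1$.
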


This lemma will be proven in \cref{appendix:log-rounds}. This, together with \cref{lem:hybrid-stable}, and the following lemma that will be proven in \cref{appendix:2-parallel}, completes the proof of \cref{thm:hybrid}.

\begin{restatable}{lemma}{TwoInterviewsParallel}\label{prop:2-interviews}
    Consider the bilaterally ex-ante equivalent setting with  $n$ applicants and $m\ge n$ positions.
    The expected number of interviews in \cref{algorithm:parallel-bilateral-instantiation} is $2\cdot n+O(\log^3n)$. Moreover, when $m=n$ with probability $1 - O(n^{-3})$, every agent conducts at most $O(\log n)$ interviews.
\end{restatable}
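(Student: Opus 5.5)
The proof will mirror that of \cref{tmh:iid}, which handled the same bound for \cref{algorithm:serial-adaptive}. I will split the execution of \cref{algorithm:parallel-bilateral-instantiation} into Phase~1 (the parallel while-loop), Phase~2 (the sequential completion), and Phase~3 (the fallback), and condition on whether the realization is $k$-bad.

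\textbf{Good event.} Let $G$ be the event that the realization is not $k$-bad, with $k=\max\{\lceil 10\log n\rceil, m-n+1\}$ as in the algorithm. Mutual happiness implies $k$-eventual-happiness for applicants, so on $G$ \cref{lemma:parallel-matched-stays-matched} applies. Hence in Phase~1 every applicant in $\app'$ interviews only unmatched positions, proposes only to unmatched positions, and stops at the first position she interim likes. Phase~3 is then never triggered: by item~1, every unmatched applicant in $\app''$ receives an interview in every round, so $|\mathcal{M}|=|\app''|$.

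\textbf{Phase~1 count.} The interview count for an applicant in Phase~1 is the number of trials until her first interim-liked position. Each trial is with a fresh position, so by the median assumption each trial is an independent success with probability $1/2$. Its expectation is at most $2$, giving at most $2|\app'|\le 2n$ in expectation. I would argue this exactly as in \cref{lem:number-of-interviews-before-t0}, via a stopping-time/Wald argument. The parallel order does not matter, because each applicant's sequence of fresh interviews has the same law as in the sequential run.

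\textbf{Phase~2 count.} The key step is to show that at the end of Phase~1 the state is distributed like, or at least satisfies the same structural properties as, the state of \cref{algorithm:serial-adaptive} at the transition point $t_0$. The matched applicants are exactly $\app'$, so exactly $m-|\app'|\ge k-1$ positions are unmatched, and each matched pair has the applicant interim liking her position. Then the argument of \cref{lem:interviews-after-t0} applies verbatim on $G$, because it used only $k$-eventual-mutual-happiness and $k^3$-positions-unhappy at $t_0$. This yields at most $k^2(k-1)=O(\log^3 n)$ further interviews, with one adjustment: when $m-n+1>\lceil10\log n\rceil$ we have $\app'=\app$ and Phase~2 is empty. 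I expect this identification to be the main obstacle. It requires checking that the definitions of the $k$-bad event (in particular $d$-positions unhappy at $t_0^k$) transfer to the hybrid run. I would try first to re-prove \cref{lem:prob-k-bad} for the hybrid process, noting that its proof only depends on each interview being with a fresh partner, plus a union bound over agents. That would give $\Pr[\neg G]\le 3/n^3$.

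\textbf{Bad event and tail claim.} On $\neg G$ the total number of interviews is trivially at most $nm$. If $m$ is polynomially large this is too crude, so I would bound by $n\cdot\min\{m,\text{const}\cdot n\}$. Alternatively I would note that in Phase~3 interviews are only among relevant pairs and cap by $n^2+\ldots$; for $m\le n+O(\log n)$ this is $O(n^2)$, so the contribution is $O(n^2)\cdot 3n^{-3}=o(1)$. For larger $m$, $k=m-n+1$ and I would use the one-sided happiness tail directly, as in the sequential proof. Summing gives $2n+O(\log^3 n)$. For the high-probability claim when $m=n$: on $G$, each agent has at most $k=O(\log n)$ interviews by mutual happiness together with the stopping behavior of Phase~1 and \cref{lem:interviews-after-t0}. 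Since $\Pr[\neg G]=O(n^{-3})$, the claim follows.
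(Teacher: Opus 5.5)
Your plan is correct and essentially the paper's own proof. The paper likewise transfers the sequential lemmas to Phase~1, so that Phase~2 is the sequential run after $t_0$. It then bounds Phase~1 by $2n$ under $k$-eventual-happiness and Phase~2 by $k^2(k-1)$ off the $k$-bad event, and splits into the cases $m<n+\lceil 10\log n\rceil-1$ (using $p_{bad}\le 3/n^3$ with $nm=O(n^2)$) and $m\ge n+\lceil 10\log n\rceil-1$ (using $\Pr(\overline{D})\le n/2^{k}$ with $k=m-n+1$), exactly as you propose. Two corrections: drop the suggested cap $n\cdot\min\{m,\mathrm{const}\cdot n\}$ on the bad event, which is invalid because the fallback conducts all $nm$ interviews (your case split already handles this); and derive the per-agent $O(\log n)$ bound from \cref{lem:applicant-do-max-k-interviews} and \cref{lem:position-stops-interview-after++}, since the aggregate count in \cref{lem:interviews-after-t0} only gives $O(\log^3 n)$ per agent.
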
 \section{Decoupling Interviewing and Matching}\label{sec:interview-system}

The algorithms we have presented so far for finding an interim-stable matching handle the problem by interleaving interviews with a variant of the applicant-proposing DA algorithm.  
One can also consider decoupling the interviews from the matching procedure. That is, the interview-scheduling system schedules the interviews, and only after it is done, a separate system runs the standard applicant-proposing DA algorithm {on} the resulting preferences, to find the final matching.

In this section, we identify a condition that guarantees that any algorithm outputting an interim-stable matching can equivalently be viewed as an interview-scheduling system rather than as a matching mechanism. We further show that, in bilaterally ex-ante equivalent settings, this condition applies to our adaptive algorithms, \cref{algorithm:serial-adaptive,algorithm:parallel-bilateral-instantiation}. With high probability,\footnote{That is, when the realization is not $k$-bad.} the interviews they schedule suffice to yield an interim-stable matching when applicant-proposing DA is run on the realized interview values, and no additional interviews are required. When this condition does not hold, one can conduct all remaining interviews, again resulting with the same desired property.

\begin{restatable}{proposition}{GeneralDecouple}\label{prop:opt}
    Consider an instance with $n$ applicants and $m$ positions. Let $Z$ be the set of interviews, and assume that there exists an interim-stable matching $\mu$ such that every applicant $a_i$ strictly interim prefers $\mu(a_i)$ to any position she did not interview with in $Z$. 
    {Then, the matching generated by the applicant-proposing DA algorithm using the applicants' interim utilities as the effective preferences is interim stable.\footnote{Recall that interim stability does not allow matching without an interview. Thus, this in particular says that no additional interviews are ever required.}} 
\end{restatable}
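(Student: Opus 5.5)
The plan is to compare the DA outcome with the given matching $\mu$ through applicant-optimality of applicant-proposing DA. Let $\mu_A$ be the matching that applicant-proposing DA returns when run on the applicants' and positions' interim utilities given $Z$. Ties in these utilities can arise, for example between prior expectations of non-interviewed agents, so I fix the tie-breaking rule that DA uses. This makes DA run on a strict profile $\succ'$ that refines the weak interim preferences. I then need two facts: $\mu_A$ has no blocking pair, and every pair matched in $\mu_A$ has interviewed.

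The first fact is immediate. $\mu_A$ is stable with respect to $\succ'$, and any strict interim blocking pair would also block under the refinement $\succ'$. So $\mu_A$ has no blocking pair given $Z$.

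For the second fact, I first show that $\mu$ is itself stable with respect to the refined profile $\succ'$. This is where ties are the main obstacle, since a weakly stable matching need not remain stable after ties are broken. Suppose $(a_i,p_j)$ blocks $\mu$ under $\succ'$, so $p_j \succ'_{a_i} \mu(a_i)$. By hypothesis $a_i$ strictly interim prefers $\mu(a_i)$ to every non-interviewed position. Hence $p_j$ must be interviewed, that is $(a_i,p_j)\in Z$, and $p_j$ must be weakly preferred to $\mu(a_i)$ in interim utility. If $a_i$ is unmatched in $\mu$, the hypothesis forces her to have interviewed every position, so again $(a_i,p_j)\in Z$. In either case the comparison involves realized values, which are almost surely distinct because the distributions are continuous. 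So the preference of $a_i$ is strict in interim utility.

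On the position side, $a_i \succ'_{p_j} \mu(p_j)$, and $u_{j,i}^{Z}$ is a realized value because $(a_i,p_j)\in Z$. The value $p_j$ assigns to $\mu(p_j)$ is also realized (as $\mu$ is interim stable, matched pairs interviewed), or it is $0$ if $p_j$ is unmatched. Again a tie has probability zero. So $(a_i,p_j)$ is a genuine interim blocking pair for $\mu$, contradicting the interim stability of $\mu$. I will state the result almost surely, or alternatively assume generic realizations.

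With $\mu$ stable under $\succ'$, I invoke two classical facts for strict preferences. First, applicant-proposing DA yields the applicant-optimal stable matching, so $\mu_A(a_i) \succcurlyeq'_{a_i} \mu(a_i)$ for every applicant. Second, the rural hospitals theorem says the set of matched agents is the same in all stable matchings, so applicants unmatched in $\mu$ are unmatched in $\mu_A$.

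Now take an applicant $a_i$ matched in $\mu_A$. She is matched in $\mu$ as well, and either $\mu_A(a_i)=\mu(a_i)$ or $\mu_A(a_i)\succ'_{a_i}\mu(a_i)$. In the first case the pair interviewed because $\mu$ is interim stable. In the second case, every non-interviewed position is strictly worse for $a_i$ than $\mu(a_i)$, and $\succ'$ refines the interim preferences, so $\mu_A(a_i)$ must be a position she interviewed with. Hence all matched pairs in $\mu_A$ interviewed, which together with the first fact shows that $\mu_A$ is interim stable.
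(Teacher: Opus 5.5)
Your proof is correct and follows the paper's route. Since $\mu$ is stable for the preferences DA actually uses, applicant-optimality gives $\mu_A(a_i)\succcurlyeq'_{a_i}\mu(a_i)$, and the hypothesis then rules out a non-interviewed partner. Your extra step (showing that, almost surely, $\mu$ remains stable under every strict refinement $\succ'$) makes rigorous what the paper takes for granted when it applies applicant-optimality to interim preferences that contain ties among priors; it also covers history-dependent tie-breaking, since any DA run is a run on some strict refinement. The almost-sure qualifier is genuinely needed, because a tie between two realized values of one position could let DA push an applicant onto a position she never interviewed.
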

Whenever \cref{prop:opt} applies, the applicant-proposing DA outputs an interim-stable matching using only the preferences restricted to the interviewed positions.
The proof of \cref{prop:opt} is deferred to \cref{appendix:interview-system-generic}. From this proposition, we obtain the following result (whose proof is deferred to \cref{appendix:interview-system-cor}).

\begin{restatable}{corollary}{OptIsStable}\label{cor:opt-interviews}
    Consider an {instance} with $n$ applicants and $m$ positions where applicants view positions as ex-ante equivalent. Assume that there exists an interim-stable matching $\mu$ such that 
    every applicant $a_i$ interim likes $\mu(a_i)$.  {Then the matching 
    generated by the applicant-proposing DA algorithm using the applicants' preferences induced only by the values realized by interviews is interim stable.} 
\end{restatable}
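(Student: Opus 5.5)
The plan is to reduce \cref{cor:opt-interviews} directly to \cref{prop:opt} by checking its single hypothesis. Under the corollary's assumption, $\mu$ is interim stable given $Z$ and every applicant $a_i$ interim likes $\mu(a_i)$. We must show that every applicant $a_i$ strictly interim prefers $\mu(a_i)$ to every position she did not interview with in $Z$.

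Fix an applicant $a_i$. By interim stability all matched pairs have interviewed, and ``interim likes'' is only defined for interviewed pairs. Hence $\mu(a_i)$ is a position $p_j$ with $(a_i,p_j)\in Z$, and $\vutility{i}{j}{} > V_{i,j} = V$. Here we use that applicants view positions as ex-ante equivalent, so all prior expectations equal the common constant $V$. For any position $p_{j'}$ with $(a_i,p_{j'})\notin Z$, the interim utility is $\vutility{i}{j'}{} = V_{i,j'} = V$. Therefore $\vutility{i}{j}{} > \vutility{i}{j'}{}$, so $\mu(a_i)\succ^{Z}_{a_i} p_{j'}$ strictly. This is exactly the hypothesis of \cref{prop:opt}.

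\cref{prop:opt} then yields that applicant-proposing DA, run with the applicants' interim utilities as preferences, returns an interim-stable matching. The remaining point is the phrasing ``preferences induced only by the values realized by interviews.'' In the proposition's DA, each applicant ranks every interviewed position she interim likes above all non-interviewed positions, and those non-interviewed positions tie at value $V$. The output of \cref{prop:opt} is interim stable, so it matches only interviewed pairs. I will argue that DA with full interim preferences never needs to reach a non-interviewed position for an applicant $a_i$. This should follow from the proposition's proof. Because DA produces the applicant-optimal stable matching, $a_i$ receives something at least as good as $\mu(a_i)$, which lies strictly above every non-interviewed position. Hence the run does not depend on any part of the list below the interviewed positions she prefers to $V$.

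It follows that truncating each applicant's list to her interviewed positions leaves the proposal sequence, and hence the output, unchanged. The same reasoning covers unmatched applicants: the hypothesis says every applicant interim likes $\mu(a_i)$, so every applicant is matched in $\mu$. The only mild obstacle is this truncation and equivalence step, which I expect to handle by citing the proof of \cref{prop:opt} or the applicant-optimality of DA.
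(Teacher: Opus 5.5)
Your proposal is correct and follows the paper's proof. Ex-ante equivalence gives $v^{Z}_{i,j} > V = v^{Z}_{i,j'}$ for $p_j=\mu(a_i)$ and any non-interviewed $p_{j'}$, which is exactly the hypothesis of \cref{prop:opt}. The truncation step you flag is already handled at the end of the paper's proof of \cref{prop:opt}: by applicant-optimality, each applicant's DA outcome is weakly better than $\mu(a_i)$, so no applicant ever proposes to an uninterviewed position, and citing that proof suffices.
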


In \cref{appendix:example-counter}, we also provide a counterexample showing that {the condition that every applicant interim likes her final match} is necessary: if there exists an applicant who does not interim like her matched position, then the applicant-{proposing} deferred-acceptance may match her to a position she never interviewed with, and the resulting matching is not interim stable.

Based on this corollary, we consider the following approach. We first run one of our adaptive interview algorithms {(\cref{algorithm:serial-adaptive,algorithm:parallel-bilateral-instantiation})}. If, upon termination, every applicant interim likes her assigned position, then by \cref{cor:opt-interviews}, running the applicant-proposing DA algorithm on the realized interview values yields an interim-stable matching. Otherwise, if there exists an applicant who does not interim like her match, we conduct all remaining interviews. After all interview values are realized, running applicant-proposing DA guarantees an interim-stable matching. The following proposition shows that, with high probability, no additional interviews are required. The proof is deferred to \cref{appendix:interview-system-proof}.

\begin{restatable}{proposition}{Decouple}\label{lem:no-extra-interviews}
Consider a bilaterally ex-ante equivalent instance with $n$ applicants and $m\ge n$ positions. 
With probability at least $1 - O(n^{-3})$, both \cref{algorithm:serial-adaptive,algorithm:parallel-bilateral-instantiation} terminate with a matching in which every applicant interim likes her matched position. In every such case, {the matching obtained by running the applicant-proposing DA algorithm using the applicants' preferences induced only by the values realized by interviews with positions, is interim stable.}
\end{restatable}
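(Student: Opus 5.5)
The proof has two parts: a probabilistic claim about the algorithms' final matchings, and a deterministic consequence of \cref{cor:opt-interviews}. I would do the deterministic part first. Suppose either algorithm has terminated with a matching $\mu$ in which every applicant interim likes $\mu(a_i)$. By \cref{lemma:serial-stable} for \cref{algorithm:serial-adaptive}, and \cref{lem:hybrid-stable} for \cref{algorithm:parallel-bilateral-instantiation}, $\mu$ is interim stable given the final set of interviews $Z$. The setting is bilaterally ex-ante equivalent, so in particular applicants view positions as ex-ante equivalent. \cref{cor:opt-interviews} then applies to $(Z,\mu)$ and shows that applicant-proposing DA on the realized interview values is interim stable. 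This settles the second sentence of the statement.

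For the probabilistic part, I would show that if the realization is not $k$-bad, with $k=\lceil 10\log n\rceil$ (or $k=m-n+1$ when $m$ is large), then every applicant interim likes her final match. Then \cref{lem:prob-k-bad} gives failure probability at most $3/n^3$. When $m$ is large relative to $n$, only $k$-eventual-happiness for applicants is needed, and its failure probability is bounded by a union bound: $n\cdot 2^{-k}\le n^{-9}$. The argument splits into the two phases.

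\textbf{Phase before $t_0$.} Under $k$-eventual-happiness, an applicant only interviews at unmatched positions. This is the content behind \cref{lem:number-of-interviews-before-t0}, and for the hybrid algorithm it is item 3 of \cref{lemma:parallel-matched-stays-matched}. So each applicant is matched to the first position she interim likes. I would argue, using the tie-breaking rule in line 4, that a rejected applicant is always replaced by an applicant the position prefers. A displaced applicant then continues proposing. Her most preferred remaining positions are either ones she interim likes or uninterviewed ones of value $V$. She never accepts a position she dislikes while an uninterviewed one remains, because a realized value below $V$ ranks below the prior value $V$.

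\textbf{Phase after $t_0$.} Here I would reuse the structure from the proof of \cref{lem:interviews-after-t0}. Each still-unmatched applicant interviews until she meets a position with which she mutually interim likes, and $k$-eventual-mutual-happiness guarantees this happens within $k$ interviews. The $d$-positions-unhappy condition guarantees that enough positions are available that do not interim like their match, so such an interview can actually take place.

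The main obstacle is showing that no applicant ends matched to a position she dislikes. That would require her to have exhausted, or been rejected by, every uninterviewed position and every position she likes. I expect to rule this out as follows. Since values are continuous, she has a strict preference for any uninterviewed position (value $V$) over a disliked one. So if she ends at a disliked match, every uninterviewed position must have rejected her without an interview. That requires each such position to interim prefer its current match to an applicant of value $U$, that is, to interim like its match. With at most $m$ positions and the counting from the $k$-bad analysis (\cref{lem:interviews-after-t0}), this contradicts the non-$k$-bad conditions. For the hybrid algorithm, Phase~3 occurs only on $k$-bad realizations, and Phase~2 coincides with \cref{algorithm:serial-adaptive}, so the same argument transfers.
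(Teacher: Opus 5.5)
Your proposal is correct and follows the paper's proof. \cref{cor:opt-interviews}, together with the stability lemmas, handles the deterministic part. The probabilistic part is that on non-$k$-bad realizations (or, when $m\ge n+\lceil 10\log n\rceil-1$, under $k$-eventual-happiness for applicants with $k=m-n+1$) no applicant ever proposes to a position she does not interim like, and this transfers to \cref{algorithm:parallel-bilateral-instantiation} via Phase~1 and Phase~2.

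The only difference is that your ``main obstacle'' paragraph re-derives the second case of the induction in \cref{lem:new-dont-propose-to++}, which the paper simply cites together with \cref{lem:before-to-no-propose-to-no-like}. The argument there is: she has had fewer than $k$ interviews, so at least $m-k+1$ uninterviewed positions rejected her and hence interim like their matches, while at least $k^3-Q(t)\ge k$ positions do not. The counting you need is therefore that lemma's, not that of \cref{lem:interviews-after-t0}.
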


 \section{Conclusion}\label{sec:conclusion}

Our main contribution is the design and analysis of two adaptive algorithms that achieve interim stability under preference uncertainty. 
These algorithms can be viewed as an extension of the applicant-proposing DA algorithm.
{For both the sequential adaptive algorithm and the hybrid adaptive algorithm,}
we show that in bilaterally ex-ante equivalent markets with $n$ applicants and $m\ge n$ positions, interim stability can be reached with an expected number of interviews of $2n+O(\log^3 n)$.   Additionally, we  show that any interim-stable algorithm must sometimes perform more than $2n-2$ interviews. 
Under weaker assumptions, when only applicants are ex-ante equivalent, we show that when there are $m$ positions, interim stability can be achieved with at most $4m$ interviews in expectation.

We further demonstrate that allowing parallelism enables a substantial reduction in the number of interview rounds. Our hybrid algorithm achieves interim stability in an expected polylogarithmic number of interview rounds in bilaterally ex-ante equivalent markets with $n$ applicants and $m\ge n$ positions, and in $4+\log n$ rounds when $m$ is sufficiently larger than $n$. 

Beyond these results, our analysis reveals an additional property of our algorithms in the bilaterally ex-ante equivalent setting. With high probability, all applicants interim like their final match. Consequently, the applicant-optimal stable matching computed with respect to the realized values is interim stable, without requiring any additional interviews. This observation highlights a potential decoupling between interview scheduling and the matching process.

Our results raise several natural open questions regarding the generality of our bounds.
First, while we establish a constant expected number of interviews in instances in which at least one side of the market is ex-ante equivalent, it remains open whether a constant interview bound can be achieved in more general environments. In particular, is it possible to guarantee interim stability with $O(n)$ interviews in expectation for any instance?

Second, our hybrid algorithm achieves a polylogarithmic bound on the expected number of interview rounds under the bilaterally ex-ante equivalent assumption. 
It will be interesting to understand whether polylogarithmic interview rounds are sufficient to achieve stability in more general instances, or even for any instance. 
\bibliography{sample}

@article{gale1962college,
  title={{College Admissions and the Stability of Marriage.}},
  author={David Gale and Lloyd S. Shapley},
  journal={{The American Mathematical Monthly}},
volume  = {69},
  number  = {1},
  pages   = {9--15},
  year={1962},
}

@inproceedings{ashlagi2025stablematchinginterviews,
  title={{Stable Matching with Interviews}},
  author={Ashlagi, Itai and Chen, Jiale and Roghani, Mohammad and Saberi, Amin},
  booktitle={16th Innovations in Theoretical Computer Science Conference (ITCS 2025)},
  pages={12:1-12:19},
  year={2025},
}

@article{doi:10.1137/0402048,
author = {Pittel, Boris},
title = {{The Average Number of Stable Matchings}},
journal = {SIAM Journal on Discrete Mathematics},
volume = {2},
number = {4},
pages = {530-549},
year = {1989}}

@article{BLUM2002429,
title = {{“Timing Is Everything” and Marital Bliss}},
author = {Yosef Blum and Uriel G. Rothblum},
journal = {Journal of Economic Theory},
volume = {103},
number = {2},
pages = {429-443},
year = {2002}}

@unpublished{cole2025distributedinterviewselectionstable,
      title={{Distributed Interview Selection for Stable Matching in Large Random Markets}}, 
      author={Richard Cole and Pranav Jangir},
      year={2025},
      eprint={2506.19345},
      archivePrefix={arXiv},
      primaryClass={cs.GT},
      url={https://arxiv.org/abs/2506.19345}, 
  note = {arXiv preprint}

}

@InProceedings{agarwal2023stablematchingchoosingproposals,
       author =	{Agarwal, Ishan and Cole, Richard},
  title =	{{Stable Matching: Choosing Which Proposals to Make}},
  booktitle =	{50th International Colloquium on Automata, Languages, and Programming (ICALP 2023)},
  pages =	{8:1--8:20},
  series =	{Leibniz International Proceedings in Informatics (LIPIcs)},
  year =	{2023},
  volume =	{261},
}

@InProceedings{allman2025signalinginterviewsrandommatching,
author = {Allman, Maxwell and Ashlagi, Itai and Saberi, Amin and Yu, Sophie H.},
title = {{From Signaling to Interviews in Random Matching Markets}},
year = {2025},
booktitle = {Proceedings of the 57th Annual ACM Symposium on Theory of Computing},
pages = {1556–1567},
numpages = {12},
series = {STOC '25}
}

@article{roth1999redesign,
Author = {Roth, Alvin E. and Peranson, Elliott},
Title = {{The Redesign of the Matching Market for American Physicians: Some Engineering Aspects of Economic Design}},
Journal = {American Economic Review},
Volume = {89},
Number = {4},
Year = {1999},
Month = {September},
Pages = {748–780}}

@article{abdulkadiroglu2005boston,
Author = {Abdulkadiroğlu, Atila and Pathak, Parag A. and Roth, Alvin E. and Sönmez, Tayfun},
Title = {{The Boston Public School Match}},
Journal = {American Economic Review},
Volume = {95},
Number = {2},
Year = {2005},
Month = {May},
Pages = {368–371}}

@article{abdulkadiroglu2005nyc,
Author = {Abdulkadiroğlu, Atila and Pathak, Parag A. and Roth, Alvin E.},
Title = {{The New York City High School Match}},
Journal = {American Economic Review},
Volume = {95},
Number = {2},
Year = {2005},
Month = {May},
Pages = {364–367}}

@article{AshlagiKanoriaLeshno2017,
  title        = {{Unbalanced Random Matching Markets: The Stark Effect of Competition}},
  author       = {Itai Ashlagi and Yash Kanoria and Jacob D. Leshno},
  journal      = {Journal of Political Economy},
  volume       = {125},
  number       = {1},
  pages        = {69--98},
  year         = {2017}}

@inproceedings{gonczarowski2020matchingisraelimechinotgapyear,
author = {Gonczarowski, Yannai A. and Nisan, Noam and Kovalio, Lior and Romm, Assaf},
title = {{Matching for the Israeli ``Mechinot" Gap-Year Programs: Handling Rich Diversity Requirements}},
year = {2019},
booktitle = {Proceedings of the 2019 ACM Conference on Economics and Computation},
pages = {321},
numpages = {1},
series = {EC '19}
}

@article{Wapnir2021,
  title={{Explaining a Potential Interview Match for Graduate Medical Education}},
  author={Wapnir, Irene and Ashlagi, Itai and Roth, Alvin E. and Skancke, Erling and Vohra, Akhil and Lo, Irene and Melcher, Marc L.},
  journal={Journal of Graduate Medical Education},
  volume={13},
  number={6},
  pages={764--767},
  year={2021},
  publisher={The Accreditation Council for Graduate Medical Education}
}

@InProceedings{beyhaghi_et_al:LIPIcs.ITCS.2021.74,
  author =	{Beyhaghi, Hedyeh and Tardos, \'{E}va},
  title =	{{Randomness and Fairness in Two-Sided Matching with Limited Interviews}},
  booktitle =	{12th Innovations in Theoretical Computer Science Conference (ITCS 2021)},
  pages =	{74:1--74:18},
  series =	{Leibniz International Proceedings in Informatics (LIPIcs)},
  year =	{2021},
  volume =	{185},
}

@inproceedings{10.5555/2540128.2540145,
author = {Drummond, Joanna and Boutilier, Craig},
title = {{Elicitation and Approximately Stable Matching with Partial Preferences}},
year = {2013},
publisher = {AAAI Press},
booktitle = {Proceedings of the Twenty-Third International Joint Conference on Artificial Intelligence},
pages = {97–105},
numpages = {9},
series = {IJCAI '13}
}

@article{skancke2021welfare,
  title={{Welfare and Strategic Externalities in Matching Markets with Interviews}},
  author={Skancke, Erling},
  journal={Available at SSRN 3960558},
  year={2021}
}

@inproceedings{10.1145/2492002.2482607,
author = {Rastegari, Baharak and Condon, Anne and Immorlica, Nicole and Leyton-Brown, Kevin},
title = {{Two-Sided Matching with Partial Information}},
year = {2013},
booktitle = {Proceedings of the Fourteenth ACM Conference on Electronic Commerce},
pages = {733–750},
numpages = {18},
series = {EC '13}
}

@article{10.1257/mic.5.2.99,
Author = {Coles, Peter and Kushnir, Alexey and Niederle, Muriel},
Title = {{Preference Signaling in Matching Markets}},
Journal = {American Economic Journal: Microeconomics},
Volume = {5},
Number = {2},
Year = {2013},
Month = {May},
Pages = {99–134},
}

@inproceedings{10.1007/978-3-030-04612-5_16,
author = {Jagadeesan, Meena and Wei, Alexander},
title = {{Varying the Number of Signals in Matching Markets}},
year = {2018},
booktitle = {Web and Internet Economics: 14th International Conference, WINE 2018, Oxford, UK, December 15–17, 2018, Proceedings},
pages = {232–245},
numpages = {14},
}

@article{AreWeInterviewingTooMany,
title = {{The Effort and Outcomes of the Pediatric Surgery Match Process: Are We Interviewing Too Many?}},
journal = {Journal of Pediatric Surgery},
volume = {50},
number = {11},
pages = {1954-1957},
year = {2015},
author = {Samir K. Gadepalli and Cynthia D. Downard and Keith A. Thatch and Saleem Islam and Kenneth S. Azarow and Mike K. Chen and Craig W. Lillehei and Pramod S. Puligandla and Marleta Reynolds and John H. Waldhausen and Keith T. Oldham and Max R. Langham and Thomas F. Tracy and Ronald B. Hirschl},
}

@article{whathavewelearnedfromMD,
    author = {Roth, Alvin E.},
    title = {{What Have We Learned from Market Design?}},
    journal = {The Economic Journal},
    volume = {118},
    number = {527},
    pages = {285-310},
    year = {2008},
    month = {02}
}

@article{top_of_the_batch2022,
Author = {Echenique, Federico and González, Ruy and Wilson, Alistair J. and Yariv, Leeat},
Title = {{Top of the Batch: Interviews and the Match}},
Journal = {American Economic Review: Insights},
Volume = {4},
Number = {2},
Year = {2022},
Month = {June},
Pages = {223–38}}

@inproceedings{Drummond_Boutilier_2014, 
title={{Preference Elicitation and Interview Minimization in Stable Matchings}}, 
volume={28},
number={1}, 
booktitle={Proceedings of the AAAI Conference on Artificial Intelligence}, 
author={Drummond, Joanna and Boutilier, Craig},
year={2014},
month={Jun.} }

@article{InterviewHoarding,
	author = {Manjunath, Vikram and Morrill, Thayer},
	title = {{Interview Hoarding}},
	journal = {Theoretical Economics},
	volume = {18},
	number = {2},
	year = {2023},
	pages = {503-527}
}

@unpublished{kadam2021interviewing,
  title={{Interviewing in Matching Markets with Virtual Interviews}},
  author={Kadam, Sangram V},
  year={2021}
}

@unpublished{srh2025,
    author = {Shorrer, Ran I. and Romm, Assaf and Hassidim, Avinatan},
    title = {{Explaining the Too-Good-To-Be-True Puzzle in Two-Sided Matching}},
    note = {Working paper},
year = {2025}
}

@unpublished{vohra2024matching,
  title={{Matching with Costly Interviews: The Benefits of Asynchronous Offers}},
  author={Vohra, Akhil and Yoder, Nathan},
  year={2024},
note = {Working paper}
}

@article{lee2017interviewing,
  title={{Interviewing in Two-Sided Matching Markets}},
  author={Lee, Robin S. and Schwarz, Michael},
  journal={The RAND Journal of Economics},
  volume={48},
  number={3},
  pages={835--855},
  year={2017},
  publisher={Wiley Online Library}
}

@InProceedings{bampis_et_al:LIPIcs.APPROX/RANDOM.2024.17,
  author =	{Bampis, Evripidis and Dogeas, Konstantinos and Erlebach, Thomas and Megow, Nicole and Schl\"{o}ter, Jens and Trehan, Amitabh},
  title =	{{Competitive Query Minimization for Stable Matching with One-Sided Uncertainty}},
  booktitle =	{Approximation, Randomization, and Combinatorial Optimization. Algorithms and Techniques (APPROX/RANDOM 2024)},
  pages =	{17:1--17:21},
  series =	{Leibniz International Proceedings in Informatics (LIPIcs)},
  year =	{2024},
  volume =	{317},
}

@article{ashlagi2020clearing,
  title={{Clearing Matching Markets Efficiently: Informative Signals and Match Recommendations}},
  author={Ashlagi, Itai and Braverman, Mark and Kanoria, Yash and Shi, Peng},
  journal={Management Science},
  volume={66},
  number={5},
  pages={2163--2193},
  year={2020},
  publisher={INFORMS}
}

@inproceedings{10.1145/2767386.2767424,
author = {Ostrovsky, Rafail and Rosenbaum, Will},
title = {{Fast Distributed Almost Stable Matchings}},
year = {2015},
booktitle = {Proceedings of the 2015 ACM Symposium on Principles of Distributed Computing},
pages = {101–108},
numpages = {8},
series = {PODC '15}
}

@article{balinski1999tale,
  title={{A Tale of Two Mechanisms: Student Placement}},
  author={Michel Balinski and Tayfun Sönmez},
  journal={Journal of Economic Theory},
  volume={84},
  number={1},
  pages={73--94},
  year={1999},
  publisher={Elsevier}
}

@techreport{biro2008student,
    author = {Bir{\'o}, P{\'e}ter},
    title = {{Student Admissions in Hungary as Gale and Shapley Envisaged}},
    institution = {University of Glasgow Technical Report TR-2008-291},
    year = {2008}
}

@article{romero1998implementation,
  title={{Implementation of Stable Solutions in a Restricted Matching Market}},
  author={Romero-Medina, Antonio},
  journal={Review of Economic Design},
  volume={3},
  number={2},
  pages={137--147},
  year={1998},
  publisher={Springer}
}
\newpage
\appendix
\section{Deferred Proofs from Section \ref{sec:serial} (Minimizing the Number of Interviews)}\label{appendix:serial}

This section contains the proofs of the positive results presented in \cref{sec:serial}. The proof of the lower bound (\cref{thm:lower-bound-iid}) is deferred to \cref{appendix:lower-bouds}.
\subsection{Deferred Proofs from \cref{subsec:serial-alg} (The Sequential Adaptive Algorithm)\label{appendix:stable}}
\subsubsection{Proof of \cref{lemma:serial-stable}}
\Stable*
\begin{proof}
    We prove interim stability by adapting the classic Gale-Shapley argument, interpreting preferences as interim preferences.  Suppose, toward a contradiction, that $(a_i,p_j)$ is a blocking pair. Since $a_i$ interim-prefers $p_j$ to her final match, she must have proposed to $p_j$ at some iteration.  At some iteration, $p_j$ rejected $a_i$ (with or without an interview) in favor of an applicant it interim preferred to $a_i$.  {Moreover, every subsequent tentative match of $p_j$, including its final match, is to an applicant that $p_j$ interim prefers to its previous tentative match, and hence, by transitivity, is interim preferred to $a_i$.} Because a position's interim value of an applicant never changes after rejecting them or after becoming matched to them, {the preference comparisons made between $p_j$'s tentative matches and $a_i$ remain valid throughout the process up to the final match}. Thus, $(a_i,p_j)$ cannot be a blocking pair, a contradiction.
\end{proof}

{\subsection{Deferred Proofs from \cref{subsec:one-sided} (Positions View Applicants as Ex-Ante Equivalent)}\label{appendix:serial-applicants-iid}
}

\subsubsection{Proof of \cref{lem:position-stops-interview-after++}}\label{appendix:position-stops-interview-after++-proof}
{To prove \cref{lem:position-stops-interview-after++}, we first prove the following three lemmas, which together imply the lemma. The first lemma (\cref{lem:position-stops-interview}) implies that a position matched with an applicant they interim like {each other,} does not conduct any further interviews.} {After that, in \cref{lem:one-sided-propose-accept} we prove that immediately after a position and an applicant interview and interim like each other, they are matched. In \cref{lem:position-stops-interview2} we show that even if, after a position and an applicant interim like each other and are matched, the position later changes its match (possibly to one that does not interim like the position), it still does not conduct any further interviews.}
\begin{lemma}\label{lem:position-stops-interview}
    Let $a_{i^*}$ be the unmatched applicant chosen by {\cref{algorithm:serial-adaptive}} in iteration $t$, and let $p_{j^*}$ be the position picked by the algorithm at line 4. Assume that $(a_{i^*},p_{j^*})\notin Z^t$ (that is, $a_{i^*}$ has not interviewed with $p_{j^*}$). If $p_{j^*}$ is currently matched to an applicant such that $p_{j^*}$ and its match interim like each other, then $p_{j^*}$ will reject $a_{i^*}$ without conducting an interview. 
\end{lemma}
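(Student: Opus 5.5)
This is a direct check of the interview condition in line~5 of \cref{algorithm:serial-adaptive}. The setting is the one of \cref{subsec:one-sided}: positions view applicants as ex-ante equivalent, so $U_{j,i}=U$ for all $i,j$. In iteration $t$ the algorithm has selected $a_{i^*}$ and $p_{j^*}$, and by hypothesis $(a_{i^*},p_{j^*})\notin Z^t$. The claim is that the interview branch (lines~5--7) is not taken, and that $p_{j^*}$ instead rejects $a_{i^*}$ in the else branch.

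The key step is to compare interim utilities under $Z=Z^{t-1}$. Let $a=\mu(p_{j^*})$ be the current match, so $a\neq p_{j^*}$. Because $p_{j^*}$ and $a$ interim like each other, we have $(a,p_{j^*})\in Z$ and $\uutility{j^*}{a}{}>U_{j^*,a}=U$.

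Since $a_{i^*}$ has not interviewed with $p_{j^*}$, its interim utility is the prior mean, $\uutility{j^*}{i^*}{}=U_{j^*,i^*}=U$.

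Combining these gives $\mu(p_{j^*})\succ_{p_{j^*}}^{Z}a_{i^*}$ strictly. So the second conjunct of the condition in line~5 ($a_{i^*}\succ_{p_{j^*}}^Z\mu(p_{j^*})$) fails, and no interview is conducted. The algorithm moves to the else branch. There the test in line~9, $\mu(p_{j^*})\succeq_{p_{j^*}}^Z a_{i^*}$, holds by the same strict inequality, so line~10 executes and $p_{j^*}$ rejects $a_{i^*}$ without an interview.

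The only potential obstacle is bookkeeping rather than mathematics. I need to confirm that the relevant $Z$ at the moment of the test is the set before iteration $t$'s update, and that "matched to an applicant such that they interim like each other" implies an interview took place. Both follow from the definitions: interim like is defined only for pairs in $Z$, and interviews are added only in line~7, which is exactly the step we have shown is not executed.
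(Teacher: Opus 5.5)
Your proof is correct and follows the paper's argument: the current partner's interim value exceeds $U$, while the uninterviewed $a_{i^*}$ has interim value exactly $U$. Hence the interview condition in line~5 fails and the rejection in line~10 executes. You are simply more explicit than the paper on two points it uses implicitly: the ex-ante equivalence assumption $U_{j,i}=U$, and the $Z^{t-1}\subseteq Z^t$ bookkeeping.
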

\begin{proof}
    Since $p_{j^*}$ and its match $\mu(p_{j^*})$ interim like each other, it holds that $\mu(p_{j^*})\succ_{p_{j^*}}^{Z^t}a_{i^*}$, and according to the algorithm, position $p_{j^*}$ rejects applicant $a_{i^*}$ without an interview.
\end{proof}

We now prove that immediately after a position and an applicant interview and interim like each other, they are tentatively matched.

\begin{lemma}\label{lem:one-sided-propose-accept} 
     Fix a position $p_j$ and an applicant $a_i$, and assume that at iteration $t$ they have met for an interview ($(a_i,p_j)\in Z^t$) and that $a_i$ and $p_j$ interim like each other. Then, in iteration $t+1$, applicant $a_i$ will propose to $p_j$ and $p_j$ will accept the proposal.
\end{lemma}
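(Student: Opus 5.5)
We work in the setting of Lemma \ref{lem:position-stops-interview-after++}, where positions view applicants as ex-ante equivalent with common prior value $U$. Suppose that at iteration $t$ the pair $(a_i,p_j)$ was interviewed, and that $a_i$ and $p_j$ interim like each other. The plan is to retrace the algorithm's state at the start of iteration $t+1$ and check three things: $a_i$ is the applicant selected, $p_j$ is the position selected, and the acceptance branch is taken.

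\emph{Step 1: the state after iteration $t$.} An interview step (line 6) only changes $Z$: it does not alter $\mu$ and records no rejection. So at the start of iteration $t+1$ we have the same matching and the same rejection lists as at the start of iteration $t$. In particular $a_i$ is still unmatched and still not rejected by all positions, and every other unmatched applicant has the same status as before. The selection rule in line 3 (smallest index among unmatched, not fully rejected applicants) therefore picks $a_{i^*}=a_i$ again.

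\emph{Step 2: $p_j$ is the selected position.} At iteration $t$, $p_j$ belonged to $\pos^*$, the set of $a_i$'s top interim-valued positions among those that have not rejected her. By the tie-breaking rule, it was also the one giving the smallest interim value to its current match. Interim values are $V_{i,j'}$ for positions she has not interviewed and realized values for interviewed ones. Since $a_i$ interim likes $p_j$, her value for $p_j$ rose from its prior $V_{i,j}$ to $v_{i,j}>V_{i,j}\ge \max_{p\in\pos^*}V_{i,p}$; the inequality holds because $p_j\in\pos^*$ had interim value equal to its prior at iteration $t$. All other values of $a_i$ are unchanged. Hence $p_j$ is now her unique most preferred non-rejecting position, $\pos^*=\{p_j\}$, and $p_{j^*}=p_j$.

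\emph{Step 3: acceptance.} Now $(a_i,p_j)\in Z$, so the algorithm goes to the else branch. It remains to show $a_i\succ_{p_j}^{Z}\mu(p_j)$. The interview happened at iteration $t$ only because line 5 held, that is $a_i\succ_{p_j}^{Z^{t-1}}\mu(p_j)$, where $a_i$'s value then was the prior $U$. Now $p_j$'s value for $a_i$ is $u_{j,i}>U$, while its value for $\mu(p_j)$ (unchanged, as $\mu$ did not change) is unchanged. Therefore $u_{j,i}>U>\uutility{j}{\mu(p_j)}{}$, with value $0$ if $p_j$ is unmatched. So $p_j$ rejects its current match, if any, and is matched to $a_i$.

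The main subtlety is Step 2. We have to make sure that no other position ties with $p_j$ in $a_i$'s interim ranking, and that rejection lists were not modified at iteration $t$. Both follow from the observation that an interview iteration touches only $Z$.
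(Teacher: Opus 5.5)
Your proof is correct and follows essentially the paper's route. An interview iteration changes only $Z$, so $a_i$ is reselected. Her interim value for $p_j$ rises strictly above $V_{i,j}=\vutility{i}{j}{t-1}$, which was already her top interim value, so $p_j$ is her unique choice. Finally, $p_j$ accepts because it preferred $a_i$ at the prior value and now values her even more.

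There is one small notational slip in Step 2. The comparison should be between interim values, namely $\vutility{i}{j}{t-1}\ge \vutility{i}{j'}{t-1}$ for every non-rejecting $p_{j'}$, rather than $\max_{p\in\pos^*}V_{i,p}$ over priors. Also, the common $U$ in Step 3 can be replaced by $U_{j,i}$, so no ex-ante equivalence is needed.
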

\begin{proof}
     Since $p_j$ agreed to interview in iteration $t$, we must have that $a_i\succ^{t-1}_{p_j}\mu(p_j)$, which implies $a_i\succ^{t}_{p_j}\mu(p_j)$ (since $p_j$ interim likes $a_i$). Therefore, $p_j$ will agree to be matched to $a_i$ in iteration $t+1$ if it receives a proposal. If a position $p_j$ and an applicant $a_i$ interview in iteration $t$, then it must hold that
    $a_i$ weakly prefers $p_j$ over any other position that had not yet rejected $a_i$. Moreover, if $a_i$ interim likes $p_j$ then  $\vutility{i}{j}{t}> V_{i,j}= \vutility{i}{j}{t-1}$, so $a_i$ still prefers $p_j$ in iteration $t+1$. Since \cref{algorithm:serial-adaptive} selects the same applicant until she is matched, in iteration $t+1$ applicant $a_i$ proposes to $p_j$. Since $p_j$ accepts, they are tentatively matched.
\end{proof}

The last lemma shows that even if, after a position and an applicant interim like each other and are matched, the position later changes its match (possibly to one that does not interim like the position), it still does not conduct any further interviews. This is the final lemma needed for the proof of \cref{lem:position-stops-interview-after++}.

\begin{lemma}\label{lem:position-stops-interview2}
 Fix a position $p_j$ and an applicant $a_i$, and assume they interim like each other and that they are tentatively matched at iteration $t$. 
 Then $p_j$ will not interview any other applicant after iteration $t$.
\end{lemma}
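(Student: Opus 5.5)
We argue by an invariant: from iteration $t$ onward, $p_j$ is always tentatively matched to an applicant $a$ with $u_{j,a} > U$. Here $u_{j,a}$ denotes the realized interview value and $U$ is the common prior value that positions assign to applicants (ex-ante equivalence). We then combine this invariant with the interview rule of \cref{algorithm:serial-adaptive}.

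\textbf{Step 1 (base case).} At iteration $t$, $p_j$ is matched to $a_i$, and $p_j$ interim likes $a_i$. Hence $\uutility{j}{i}{t} > U_{j,i} = U$, so the invariant holds at iteration $t$.

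\textbf{Step 2 (the match can only improve).} In the algorithm, a position's tentative match changes only in line 12. There $p_{j}$ replaces its current match $\mu(p_j)$ by $a_{i^*}$, and this happens only when $(a_{i^*},p_j)\in Z$ and $a_{i^*}\succ_{p_j}^Z \mu(p_j)$. So the new match has already interviewed with $p_j$. Its realized value strictly exceeds the current match's realized value, which by induction exceeds $U$. Positions are never left unmatched once matched, since only applicants are rejected. Interim values of interviewed pairs never change afterwards. The invariant therefore persists for all later iterations. Note that the new match need not like $p_j$ back. This is why we track only $p_j$'s side of the relation, and not mutual liking.

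\textbf{Step 3 (no further interviews).} Suppose that at some later iteration $p_j$ is picked as $p_{j^*}$ for an applicant $a_{i^*}$ with $(a_{i^*},p_j)\notin Z$. An interview happens only if $a_{i^*}\succ_{p_j}^Z \mu(p_j)$. But $\uutility{j}{i^*}{} = U_{j,i^*} = U < u_{j,\mu(p_j)}$ by the invariant, so the condition fails and $p_j$ instead rejects $a_{i^*}$ without an interview, via the else branch in line 9.

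\textbf{Main obstacle.} The only delicate point is Step 2. We must make sure the invariant concerns $p_j$'s value for its current match rather than mutual liking, since \cref{lem:position-stops-interview} alone does not cover matches that $p_j$ likes but that do not like $p_j$. We also need the strict inequality and the equality of priors $U_{j,i'}=U$ across all applicants; this is exactly where the ex-ante equivalence of positions' views is used.
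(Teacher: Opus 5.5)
Your proof is correct and follows essentially the same approach as the paper: replacements only raise $p_j$'s interim value for its current match, so it stays strictly above $U$, while every applicant $p_j$ has not interviewed has interim value exactly $U$, so the interview condition can never be met. Your explicit invariant, which covers every later replacement, is slightly more careful than the paper's argument. The paper handles a single replacement $a_i \to a_{i^*}$ and implicitly assumes $a_{i^*}$ is still $p_j$'s match at the later iteration $t'$.
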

\begin{proof}
    From \cref{lem:position-stops-interview}, this claim holds as long as $p_j$ and $a_i$ remain matched. Assume that in some iteration $t$, position $p_j$ receives a proposal from an applicant $a_{i^*}$ whom it prefers over $a_i$, i.e. $a_{i^*}\succ_{p_{j}}^{Z^t}a_i$. Since $\uutility{j}{i}{t}>U_{j,i} = U$, it must hold that $\uutility{j}{i^*}{t}> U$. Assume, towards contradiction, that in a later iteration $t'>t$ some applicant $a_{i'}$ such that $(a_{i'},p_j)\notin Z^{t'}$ is chosen by the algorithm, and $a_{i'}$ and $p_j$ meet for an interview. For this to happen, it must hold that $U=U_{j,i'}=u^{Z^{t'}}_{j,i'}>u^{Z^{t'}}_{j,i^*}=u^{Z^{t}}_{j,i^*}>U$ which is a contradiction. 
\end{proof}

From all the lemmas above, we obtain the correctness of the following result.
\PositionStopsToInterview*

\subsubsection{Proof of \cref{lem:4m}}\label{appendix:4m-proof}
\FourInterviews*
\begin{proof}
    Let $X_j$ be a random variable counting the number of interviews performed by position $p_j$.  Clearly, the expected number of interviews in \cref{algorithm:serial-adaptive} is $\sum_{j=1}^m \mathbb{E}[X_j].$ By \cref{lem:position-stops-interview-after++}, each position continues interviewing applicants until it finds one such that they interim like each other.  
    (A position may stop interviewing earlier, but once this condition is met, it stops permanently.)
    Since each interview is a Bernoulli trial with success probability at least {$\left(\frac{1}{2}\right) ^ 2 = \frac{1}{4}$} (since other outcomes may also be considered successful), the expected number of trials until success is at most $4$. Therefore, $\sum_{j=1}^m \mathbb{E}[X_j] \leq 4m.$
\end{proof}
\subsection{Deferred Proofs from \cref{sec:bilateral} (Bilaterally Ex-Ante Equivalent Setting)}\label{appendix:serial-all-iid}
We note that all lemmas established in \cref{appendix:serial-applicants-iid} continue to hold in this setting as well, since the bilaterally ex-ante equivalent setting is a special case of the positions view applicants as ex-ante equivalent setting. 
\subsubsection{General Lemmas}\label{appendix:general-lemmas}

We begin with a basic lemma that will be used throughout the analysis.
\begin{lemma}\label{lem:applicant-like-propose}
    Fix a position $p_j$ and an applicant $a_i$, and assume that at iteration $t$ they have met for an interview ($(a_i,p_j)\in Z^t$) and that $a_i$ interim likes $p_j$. Then, in iteration $t+1$, applicant $a_i$ will propose to $p_j$. Moreover, if $p_j$ is unmatched, it will accept the proposal.
\end{lemma}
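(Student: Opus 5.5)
The plan is to adapt the argument of \cref{lem:one-sided-propose-accept}, keeping only the applicant-side half of it. The position-side hypothesis used there ($p_j$ interim likes $a_i$) is no longer available, and it is not needed for the proposal step.

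\textbf{Step 1: $a_i$ is chosen again at iteration $t+1$.} Since $a_i$ and $p_j$ interviewed in iteration $t$ (the interview branch, lines 6--7, was executed), $a_i$ was the selected applicant $a_{i^*}$ of iteration $t$. That branch does not change $\mu$ or the set of rejections. So at the start of iteration $t+1$, $a_i$ is still unmatched and not rejected by all positions, while every applicant of smaller index is exactly in the state it was in at iteration $t$. Hence the rule in line 3 (smallest index among unmatched, not fully rejected applicants) selects $a_i$ again.

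\textbf{Step 2: $p_j$ is $a_i$'s unique most preferred position.} At iteration $t$, $p_j\in\pos^*$, so $V_{i,j}=\vutility{i}{j}{t-1}\ge \vutility{i}{j'}{t-1}$ for every position $p_{j'}$ that has not rejected $a_i$. The only interim utility of $a_i$ that changes between $Z^{t-1}$ and $Z^t$ is the one for $p_j$. It becomes $\vutility{i}{j}{t}>V_{i,j}$ because $a_i$ interim likes $p_j$. Therefore $p_j$ is the unique element of $\pos^*$ at iteration $t+1$, and the tie-breaking rule in line 4 is irrelevant: $p_{j^*}=p_j$. Since $(a_i,p_j)\in Z^t$, the condition in line 5 fails. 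The algorithm enters the else-branch, which is exactly a proposal of $a_i$ to $p_j$.

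\textbf{Step 3: acceptance when $p_j$ is unmatched.} Suppose $p_j$ is unmatched at iteration $t+1$. By the convention in the interim-preference definition, $a_i\succ_{p_j}^{Z^t}\mu(p_j)$, so the test $\mu(p_j)\succeq_{p_j}^{Z}a_i$ in line 9 fails. The algorithm executes lines 12--13: $\mu(a_i)\gets p_j$, and $p_j$ accepts.

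\textbf{Main subtlety.} The only delicate point is Step 1, confirming that no state change between iterations $t$ and $t+1$ can alter which applicant is selected. This follows because an interview step modifies only $Z$, never $\mu$ or the rejection lists.
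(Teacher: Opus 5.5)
Your proof is correct and follows essentially the same route as the paper. The paper's argument is the inequality chain $\vutility{i}{j}{t} > V_{i,j} = \vutility{i}{j}{t-1} \ge \vutility{i}{j'}{t-1} = \vutility{i}{j'}{t}$, which makes $p_j$ the choice at line~4, followed by acceptance because an unmatched position prefers any applicant. Your Step~1 (an interview step changes neither $\mu$ nor the rejections, so $a_i$ is reselected at iteration $t+1$) makes explicit a point that the paper's proof of this lemma leaves implicit and states only in the proof of \cref{lem:one-sided-propose-accept}.
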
 
\begin{proof}
    According to \cref{algorithm:serial-adaptive} if a position $p_j$ and an applicant $a_i$ have met for an interview in iteration $t$, then it must hold that
    $a_i$ weakly prefers $p_j$ over any other position $p_{j'}$ that had not yet rejected $a_i$, so for any such $j'\neq j$ it holds that
    \begin{equation}\label{eq:prefer-j}
        \vutility{i}{j}{t-1}\ge \vutility{i}{j'}{t-1}.
    \end{equation} 
    Assume that $a_i$ interim likes $p_j$. From the definition of interim like, it follows that $\vutility{i}{j}{t}>V_{i,j}=\vutility{i}{j}{t-1}$. 
    Then, using \cref{eq:prefer-j}, we have that for every other position $p_{j'}$ that had not yet rejected $a_i$: $\vutility{i}{j}{t}>V_{i,j}=\vutility{i}{j}{t-1}\ge \vutility{i}{j'}{t-1}=\vutility{i}{j'}{t}$. Hence, $p_j$ will be the position picked by the algorithm at line 4 as $p_{j^*}\in \pos^*$ 
    in iteration $t+1$. If $p_j$ is unmatched ($\mu(p_j)=p_j$), then $p_j$ will agree to be matched to $a_i$, as any match is better than being unmatched (any realization is non-negative). 
\end{proof}
{
We continue with another general lemma that will be used throughout the analysis.
\begin{lemma}\label{lem:minus-breaks+}
Let $a_{i^*}$ be the unmatched applicant selected by \cref{algorithm:serial-adaptive} in iteration $t$, and let $p_{j^*}$ be the position chosen at line~4. Suppose that in iteration $t$ applicant $a_{i^*}$ proposes to $p_{j^*}$. If $p_{j^*}$ is currently tentatively matched to an applicant such that they interim like each other, then $a_{i^*}$ does not interim like $p_{j^*}$.
\end{lemma}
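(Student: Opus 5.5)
The plan is to argue by contradiction, using only the rejection rule of \cref{algorithm:serial-adaptive} and the ex-ante equivalence of applicants to positions (this lemma sits in the bilaterally ex-ante equivalent section, so $V_{i,j}=V$ and $U_{j,i}=U$ for all pairs). Let $a_i=\mu(p_{j^*})$ be the current tentative match, with $u^{Z^t}_{j^*,i}>U$ and $v^{Z^t}_{i,j^*}>V$. Since $a_{i^*}$ proposes to $p_{j^*}$ in iteration $t$, the algorithm took the ``else'' branch. So $(a_{i^*},p_{j^*})\in Z^t$: a proposal is only made after an interview has already taken place. We also know $p_{j^*}\in\pos^*$, i.e.\ $p_{j^*}$ is among $a_{i^*}$'s most preferred positions that have not rejected her.

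Suppose, toward contradiction, that $a_{i^*}$ interim likes $p_{j^*}$, so $v^{Z^t}_{i^*,j^*}>V$. The key step is to look at the iteration $t'<t$ in which $a_{i^*}$ and $p_{j^*}$ interviewed. At that iteration $a_{i^*}$ was unmatched and chosen by the algorithm. By \cref{lem:applicant-like-propose}, in iteration $t'+1$ applicant $a_{i^*}$ proposes to $p_{j^*}$. I will then argue that this proposal must be the one occurring in iteration $t$, or at least that the situation at $t$ can be traced back to it, and show that $p_{j^*}$ already held $a_i$ by then.

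\textbf{Case 1: $p_{j^*}$ was matched at iteration $t'$ to someone it interim likes.} Then by \cref{lem:position-stops-interview} (and by \cref{lem:position-stops-interview2} if the match had changed), $p_{j^*}$ would have refused to interview $a_{i^*}$, a contradiction.

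\textbf{Case 2: $p_{j^*}$ did not interim like its match at iteration $t'$.} In this case $p_{j^*}$ became matched to $a_i$, whom it likes, only later. By \cref{lem:position-stops-interview2}, once that happened $p_{j^*}$ conducts no further interviews. So the interview with $a_{i^*}$ happened before $a_i$ was matched to $p_{j^*}$. The proposal of $a_{i^*}$ at $t'+1$ must then have been either accepted or rejected. It was not rejected: $p_{j^*}$ had agreed to interview $a_{i^*}$ and, not liking its match, would accept someone with a higher realized value. I then need to handle whether $p_{j^*}$ interim likes $a_{i^*}$ or not.

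The main obstacle is this last point. If $p_{j^*}$ does not like $a_{i^*}$, the proposal at $t'+1$ may still be accepted, because the current match is even lower. In that case $a_{i^*}$ is subsequently displaced, for example by $a_i$, and now re-proposes at $t$. So the proposal at $t$ is a re-proposal after a rejection, which contradicts the definition of $\pos^*$: $p_{j^*}$ has already rejected $a_{i^*}$ and so is no longer in $\pos^*$. Indeed, any move of $p_{j^*}$ away from $a_{i^*}$ is a rejection of $a_{i^*}$, after which she never proposes there again.

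So the remaining subcase is that $a_{i^*}$ was never matched to $p_{j^*}$ between $t'$ and $t$. Since $a_{i^*}$ is chosen at $t'$ and stays chosen until matched, and \cref{lem:applicant-like-propose} forces her proposal at $t'+1$, we get $t=t'+1$. At that point $p_{j^*}$'s match is the same as at $t'$. That match is someone $p_{j^*}$ does not like (Case 2), or Case 1 applies, and Case 1 was already ruled out. In either case this contradicts the assumption that $p_{j^*}$ and $a_i$ interim like each other at iteration $t$. I expect the careful bookkeeping of which applicant is selected in which iteration, and of the rejection history, to be the delicate part.
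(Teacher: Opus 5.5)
Your proof is correct and uses essentially the paper's ingredients. \cref{lem:applicant-like-propose} forces the proposal at $t'+1$, and any rejection of $a_{i^*}$ by $p_{j^*}$ (direct or by displacement) removes $p_{j^*}$ from $\pos^*$ permanently, so the proposal at $t$ must be that one; hence $p_{j^*}$'s match is unchanged since it agreed to interview $a_{i^*}$ and is not interim liked. The paper reaches the same contradiction from the other direction, using \cref{lem:position-stops-interview-after++} to show the interview predates the formation of the current match. One correction: your claim that the proposal at $t'+1$ ``was not rejected'' is false, since the realized $u_{j^*,i^*}$ can fall below $p_{j^*}$'s value for its current match, but this is harmless for the reason just given. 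The Case~1/Case~2 split is also unnecessary, because agreeing to interview at $t'$ already means $p_{j^*}$ did not interim like its match then.
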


\begin{proof}
Once a position becomes tentatively matched with an applicant they interim like each other, it stops conducting further interviews (\cref{lem:position-stops-interview-after++}). Consequently, any proposal that $p_{j^*}$ may receive 
can only come from an applicant whom $p_{j^*}$ interviewed before this tentative match was formed. By \cref{lem:applicant-like-propose}, if an applicant interim likes a position, it proposes to that position in the immediate next iteration. Therefore, if $p_{j^*}$ interviewed $a_{i^*}$ before forming its current
tentative match, and $a_{i^*}$ proposes to $p_{j^*}$ only now, then
$a_{i^*}$ did not propose immediately after that interview, which implies
that $a_{i^*}$ does not interim like $p_{j^*}$.

\end{proof}}

We divide the remainder of the section into 5 sections. \cref{subsec:before-t0} analyzes the phase of the algorithm before the transition point and \cref{subsec:after-t0} analyzes the phase after it. \cref{subsec:probabilities,sec:k-bad} address bounding probabilities, and \cref{subsec:number-of-interviews} focuses on bounding the expected number of interviews. 

\subsubsection{Lemmas on the Execution of \cref{algorithm:serial-adaptive} Prior to the Transition Point}\label{subsec:before-t0}
We  aim to bound the expected number of interviews in the first phase of the algorithm, during which $k$-eventual happiness for applicants holds for the execution of \cref{algorithm:serial-adaptive}. For this purpose, we will prove the following lemmas.

\begin{lemma}\label{lem:applicant-propose-to-unmatched}
Assume that $k$-eventual-happiness for applicants holds for the execution of \cref{algorithm:serial-adaptive}. Then before time $t_0$, an unmatched applicant will not interview or propose to any matched position.
\end{lemma}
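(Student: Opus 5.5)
The plan is an invariant proof by induction on the iterations $t<t_0$. We establish, jointly with the statement, the following invariant $(\star)$.

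At the beginning of every iteration $t<t_0$, every matched applicant is matched to a position she interim likes. That position was unmatched when she proposed to it, and it is the only position she has ever proposed to.

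Before the induction we record two facts. First, since $t<t_0$, at the start of iteration $t$ there are at least $k$ unmatched positions; by definition of $t_0$ this number is $\ge k$ before $t_0$. Second, the bilateral ex-ante equivalence means all non-interviewed positions have interim value $V$ for every applicant.

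Consider the applicant $a_{i^*}$ selected at iteration $t$.

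We first claim she has fewer than $k$ interviews and has never been rejected after an interview with a position she liked. Suppose she had already done $k$ interviews. Then $k$-eventual-happiness gives a position $p$ she interim likes among them. At the iteration immediately after that interview, \cref{lem:applicant-like-propose} says she proposed to $p$. By the inductive hypothesis $(\star)$, $p$ was unmatched then, so it accepted her. She would then be matched and stay matched: by $(\star)$ no one proposes to a matched position before $t_0$, so she cannot be displaced. This contradicts her being unmatched.

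Hence every position she interviewed gave her value $<V$. She has interviewed at most $k-1$ positions. Rejections without interviews come only from matched positions, which by the inductive statement have not yet happened to her, and rejections by matched positions cannot occur at all under $(\star)$. So her set $\pos^*$ of top positions consists of all non-interviewed positions, each with value $V$. This set contains at least $k-(k-1)\ge 1$ unmatched positions, because at least $k$ positions are unmatched and she interviewed at most $k-1$ of them.

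The tie-break in line 4 chooses the position whose current match has the smallest interim value, with unmatched counting as the minimum. It therefore picks an unmatched, non-interviewed position. She then interviews there and does not propose to, or interview at, any matched position.

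Maintaining $(\star)$ through the interview step goes as follows. If she likes the new position, \cref{lem:applicant-like-propose} gives an immediate proposal to that unmatched position, which is accepted. If she does not like it, its value drops below $V$, so in the next iteration it is no longer in $\pos^*$ and the analysis above repeats. No proposal ever targets a matched position, so no matched applicant is displaced and $(\star)$ persists. Proposals only happen right after a liked interview, so matched applicants like their matches.

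The main obstacle is the circularity: the count of at most $k-1$ interviews needs "liked implies permanently matched", and that in turn needs the statement itself at earlier times. The simultaneous induction on $t$ with $(\star)$ resolves this. One must also check the boundary case where $a_{i^*}$ has exactly $k-1$ interviews, all with currently unmatched positions. Even then, at least $k-(k-1)=1$ unmatched position remains uninterviewed, which is exactly why the transition point is defined with $k-1$ unmatched positions.
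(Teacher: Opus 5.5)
Your proof is correct and follows the same route as the paper's. The tie-break in line~4 steers an applicant with no liked interview to an unmatched, uninterviewed position; one exists because at least $k$ positions are unmatched before $t_0$ and she has done at most $k-1$ interviews. By \cref{lem:applicant-like-propose}, her first liked interview is then followed immediately by an accepted proposal. Your explicit simultaneous induction makes rigorous the ``no one is displaced before $t_0$'' step, which the paper's short proof uses implicitly.

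The only tidy-up concerns the claim ``she has fewer than $k$ interviews'': it should exclude the case where her liked interview occurred at iteration $t-1$ (she then proposes at $t$, which your maintenance paragraph already handles). Also, your appeals to $(\star)$ for ``$p$ was unmatched'' and ``no one proposes to a matched position'' are really appeals to the inductive hypothesis of the lemma itself.
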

\begin{proof}
According to the algorithm, if an applicant is selected by the algorithm and she has to meet for an interview with a position, she will first interview with an unmatched position.  
Since $k$-eventual-happiness holds, each applicant that the algorithm chooses, if she meets for an interview with up to $k$ unmatched positions, she will eventually meet a position in some iteration which she interim likes. 
By \cref{lem:applicant-like-propose}, after such a meeting occurs, the applicant and the position will be tentatively matched. Since there are always $k$ unmatched positions before iteration $t_0$, any applicant selected by the algorithm that has to meet for an interview with a position will initially interview only with unmatched positions.  As proposals occur only after interviews, she will never propose to a matched position.
\end{proof}

\begin{corollary}\label{lem:applicant-matched-to-liked} 
      When $k$-eventual-happiness for applicants holds for the execution of \cref{algorithm:serial-adaptive},  until iteration $t_0$, every applicant that is selected as $a_{i^*}$ (in line 3) is tentatively matched to the first position that she interim likes, and she stays matched to her until $t_0$. 
\end{corollary}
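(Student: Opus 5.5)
The plan is an induction over the iterations $t<t_0$ of \cref{algorithm:serial-adaptive}, with \cref{lem:applicant-propose-to-unmatched} and \cref{lem:applicant-like-propose} as the two main tools. The invariant I would maintain is:
\begin{itemize}
\item[(i)] every currently matched applicant is matched to the first position she interim likes;
\item[(ii)] no tentative match formed so far has been broken.
\end{itemize}
Both hold trivially at $t=1$, since $\mu$ is empty.

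For the inductive step, consider iteration $t<t_0$. Let $a_{i^*}$ be the selected applicant and $p_{j^*}$ the position chosen in line~4. By \cref{lem:applicant-propose-to-unmatched}, $a_{i^*}$ neither interviews nor proposes to a matched position before $t_0$. There are therefore only two cases.

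\textbf{Case 1: an interview.} The applicant interviews with an unmatched $p_{j^*}$. Such a position always agrees to interview, since any applicant beats being unmatched. No proposal or rejection of a matched applicant happens, so (i) and (ii) are untouched.

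\textbf{Case 2: a proposal.} Here $p_{j^*}$ is unmatched and accepts, because values are non-negative. No existing match is broken, so (ii) is preserved. For (i), I must show that the position she proposes to is exactly the first position she interim likes. I would argue this in three steps:
\begin{itemize}
\item[(a)] Since $a_{i^*}$ stays selected until she is matched (the smallest-index unmatched applicant does not change while she remains unmatched and no one else becomes unmatched), her iterations form a contiguous block of interviews, each with an unmatched position.
\item[(b)] As long as she has not interim liked any interviewed position, each interviewed position now has interim value below $V$. All non-interviewed positions keep value $V$, so line~4 picks a non-interviewed position next, and she never proposes to a disliked one. A proposal requires an interviewed position to be top-ranked, i.e.\ to have value above $V$.
\item[(c)] The first time she interim likes some $p_j$, \cref{lem:applicant-like-propose} says she proposes to it in the very next iteration, and it is accepted since $p_j$ is still unmatched by \cref{lem:applicant-propose-to-unmatched}.
\end{itemize}
Together these give (i). Point (b) relies on bilateral ex-ante equivalence, $V_{i,j}=V$.

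With (ii) established, once she is matched she is never rejected before $t_0$, so she stays matched to that position until $t_0$.

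The main obstacle I expect is making step (b) airtight, namely that the tie-breaking in line~4 never leads her to an interviewed-but-disliked position, and doing so cleanly while relying on the already-proved \cref{lem:applicant-propose-to-unmatched}. One further point: \cref{lem:applicant-propose-to-unmatched} already excludes proposals to matched positions, so (ii) is essentially immediate, and the only real work is (i).
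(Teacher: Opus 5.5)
Your proposal is correct and takes essentially the same route as the paper, whose proof of this corollary is just the one-line citation of \cref{lem:applicant-propose-to-unmatched} and \cref{lem:applicant-like-propose}; your induction unpacks that citation. Your step (b) makes explicit a point the paper leaves implicit: interviewed-but-disliked positions drop below $V$ while uninterviewed unmatched ones stay at $V$, so no proposal to a disliked position can precede the first liked one. The paper states this separately only later, in \cref{lem:before-to-no-propose-to-no-like}, whose proof itself relies on this corollary.
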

\begin{proof}
    From \cref{lem:applicant-propose-to-unmatched} and \cref{lem:applicant-like-propose}.
\end{proof}

{\begin{lemma}\label{lem:before-to-no-propose-to-no-like}    
    Assume that $k$-eventual-happiness for applicants holds for the execution of \cref{algorithm:serial-adaptive}. 
    Then before time $t_0$, an unmatched applicant will not propose to any position that she does not interim like.
\end{lemma}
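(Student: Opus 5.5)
We argue directly from the mechanics of \cref{algorithm:serial-adaptive} in the phase before $t_0$, combining \cref{lem:applicant-propose-to-unmatched} with the ex-ante equivalence of positions. Fix an iteration $t<t_0$ in which the selected unmatched applicant $a_{i^*}$ proposes to $p_{j^*}$. A proposal in \cref{algorithm:serial-adaptive} happens only in the else-branch, so $(a_{i^*},p_{j^*})\in Z^{t}$, and $p_{j^*}\in\pos^*$ is among her most preferred positions that have not yet rejected her. By \cref{lem:applicant-propose-to-unmatched}, $p_{j^*}$ is unmatched at this point. We must show $v^{Z^t}_{i^*,j^*}>V$.

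Suppose towards contradiction that $v^{Z^t}_{i^*,j^*}<V$; equality has probability zero. Since applicants view positions as ex-ante equivalent, any position $p$ that $a_{i^*}$ has not interviewed has interim value exactly $V$. Because $p_{j^*}$ is a top choice, every position that has not rejected her must then have been interviewed and have value at most $v^{Z^t}_{i^*,j^*}<V$. So the key step is to exhibit a position not yet interviewed by $a_{i^*}$ that has not rejected her, which gives the contradiction.

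To find one, use the counting built into the definition of $t_0$: before $t_0$ there are at least $k$ unmatched positions. By $k$-eventual-happiness, $a_{i^*}$ has had at most $k-1$ interviews so far, since with $k$ interviews she would already interim like some interviewed position. By \cref{lem:applicant-matched-to-liked} she would then have been matched to it. She also would not be unmatched again before $t_0$, because she remains matched until $t_0$, and an unmatched applicant does not enter matched positions. All of her interviews were with positions that were unmatched at the time. We must also check that unmatched positions have not rejected her. A position rejects her without an interview only if it weakly prefers its current match to her, and an unmatched position never does this. An interviewed position rejects her only on a proposal, and by \cref{lem:applicant-propose-to-unmatched} such a proposal goes to an unmatched position, which accepts; afterwards she would be matched. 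So an applicant who is currently unmatched before $t_0$ has never been rejected, except possibly on a proposal that made her matched and was later displaced. Displacement before $t_0$ does not occur, since no proposal goes to a matched position. Hence among the at least $k$ unmatched positions, at most $k-1$ were interviewed by her, so some unmatched position is uninterviewed and has not rejected her. That position has interim value $V>v^{Z^t}_{i^*,j^*}$, contradicting $p_{j^*}\in\pos^*$.

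The main obstacle is the bookkeeping of rejections. The point to verify carefully is that an unmatched applicant before $t_0$ has never been rejected by any currently unmatched position. This follows because an unmatched position accepts any applicant it is offered and rejects no one without an interview, and because no displacements occur before $t_0$ by \cref{lem:applicant-propose-to-unmatched}.
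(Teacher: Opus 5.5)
Your proof is correct and follows essentially the same route as the paper's. Both argue by contradiction that proposing to an unliked position forces every uninterviewed position to have already rejected her. Both then use that the at least $k$ unmatched positions before $t_0$ never reject anyone, and combine $k$-eventual-happiness with \cref{lem:applicant-like-propose} and \cref{lem:applicant-matched-to-liked} to rule out her having $k$ interviews. The only differences are that you first bound her interviews by $k-1$ and then pigeonhole to find an uninterviewed, unrejected unmatched position, and that you spell out the rejection bookkeeping more explicitly than the paper does.
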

\begin{proof}
    Suppose toward contradiction that at some iteration before $t_0$, an unmatched applicant $a_{i^*}$ proposes to a position she does not interim like. This implies that $a_{i^*}$ has already been rejected by every position that $a_{i^*}$ did not interview with. However, an unmatched position never rejects an applicant without an interview. Since there are at least $k$ unmatched positions available before $t_0$, applicant $a_{i^*}$ must have interviewed with at least $k$ positions. Given the $k$-eventual-happiness for applicants assumption, $a_{i^*}$ must have found at least one position she 
    interim likes 
    among those $k$ interviews. By \cref{lem:applicant-like-propose}, $a_{i^*}$ would have been matched to such a position. For $a_{i^*}$ to be currently proposing to a position she does not interim like, she must have been subsequently unmatched from that position. This directly contradicts \cref{lem:applicant-matched-to-liked}, which establishes that before $t_0$, once an applicant is matched to a position they interim like, they remain matched.
\end{proof}}

\begin{corollary}\label{cor:applicant-didnt-interview}
    Assume that $k$-eventual-happiness for applicants holds for the execution of \cref{algorithm:serial-adaptive}. Then, in iteration $t_0$, all the $n-(m-(k-1))$ unmatched applicants $a_{m-k+2},\dots a_{n}$ did not interview with any position, i.e., $\forall m-k+2 \le i\le n, \forall j\in[m]$, it holds that $(a_i,p_j)\notin Z^{t_0}$.
\end{corollary}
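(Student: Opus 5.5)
We argue directly from the structure of \cref{algorithm:serial-adaptive} and \cref{lem:applicant-propose-to-unmatched}. The key observation is that before $t_0$, the algorithm always selects the unmatched applicant with the smallest index that has not been rejected by all positions. It keeps selecting that applicant until she becomes tentatively matched. By \cref{lem:applicant-matched-to-liked}, she then remains matched until $t_0$. Hence, before $t_0$, applicants are processed one at a time in index order: $a_1$ first, then $a_2$, and so on. An applicant with a larger index is never selected while a smaller-indexed one is still pending.

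First, we must check that no selected applicant before $t_0$ is ever rejected by all positions. Before $t_0$ there are at least $k$ unmatched positions. An unmatched position never rejects without an interview, and it never rejects after an interview since any match beats being unmatched. Thus a selected applicant keeps interviewing with unmatched positions. By $k$-eventual-happiness for applicants, within at most $k$ interviews she interim likes one of them. By \cref{lem:applicant-like-propose}, in the next iteration she proposes to that position, which is unmatched, and is accepted. So every selected applicant is eventually matched, and (by \cref{lem:applicant-propose-to-unmatched}) interviews only unmatched positions along the way.

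Second, we count matches. Each matching event before $t_0$ matches one new applicant and one previously unmatched position. No applicant is unmatched before $t_0$ (by \cref{lem:applicant-matched-to-liked}), so the number of matched applicants increases by exactly one per match. By definition, $t_0$ is the first iteration with exactly $k-1$ unmatched positions, i.e., the first moment $m-k+1$ applicants are matched. By the index-order processing, these are exactly $a_1,\dots,a_{m-k+1}$. Moreover, at the start of $t_0$ no iteration has yet selected $a_{m-k+2}$: she would be selected only at iteration $t_0$ itself, since the match of $a_{m-k+1}$ occurred in iteration $t_0-1$.

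Finally, applicants are involved in interviews only when selected as $a_{i^*}$ in line~3. Interviews occur only between the selected applicant and $p_{j^*}$. Therefore, none of $a_{m-k+2},\dots,a_n$ appears in any interview in $Z^{t_0}$ (which records interviews through iteration $t_0-1$ as the state at the beginning of $t_0$). This gives $(a_i,p_j)\notin Z^{t_0}$ for all $i\ge m-k+2$ and all $j\in[m]$.

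The main subtlety is the indexing convention for $Z^{t_0}$, that is, whether it includes iteration $t_0$'s own interview. We will interpret it as the interview set available at the beginning of iteration $t_0$, consistent with the definition of $t_0$. We will also make the point that an unmatched selected applicant is never rejected by all positions before $t_0$ explicit, so that the smallest-index selection rule genuinely forces sequential processing.
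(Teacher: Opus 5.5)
Your proposal is correct and follows the paper's argument: index-order selection, together with \cref{lem:applicant-matched-to-liked} (matched applicants stay matched before $t_0$), forces exactly $a_1,\dots,a_{m-k+1}$ to be matched at $t_0$ while $a_{m-k+2},\dots,a_n$ have never been selected. Your extra checks (no selected applicant is rejected by all positions before $t_0$, and each acceptance adds one new match) only make explicit what the paper leaves implicit, and your reading of $Z^{t_0}$ as the interview set at the start of iteration $t_0$ is the right one: under the algorithm's literal bookkeeping ($Z^t$ saved at the end of iteration $t$), $a_{m-k+2}$ is selected in iteration $t_0$ and, whenever $k\ge 2$, already interviews an unmatched position during that iteration.
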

\begin{proof}
    According to the algorithm, the same applicant is chosen in each iteration until she is tentatively matched. By \cref{lem:applicant-matched-to-liked}, no matched applicant becomes unmatched before iteration $t_0$. Therefore, by iteration $t_0$, all applicants $a_1,...,a_{m-(k-1)}$ have been matched and the remaining applicants $a_{m-k+2},...,a_n$ have not yet been selected by the algorithm. 
\end{proof}

\begin{remark}
    If $m \ge n + k - 1$, then \cref{cor:applicant-didnt-interview} holds trivially, as the index range $i \ge m - k + 2$ is empty (because $m - k + 1\ge n $).
\end{remark}

\begin{remark}
\cref{lem:applicant-matched-to-liked} implies that if $m \ge n + k - 1$, then under the 
$k$-eventual-happiness for applicants assumption every applicant becomes matched before $t_0$ (and the algorithm stops).  Therefore, in the following analysis of the number of interviews after the transition point, we assume that $n + k - 1 > m $, since otherwise the algorithm never reaches this phase.
\end{remark}
\subsubsection{Proof of \cref{lem:interviews-after-t0}}\label{subsec:after-t0}
We next aim to upper bound the number of interviews after time $t^k_0$ for every realization that is not a \textit{$k$-bad realization} {(\cref{lem:interviews-after-t0}). To this end, we first prove several intermediate claims. In \cref{lem:interviewed-applicant-number} we bound the number of interviewing applicants, and \cref{lem:new-dont-propose-to++,cor:why-++-dont-break,lem:++dont-break,lem:applicant-do-max-k-interviews}
help us bound the number of interviews they perform. Together, this yields the bound claimed in \cref{lem:interviews-after-t0} on the number of interviews after $t_0$ for every realization that is not a \textit{$k$-bad realization}.} 
Note that if a realization is not a \textit{$k$-bad realization}, then $k$-eventual happiness for applicants must hold. Consequently, all the lemmas proved in \cref{subsec:before-t0} {also} apply whenever the realization is not \textit{$k$-bad}.

Denote the last iteration of the algorithm by $T$. The following lemma bounds the number of applicants that interview after $t_0$.

{
\begin{lemma}\label{lem:interviewed-applicant-number}
Assume that a realization is not a \textit{``$k$-bad realization.''} 
Let $r$ denote the number of applicants that interview from iteration $t_0$ onward. Then it holds that $ r\le k \cdot (k-1)$.
\end{lemma}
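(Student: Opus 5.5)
The plan is to bound the applicants who interview from $t_0$ onward by organizing them into \emph{rejection chains}. There are at most $k-1$ chains, and each chain contains at most $k$ distinct interviewing applicants.

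First, count the chain starters. At iteration $t_0$ exactly $m-(k-1)$ applicants are matched. Since $m\ge n$, at most $n-m+k-1\le k-1$ applicants are unmatched. By \cref{cor:applicant-didnt-interview}, these are exactly $a_{m-k+2},\dots,a_n$, and none of them has interviewed yet. By \cref{lem:applicant-matched-to-liked}, every applicant matched at $t_0$ is matched to the first position she interim likes.

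Since the algorithm always processes one unmatched applicant at a time, the execution after $t_0$ decomposes into consecutive chains. Each chain starts when one of these $\le k-1$ initially unmatched applicants is selected. It continues through every applicant displaced by an acceptance, since the displaced applicant becomes the next unmatched applicant processed. It ends when an acceptance involves an unmatched position, or the proposer is rejected by all positions. No new unmatched applicant appears except by displacement, so every applicant interviewing after $t_0$ belongs to one of at most $k-1$ chains. It therefore suffices to show that each chain involves at most $k$ interviewing applicants, which gives $r\le k(k-1)$.

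Second, I would bound the length of a chain using the two non-$k$-bad properties. By $k$-eventual-mutual-happiness, any applicant in the chain who interviews $k$ times reaches a position with which she mutually interim likes. By \cref{lem:one-sided-propose-accept}, she is then matched to it immediately. By \cref{lem:position-stops-interview-after++}, such a position never interviews again. By \cref{lem:minus-breaks+}, the position can only later accept proposers who do not interim like it. Such proposers propose to a non-liked, already-interviewed position only after being rejected by every position they prefer, including all un-interviewed ones. The $k^3$-positions-unhappy condition, together with the tie-breaking rule in line~4 that favors positions whose current match has the lowest interim value, keeps many unmatched or unhappy positions available to every chain member. I would use these facts to argue two things. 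First, a chain member almost always lands at an unmatched or unhappy position, and displaces an applicant only when that position's current match is not mutually liked. Second, each displacement strictly "uses up" one of at most $k$ steps: I would track a potential such as the number of interviews performed along the chain, or the number of chain members not yet mutually happy, and show it grows by one per new chain member while being capped at $k$ by eventual-mutual-happiness.

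The main obstacle is exactly this per-chain bound. Mutually happy pairs can still be broken by earlier interviewees, and a displaced applicant may herself displace another. I would try to show by induction along the chain that each newly displaced applicant is matched to a position not mutually happy with her, and that her successor interviews only at positions not yet interviewed by chain members. Then the $k$-interview cap of eventual-mutual-happiness, applied to the positions touched by the chain, limits the chain to $k$ applicants. If this fails, I would fall back on counting the at most $k-1$ positions per chain that can be "displacing" ones. In that fallback, I would use the positions side of $k$-eventual-mutual-happiness to argue that each such position displaces at most once more after reaching mutual happiness.
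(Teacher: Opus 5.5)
The per-chain bound you flag as the main obstacle is the entire content of the lemma, and your plan for it would fail. Your reduction to at most $k-1$ chains is sound. The displaced applicant is processed next because the algorithm picks the smallest index, and she has a smaller index than every not-yet-started applicant among $a_{m-k+2},\dots,a_n$.

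The problems are in the per-chain step:
\begin{itemize}
\item Your induction claim that each successor ``interviews only at positions not yet interviewed by chain members'' is false. A displaced applicant has no unrejected liked position, so all her un-interviewed positions sit at value $V$. The tie-breaking in line~4 then makes her interview every currently unmatched position she has not yet met before any matched one. Consecutive chain members therefore keep interviewing the \emph{same} few positions.
\item $k$-eventual-mutual-happiness cannot cap a chain-level potential, such as interviews along the chain or the number of non-mutually-happy members. It is a per-agent statement.
\item The fallback about ``displacing'' positions is not an argument.
\item Leaning on the $k^3$-positions-unhappy condition is unnecessary, and circular in the paper's structure. Its consequence \cref{lem:new-dont-propose-to++} holds only while $Q(t)\le k(k-1)$, and that hypothesis is discharged in \cref{lem:++dont-break} by invoking the present lemma.
\end{itemize}

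The missing idea is that these repeated visits to unmatched positions are exactly what gives the bound. Positions never become unmatched once matched, and an unmatched position never rejects. As above, an applicant interviews a matched position only after interviewing every currently unmatched one, and at least one exists since $m\ge n$. So every applicant who interviews at or after $t_0$ has, at some time, interviewed one of the $k-1$ positions unmatched at $t_0$. By the positions' half of $k$-eventual-mutual-happiness and \cref{lem:position-stops-interview-after++}, each position conducts at most $k$ interviews. Pigeonhole then gives $r\le k(k-1)$ with no chains needed; this is the paper's proof.

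If you want to keep your chains, the same observation closes them. Every member of a chain has interviewed the unmatched position at which the chain ends, so that position's budget of $k$ interviews bounds the chain length by $k$.
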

\begin{proof}
According to the algorithm, if an applicant interviews in some iteration, it must have been interviewed by a position that is unmatched at that iteration. This is because if the interview is with an unmatched position, this is trivially true. Otherwise, if she interviews with a matched position, she must have been previously interviewed by all unmatched positions in earlier rounds. This follows from the tie-breaking rule of \cref{algorithm:serial-adaptive}: in the bilaterally ex-ante setting, the algorithm resolves ties between non-interviewed positions by prioritizing interviews with unmatched positions. Furthermore, since a matched position remains matched until the end of the algorithm, every interview conducted after $t_0$ involves an applicant who has interviewed {at some point} with a position that was unmatched at iteration $t_0$.
For the transition point $t_0$, consider the $k-1$ positions that are unmatched at that iteration. Consider the interviews conducted by these positions. Assume in contradiction that $ r> k \cdot (k-1)$. By the pigeonhole principle, it implies that there exists at least one position out of the $k-1$ positions that has conducted more than $k$ interviews. However, since the $k$-eventual-mutual-happiness condition holds, any position that has interviewed at least $k$ applicants must have interviewed an applicant such that they interim like each other. By \cref{lem:position-stops-interview-after++}, once such an interview occurs the position stops interviewing. This contradicts the assumption that there is a position that conducts more than $k$ interviews, completing the proof. 
\end{proof}
}

To bound the number of interviews, we also want to show that when the realization is not a $k$-bad realization, each applicant stops interviewing once they meet a position they interim like each other (which takes at most $k$ interviews). For this, we prove the following lemmas
(\cref{lem:new-dont-propose-to++,lem:++dont-break,lem:applicant-do-max-k-interviews})

\begin{lemma}\label{lem:new-dont-propose-to++}
    Assume that a realization is not a \textit{``$k$-bad realization.''} For iteration $t\ge t_0$, denote by $Q(t)$ the number of interviews conducted from $t_0$ (including) until $t$ (including) in which the agents interim like each other. Then the following statements hold for every $t\geq t_0$ for which $Q(t)\leq k(k-1)$:
    \begin{enumerate}
        \item Any unmatched applicant chosen by the algorithm in iteration $t$ does not propose to a position she does not interim like. 
        \item If a proposal occurs in iteration $t$, then the number of positions (matched and unmatched) that do not interim like their current match is at least $k^3-Q(t)$. 
    \end{enumerate}
\end{lemma}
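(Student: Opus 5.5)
I will prove the two statements together by strong induction on $t\ge t_0$, restricted to iterations with $Q(t)\le k(k-1)$. The base information comes from the transition point. Because the realization is not $k$-bad, $d$-positions unhappy holds with $d=k^3$: at $t_0$ at least $k^3$ positions do not interim like their current match. We may also use the conclusions of \cref{subsec:before-t0}, since $k$-eventual-happiness for applicants holds. In particular, every applicant matched before $t_0$ interim likes her match (\cref{lem:applicant-matched-to-liked}), and the unmatched applicants at $t_0$ have not interviewed anywhere (\cref{cor:applicant-didnt-interview}).

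\emph{Statement 2 from statement 1.} The idea is to track the count $D(t)$ of positions that do not interim like their current match (unmatched positions count as not liking). I will argue this count decreases by at most one per mutually-liking interview after $t_0$. A position switches from ``not liking'' to ``liking'' its match only when it accepts a proposal from an applicant it interim likes. Consider the proposer. By statement 1 at earlier iterations, she interim likes the position. I then claim the position interim likes her as well: if not, the position does not prefer her over its current match. Concretely, a position that does not like its current match values it below $U$, so it would interview and then compare. If the position accepted an applicant whom it likes but who does not like it back, that does not happen under statement 1.

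So each decrement of $D$ is charged to a distinct interview after $t_0$ in which the pair interim like each other. This gives $D(t)\ge k^3-Q(t)$. Interviews conducted before $t_0$ are already accounted for in the baseline $D(t_0)\ge k^3$. Proposals made right after a pre-$t_0$ interview need separate handling; this is exactly why \cref{lem:applicant-like-propose} matters. A liked pair proposes in the very next iteration, so no delayed liked proposals survive past $t_0$, apart from possibly the single one pending at $t_0$, which is absorbed by the definition of $t_0$.

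\emph{Statement 1 from statement 2 at earlier times.} Suppose an unmatched applicant $a$ at iteration $t$ proposes to a position she does not interim like. Then every position she has not interviewed with must already have rejected her without an interview, and she holds no liked pending option. Rejection without an interview happens only when the position values its current match above $U$, i.e.\ it interim likes its match. By statement 2 (using $Q\le k(k-1)$), at least $k^3-k(k-1)>k$ positions do not like their match, so $a$ has not been rejected by all of them uninterviewed. Hence she must have interviewed with more than $k$ such positions. Each of those interviews happened when that position was unhappy.

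By $k$-eventual-mutual-happiness, among her first $k$ interviews there is a position with which she mutually likes. By \cref{lem:one-sided-propose-accept} she then proposed and was accepted. For her to be unmatched now, she must have been displaced. By \cref{lem:minus-breaks+}, a mutually-liked tentative match can only be broken by a proposer who does not like the position. That is itself a proposal to a non-liked position at an earlier iteration $t'<t$, which contradicts the induction hypothesis. Before $t_0$, such breaks are excluded by \cref{lem:applicant-matched-to-liked}.

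The main obstacle is the bookkeeping in statement 2. I must show rigorously that $D$ drops only via mutual-like interviews counted in $Q$, and handle the boundary at $t_0$ and proposals after interviews with already-matched positions. If the direct charging fails, my fallback is to charge each decrement to the acceptance event and use \cref{lem:position-stops-interview-after++} to show that each position becomes happy at most once via a new interview.
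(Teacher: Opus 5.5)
Your proof is correct and is essentially the paper's argument. It is the same joint induction on $t$, driven by the $k^3$ baseline at $t_0$, $k$-eventual-mutual-happiness, the fact that an unhappy position never rejects without interviewing, and \cref{lem:minus-breaks+} to exclude a broken mutual-like match; your statement~1 argument is essentially the contrapositive of the paper's Case~2, and charging each decrement of $D$ to the mutual-like interview at the immediately preceding iteration is a cleaner packaging of the paper's step from the last proposal iteration $\tilde t$. Two small tidy-ups: the charging at iteration $t$ uses statement~1 at $t$ itself, which is available since you derive it first, and the claim that currently unhappy positions never rejected anyone without an interview needs the one-line fact that a position which likes its match keeps liking every later match.
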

\begin{proof}
We prove the lemma using induction on $t$. \newline
\emph{Induction base ($t=t_0$):} 
By definition,  iteration $t_0$ is an iteration in which an  interview occurs. Consequently, both claims hold trivially as there is no proposal made in iteration $t=t_0$.

\emph{Induction step:} Fix any iteration $t\geq t_0$ for which $Q(t)\leq k\cdot(k-1)$.
Assume the following statements hold for
every iteration $t'<t$: 
    \begin{enumerate}
        \item Any unmatched applicant chosen by the algorithm in iteration $t'$ does not propose to a position she does not interim like. 
        \item If a proposal occurs in iteration $t'$, then the number of positions (matched and unmatched) that do not interim like their current match is at least $k^3-Q(t')$ (which is at least $k$).
    \end{enumerate}
We note that since $Q(t)\leq k\cdot(k-1)$, for any $t'<t$ it holds that $Q(t')\leq k\cdot(k-1)$, as $Q$ is non-decreasing. 

Assuming that the induction hypothesis holds for
every iteration $t'<t$, we now prove that it holds for iteration $t$. If no proposal occurs in iteration $t$, both claims are trivially true. Thus, assume a proposal occurs at $t$, so in that case $Q(t)=Q(t-1)$. Let $a_{i^*}$ be the applicant selected by the algorithm in iteration $t$. Denote by $\tilde{t}$ the most recent iteration after $t_0$ but prior to $t$, in which a proposal occurred (or $t_0$ if no such iteration exists). 

We observe that in the case that $\tilde{t}=t_0$, as the 
realization is not a \textit{``$k$-bad realization,''} the number of positions (matched and unmatched) that do not interim like their current match in $t_0$ is at least $k^3$, {which is at least $k^3-Q(t_0)$ as $Q(t_0)\geq 0$}. If $\tilde{t}>t_0$, then this number is at least $k^3-Q(\tilde{t})$ by the induction hypothesis. So for every $\tilde{t}$ (whether it is $t_0$, or the most recent iteration after $t_0$ but prior to $t$ in which a proposal occurred), this number is at least $k^3-Q(\tilde{t})$.

Between $\tilde{t}$ and $t$, at most one interview could have resulted in the agents interim liking each other. That is because, by \cref{lem:one-sided-propose-accept}, such an outcome immediately triggers a proposal in the subsequent iteration. Since $t$ is the first proposal after $\tilde{t}$, no more than one such outcome can occur in this interval. Moreover, if $a_{i^*}$ proposes to a position that does not interim like $a_{i^*}$, then clearly $Q(t)=Q(\tilde{t})$.
We first observe that $a_{i^*}$ was never previously matched to a position where they interim like each other. Once such a match is formed, the applicant can only become unmatched if another applicant who does not interim like that position proposes to it (\cref{lem:minus-breaks+}). However, no such proposals could have ever occurred. This is true  after time $t_0$ by the inductive assumption. Additionally, it holds up to time $t_0$ by  \cref{lem:before-to-no-propose-to-no-like}.
Consequently, one of two cases must hold for $a_{i^*}$:
\begin{itemize}
    \item In iteration $t-1$, applicant $a_{i^*}$ has interviewed with a position, and they interim like each other. In this case, $Q(t) = Q(t-1) = Q(\tilde{t}) + 1$. 
    By \cref{lem:one-sided-propose-accept}, and noting that the match of any position never changes without a proposal, the number of positions that do not interim like their current match is at least $k^3 - Q(\tilde{t}) - 1 = k^3 - Q(t)$. Thus, the second statement holds for iteration $t$.
    The first statement trivially holds for iteration $t$, because $a_{i^*}$ is proposing to the position she just interviewed and liked.
    \item
    {In iteration $t-1$, applicant $a_{i^*}$ did not interview with a position such that they interim like each other. In this case, we claim that
    $a_{i^*}$ has never interviewed with a position such that they interim like each other}.
    {This holds because, as claimed above, $a_{i^*}$ was never matched with such a position in any previous iteration.  
Any such interview would have triggered an immediate tentative match of that kind in the following iteration {(\cref{lem:one-sided-propose-accept})}. Therefore, the absence of such a match necessarily means that if $a_{i^*}$ had not interviewed with such a position prior to the current iteration, she could not have encountered one.} Since the $k$-eventual-mutual-happiness condition holds, $a_{i^*}$ must have participated in fewer than $k$ interviews up to this point. 
As shown above, there are at least $k^3 - Q(\tilde{t}) \ge k$ positions that do not interim like their current match. Since $a_{i^*}$ has conducted fewer than $k$ interviews, there must exist a position $p_j$ such that 
    \begin{itemize}
        \item Position $p_j$ does not interim like their currently matched applicant, position $p_j$
        has not interviewed $a_{i^*}$, and therefore has not rejected $a_{i^*}$. Therefore it willing to interview $a_{i^*}$.
        \item Therefore, $a_{i^*}$ prefers position $p_j$ 
        to all the positions that she has interviewed with and does not interim like. 
    \end{itemize}
    This ensures that if applicant $a_{i^*}$ proposes at iteration $t$, she must interim like that position, satisfying the first statement. Finally, if the position $a_{i^*}$ proposes to does not interim like $a_{i^*}$, the count of positions that do not interim like their current tentative match 
    remains unchanged (position only improve their tentative match), which is by the inductive hypothesis still $k^3 - Q(t)$, since $Q(t) = Q(\tilde{t})$ in this case.
\end{itemize}
\end{proof}

As a consequence of the previous lemma, we obtain the following {corollaries (\cref{cor:why-++-dont-break,cor:interview-is-unmatched})}, which will be used in the proof of \cref{lem:++dont-break}.
\begin{corollary}\label{cor:why-++-dont-break}
    Assume that a realization is not a \textit{``$k$-bad realization.''} For iteration $t\ge t_0$, denote by $Q(t)$ the number of interviews conducted from $t_0$ (including) until $t$ (including) in which the agents interim like each other. Then the following statement hold for every $t\geq t_0$ for which $Q(t)\leq k\cdot(k-1)$.
    There does not exist an applicant and a position who interim like each other after an interview and become tentatively matched (including before $t_0$), but their match is later broken.
\end{corollary}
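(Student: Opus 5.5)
The plan is to argue by contradiction, using \cref{lem:minus-breaks+} as the key structural fact together with the first statement of \cref{lem:new-dont-propose-to++} and \cref{lem:before-to-no-propose-to-no-like}. Suppose that an applicant $a_i$ and a position $p_j$ interim like each other, become tentatively matched, and that this match is later broken at some iteration $s \le t$. A tentative match of a position can only change through a proposal: $p_j$ receives a proposal from some applicant $a_{i'}$ whom it interim prefers to $a_i$, and accepts it. Consider the first such breaking iteration $s$.

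At iteration $s$, the applicant $a_{i'}$ is the unmatched applicant selected by the algorithm, and she proposes to $p_j$. At that moment $p_j$ is tentatively matched to $a_i$, and the two interim like each other. By \cref{lem:minus-breaks+}, the proposer $a_{i'}$ does not interim like $p_j$. So we have an unmatched applicant proposing to a position she does not interim like.

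We then split according to when $s$ occurs.

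\emph{Case $s < t_0$.} Since the realization is not $k$-bad, $k$-eventual-happiness for applicants holds. By \cref{lem:before-to-no-propose-to-no-like}, no unmatched applicant proposes before $t_0$ to a position she does not interim like. This is a contradiction.

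\emph{Case $t_0 \le s \le t$.} The count $Q$ is non-decreasing, so $Q(s) \le Q(t) \le k(k-1)$. The first statement of \cref{lem:new-dont-propose-to++} therefore applies at iteration $s$, and it forbids exactly such a proposal. This is again a contradiction.

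The statement also covers matches formed before $t_0$ (and broken before or after $t_0$); the two cases above handle this, since only the breaking iteration $s$ matters.

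The main point needing care is that \cref{lem:minus-breaks+} applies as stated. Its hypothesis only requires that $p_j$ is currently matched to an applicant they mutually like, regardless of how or when that match formed, so this holds. Beyond that, I only need that matches change solely via accepted proposals, which is immediate from the description of \cref{algorithm:serial-adaptive}. I expect no real obstacle; the corollary is essentially a repackaging of the induction in \cref{lem:new-dont-propose-to++}.
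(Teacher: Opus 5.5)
Your proposal is correct and follows essentially the same argument as the paper: \cref{lem:minus-breaks+} forces the proposer who breaks the mutually-liked match not to interim like the position, and the first statement of \cref{lem:new-dont-propose-to++} rules out such a proposal while $Q(t)\le k(k-1)$. Your separate treatment of breaks before $t_0$ via \cref{lem:before-to-no-propose-to-no-like} spells out a case that the paper's two-line proof leaves implicit; the paper uses the same lemma for that case inside the induction of \cref{lem:new-dont-propose-to++}.
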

\begin{proof}
    By \cref{lem:minus-breaks+}, such an event implies that the applicant proposing to $p_j$ does not interim like $p_j$. Yet, \cref{lem:new-dont-propose-to++} ensures no such proposal occurs while $Q(t) \le k\cdot(k-1)$.
\end{proof}

{
\begin{corollary}\label{cor:interview-is-unmatched}
Assume that a realization is not a \textit{``$k$-bad realization.''} For iteration $t\ge t_0$, denote by $Q(t)$ the number of interviews conducted from $t_0$ (including) until $t$ (including) in which the agents interim like each other. Then the following statement hold for every $t\geq t_0$ for which $Q(t)\leq k\cdot(k-1)$. No unmatched applicant proposes to any position between $t_0$ and $t$ without being interviewed with some position in some previous iteration $t'\ge t_0$ (implying that an applicant is unmatched between $t_0$ and $t$ if and only if she interviews between $t_0$ and $t$).
\end{corollary}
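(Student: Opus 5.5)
The plan is to prove the statement by showing that every proposal made in an iteration $t\in[t_0,t]$ (with $Q(t)\le k(k-1)$) is preceded, for the same applicant and within $[t_0,t]$, by an interview. The key fact is item~1 of \cref{lem:new-dont-propose-to++}: in every such iteration, any unmatched applicant who proposes interim likes the position she proposes to.

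\textbf{Step 1 (proposal requires a recent interview).} Fix an iteration $t'\in[t_0,t]$ in which an unmatched applicant $a_{i^*}$ proposes to $p_{j^*}$. Then $a_{i^*}$ interim likes $p_{j^*}$, so $(a_{i^*},p_{j^*})\in Z^{t'-1}$. Let $s$ be the iteration of that interview. By \cref{lem:applicant-like-propose}, $a_{i^*}$ proposes to $p_{j^*}$ in iteration $s+1$. Two cases arise.

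\begin{itemize}
\item If $s+1=t'$ and $s\ge t_0$, we are done.
\item If $s<t_0$, then by \cref{lem:applicant-matched-to-liked} the proposal at $s+1\le t_0$ matches $a_{i^*}$ to $p_{j^*}$, and she remains matched until $t_0$.
\end{itemize}

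\textbf{Step 2 (an early liked match cannot be undone).} To be unmatched again at $t'$, $a_{i^*}$ must have been rejected by $p_{j^*}$ in some later iteration. That requires a proposal to $p_{j^*}$ that $p_{j^*}$ accepts, which happens only if $p_{j^*}$ interim prefers the new proposer to $a_{i^*}$.

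\begin{itemize}
\item If $p_{j^*}$ interim likes $a_{i^*}$, they interim like each other. Breaking this match is excluded by \cref{cor:why-++-dont-break}.
\item If $p_{j^*}$ does not interim like $a_{i^*}$, the displacing proposer must be an unmatched applicant. Such a proposal before $t_0$ is excluded by \cref{lem:applicant-matched-to-liked}. After $t_0$, the displacing proposer interim likes $p_{j^*}$ by item~1 of \cref{lem:new-dont-propose-to++}. We then iterate the same argument on her.
\end{itemize}

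\textbf{Step 3 (well-founded induction).} The cleanest organisation is induction on the iteration $t'$, with the following hypothesis: every proposal in iterations $[t_0,t'-1]$ by an unmatched applicant is by an applicant who interviewed in $[t_0,t'-1]$. Under this hypothesis, a displacement of $a_{i^*}$ after $t_0$ is caused by a proposer who interviewed at or after $t_0$. Displacement of an applicant who was matched at $t_0$ can then only cascade from applicants already unmatched at $t_0$. By \cref{cor:applicant-didnt-interview}, those applicants have not interviewed with any position before $t_0$, so their first proposal necessarily follows an interview at or after $t_0$.

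Now take $a_{i^*}$, unmatched at $t'$. Either she was unmatched at $t_0$, in which case \cref{cor:applicant-didnt-interview} gives that her liked interview occurred after $t_0$. Or she was matched at $t_0$ and was later displaced. After that displacement she had to interview again before proposing, since by item~1 of \cref{lem:new-dont-propose-to++} she proposes only to positions she likes. The only liked position she interviewed before the displacement is $p_{j^*}$, and $p_{j^*}$ has already rejected her.

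\textbf{Step 4 (the ``if and only if'').} One direction is the above. For the converse, in the algorithm only the currently selected unmatched applicant interviews, so an applicant who interviews in $[t_0,t]$ is unmatched at that time.

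\textbf{Main obstacle.} The delicate point is Step~2's case where the displaced applicant $a_{i^*}$, matched before $t_0$ to a position that does not like her, could propose again without a new interview. This could happen if she had interviewed before $t_0$ with some other position she likes. I would rule this out with \cref{lem:applicant-matched-to-liked}: before $t_0$ she was matched to the first position she liked, so no second liked position exists among her pre-$t_0$ interviews. Hence any proposal after the displacement requires an interview after $t_0$.
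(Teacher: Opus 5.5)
Your proposal is correct and uses the same ingredients as the paper. Item~1 of \cref{lem:new-dont-propose-to++} forces any post-$t_0$ proposal to go to a liked, hence interviewed, position; \cref{cor:applicant-didnt-interview} handles applicants unmatched at $t_0$; and a liked pre-$t_0$ interview was already used by an immediate proposal (\cref{lem:applicant-like-propose,lem:applicant-matched-to-liked}), so that position has since rejected her. The paper does this in a few lines by contradiction, so the displacement-cascade induction in your Steps~2--3, including the appeal to \cref{cor:why-++-dont-break}, is unnecessary: the final paragraph of your Step~3 suffices on its own.
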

\begin{proof}
    Assume toward contradiction that some applicant proposes (and thus must be unmatched) at some $t \ge t_0$, yet her only interviews occurred before $t_0$. If $t = t_0$, then by \cref{cor:applicant-didnt-interview}, there is no proposal. Thus, for any $t > t_0$, if all her interviews occurred before $t_0$, the interview with the position to which she now proposes did not occur in the previous iteration. By \cref{lem:applicant-like-propose}, this implies that the applicant does not interim like the position she is proposing to now, which contradicts \cref{lem:new-dont-propose-to++}.
\end{proof}}

Building on {\cref{lem:new-dont-propose-to++}}, we show that if the realization is not a $k$-bad realization, then each applicant stops interviewing once they meet a position they interim like each other.

\begin{lemma}\label{lem:++dont-break}
     Assume that a realization is not a \textit{``$k$-bad realization.''} 
     Then, if an applicant $a_i$ and a position $p_j$ meet for an interview at iteration $t\in [T]$ and interim like each other, they become tentatively matched at iteration $t+1$ and remain matched until the end of the execution of the algorithm. 
\end{lemma}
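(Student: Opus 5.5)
The proof has two parts: establishing the tentative match, and showing that it is never broken.

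\emph{Formation.} By \cref{lem:one-sided-propose-accept}, if $a_i$ and $p_j$ interview at iteration $t$ and interim like each other, then at iteration $t+1$ applicant $a_i$ proposes to $p_j$ and $p_j$ accepts. So they are tentatively matched at iteration $t+1$.

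\emph{Persistence.} Suppose, toward a contradiction, that the match is broken at some later iteration $t'>t+1$. A match breaks only when $p_j$ receives and accepts a proposal from some applicant $a_{i'}$. By \cref{lem:minus-breaks+}, $a_{i'}$ does not interim like $p_j$. I will derive a contradiction in three cases, according to when $t'$ occurs.

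\begin{itemize}
\item If $t'<t_0$: since the realization is not $k$-bad, $k$-eventual-happiness for applicants holds. Then \cref{lem:before-to-no-propose-to-no-like} says no unmatched applicant proposes to a position she does not interim like before $t_0$, which contradicts the existence of $a_{i'}$'s proposal.
\item If $t'\ge t_0$ and $Q(t')\le k(k-1)$: \cref{cor:why-++-dont-break} applies directly and rules out the breakup.
\end{itemize}

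The main obstacle is the remaining case $t'\ge t_0$ with $Q(t')>k(k-1)$. My plan is to show this case never occurs, i.e., that $Q(T)\le k(k-1)$ always holds when the realization is not $k$-bad.

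Each interview counted by $Q$ happens after $t_0$. The argument in \cref{lem:interviewed-applicant-number} shows that every interview after $t_0$ involves an applicant who interviews with one of the $k-1$ positions unmatched at $t_0$. Moreover, each interview in which the agents interim like each other is conducted by a position that then stops interviewing forever (\cref{lem:position-stops-interview-after++}). So I would try to bound $Q$ as follows:
\begin{itemize}
\item Each mutually-liking interview after $t_0$ corresponds to a distinct interviewing applicant, and by \cref{lem:interviewed-applicant-number} there are at most $k(k-1)$ such applicants.
\item Alternatively, each such interview can be charged to a distinct position that afterwards stops interviewing.
\end{itemize}

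The point of care is circularity: \cref{lem:interviewed-applicant-number} itself does not depend on $Q$. It only uses $k$-eventual-mutual-happiness and the fact that matched positions stay matched. So the bound on the number of interviewing applicants after $t_0$ is unconditional given that the realization is not $k$-bad.

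To show each applicant contributes at most one mutually-liking interview, I would take the first time $t^\ast$ at which $Q$ would exceed $k(k-1)$. Just before $t^\ast$, the invariants of \cref{lem:new-dont-propose-to++} and \cref{cor:why-++-dont-break} still hold. Hence every applicant that had a mutually-liking interview is still matched, and has stopped interviewing by \cref{cor:interview-is-unmatched}. Therefore the $Q(t^\ast)$ mutually-liking interviews involve $Q(t^\ast)$ distinct applicants who interview after $t_0$. This gives $Q(t^\ast)\le r\le k(k-1)$, a contradiction.

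With $Q(T)\le k(k-1)$ established, \cref{cor:why-++-dont-break} covers all iterations $t'\ge t_0$, which closes the last case and completes the argument.
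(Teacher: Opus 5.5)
Your proof is correct and follows the paper's route: formation via \cref{lem:one-sided-propose-accept}, persistence via \cref{cor:why-++-dont-break} while $Q\le k(k-1)$, and a counting argument against \cref{lem:interviewed-applicant-number} to show $Q(T)\le k(k-1)$. The only difference is cosmetic: you argue from the first iteration at which $Q$ would reach $k(k-1)+1$, which yields $k(k-1)+1$ distinct interviewing applicants, whereas the paper argues that at the last iteration with $Q=k(k-1)$ every applicant is already matched, so the run ends there.
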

\begin{proof}
 {By \cref{lem:one-sided-propose-accept} if an applicant $a_i$ and a position $p_j$ meet for an interview at iteration $t\in [T]$ 
     and interim like each other, they become tentatively matched at iteration $t+1$.}
    By \cref{lem:applicant-matched-to-liked}, any such pair that interviewed before $t_0$, will stay matched till $t_0$.
    We are left to show that no pair that interim like each other will be broken after iteration $t_0$.
    Let $Q(t)$ be the number of interviews conducted from $t_0$ (including) through $t$ (including) in which the agents interim like each other. Assume, towards contradiction, that there exists an applicant $a_i$ and a position $p_j$ who interim like each other after an interview and become tentatively matched (including before $t_0$), but their match is later broken. For any $t$ where $Q(t) \le k\cdot(k-1)$, this contradicts {\cref{cor:why-++-dont-break}.}
    If at $T$, it holds that $Q(T)\le k\cdot(k-1)$, the claim holds for every $t\in[T]$. 
    To complete the proof we prove that it indeed holds that $Q(T)\le k\cdot(k-1)$.

    {Denote the last iteration in which $Q(t)=k\cdot(k-1)$ as $t^*$.  We claim that at the conclusion of $t^*$, no unmatched applicants remain, implying $T = t^*$. We observe that by {\cref{cor:interview-is-unmatched}}, no unmatched applicant proposes between $t_0$ and $t^*$ without an interview occurring in those iterations (implying that an applicant is unmatched between $t_0$ and $t^*$ if and only if it interviews between $t_0$ and $t^*$).
    In that interval, every applicant matched to a position they interim like each other remains matched to it (\cref{cor:why-++-dont-break}). Since $Q(t^*) = k\cdot(k-1)$, exactly $k\cdot(k-1)$ such interviews occurred in that interval each resulting in a match. Consequently, there are $k\cdot(k-1)$ applicants who interviewed at some point after $t_0$ and are now matched at $t^*$.
    By \cref{lem:interviewed-applicant-number}, as the realization is not a \textit{``$k$-bad realization''}, at most $r = k\cdot(k-1)$ applicants interview after $t_0$. Since we have shown that the $k\cdot(k-1)$ applicants interviewed between $t_0$ and $t^*$ (inclusive) (who are exactly those who were unmatched sometime in that interval) and became matched by iteration $t^*$ (and stayed matched), it follows that all applicants who were unmatched in the interval are matched in $t^*$. Thus, no unmatched applicants remain at $t^*$, which implies $t^* = T$ and completes the proof.
    }
\end{proof}

{The following lemma (\cref{lem:applicant-do-max-k-interviews}) shows that when the realization is not a $k$-bad realization, every applicant will participate in at most $k$ interviews, a claim which will be used in the proof of \cref{lem:interviews-after-t0}.}

\begin{lemma}\label{lem:applicant-do-max-k-interviews}
    Assume that a realization is not a \textit{``$k$-bad realization.''} Then every applicant will participate in at most $k$ interviews.
\end{lemma}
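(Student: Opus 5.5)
The plan is to show that every applicant stops interviewing as soon as she meets a position with which she mutually interim-likes, and then to apply $k$-eventual-mutual-happiness. The key input is \cref{lem:++dont-break}. Since the realization is not $k$-bad, if applicant $a_i$ interviews with $p_j$ at some iteration $t$ and they interim like each other, then they become tentatively matched at iteration $t+1$ and stay matched until termination. The algorithm only picks unmatched applicants in line~3, so $a_i$ is never selected again after iteration $t+1$. Hence she participates in no further interviews. In other words, the last interview of any applicant who ever has a mutually-liked interview is exactly her first such interview.

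Now fix an applicant $a_i$ and suppose, toward contradiction, that she participates in more than $k$ interviews. Consider the moment right after her $k$-th interview. By $k$-eventual-mutual-happiness, which holds because the realization is not $k$-bad, among her first $k$ interviewing positions there is some $p_j$ such that $a_i$ and $p_j$ interim like each other. Let $t$ be the iteration of that interview; then $t$ is at or before her $k$-th interview. By the previous paragraph she conducts no interview after iteration $t$. This contradicts the existence of a $(k+1)$-th interview, so every applicant participates in at most $k$ interviews.

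The main point to verify is that \cref{lem:++dont-break} applies to interviews both before and after $t_0$, uniformly over the whole run. Its statement covers every $t\in[T]$, and its proof handles the pre-$t_0$ case via \cref{lem:applicant-matched-to-liked} and the post-$t_0$ case via \cref{cor:why-++-dont-break}, so no extra case analysis is needed. One subtlety remains: $k$-eventual-mutual-happiness is phrased about an applicant \emph{that has participated in} $k$ interviews. I would read it as a statement about the whole execution, noting that the interim-like relation of an interviewed pair is fixed once realized. It therefore applies to the prefix consisting of her first $k$ interviews, which is what the argument above uses.
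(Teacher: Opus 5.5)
Your proposal is correct and follows the paper's proof. Both argue by contradiction: assume some applicant has more than $k$ interviews, use $k$-eventual-mutual-happiness to find a mutually liked position among her first $k$ interviews, and invoke \cref{lem:++dont-break} to conclude she is matched to it until termination and so never interviews again. Your extra remarks (line~3 only selects unmatched applicants, and \cref{lem:++dont-break} covers iterations both before and after $t_0$) spell out what the paper's short argument leaves implicit.
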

\begin{proof}
    Assume towards contradiction that there exists an applicant $a_i$ who participated in more than $k$ interviews. By the $k$-eventual-mutual-happiness 
     condition, after $k$ interviews, $a_i$ must have interviewed with a position $p_j$ such that $a_i$ and $p_j$ interim like each other. 

    Then, by \cref{lem:++dont-break}, applicant $a_i$ would have been tentatively matched to $p_j$ and remained matched until the end of the algorithm. This contradicts the assumption that she continued to interview after her $k$-th interview. Hence, no applicant can have more than $k$ interviews.
\end{proof}

    \InterviewsAfterTransition*
\begin{proof}
By \cref{lem:interviewed-applicant-number}, there are at most $r$ applicants who interview starting from $t_0$ and by \cref{lem:applicant-do-max-k-interviews}, each such applicant participates in at most $k$ interviews. It follows that the total number of interviews after $t_0$ is bounded by $k \cdot r$.
\end{proof}

\subsubsection{Bounding the Probability for a $k$-Bad Realization and $k$-Eventual-Happiness for Applicants}\label{subsec:probabilities}

We next aim to bound the probability that an execution of \cref{algorithm:serial-adaptive} satisfies $k$-eventual-happiness for applicants, as well as the probability that the realization is a $k$-bad realization. {The first two lemmas (\cref{lem:mutual-happiness-app,lem:mutual-happiness-pos}) bound the probabilities of the two first conditions required for the realization to be a $k$-bad realization.}

\begin{lemma}\label{lem:mutual-happiness-app}
    Let $k = \lceil10 \log n\rceil$. Then the probability that an execution of \cref{algorithm:serial-adaptive} satisfies that every applicant that has participated in $k$ interviews, the applicant and at least one of her interviewing positions interim like each other is at least $\left(1 - \frac{1}{n^4} \right)^n$.
\end{lemma}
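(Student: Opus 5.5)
The plan is to handle each applicant separately, bound her failure probability by $n^{-4}$, and then combine across the $n$ applicants. Fix an applicant $a_i$ and look at her interviews in the order they occur. For the $\ell$-th interview, let $X_\ell$ be the indicator that $a_i$ and the interviewing position interim like each other, i.e.\ $v_{i,j}>V$ and $u_{j,i}>U$. The key observation is that the decision to conduct an interview, and the choice of the pair, are determined by $Z^{t-1}$, that is, by the history before the interview. The two samples of that interview are then drawn fresh and independently from $F_{i,j}$ and $G_{j,i}$, and no pair is ever interviewed twice. Since $\rho_{i,j}=\eta_{j,i}=\frac12$, we get
\[
\Pr[X_\ell=1\mid \text{history}] = \tfrac14
\]
regardless of which position was selected.

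Next I would use a standard coupling (a sequence of independent fresh coins revealed as interviews occur) to bound the probability that $a_i$ takes part in at least $k$ interviews with none of the first $k$ succeeding. Conditioning sequentially on the history, this probability is at most
\[
\left(\tfrac34\right)^k \le \left(\tfrac34\right)^{10\log n} = n^{-10\log(4/3)} .
\]
Here $10\log_2(4/3)\approx 4.15>4$, so the bound is at most $n^{-4}$. The step that needs care is the filtration. The event ``$a_i$ has an $\ell$-th interview'' is a stopping-time-type event. Writing the failure event as the intersection over $\ell\le k$ of $\{\text{$\ell$-th interview exists and } X_\ell=0\}$ and applying the tower property gives the product bound directly.

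The stated bound has the product form $\left(1-n^{-4}\right)^n$, not the union bound $1-n\cdot n^{-4}$, so combining the applicants takes more than a union bound. I would aim to show that the good events $E_i$ (``if $a_i$ reaches $k$ interviews then one of her first $k$ is mutually liked'') satisfy
\[
\Pr\Big[\textstyle\bigcap_i E_i\Big]=\prod_i \Pr\Big[E_i\mid \textstyle\bigcap_{i'<i}E_{i'}\Big],
\]
with each conditional factor at least $1-n^{-4}$. To get this, assign each applicant an independent stream of coins for her $\ell$-th interview outcome, consisting of the pair of events ``above mean'' on each side. The actual values are then drawn conditionally on these coins, so the coupling preserves the algorithm's law. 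Each $E_i$ is implied by the event $E_i'$ that applicant $i$'s first $k$ mutual-like coins are not all zero. The $E_i'$ depend on disjoint independent coin streams, so they are independent and
\[
\Pr\Big[\textstyle\bigcap_i E_i\Big]\ge \prod_i \Pr[E_i'] \ge \left(1-n^{-4}\right)^n .
\]

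The main obstacle is justifying this coupling: assigning the stream to the applicant's $\ell$-th interview rather than to the pair $(a_i,p_j)$. This requires checking that the outcome distribution does not depend on which position is chosen, which holds because every $\rho_{i,j}=\eta_{j,i}=\frac12$ and the interviewed pairs are distinct. Once that is verified, the rest is the arithmetic above.
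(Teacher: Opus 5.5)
Your proposal is correct and matches the paper's argument. The paper pre-samples a table of independent quantile pairs indexed by applicant and interview number, which is your per-applicant coin-stream coupling, and then uses independence across applicants to get $\prod_i\bigl(1-(3/4)^k\bigr)\ge(1-n^{-4})^n$. Your tower-property step is subsumed by the coupling, and your ``implied by'' formulation is more accurate than the paper's ``equal to,'' since an applicant with fewer than $k$ interviews satisfies the event vacuously.
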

\begin{proof}
    Assume that before the execution of the algorithm, a table of quantiles is sampled. Each entry in the table contains a pair $(w_{i,q},z_{i,q})$, where both values are drawn independently from the uniform distribution on $[0,1]$. The table has one row for each applicant and $m$ columns.
    When an applicant $ a_i $ interviews with a position $p_j$, instead of sampling the values from the distributions during runtime, we can use the pre-sampled quantiles and convert them into realized values via the inverse CDFs of their true distributions.
    Specifically, if it is the $q$-th interview for applicant $a_i$, then the interview value for $a_i$ is set to be $F_{i,j}^{-1}(w_{i,q})$ and the interview value for $p_j$ is set to be $G_{j,i}^{-1}(z_{i,q})$.
    
    Note that this method is statistically equivalent to sampling values from the distributions $F_{i,j}$ and $G_{j,i}$ during runtime. Therefore, the probability that an execution of \cref{algorithm:serial-adaptive} satisfies that every applicant who has participated in $k$ interviews, the applicant and at least one of her interviewing positions interim like each other is equal to the probability that, for each applicant, among the first $k$ columns there exists at least one quantile pair $(w_{i,q},z_{i,q})$ such that $w_{i,q} > 0.5$ and $z_{i,q} > 0.5$.
    
    Let $A_{i_l}$ for $l\in[k]$ be the event that in the $l$-th interview of applicant $a_i$, $a_i$  and the position interim like each other.  
    Let $A_i=\bigcup_{l=1}^kA_{i_l}$ be the event that one of the $k$ interviews that applicant $a_i$ performs, ends with the position and the applicant interim like each other. Let $B_1=\cap_{i=1}^nA_i$ be the event that every applicant $a_i$ that participates in $k$ interviews with $k$ positions, there exists a position such that the applicant $a_i$ and the position interim like each other. As the results of the interviews are sampled independently for each applicant and each interview: 
    \[
        \begin{split}
            \Pr(B_1)&=\prod _{i=1}^n\Pr(A_i)=\prod_{i=1}^n\left(1-\Pi_{l=1}^k\Pr(\overline{A_{i_l}})\right)
            \\&=\left(1-\left(\frac{3}{4}\right)^k\right)^n=\left(1-\left(\frac{3}{4}\right)^{\lceil10 \log n\rceil}\right)^n\ge\left(1-\left(\frac{3}{4}\right)^{10 \log n}\right)^n
            \\&=\left(1-n^{10\log\left(\frac{3}{4}\right)}\right)^n\ge\left(1-n^{-4}\right)^n
        \end{split}
    \]
\end{proof}

\begin{lemma}\label{lem:mutual-happiness-pos}
    Let $k = \lceil10 \log n\rceil$ and assume that $n+k-1>m\ge n$. Then the probability that an execution of \cref{algorithm:serial-adaptive} satisfies that every position that has participated in $k$ interviews, the position and at least one of its interviewed applicants interim like each other is at least $\left(1 - \frac{1}{n^4} \right)^{n+k-1}$.
\end{lemma}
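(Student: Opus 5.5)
I would mirror the proof of \cref{lem:mutual-happiness-app}, with rows of the pre-sampled quantile table indexed by positions instead of applicants. Before the execution, for each position $p_j$ I sample an infinite (or sufficiently long) sequence of independent pairs $(w_{j,q},z_{j,q})$, each uniform on $[0,1]^2$. When $p_j$ conducts its $q$-th interview, say with applicant $a_i$, I set the applicant's value to $F_{i,j}^{-1}(w_{j,q})$ and the position's value to $G_{j,i}^{-1}(z_{j,q})$. Even though the identity of the applicant in the $q$-th interview is chosen adaptively, it is determined by the history, which is independent of the fresh pair $(w_{j,q},z_{j,q})$. Hence this coupling produces exactly the same distribution of executions as sampling at runtime.

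Next I use $\rho_{i,j}=\eta_{j,i}=\tfrac12$ and continuity. Under the coupling, applicant and position interim like each other in $p_j$'s $q$-th interview exactly when $w_{j,q}>F_{i,j}(V_{i,j})$ and $z_{j,q}>G_{j,i}(U_{j,i})$. Whatever the adaptively chosen $i$ is, this event has probability exactly $1/4$, independently across $q$ and across $j$.

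Let $A_j$ be the event that among $p_j$'s first $k$ quantile pairs there is one yielding mutual liking. The desired property (every position with $k$ interviews has a mutually-liked applicant) is implied by $\bigcap_{j=1}^m A_j$. The events $A_j$ are independent, each with probability $1-(3/4)^k$. Therefore
\[
\Pr\Bigl(\textstyle\bigcap_j A_j\Bigr)=\bigl(1-(3/4)^k\bigr)^m\ge \bigl(1-n^{10\log(3/4)}\bigr)^m\ge\bigl(1-n^{-4}\bigr)^m .
\]
I then use the hypothesis $m<n+k-1$, i.e.\ $m\le n+k-2\le n+k-1$. Since the base is in $(0,1)$, this gives $\bigl(1-n^{-4}\bigr)^m\ge\bigl(1-n^{-4}\bigr)^{n+k-1}$, as stated.

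The main obstacle is the first step: justifying that the adaptive choice of the applicant does not spoil independence. The outcome-dependent matching choices require the argument that a quantile pair is consumed only after the decision to interview is made from earlier information. I would state this explicitly, since it is the same "principle of deferred decisions" implicitly used in \cref{lem:mutual-happiness-app}. Everything else is the same numeric computation, with $\log(3/4)\cdot 10<-4$.
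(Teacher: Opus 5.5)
Your proposal is correct and follows essentially the same route as the paper: a pre-sampled quantile table with one row per position, where each entry yields mutual liking with probability $1/4$ independently, giving $\bigl(1-(3/4)^k\bigr)^m\ge(1-n^{-4})^m\ge(1-n^{-4})^{n+k-1}$. Your explicit deferred-decisions justification and your phrasing of the table event as a sufficient condition (``implied by'' rather than ``equal to'') are slightly more careful than the paper's sketch.
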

\begin{proof}
    The proof follows a symmetric argument to the one used in the proof of \cref{lem:mutual-happiness-app}.  
    As in \cref{lem:mutual-happiness-app}, we assume that, prior to the execution of the algorithm, a table of quantiles is sampled and later converted into realized values during runtime. Each entry in the table consists of a pair of quantiles $(w'_{j,q},z'_{j,q})$, where both values are drawn independently from the uniform distribution on $[0,1]$.
    However, in this case, the table has one row for each position and $n$ columns. Similarly, let $B_2$ be the event that for every position $p_j$ that conducts $k$ interviews with $k$ applicants, there exists at least one applicant such that the applicant and $p_j$ interim like each other. Then \[\Pr(B_2)\ge\left(1-n^{-4}\right)^m >\left(1-n^{-4}\right)^{n+k-1}\]
\end{proof}

Similarly to the previous lemmas, the following claims (\cref{lem:happiness-app-for-general-k,cor:happiness-app-for-general-k,lem:happiness-app,cor:k-happ-better-than-good}) bound the probability that an execution of \cref{algorithm:serial-adaptive} satisfies $k$-eventual-happiness for applicants.
\begin{lemma}\label{lem:happiness-app-for-general-k}
     Let $k\le m$ and let $p_D$ be the probability that an execution of \cref{algorithm:serial-adaptive} satisfies $k$-eventual-happiness for applicants. Then $p_D=\left(1-\left(\frac{1}{2}\right)^k\right)^n$.
\end{lemma}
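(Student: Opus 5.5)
The plan is to reuse the quantile-table coupling from the proof of \cref{lem:mutual-happiness-app}, now tracking only the applicant's side of each interview. Before the execution, sample for every applicant $a_i$ a row of $m$ independent uniform quantiles $w_{i,1},\dots,w_{i,m}$ on $[0,1]$. When $a_i$ has her $q$-th interview, at some position $p_j$, her realized value is set to $F_{i,j}^{-1}(w_{i,q})$. Because the rows are independent of the (adaptive) choice of $p_j$, this is distributionally identical to sampling during runtime. Since $\rho_{i,j}=\frac12$, the event ``$a_i$ interim likes the position of her $q$-th interview'' is exactly the event $w_{i,q}>\frac12$, regardless of which position $p_j$ the algorithm picked. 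Each interview also has an independent position-side quantile, but it does not affect this event.

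The key step is to identify $k$-eventual-happiness for applicants with an event on the table. Following the reading used in \cref{lem:mutual-happiness-app}, I take ``$k$-eventual-happiness holds for the execution'' to mean the following guarantee for each applicant: whenever she reaches $k$ interviews, at least one of them is liked. Under the coupling, her first $k$ interviews are always scored by $w_{i,1},\dots,w_{i,k}$, and $k\le m$ guarantees these $k$ columns exist. So the guarantee for $a_i$ is precisely the table event $D_i=\{\exists q\le k:\ w_{i,q}>\tfrac12\}$. The event $D_i$ depends only on row $i$ and not on the interview outcomes of other applicants, which are adaptive and only influence which positions are used, never which quantiles are used.

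Given this identification, the computation is short. The rows are independent, and within a row the events $w_{i,q}\le \frac12$ are independent with probability $\frac12$ each. Hence $\Pr(D_i)=1-(\frac12)^k$, and
\[
p_D=\Pr\Bigl(\bigcap_{i=1}^n D_i\Bigr)=\prod_{i=1}^n\Bigl(1-\bigl(\tfrac12\bigr)^k\Bigr)=\Bigl(1-\bigl(\tfrac12\bigr)^k\Bigr)^n .
\]

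The main obstacle is justifying the identification in the second paragraph, and this is exactly where the proof would fail if the lemma were read literally. If one reads the property as quantifying only over applicants who actually perform $k$ interviews in the realized run, then the event contains $\bigcap_i D_i$ but may be strictly larger. The reason is that an applicant whose first $k$ quantiles are all below $\frac12$ could in principle stop earlier, matched to a disliked position after all her non-interviewed positions have rejected her. Under that literal reading the argument above only yields $p_D\ge(1-2^{-k})^n$, which is the direction used downstream. To obtain the stated equality I must commit to the table-level meaning of the property, namely that it is certified for every applicant by her first $k$ quantile draws. I would state this convention explicitly and observe that it matches how \cref{lem:mutual-happiness-app} treats the analogous mutual event.
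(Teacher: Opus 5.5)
Your proposal is essentially the paper's proof. It uses the same pre-sampled quantile table indexed by each applicant's interview count, identifies $k$-eventual-happiness for applicants with the event that every row has a quantile above $\frac12$ among its first $k$ columns, and takes the same product over independent rows to get $\bigl(1-(\tfrac12)^k\bigr)^n$.

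Your caveat is also correct, and the paper's proof passes over it silently. Under the literal definition, the execution event only contains the table event. For example, take $n=m=k=2$: if $a_1$ and $p_1$ like each other and $a_2$ dislikes $p_2$, then $p_1$ rejects $a_2$ without an interview, and $a_2$ settles on $p_2$ after a single interview regardless of her unused second quantile. So the stated equality needs exactly the table-level convention you adopt, while the subsequent corollaries only use the lower bound $p_D\ge(1-2^{-k})^n$.
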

\begin{proof}
    Let $D$ be the event that $k$-eventual-happiness for applicants holds. Similarly to the proof of \cref{lem:mutual-happiness-app}, assume that before the execution of the algorithm, a table of quantiles is sampled. 
    Each entry in the table contains a value that is sampled independently of the uniform distribution on $[0,1]$. 
    The table has one row for each applicant and $m$ columns.
    When an applicant $a_i$ interviews with a position $p_j$, 
    instead of sampling the value from distribution $F_{i,j}$ during runtime, we can use the pre-sampled quantiles and convert them into realized values.
    
    The probability that an execution of \cref{algorithm:serial-adaptive} satisfies $k$-eventual-happiness for applicants equals the probability that, for each applicant, among the first $k$ columns there exists at least one quantile that is bigger than $\frac{1}{2}$.
    Let $E_{i_l}$ for $l\in[k]$ be the event that the value sampled in the $l$-th column of applicant $a_i$ is such that $a_i$ interim likes the position. Let $D_i=\bigcup_{l=1}^kE_{i_l}$ be the event that one of the $k$ interviews that applicant $a_i$ performs, ends with the applicant interim liking the position. Let $D=\bigcap_{i=1}^nD_i$ be the event that for every applicant $a_i$ that participates in $k$ interviews with $k$ positions, there exists at least one position that $a_i$ interim likes. Since the interview values are sampled independently for each applicant and each interview, and for each $i$, $\Pr(D_i)=1-\Pi_{l=1}^k\Pr(\overline{E_{i_l}})=1-\left(\frac{1}{2}\right)^k$:
    \begin{equation*}
    \Pr(D)=\Pi_{i=1}^n\Pr(D_i)=\Pi_{i=1}^n\left(1-\Pi_{l=1}^k\Pr(\overline{D_{i_l}})\right)
            =\left(1-\left(\frac{1}{2}\right)^k\right)^n
    \end{equation*}
\end{proof}

\begin{corollary}\label{cor:happiness-app-for-general-k}
     Let $p_D$ be the probability that an execution of \cref{algorithm:serial-adaptive} satisfies $k$-eventual-happiness for applicants. Then $1-p_D\le\frac{n}{2^k}$
\end{corollary}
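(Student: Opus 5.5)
The plan is to derive the corollary directly from \cref{lem:happiness-app-for-general-k} via Bernoulli's inequality. That lemma gives the exact expression $p_D=\left(1-\left(\frac{1}{2}\right)^k\right)^n$ (for $k\le m$), so the goal is to upper bound
\[
1-p_D \;=\; 1-\left(1-2^{-k}\right)^n .
\]

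The key step is Bernoulli's inequality $(1-x)^n\ge 1-nx$, which holds for $x\in[0,1]$ and integer $n\ge 1$. Apply it with $x=2^{-k}\in[0,1]$ to get $\left(1-2^{-k}\right)^n\ge 1-n2^{-k}$. Rearranging yields $1-p_D\le n/2^k$.

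Alternatively, one can argue by a union bound over the events $\overline{D_i}$ defined in the proof of \cref{lem:happiness-app-for-general-k}. Each has probability exactly $2^{-k}$, so
\[
\Pr(\overline{D})\le\sum_{i=1}^n\Pr(\overline{D_i})=\frac{n}{2^k}.
\]
This route does not need the independence across applicants at all.

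The only potential obstacle is the case $k>m$, where \cref{lem:happiness-app-for-general-k} was stated only for $k\le m$. In that case no applicant can reach $k$ interviews, so $k$-eventual-happiness holds vacuously. Then $p_D=1$ and the bound $1-p_D=0\le n/2^k$ is trivial. I would add a one-line remark covering this case and otherwise invoke the lemma.
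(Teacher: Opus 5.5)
Your proof is correct and matches the paper's own proof: it also plugs the expression $p_D=(1-2^{-k})^n$ from \cref{lem:happiness-app-for-general-k} into Bernoulli's inequality to get $1-p_D\le 1-(1-n/2^k)=n/2^k$. The union-bound variant and the remark that the case $k>m$ holds vacuously are both valid optional additions, and each needs only $\Pr(\overline{D_i})\le 2^{-k}$ rather than equality.
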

\begin{proof}
    By \cref{lem:happiness-app-for-general-k}, using Bernoulli's inequality: 
    \[1-p_D=1-\left(1-\frac{1}{2^k}\right)^n\le 1-\left(1-\frac{n}{2^k}\right)=\frac{n}{2^k}\]
\end{proof}

\begin{corollary}\label{cor:k-happ-better-than-good}
     Let $p_D$ be the probability that an execution of \cref{algorithm:serial-adaptive} satisfies $k$-eventual-happiness for applicants and let $k \ge \lceil10 \log n\rceil$. Then $p_D\ge1-\frac{1}{n^9}$.
\end{corollary}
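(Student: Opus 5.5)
By \cref{cor:happiness-app-for-general-k}, for any $k$ the failure probability satisfies $1-p_D\le n/2^k$. Since $p_D$ is the probability that $k$-eventual-happiness for applicants holds, and the claim only requires $k\ge \lceil 10\log n\rceil$, it suffices to bound $n/2^k$ from above.

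For $k\ge \lceil 10\log n\rceil\ge 10\log n$, and using that logarithms are base $2$, we get $2^k\ge 2^{10\log n}=n^{10}$. Hence
\[
1-p_D\le \frac{n}{2^k}\le \frac{n}{n^{10}}=\frac{1}{n^9},
\]
which gives $p_D\ge 1-\frac{1}{n^9}$.

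One side condition needs attention. \cref{lem:happiness-app-for-general-k} assumes $k\le m$, so that each applicant's row of the quantile table has at least $k$ columns. When $k\le m$, the corollary applies directly. When $k>m$, no applicant can ever take part in $k$ interviews, so $k$-eventual-happiness for applicants holds vacuously and $p_D=1$. The bound therefore holds in both cases.

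No step here is a real obstacle; the only points requiring care are the base-$2$ convention for logarithms and this $k>m$ edge case.
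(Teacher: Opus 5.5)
Your argument is correct and is essentially the paper's proof. The paper writes $p_D\ge(1-2^{-k})^n\ge(1-n^{-10})^n\ge 1-n^{-9}$ via Bernoulli's inequality, while you apply Bernoulli first (through \cref{cor:happiness-app-for-general-k}) and then substitute $2^k\ge n^{10}$; your separate treatment of $k>m$, where the condition holds vacuously because an applicant can interview each position at most once, correctly fills in a case the paper's one-line proof leaves implicit.
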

\begin{proof}
    By \cref{lem:happiness-app-for-general-k} and Bernoulli's inequality, \[p_D\ge\left(1-\frac{1}{2^k}\right)^n\ge\left(1-\frac{1}{n^{10}}\right)^n\ge1-\frac{1}{n^{9}}.\] 
\end{proof}

\begin{lemma}\label{lem:happiness-app}
    Let $k = \lceil10 \log n\rceil$. Let $p_D$ be the probability that an execution of \cref{algorithm:serial-adaptive} satisfies $k$-eventual-happiness for applicants. Then $p_D\ge\left(1 - \frac{1}{n^{10}} \right)^n$.
\end{lemma}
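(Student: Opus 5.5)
This follows directly from \cref{lem:happiness-app-for-general-k}, which gives the exact value $p_D=\left(1-\left(\frac{1}{2}\right)^k\right)^n$ whenever $k\le m$. The plan has three steps.

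First, check that the hypothesis $k\le m$ holds for $k=\lceil 10\log n\rceil$. If it does not, then $k>m$, and $k$-eventual-happiness for applicants holds vacuously: no applicant can participate in $k$ interviews, since she interviews each of the $m$ positions at most once. In that case $p_D=1$ and the bound is trivial. In the paper's regime (for instance $m\ge n$ with $n$ large enough that $10\log n<n$) we indeed have $k\le m$, so we may invoke the lemma.

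Second, with $k\le m$, the lemma gives $p_D=\left(1-2^{-k}\right)^n$. Since $k\ge 10\log n$ and logarithms are base $2$,
\[
2^{-k}\le 2^{-10\log n}=n^{-10}.
\]
Hence $1-2^{-k}\ge 1-n^{-10}$. Both sides lie in $[0,1]$, so raising to the $n$-th power preserves the inequality, giving $p_D\ge\left(1-\frac{1}{n^{10}}\right)^n$.

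There is no real obstacle here; the only point needing care is the vacuous case $k>m$, where the quantile-table argument behind \cref{lem:happiness-app-for-general-k} does not literally apply. That case is handled by the observation in the first step.
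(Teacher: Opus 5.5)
Your proof is correct and follows the paper's proof: you apply \cref{lem:happiness-app-for-general-k} and then use $2^{-\lceil 10\log n\rceil}\le 2^{-10\log n}=n^{-10}$. Your separate treatment of the case $k>m$ goes slightly beyond the paper, which does not check that lemma's hypothesis $k\le m$; in that case $k$-eventual-happiness holds vacuously, so $p_D=1$.
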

\begin{proof}
    By \cref{lem:happiness-app-for-general-k}, and since  $k = \lceil10 \log n\rceil$:
    \begin{equation*}
        \begin{split}
            p_D&=\left(1-\left(\frac{1}{2}\right)^k\right)^n=\left(1-\left(\frac{1}{2}\right)^{ \lceil10 \log n\rceil}\right)^n
            \\&\ge\left(1-\left(\frac{1}{2}\right)^{10 \log n}\right)^n=\left(1-n^{10\log\left(\frac{1}{2}\right)}\right)^n=\left(1-n^{-10}\right)^n
        \end{split}
    \end{equation*}
\end{proof}

Before bounding the probability of the last condition required for the realization to be a $k$-bad realization (\cref{lem:prob-d-unhappy}), we first prove {a useful claim (\cref{lem:chernof})}.

\begin{lemma}\label{lem:chernof}
    For positive integer $q$, let $X_{1}, X_{2},\ldots, X_{q}$ be $q$  independent indicator random variables, and let $X=\sum_{i} X_i$. Denote $M \triangleq \mathbb{E}[X]$. Then, for any $s\ge 0$ {such that $s\leq M$},\[\Pr(X > s) \ge 1-\exp\left(\frac{-M+2s}{2}\right).\]
\end{lemma}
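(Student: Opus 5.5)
We will derive the bound from the standard multiplicative Chernoff bound for the lower tail. We pass to the complement, $\Pr(X>s)=1-\Pr(X\le s)$. It therefore suffices to show
\[
\Pr(X\le s)\le \exp\left(\frac{-M+2s}{2}\right).
\]

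First we dispose of the degenerate cases. If $2s\ge M$, the right-hand side of the claimed inequality, $1-\exp\left(\frac{-M+2s}{2}\right)$, is at most $0$, so the claim holds trivially. This covers $M=0$ (which forces $s=0$) and $s=M$. So we may assume $M>0$ and $0\le s<M/2<M$.

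In the main case, write $s=(1-\delta)M$ with $\delta=1-s/M\in(0,1]$. The Chernoff lower-tail bound for a sum of independent indicator variables gives
\[
\Pr(X\le(1-\delta)M)\le \exp\left(-\frac{\delta^2 M}{2}\right)=\exp\left(-\frac{(M-s)^2}{2M}\right).
\]
If one prefers a self-contained argument, this follows by the usual route: apply Markov's inequality to $e^{-tX}$, use $\mathbb{E}[e^{-tX_i}]\le \exp\left(\mathbb{E}[X_i](e^{-t}-1)\right)$, and optimize with $t=-\ln(1-\delta)$. This yields the sharper form $\left(e^{-\delta}/(1-\delta)^{1-\delta}\right)^M$, and that form is bounded by $e^{-\delta^2M/2}$ via $(1-\delta)\ln(1-\delta)\ge -\delta+\delta^2/2$.

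It remains to compare exponents. We need
\[
\frac{(M-s)^2}{2M}\ge \frac{M-2s}{2}.
\]
Multiplying through by $2M>0$, this is equivalent to $M^2-2Ms+s^2\ge M^2-2Ms$, that is, $s^2\ge 0$, which always holds. Hence $\Pr(X\le s)\le\exp\left(\frac{-M+2s}{2}\right)$, and taking complements gives the lemma.

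There is no real obstacle in this argument. The only care needed is in the edge cases ($M=0$, or $s$ close to $M$), where the parametrization by $\delta$ degenerates but the claimed bound is vacuous.
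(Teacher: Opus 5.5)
Your proof is correct and follows the paper's argument: apply the lower-tail Chernoff bound with $\delta=(M-s)/M$, then discard the $s^2$ term in $(M-s)^2/(2M)$ to reach the exponent $(2s-M)/2$. Your preliminary case split on $2s\ge M$ (which covers $M=0$) is a small improvement, since the paper's substitution $\lambda=(M-s)/M$ divides by $M$ without treating $M=0$.
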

\begin{proof}

To lower bound the probability $\Pr(X> s)$, we apply Chernoff bound. For any $0\le\lambda\le1$: \[\Pr\left(X\le (1-\lambda)M\right)\le\exp\left(\frac{-\lambda^2M}{2}\right) \]
$$\Pr\left(X> (1-\lambda)M\right)\ge1-\exp\left(\frac{-\lambda^2M}{2}\right)$$
Set $\lambda = \frac{M - s}{M}$
and note that $\lambda\in [0,1]$ as $0\leq s\leq M$, and observe that $(1-\lambda)M = s$. We obtain
\begin{equation*}
    \begin{split}
        \Pr\left(X> s\right)&\ge1-\exp\left(\frac{-(M-s)^2}{2M}\right)
    \ge1-\exp\left(\frac{-(M-s)^2}{2M}\right)\\&=1-\exp\left(\frac{-M^2+2sM-s^2}{2M}\right)
        \ge1-\exp\left(\frac{-M^2+2sM}{2M}\right)\\&=1-\exp\left(\frac{-M+2s}{2}\right)
    \end{split}
\end{equation*}
\end{proof}

\begin{lemma}\label{lem:prob-d-unhappy} 

For any $n$ that is large enough, for $k = \lceil10 \log n\rceil$ and $d=k^3$,
    the probability that an execution of \cref{algorithm:serial-adaptive} satisfies $d$-positions unhappy is at least $1-\frac{3}{n^{9}}$. 
\end{lemma}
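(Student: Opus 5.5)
We want to show that with probability at least $1-\frac{3}{n^9}$, at iteration $t_0=t_0^k$ at least $d=k^3$ positions do not interim like their current match (unmatched positions count as unhappy). The plan is to condition on $k$-eventual-happiness for applicants, which fails with probability at most $\frac{1}{n^9}$ by \cref{cor:k-happ-better-than-good}. Under this event the execution before $t_0$ has a very clean structure. By \cref{lem:applicant-propose-to-unmatched} and \cref{lem:applicant-matched-to-liked}, every applicant selected before $t_0$ interviews only with unmatched positions and matches to the first position she interim likes. She then stays matched to it until $t_0$, so no match formed before $t_0$ is ever broken. Consequently, at $t_0$ exactly $m-k+1$ positions are matched, and each was matched to the applicant whose interview with it was the \emph{only} interview after which that applicant proposed. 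Hence each matched position's interim value for its match is a single fresh sample from $G_{j,i}$, drawn in that one interview.

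Second, I would argue that these $m-k+1$ position-side values are independent fair coins with respect to "likes" ($\eta_{j,i}=\frac12$). The key observation is that the algorithm's decisions before $t_0$ depend only on the applicants' realized values $v$, and on whether the position prefers the applicant to being unmatched, which is always true. The position-side samples $u$ therefore never influence the trajectory. To formalize this, I would use the quantile-table coupling from \cref{lem:mutual-happiness-app}: pre-sample an applicant table $w$ and a position table $z$ independently. The trajectory up to $t_0$, including which interviews become matches, is a function of $w$ only, while the indicator "position likes its match" for each match is $\mathbf{1}[z>\frac12]$ at an entry determined by $w$. Conditioned on $w$, these are independent Bernoulli$(\frac12)$ variables. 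One subtlety is that the event "$k$-eventual-happiness holds" is also $w$-measurable, so conditioning on it does not bias $z$.

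Third, apply \cref{lem:chernof} with $q=m-k+1\ge n-k+1$ indicators $X_j=\mathbf{1}[\text{position does not like its match}]$. Then $M=\frac{q}{2}\ge \frac{n-k+1}{2}$ and $s=k^3=\Theta(\log^3 n)\le M$ for large $n$. This gives $\Pr(X\le k^3)\le \exp\!\left(-\frac{M-2k^3}{2}\right)\le \exp(-n/5)\le \frac{2}{n^9}$ for large $n$; the $k-1$ unmatched positions only add to the count. A union bound with the failure of eventual happiness yields the claimed probability $1-\frac{3}{n^9}$ (with slack). If $t_0$ is never reached, the event is vacuous or handled by convention; I would note this case.

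The main obstacle is the second step: rigorously justifying that conditioning on the $w$-determined trajectory leaves the relevant $z$ entries as independent fair coins. The pitfall is that positions' preferences could affect rejections. However, before $t_0$ under eventual happiness no position is ever matched when approached, so positions never reject. I would make this explicit via \cref{lem:applicant-propose-to-unmatched}.
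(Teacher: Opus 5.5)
Your proposal is correct and follows the paper's proof. You condition on $k$-eventual-happiness for applicants and note that before $t_0$ each matched position holds the first applicant who liked it, regardless of the position's own value, so the $m-k+1$ ``does not like its match'' indicators are i.i.d.\ Bernoulli$(\frac12)$; you then apply \cref{lem:chernof} with $M=\frac{m-k+1}{2}$, $s=k^3$ and absorb the $\frac{1}{n^9}$ failure probability of $D$. The differences are cosmetic: you justify the independence explicitly via the quantile-table coupling, which the paper only asserts, and you combine with a union bound instead of $\Pr(C)\ge\Pr(C\mid D)\Pr(D)$ (for full rigor, take $D$ to be the $w$-measurable table event, which implies the execution-level one).
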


\begin{proof}

W.L.O.G, assume that in $t_0$ the tentatively matched positions are $p_1,\dots,p_{m-k+1}$. Let $X_{1},\dots,X_{m-k+1}$ be indicator random variables, where $X_j=1$ if the tentatively matched position $p_{j}$ does not interim like its matched applicant in iteration $t_0$, and $X_j=0$ otherwise. Let $D$ be the event that $k$-eventual-happiness for applicants holds. Assuming that event $D$ holds, these variables are independent and identically distributed, each being a Bernoulli random variable with success probability $\frac{1}{2}$. 
That is, since up to iteration $t_0$ each position is tentatively matched to the first applicant that interim likes it (by \cref{lem:applicant-matched-to-liked}, regardless of the position's value for that applicant). Let $X=\sum_{j=1}^{m-k+1} X_j$ be a random variable that counts the number of tentatively matched positions that do not interim like their matched applicant in iteration $t_0$. Then, $M\triangleq\mathbb{E}\left[X \mid D\right]=\frac{m-k+1}{2}$. Let $C$ be the event that an execution of \cref{algorithm:serial-adaptive} results in $d$-positions unhappy. 
We bound $\Pr(C)$ using $\{X\ge d\}\subset C$ and the formula of total probability:
\begin{equation}\label{equation4}
    \Pr(C)\ge\Pr(X\ge d)\ge\Pr(X\ge d \mid D)\Pr(D)
\end{equation}

To lower bound the probability $\Pr(C \mid D)$, we apply this to bound the probability that an execution of \cref{algorithm:serial-adaptive} results in $d$-positions unhappy, given that $k$-eventual-happiness for applicants holds, by using \cref{lem:chernof}, and since $M\ge d$ for large enough $m$
\begin{equation*}
    \begin{split}
        \Pr\left(X\ge d \mid D\right)&\ge\Pr\left(X> d \mid D\right)\ge1-\exp\left(\frac{-M+2d}{2}\right)
        \\&=1-\exp\left(\frac{\frac{-m+k-1}{2}+2k^3}{2}\right)=1-\exp\left(\frac{-m+k-1+4k^3}{4}\right)
        \\&\ge1-\exp\left(\frac{-n+k+4k^3}{4}\right)\ge1-\exp\left(-\frac{n}{5}\right)
    \end{split}
\end{equation*}
where the last inequality holds for any $n$ that is large enough, as $k = \lceil10 \log n\rceil$. We conclude:
\begin{equation}\label{equation5}
    \Pr\left(X\ge d \mid D\right)\ge1-\exp\left(-\frac{n}{5}\right)
\end{equation}
Note that by \cref{lem:happiness-app}$$p_D\ge \left(1-n^{-10}\right)^n=\exp\left(n\cdot\ln\left(1-n^{-10}\right)\right)$$ using the Taylor series of $\ln(1-x)=-\sum_{i=1}^\infty\frac{x^i}{i}\ge -2x$ for sufficiently small positive $x\ge0$:
$$p_D \ge\exp\left(n\cdot\left(-\frac{2}{n^{10}}\right)\right)=\exp\left(-\frac{2}{n^{9}}\right)$$
Combining with \cref{equation4} and \cref{equation5}:
\begin{equation*}
\begin{split}
    \Pr(C)&\ge \left( 1-\exp\left(-\frac{n}{5}\right)\right)\cdot\exp\left(-\frac{2}{n^{9}}\right)
    \\&\ge\exp\left(-\frac{2}{n^{9}}\right)-\exp\left(-\frac{n}{5}-\frac{2}{n^{9}}\right) 
 \\& \geq \exp\left(-\frac{2}{n^{9}}\right)-\exp\left(-\frac{n}{5}\right)
\end{split}
\end{equation*}
Using the fact that $e^x\ge1+x$ for every $x$:
$$\Pr(C)\ge  1-\frac{2}{n^{9}}-\exp\left(-\frac{n}{5}\right)\ge1-\frac{3}{n^{9}}$$
Where the last inequality holds for any $n$ that is large enough.

\end{proof}
\subsubsection{Proof of \cref{lem:prob-k-bad}}\label{sec:k-bad}
\KBadProbability*
\begin{proof}
    For a realization to be considered a ``$k$-bad realization'', at least one of the following events must hold:
    \begin{enumerate}
         \item$k$-eventual-mutual-happiness fails. By \cref{lem:mutual-happiness-app} and \cref{lem:mutual-happiness-pos} and the union bound, it occurs with probability at most $2-\left(1-\frac{1}{n^4}\right)^n-\left(1 - \frac{1}{n^4} \right)^{n+k-1}$.
        \item$d$-positions unhappy fails, which, for large enough $n$, occurs with probability at most $\frac{3}{n^{9}}$, by \cref{lem:prob-d-unhappy}.
    \end{enumerate}

Applying the union bound to these two events, the total probability that at least one fails is at most: 
$$p_{bad}\le2-\left(1-\frac{1}{n^4}\right)^n-\left(1 - \frac{1}{n^4} \right)^{n+k-1}+\frac{3}{n^{9}}$$ 
Using Bernoulli's inequality (as $-\frac{1}{n^4}\ge -{1}$ for every $n\in\mathbb{N}$), we get the inequalities $ \left(1 - \frac{1}{n^4}\right)^n \ge 1 - \frac{n}{n^4} $ and $ \left(1 - \frac{1}{n^4}\right)^{n+k-1} \ge 1 - \frac{n+k-1}{n^4} $.  For large enough $n$,
$$p_{bad}\le\frac{1}{n^3}+\frac{n+\lceil10\log n\rceil-1}{n^4} +\frac{3}{n^{9}}\le\frac{3}{n^3}$$
\end{proof}
\subsubsection{Bounding the Expected Number of Interviews}\label{subsec:number-of-interviews}
\paragraph{Proof of \cref{lem:number-of-interviews-before-t0}}
Before we can bound the expected number of interviews in \cref{algorithm:serial-adaptive}, we first bound the expected number of interviews conducted before reaching the transition point.

\InterviewsBeforeTransition*
\begin{proof} 
    For $q\leq k$, let $Z_{i,q}$ be the indicator that in the $q$-th interview of applicant $a_i$ she interim likes the position with which she interviewed.
    Let $D_i$ be the event that if applicant $a_i$ has participated in $k$ interviews, she interim likes at least one of her interviewing positions. Let $D=\bigcap_{i=1}^n D_i$ be the event that $k$-eventual-happiness for applicants holds. We want to compute $\Pr(Z_{i,q} \mid D)$. Since the value for interviews are independent across applicants, $\Pr(Z_{i,q} \mid D)=\Pr(Z_{i,q} \mid D_i)$. By Bayes' theorem, $$\Pr(Z_{i,q}=1\mid D_i)=\frac{\Pr(D_i\mid Z_{i,q}=1)\cdot \Pr(Z_{i,q}=1)}{Pr(D_i)}$$ Note that for $q\leq k$ it holds that $\Pr(D_i \mid Z_{i,q} = 1) = 1$, since if one of the interviews is successful, $D_i$ occurs. Moreover, by \cref{lem:happiness-app-for-general-k} we have that for every $i$, $\Pr(D_i)=1-\left(\frac{1}{2}\right)^k.$ So, $$\Pr(Z_{i,q}=1\mid D_i)=\frac{\Pr(Z_{i,q}=1)}{Pr(D_i)}=\frac{\frac{1}{2}}{\left(1-\left(\frac{1}{2}\right)^k\right)}\ge\frac{1}{2}$$ for every $k\in\mathbb{N}$. Since the $k$-eventual-happiness for applicants condition holds, by \cref{lem:applicant-matched-to-liked} and \cref{cor:applicant-didnt-interview} all the interviews up to iteration $t_0$ are conducted by applicants $a_1,...,a_{m-k+1}$, if $n+k-1>m$, or by all applicants otherwise. Each of these applicants interviews until they find a position they interim like, which takes at most $k$ steps. Since each interview is a Bernoulli trial with success probability at least $\frac{1}{2}$, the expected number of trials until success is at most $2$. Therefore, the expected number of interviews across $\min(m-k+1,n)$ applicants is at most $2n$.
\end{proof}
We now have the necessary tools to prove the following lemma:
\paragraph{Proof of \cref{tmh:iid}}
We prove \cref{tmh:iid} by separating the proof on the expected number of interviews into two cases, one in which  $n+\lceil10\log n\rceil-1>m\ge n$ (which is handled in \cref{lem:2n-for-small-m}), and the other when $m\ge n+\lceil10\log n\rceil-1$ (which is handled in \cref{lem:2n-for-big-m}). Later, in \cref{lem:max-position-index-interviewed,lem:agent-k-interviews} we prove that with high probability, every applicant stops interviewing after $O(\log n)$ interviews.
\begin{lemma}\label{lem:2n-for-small-m}
    Consider any bilaterally ex-ante equivalent {instance} with  $n$ applicants and $n+\lceil10\log n\rceil-1>m\ge n$ positions.
    The expected number of interviews in \cref{algorithm:serial-adaptive} is  $2\cdot n+O(\log^3n)$. 
\end{lemma}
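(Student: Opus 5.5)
The plan is to condition on whether the realization is $k$-bad, with $k=\lceil 10\log n\rceil$, and split the expectation accordingly. Let $N$ denote the total number of interviews and $B$ the event that the realization is $k$-bad. Then
\[
\mathbb{E}[N] = \mathbb{E}[N\cdot \mathbf{1}_{\overline B}] + \mathbb{E}[N\cdot \mathbf{1}_{B}].
\]
Since no pair interviews twice, we always have $N\le nm \le n(n+k)$. By \cref{lem:prob-k-bad}, $\Pr(B)\le 3/n^3$, so the second term is at most $3n(n+k)/n^3=O(1/n)$, which is negligible.

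For the first term, split $N=N_{\le t_0}+N_{>t_0}$, counting interviews before and after the transition point $t_0^k$. We have $m<n+k-1$, so $m-k+1<n$. Hence, when $k$-eventual-happiness for applicants holds, the point $t_0$ is actually reached, since only $m-k+1$ positions can be filled before it. On $\overline B$, \cref{lem:interviews-after-t0} gives $N_{>t_0}\le k^2(k-1)=O(\log^3 n)$ deterministically.

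For $N_{\le t_0}$ I would not use the event $\overline B$ directly. Instead I would use the event $D$ that $k$-eventual-happiness for applicants holds, noting $\overline B\subseteq D$. Since $N_{\le t_0}\ge 0$, we have $\mathbb{E}[N_{\le t_0}\mathbf{1}_{\overline B}]\le \mathbb{E}[N_{\le t_0}\mathbf{1}_D]\le \mathbb{E}[N_{\le t_0}\mid D]$. \Cref{lem:number-of-interviews-before-t0} bounds this by $2n$. Summing the three contributions gives $2n+O(\log^3 n)+O(1/n)=2n+O(\log^3 n)$.

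The main obstacle is the interpretation of \cref{lem:number-of-interviews-before-t0}. Its proof computes the per-interview success probability conditional on $D_i$. So the bound should be read as a conditional expectation given $D$, possibly multiplied by $\Pr(D)\le 1$. I would state this reading explicitly.

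A second point to check is that the quantile-table coupling in \cref{lem:happiness-app-for-general-k} makes the conditioning legitimate. Conditioned on $D_i$, applicant $a_i$'s $q$-th interview succeeds with probability at least $1/2$ while $q\le k$. The number of her interviews before $t_0$ is then the index of her first success, because by \cref{lem:applicant-matched-to-liked} she stops once matched. So its conditional expectation is at most $2$ per applicant, over at most $\min\{n,m-k+1\}\le n$ applicants.

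If this conditional reading fails, the fallback is to bound $\mathbb{E}[N_{\le t_0}\mathbf{1}_D]$ directly. Each applicant's interview count before $t_0$ is dominated by a geometric variable with parameter $1/2$, truncated at $k$, so the unconditional expectation is at most $2$ per applicant. This avoids conditioning entirely.
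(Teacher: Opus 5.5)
Your proposal is correct and is essentially the paper's proof: the same split at $t_0$, with \cref{lem:number-of-interviews-before-t0} read (as the paper does) as $\mathbb{E}[N_{\le t_0}\mid D]\le 2n$, \cref{lem:interviews-after-t0} bounding the interviews from $t_0$ onward by $k^2(k-1)$ on non-$k$-bad realizations, and the worst case $nm$ weighted by $p_{bad}\le 3/n^3$ from \cref{lem:prob-k-bad}. The only difference is the order of conditioning: you split on $B$ first and use $\overline{B}\subseteq D$, whereas the paper conditions on $D$ first and splits $Y_2$ by $E_G$, which introduces a $1/p_D$ factor that it then controls via \cref{lem:happiness-app}; your ordering avoids that step.
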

\begin{proof} 
For a given $n$, let $k=\lceil10\log n\rceil$. Let $p_{bad}$ be the probability that a realization is a \textit{``$k$-bad realization''} and let $p_D$ be the probability that $k$-eventual-happiness for applicants holds. Let $Y$ be a random variable counting the number of interviews conducted by \cref{algorithm:serial-adaptive}. By the law of total expectation: 
    \begin{equation}\label{equation3}  
    \begin{split}
        \mathbb{E}\left[Y\right]&=\mathbb{E}\left[Y \mid D\right]\Pr(D)+\mathbb{E}\left[Y \mid\overline{D}\right]\Pr(\overline{D})
        {\le\mathbb{E}\left[Y \mid D\right]+\mathbb{E}\left[Y \mid \overline{D}\right]\Pr(\overline{D})}
    \end{split}
    \end{equation} Let $Y_1$ count the number of interviews conducted up to iteration $t_0$ and let $Y_2$ count the number from iteration $t_0$, so that, $Y=Y_1+Y_2$. 
    Then: 
\begin{equation} \label{equation2}
    \mathbb{E}\left[Y \mid D\right]=\mathbb{E}\left[Y_1 \mid D\right]+\mathbb{E}\left[Y_2 \mid D\right]
\end{equation}
    
    Let $B_1$ be the event that for every applicant that has participated in $k$ interviews, the applicant and at least one of her interviewing positions interim like each other. Let $B_2$ be the event that for every position that has participated in $k$ interviews, the position and at least one of its interviewing applicants interim like each other. Finally, let $C$ be the event that $d$-positions unhappy for $d=k^3$ holds. Let event $E_G=B_1\cap B_2\cap C$ be the event that a realization is not a \textit{``$k$-bad realization''} (G stands for Good). Then $\Pr(E_G)=1-p_{bad}$. 
     
    \begin{equation*}\begin{split}
         \mathbb{E}\left[Y_2 \mid D\right] &=\mathbb{E}\left[Y_2 \mid D\cap E_G\right]\cdot\Pr(E_G \mid D)+\mathbb{E}\left[Y_2 \mid D\cap \overline{E_G}\right]\cdot\Pr(\overline{E_G} \mid D) 
    \\&=\frac{1}{\Pr(D)}\left(\mathbb{E}\left[Y_2 \mid D\cap E_G\right]\cdot\Pr(D\cap E_G)+\mathbb{E}\left[Y_2 \mid D\cap \overline{E_G}\right]\cdot\Pr(D\cap \overline{E_G})\right)
    \end{split}
    \end{equation*}
    Since $k$-eventual-mutual-happiness implies $k$-eventual-happiness for applicants, we have that $E_G\subseteq B_1 \subseteq D$ and therefore:
    \begin{equation*}
        \begin{split}
            \mathbb{E}\left[Y_2 \mid D\right] &=\frac{1}{\Pr(D)}\left(\mathbb{E}\left[Y_2 \mid  E_G\right]\Pr(E_G)+\mathbb{E}\left[Y_2 \mid D\cap \overline{E_G}\right]\cdot (\Pr(D)-\Pr(E_G))\right)
            \\&\le\frac{1}{\Pr(D)}\left({\mathbb{E}\left[Y_2 \mid  E_G\right]+\mathbb{E}\left[Y_2 \mid D\cap \overline{E_G}\right]\cdot (\Pr(D)-\Pr(E_G))}\right)
        \end{split}
    \end{equation*}

    By \cref{lem:interviews-after-t0} and since the number of interviews is always at most $n\cdot m$, we can bound the remaining terms:
    \begin{equation} \label{equation1}
         \mathbb{E}\left[Y_2 \mid D\right] \le \frac{1}{p_D}\left(k^3+n\cdot m\cdot (p_D-1+p_{bad})\right)
    \end{equation}
    
    {As the number of interviews is always at most $n\cdot m$ we have $\mathbb{E}\left[Y \mid \overline{D}\right]\leq n\cdot m$. Thus, by \cref{lem:number-of-interviews-before-t0}, \cref{equation2}, \cref{equation1} and \cref{equation3}:
    \begin{equation*}
        \begin{split}
            \mathbb{E}\left[Y\right] &\leq\mathbb{E}\left[Y_1 \mid D\right]+\mathbb{E}\left[Y_2 \mid D\right] +\mathbb{E}\left[Y \mid \overline{D}\right]\Pr(\overline{D})
            \\&\leq 2n+\frac{1}{p_D}\left(k^3+n\cdot m\cdot (p_D-1+p_{bad})\right)+n\cdot m\cdot(1-p_D)
            \\&\leq 2n+\frac{1}{p_D}\left(k^3+n\cdot m\cdot (p_D-1+p_{bad})+n\cdot m\cdot(1-p_D)\right)
            \\&\leq 2n+\frac{1}{p_D}\left(k^3+n\cdot m\cdot p_{bad}\right)
        \end{split}
    \end{equation*}}
        
    By \cref{lem:prob-k-bad}, for large enough $n$ it holds  that $p_{bad}\le{\frac{3}{n^3}}$. Thus:
    \begin{equation*}
        \begin{split}
        \mathbb{E}\left[Y\right]&\leq 2n+\frac{1}{p_D}\left(k^3+ n\cdot m\cdot{\frac{3}{n^3}}\right)
    \le2n+\frac{1}{p_D}\left(k^3+ n^3\cdot{\frac{3}{n^3}}\right)
    \end{split}
    \end{equation*}
    where the last inequality holds for any $n$ that is large enough, as $k = \lceil10 \log n\rceil$ and $m\le n + k - 1$. 
    By \cref{lem:happiness-app}, $p_D\ge\left(1 - \frac{1}{n^{10}} \right)^n\ge \frac{n^9-1}{n^{9}}$.
    Thus,
     \begin{equation*}
        \begin{split}
        \mathbb{E}\left[Y\right]&\le2n+2\cdot\left(k^3+3\right)=2n+O(\log^3n).
    \end{split}
    \end{equation*}
\end{proof}

\begin{lemma}\label{lem:2n-for-big-m}
    Consider any bilaterally ex-ante equivalent {instance} with  $n$ applicants and $m\ge n+\lceil10\log n\rceil-1$ positions.
    The expected number of interviews in \cref{algorithm:serial-adaptive} is {$2\cdot n+O(\log^3n)$.} 
\end{lemma}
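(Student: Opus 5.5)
The plan is to choose $k=\lceil 10\log n\rceil$. Since $m\ge n+k-1$, the algorithm never reaches a second phase once $k$-eventual-happiness for applicants holds. Let $D$ be the event that this property holds. By the remark following \cref{cor:applicant-didnt-interview}, the transition point $t_0^k$ is never reached on $D$: by \cref{lem:applicant-propose-to-unmatched} and \cref{lem:applicant-matched-to-liked}, every applicant is matched to the first position she interim likes and stays matched. The unmatched positions number at least $m-(n-1)\ge k$ whenever some applicant is still unmatched, so this reasoning applies throughout the whole execution. The algorithm therefore terminates with every applicant matched, and every interview happens in what the analysis calls the first phase.

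Write $Y$ for the total number of interviews. By the law of total expectation,
\[
\mathbb{E}[Y]\le \mathbb{E}[Y\mid D]+\mathbb{E}[Y\mid \overline D]\Pr(\overline D).
\]
For the first term I will reuse the argument of \cref{lem:number-of-interviews-before-t0}. Conditional on $D$, each applicant's $q$-th interview is a success (she interim likes the position) with probability at least $1/2$, by the Bayes computation given there. Each applicant stops at her first success, which comes within $k$ interviews, so she performs at most $2$ interviews in expectation. Summing over the $n$ applicants gives $\mathbb{E}[Y\mid D]\le 2n$.

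The second term is where $m$ may be unbounded relative to $n$. A crude bound $Y\le nm$ combined with $\Pr(\overline D)\le n^{-9}$ from \cref{cor:k-happ-better-than-good} only gives $m/n^8$, which is not $O(\log^3 n)$ when $m$ is huge. This is the main obstacle, and I would handle it by bounding $Y$ pathwise independently of $m$.

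For the pathwise bound, I would argue that each applicant interviews only until she interim likes some position, as long as an unmatched position she has not interviewed exists. Since there are at least $m-n+1$ unmatched positions at all times, every applicant has at most $n-1$ interviews before she is guaranteed such a position. So the argument of \cref{lem:applicant-propose-to-unmatched} should extend: on the event that each applicant likes one of her first $m-n+1$ interviews, applicants never interview matched positions. Failing that, I would use the quantile-table coupling with geometric tails. Applicant $a_i$'s interview count is dominated by a geometric variable with parameter $1/2$ on every path on which she only ever interviews unmatched positions. The remaining paths require her to exhaust $m-n+1\ge k$ unmatched positions, which is an event of probability at most $n2^{-(m-n+1)}$. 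On those paths I would bound her interviews by $m$, giving a contribution of $nm2^{-(m-n+1)}$. This is $O(1)$ uniformly in $m\ge n+k-1$, because $m2^{-(m-n)}$ decays exponentially in $m-n$.

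Combining the two terms yields $\mathbb{E}[Y]\le 2n+O(1)$, which is $2n+O(\log^3 n)$.
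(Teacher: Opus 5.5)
Your key idea is the paper's. When $m\ge n+\lceil 10\log n\rceil-1$, the ``phase-1'' behaviour is guaranteed on the event $D'$ that every applicant interim likes one of her first $m-n+1$ interviews. Since $\Pr(\overline{D'})\le n2^{-(m-n+1)}$ is exponentially small in $m$, it beats the crude bound $Y\le nm$. The paper simply takes $k=m-n+1$ from the start (\cref{lem:number-of-interviews-before-t0} and \cref{cor:happiness-app-for-general-k} hold for any $k\le m$). It obtains $\mathbb{E}[Y]\le 2n+n^2m\,2^{-(m-n+1)}$ and finishes with a case split on $m\le n^7$ versus $m>n^7$. Your detour through $k=\lceil 10\log n\rceil$ is unnecessary. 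It also leaves an extra piece $\mathbb{E}[Y\mathbf{1}_{\overline{D}\cap D'}]$ that you handle only implicitly; it is small by independence across the applicants' quantile rows, but it needs an explicit line.

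One step, however, fails as written. You claim that $a_i$'s interview count is dominated by a single geometric variable on every path on which she interviews only unmatched positions, and you charge her bad paths only $m$ interviews. Suppose another applicant $a_{i'}$ dislikes all currently unmatched positions. She then interviews a matched position $p$ that does not interim like its match $a_i$, and may displace $a_i$. Now $a_i$, still interviewing only unmatched positions, starts a fresh geometric run. These extra interviews are caused by $a_{i'}$'s failure, not $a_i$'s, so your per-applicant accounting undercounts.

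The fix is the global statement you already wrote down. On $D'$, no applicant ever interviews or proposes to a matched position (\cref{lem:applicant-propose-to-unmatched,lem:applicant-matched-to-liked} with $k=m-n+1$). So $Y$ is the sum of the first-success indices, and $\mathbb{E}[Y\mathbf{1}_{D'}]\le 2n$. On $\overline{D'}$, bound the \emph{total} by $Y\le nm$. The tail term becomes $n^2m\,2^{-(m-n+1)}$ rather than $nm\,2^{-(m-n+1)}$. This is still $o(1)$: writing $s=m-n+1\ge 10\log n$, both $n^3 2^{-s}$ and $n^2 s\,2^{-s}$ are $o(1)$. Note that it is the constraint $s\ge 10\log n$, not just exponential decay in $m-n$, that absorbs the polynomial factors of $n$.
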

\begin{proof} 
For a given $n$ and $m\ge n+\lceil10\log n\rceil-1$, let $k=m-n+1$. Let $D$ be the event that $k$-eventual-happiness for applicants holds and let $p_D$ be the probability for that event. Let $Y$ be a random variable counting the number of interviews conducted by \cref{algorithm:serial-adaptive}. By the law of total expectation: 
    \begin{equation}\label{eq:condition-on-k-happ}  
    \begin{split}
        \mathbb{E}\left[Y\right]&=\mathbb{E}\left[Y \mid D\right]\Pr(D)+\mathbb{E}\left[Y \mid \overline{D}\right]\Pr(\overline{D})
        {\le\mathbb{E}\left[Y \mid D\right]+\mathbb{E}\left[Y \mid \overline{D}\right]\Pr(\overline{D})}
    \end{split}
    \end{equation}
    
    As the number of interviews is always at most $n\cdot m$ we have $\mathbb{E}\left[Y \mid \overline{D}\right]\leq n\cdot m$. Since $m\ge n+ k -1$ the algorithm stops before $t_0$, by \cref{lem:number-of-interviews-before-t0}, \cref{eq:condition-on-k-happ}:
    \begin{equation}\label{eq:after-numbers}
        \begin{split}
            \mathbb{E}\left[Y\right] &\leq\mathbb{E}\left[Y \mid D\right] +\mathbb{E}\left[Y \mid \overline{D}\right]\Pr(\overline{D})
            \\&\leq 2n+m\cdot n \cdot \Pr(\overline{D})
        \end{split}
    \end{equation}
        
    By \cref{cor:happiness-app-for-general-k}, it holds that $ \Pr(\overline{D})=1-p_D\le\frac{n}{2^k}$.
    Thus, by \cref{eq:after-numbers}:
    \begin{equation*}
        \begin{split}
        \mathbb{E}\left[Y\right]&\le2n+\frac{n^2\cdot m}{2^k}
    \end{split}
    \end{equation*}
    when $m\le n^7$, since $k\ge\lceil10\log n\rceil$. Then,
     \begin{equation*}
        \begin{split}
        \mathbb{E}\left[Y\right]&\le2n+\frac{n^2\cdot m}{2^k}\le2n+\frac{n^2\cdot m}{2^{\lceil10\log n\rceil}} 
        \\&\le2n+\frac{n^2\cdot m}{2^{10\log n}}=2n+\frac{n^2\cdot m}{n^{10}}\\& \le2n+\frac{n^9}{n^{10}} =2n+\frac{1}{n} =2 n+O(\log^3n)
    \end{split}
    \end{equation*}
    Otherwise, when $m>n^7$, then $k> m -\sqrt[7]{m}+1\ge \sqrt{m}$:
    \begin{equation*}
        \begin{split}
        \mathbb{E}\left[Y\right]&\le2n+\frac{n^2\cdot m}{2^k}\le2n+\frac{m^3}{2^{ \sqrt{m}}} =2 n+O(\log^3n)
    \end{split}
    \end{equation*}
\end{proof}

Before proving that with high probability every applicant stops interviewing after $O(\log n)$ interviews, we first introduce a supporting lemma that limits the set of positions we need to analyze when $ m\ge n + \lceil10\log n\rceil - 1$.
\begin{lemma}\label{lem:max-position-index-interviewed}
Assume that $ m\ge n + \lceil10\log n\rceil - 1$ and $k$-eventual happiness for applicants holds for $k = \lceil10\log n\rceil$. Then, applicant $a_i$ will not interview any position $p_j$ with index $j \ge k + i$. Consequently, no position $p_j$ with $j \ge n + k$ is ever interviewed.
\end{lemma}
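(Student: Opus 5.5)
We will argue by induction on $i$ that applicant $a_i$ only ever interviews with positions of index below $k+i$. We then read off the second claim: every index $i\le n$ satisfies $k+i\le n+k$.

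The first step is to pin down which position $a_i$ interviews with. Since $m\ge n+k-1$, the transition point is never reached. By \cref{lem:applicant-matched-to-liked} (together with \cref{lem:applicant-propose-to-unmatched}), under $k$-eventual-happiness for applicants the algorithm processes $a_1,a_2,\dots$ in order. Each applicant stays selected until she is matched to the first position she interim likes. She never interviews at a matched position, no matched applicant is ever displaced, and so every applicant is selected exactly once and her interviews form one contiguous block of iterations.

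Next we identify the positions she meets. When $a_i$ is selected, all positions she has not interviewed carry interim value $V$ for her, while those she has interviewed and did not like are worth less than $V$. So the set $\pos^*$ in line~4 is exactly the set of non-interviewed positions that have not rejected her. Unmatched positions give their (empty) match the minimal score, so the tie-breaking in line~4 picks the unmatched non-interviewed position with the smallest index. An unmatched position never rejects without an interview. Hence $a_i$ interviews the lowest-indexed unmatched positions one by one, in increasing index order, until she likes one; by $k$-eventual-happiness this happens within $k$ interviews.

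Now the counting. Just before $a_i$ is selected, exactly $i-1$ positions are matched, namely one per earlier applicant. Among the indices $1,\dots,k+i-1$ there are therefore at least $k$ unmatched positions. The first $k$ unmatched positions in index order thus all have index at most $k+i-1$, so $a_i$'s at most $k$ interviews all fall on indices $j\le k+i-1$, i.e.\ $j<k+i$. Note that the induction is really just on the invariant ``exactly $i-1$ positions are matched when $a_i$ starts'', which follows from the no-displacement property.

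The main obstacle is justifying the tie-breaking step carefully. We must confirm that positions $a_i$ has already interviewed and disliked really fall strictly below $V$. This holds because $\Pr[v=V]=0$, so a disliked value is strictly below $V$ almost surely. We must also confirm that ties among unmatched positions (all scoring ``not matched'') are broken lexicographically by index, per the footnote to line~4, and that matched positions are never chosen while an unmatched non-interviewed one remains. The last point holds because an unmatched position has the minimal score and so beats any matched position in line~4.
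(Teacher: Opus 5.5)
Your proof is correct and follows the paper's argument. Like the paper, you use induction on the applicant index. When $a_i$ starts, exactly $i-1$ positions are matched, so her at most $k$ interviews, taken at unmatched positions in increasing index order, all fall within $\{1,\dots,k+i-1\}$. Your version is more careful than the paper's: you justify the tie-breaking in line~4, and you correctly note that only the count of matched positions matters, not where the earlier applicants were matched.
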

\begin{proof}
We prove this by induction on the applicant index $i \ge 1$. Recall that by \cref{algorithm:serial-adaptive}, an applicant is selected by the algorithm until she is matched, and she interviews unmatched positions in increasing order of their indices (tie-breaking towards smaller indices).

For the base case $i=1$, applicant $a_1$ interviews positions in strictly increasing order starting from $p_1$. Under the $k$-eventual happiness assumption and when $ m\ge n +k - 1$, she stops interviewing once she finds a position she interim likes, which occurs within her first $k$ interviews. Therefore, the maximum index of a position she interviews is $k$, meaning she will not interview any position $p_j$ with $j \ge k+1$.

Assume the claim holds for all applicants $a_{i'}$ where $i' < i$. This inductive hypothesis implies that all $i-1$ previously matched applicants found their matches at positions with indices strictly less than $k+i-1$. When applicant $a_i$ begins interviewing, she considers the remaining unmatched positions in increasing order of index. Because the preceding $i-1$ applicants occupy $i-1$ positions,  the first $k$ unmatched positions are guaranteed to fall within $[1,k+i-1]$. Under the $k$-eventual happiness assumption and when $m\ge n+k-1$, $a_i$ will find a match within her first $k$ available positions. Thus, she will not interview any position beyond index $k + i - 1$.

Since the final applicant is $a_n$, no applicant will ever interview a position with an index greater than $k + n - 1$. Therefore, under the $k$-eventual happiness assumption, every position $p_j$ such that $j \ge n + k$ never interviews.
\end{proof}

\begin{lemma}\label{lem:agent-k-interviews}
Consider any bilaterally ex-ante equivalent instance with $n$ applicants and $m\ge n$ positions.
In \cref{algorithm:serial-adaptive}, with probability $1 - O(n^{-3})$, every agent conducts at most $O(\log n)$ interviews.
\end{lemma}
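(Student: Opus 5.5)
We split into the same two regimes as in the proof of \cref{tmh:iid}, with $k=\lceil 10\log n\rceil$ throughout. We must bound both the number of interviews per applicant and per position.

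\emph{Case 1: $n\le m< n+k-1$.} By \cref{lem:prob-k-bad}, with probability at least $1-3/n^3$ the realization is not $k$-bad. On this event:
\begin{itemize}
\item \cref{lem:applicant-do-max-k-interviews} gives that every applicant participates in at most $k=O(\log n)$ interviews.
\item For positions, $k$-eventual-mutual-happiness says that any position which has conducted $k$ interviews has met an applicant such that the two interim like each other. By \cref{lem:position-stops-interview-after++}, such a position then refuses all further interviews. Hence every position conducts at most $k$ interviews.
\end{itemize}
So in this case every agent has at most $O(\log n)$ interviews with probability $1-O(n^{-3})$.

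\emph{Case 2: $m\ge n+k-1$.} Here $t_0$ is never reached. Let $D$ be the event that $k$-eventual-happiness for applicants holds with $k=\lceil 10\log n\rceil$. By \cref{cor:k-happ-better-than-good}, $\Pr(D)\ge 1-n^{-9}$. On $D$:
\begin{itemize}
\item By \cref{lem:applicant-matched-to-liked}, each applicant interviews only unmatched positions until she meets the first position she interim likes. She then stays matched for the rest of the run, which here is the whole run. So each applicant has at most $k$ interviews.
\item For positions, use \cref{lem:max-position-index-interviewed}. Applicant $a_i$ interviews only positions with index in $[1,k+i-1]$. Moreover, \cref{lem:applicant-propose-to-unmatched} says no applicant ever interviews a matched position.
\end{itemize}

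The main obstacle is bounding how many interviews a single position conducts in Case 2, since \cref{lem:max-position-index-interviewed} alone only bounds the index range. The plan is to bound the number of distinct applicants who interview a fixed position $p_j$. Each such applicant finds $p_j$ unmatched when she interviews it. By the tie-breaking rule, each applicant interviews unmatched positions in increasing index order, so an applicant $a_i$ interviews $p_j$ only if all unmatched positions of smaller index were already interviewed by $a_i$ in this pass. Every applicant who passes over $p_j$ without matching it is rejected by it, or rather, she simply does not like it.

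I would try to show that the number of such applicants is small through an additional high-probability event. Consider the $z$-quantiles of applicants' values for $p_j$. Once some applicant interviews $p_j$ and likes it, $p_j$ becomes matched and never interviews again. Each interview with $p_j$ is an independent fair coin for the applicant's liking, using fresh quantiles from the table in \cref{lem:happiness-app-for-general-k}. So the number of interviews of $p_j$ is stochastically dominated by a geometric variable with parameter $1/2$. It therefore exceeds $k$ with probability at most $2^{-k}\le n^{-10}$. A union bound over the at most $n+k-1$ positions that are ever interviewed, plus the failure of $D$, gives total failure probability $O(n^{-9})$.

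The same coin argument also re-derives the bound for positions in Case 1, and it is the point I would check most carefully. Independence holds because each interview uses a fresh quantile entry regardless of the adaptive history, so conditioning on the past leaves a probability $1/2$ of liking.
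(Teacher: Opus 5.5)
\textbf{Verdict: correct, and different from the paper's route for positions in Case~2.} Your Case~1 and your treatment of applicants in Case~2 match the paper's proof. The real difference is how you bound interviews per position when $m\ge n+k-1$.

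\emph{The paper's route.} It uses the two-sided coin. It combines \cref{lem:position-stops-interview-after++} (a position stops after a mutual like) with \cref{lem:mutual-happiness-pos}, restricted to the first $n+k-1$ positions, and asserts $\Pr(E_P\mid D)\ge(1-n^{-4})^{n+k-1}$.

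\emph{Your route.} You exploit structure that holds on $D$ in this regime.
By \cref{lem:applicant-propose-to-unmatched}, every interview is with an unmatched position. By \cref{lem:applicant-like-propose}, the first applicant who likes $p_j$ proposes and is accepted. After that $p_j$ stays matched and is never interviewed again.
Hence $D\cap\{p_j \text{ has more than } k \text{ interviews}\}$ lies inside the event that the applicant-side coins of $p_j$'s first $k$ interviews are all ``dislike.'' By the fresh-entry argument, this has probability at most $2^{-k}\le n^{-10}$.
Combined with \cref{lem:max-position-index-interviewed} and $\Pr(\overline{D})\le n^{-9}$, the failure probability is $O(n^{-9})$.

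\emph{What each buys.} Your coin is one-sided ($1/2$ rather than $1/4$), so the bound is sharper. You also avoid the paper's conditional step $\Pr(E_P\mid D)\ge\cdots$, which applies \cref{lem:mutual-happiness-pos} outside its stated range $m<n+k-1$ and under conditioning on $D$; both issues are fixable by a union bound, as you do. The paper's route has the advantage of reusing a lemma it already needs for the $k$-bad analysis.

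\emph{A point of phrasing.} State the domination as the unconditional bound on the intersection with $D$, or divide by $\Pr(D)$. Do not state it as $\mathrm{Geom}(1/2)$ domination conditional on $D$, because $D$ is itself an event about the applicant coins.

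\emph{A correction to your closing remark.} The one-sided coin does not re-derive the Case~1 bound for positions. After $t_0$, a position tentatively matched to an applicant it does not interim like still agrees to interview. So an applicant liking $p_j$ no longer stops $p_j$, and the mutual ($1/4$) coin built into $k$-eventual-mutual-happiness is genuinely needed. Your Case~1 argument already uses that condition, so the proof itself is unaffected.
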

\begin{proof}
   Assume that $ n + \lceil10\log n\rceil - 1> m$, and let $k=\lceil10\log n\rceil$ and assume that the realization that is not $k$-bad. In \cref{algorithm:serial-adaptive}, by \cref{lem:position-stops-interview-after++}, a position stops interviewing once it meets an applicant such that they interim like each other. Under the assumption the realization is not $k$-bad, this occurs after at most $k$ interviews. Moreover, when the realization is not $k$-bad, \cref{lem:applicant-do-max-k-interviews} applies. Consequently, every applicant will participate in at most  $k$ interviews. Therefore, when the realization is not $k$-bad, which holds a probability of at least $1-\frac{3}{n^3}$ for large enough $n$ (by \cref{lem:prob-k-bad}), every agent conducts at most $k$ interviews, which is $O(\log n)$.

   Now, consider the case that $ m\ge n + \lceil10\log n\rceil - 1$. Assume that $k$-eventual happiness for applicants holds for $k=\lceil10\log n\rceil$, and denote this event by $D$. When $ m\ge n +k - 1$, by \cref{lem:applicant-matched-to-liked}, an applicant stops interviewing after the first position she interim likes (which can take at most $k$ interviews). 
    Therefore, under event $D$-which occurs with a probability of at least $1-\frac{1}{n^9} \ge 1-\frac{1}{n^3}$ by \cref{cor:k-happ-better-than-good}-every applicant conducts at most $k = O(\log n)$ interviews.
    
   By \cref{lem:max-position-index-interviewed}, no position $p_j$ with index $j \ge n + k$ is ever interviewed. Therefore, we only need to analyze the interview counts for the first $n + k - 1$ positions. By \cref{lem:position-stops-interview-after++}, a position stops 
   interviewing once it meets an applicant such that they interim like each other. 
   Let $E_P$ be the event that every position stops interviewing after at most $k$ interviews. Conditioned on $D$, we 
   established that only the first $n+k-1$ positions may be interviewed. 
   Therefore, by \cref{lem:mutual-happiness-pos}, we have $$\Pr(E_P \mid D) \ge \left(1-n^{-4}\right)^{n+k-1}\ge 1 - \frac{n+k-1}{n^4}$$ From \cref{cor:k-happ-better-than-good}, we know that $\Pr(D) \ge 1-\frac{1}{n^9}$. By the law of total probability
   $$ \Pr( E_P) \ge \Pr(E_P \mid D)\Pr(D) \ge \left( 1 - \frac{n+k-1}{n^4}\right) \cdot \left(1-\frac{1}{n^9}\right)\ge 1-\frac{3}{n^3}$$ for large enough $n$. Therefore, with probability $1 - O(n^{-3})$, every position conducts at most $k = O(\log n)$ interviews.

\end{proof}
From \cref{lem:2n-for-small-m,lem:2n-for-big-m,lem:agent-k-interviews} above, we obtain the correctness of the following result.
\TwoInterviews*

\section{Deferred Proofs and Algorithm from Section \ref{sec:parallel} (Minimizing the Number of Interview Rounds)}\label{appendix:hybrid}
This section contains the proofs for the positive results presented in \cref{sec:parallel} regarding the hybrid algorithm (\cref{algorithm:parallel-bilateral-instantiation}).
\subsection{Deferred Algorithms and Proof from \cref{subsec:hybrid-alg}}
{\subsubsection{DA on Applicants Truncated Interim Preferences}\label{appendix:da-trunc-alg}
We present a modified version of the deferred acceptance algorithm that acts on restricted preference lists (Subroutine~\ref{algorithm:truncated-DA}), the algorithm is used by \cref{algorithm:parallel-bilateral-instantiation}.
\floatname{algorithm}{Subroutine}
\begin{algorithm}[H]
\caption{\textsc{DA\_on\_Applicants\_Truncated\_Interim\_Preferences}}
\label{algorithm:truncated-DA}
\begin{algorithmic}[1]

\Statex \textbf{Input:} 
    an instance $I= (\app,\pos, {\bf F}, {\bf G})$;
    a current matching $\mu$; 
    and a set of interviews $Z$;
    \Statex \textbf{Output:} Updated matching $\mu$.
\ForAll{$a_i \in \app$}
    \State let $\succ_{a_i}^{Z}$ be $a_i$'s interim preference order over positions in $\pos$.
    \State let $p_j$ (if any) be the most preferred position in $\succ^{Z}_{a_i}$ such that $(a_i,p_j)\notin Z$.
    \State let $\succ_{a_i}^{trunc}$ be the prefix of $\succ^{Z}_{a_i}$ preferred to $p_j$\;
    \Statex \hspace{\algorithmicindent}\Comment{after truncation, $\succ_{a_i}^{trunc}$ contains only positions that have interviewed $a_i$}
\EndFor
\State$ \mu\gets \textsc{Applicant-Proposing\_DA}\left(\app,\pos,\{\succ_{a_i}^{trunc}\}_{i \in [n]},\{\succ_{p_j}^Z\}_{j \in [m]},\mu\right)${\Comment{see \cref{algorithm:DA}}}  \
\State\Return$\mu$
\end{algorithmic}
\end{algorithm}
}
\subsubsection{Applicant Proposing DA Algorithm}\label{appendix:da-algo}
For completeness, we formally describe the DA subroutine (\cref{algorithm:DA}) that we use in Subroutine~\ref{algorithm:truncated-DA}.
\begin{algorithm}[h]
\caption{\textsc{Applicant\_Proposing\_DA}}
\label{algorithm:DA}
\begin{algorithmic}[1]
\Statex \textbf{Input:} 
    a set $\app$ of $n$ applicants; a set $\pos$ of $m$ positions;
    preference relations $\{\succ_{a_i}\}_{i \in [n]}$ and $\{\succ_{p_j}\}_{j \in [m]}$;
    and a current matching $\mu$ (set to be empty if not provided. That is, if not provided,  set $\mu(x)=x$ for every $x\in \app\cup \pos$); 
    \Statex \textbf{Output:} Updated matching $\mu$.
\While{$\exists a_i \in \app$ such that $\mu(a_i)= a_i$} and $\exists p_j$ such that $p_j$ is listed in $\succ_{a_i}$ and such that $p_j$ has not yet rejected $a_i$
    \State let $p_j$ be $a_i$'s most preferred position in $\succ_{a_i}$ that has not yet rejected $a_i$.
    \If{$a_i \succ_{p_j} \mu(p_j)$}
        \State $p_{j}$ rejects applicant $\mu(p_{j})$
        \State $\mu(a_i) \gets p_j$ and $\mu(p_j) \gets a_i$.
    \Else
        \State $p_j$ rejects $a_i$.
    \EndIf
\EndWhile
\State\Return$\mu$
\end{algorithmic}
\end{algorithm}
\subsubsection{All Interviews Algorithm}\label{appendix:all-interviews-algorithm}
We present the \textsc{All\_Interviews\_Algorithm} for the case in which the number of positions is at least the number of applicants, that is, $m \ge n$. 
{
\floatname{algorithm}{Subroutine}
\begin{algorithm}[H]
\caption{\textsc{All\_Interviews\_Algorithm}}
\label{algorithm:all-interviews-algorithm}
\begin{algorithmic}[1]

\State \textbf{Input:}
 a set $\app$ of $n$ applicants; a set $\pos$ of $m$ positions; and a set of interviews $Z$; 
\State \textbf{Output:} Updated set of interviews $Z$.
\For{$\ell = 0$ \textbf{to} $m-1$} 
    \ForAll{$a_i \in \app$ \textbf{in parallel}}
        \State let $p_j \gets p_{1+(i+\ell-1)}\bmod m$
        \If{$(a_i,p_j) \notin Z$}
            \State $p_j$ interviews $a_i$
        \EndIf
    \EndFor
\EndFor
\State \Return $Z$

\end{algorithmic}
\end{algorithm}}
\subsubsection{Proof of \cref{lem:hybrid-stable}}\label{appendix:proof-stable-hybrid}
\HybridStable*
\begin{proof}
If the algorithm reaches Phase~3 of \cref{algorithm:parallel-bilateral-instantiation} (i.e., the fallback stage in line 13), then all values have been realized, and \cref{algorithm:DA} execute the standard DA algorithm over fully realized random preferences, which is known to produce a stable matching. Otherwise, if the algorithm proceeds only through Phase~1 and Phase~2 (i.e., does not reach the fallback stage in line 13), the proof is exactly as in \cref{lemma:serial-stable}.
\end{proof}
\subsection{Deferred Proofs from \cref{subsec:hybrid-proof}}\label{appendix:hybrid-bilateral}
\subsubsection{Proof of \cref{lemma:parallel-matched-stays-matched}}\label{appendix:proof-phase1}
\HybridPhaseOne*
\begin{proof}
We prove all three statements by induction on the iteration~$t$ (i.e., a single execution of the while-loop of the hybrid  \cref{algorithm:parallel-bilateral-instantiation}).

\noindent\emph{Base case ($t=1$).}
\begin{enumerate}
    \item In iteration $t=1$, under Subroutine~\ref{algorithm:pick-interviews-bilateral-equivalent} when the instance is in the bilateral ex-ante equivalent setting, each applicant $a_i$ has $\pos_i^{*} = \pos$. Since $m\ge n$, the algorithm assigns each applicant an interview with some unmatched position. 
    \item All positions are initially unmatched, so no unmatched applicant in $\app'$ can propose to a matched position.
    \item Let $p_j$ be the position that applicant $a_i$ interviews in iteration $t=1$. If $a_i$ interim likes $p_j$, then by definition after the interview  \[
    \vutility{i}{j}{1} > V_{i,j} = V =  V_{i,j'} = \vutility{i}{j'}{1} 
    \qquad\text{for every } j' \neq j,
    \] Thus $p_j$ is $a_i$’s top interviewed position, and she will propose to $p_j$ in DA. For any other applicant $a_{i'} \ne a_i$, since $p_j \notin \succ^{trunc}_{a_{i'}}$, the only proposal that $p_j$ receives is from $a_i$.  Because $p_j$ is unmatched, it accepts the proposal.  
    Hence every applicant who interim likes the first position she interviewed with becomes tentatively matched to it.
\end{enumerate}

\noindent\emph{Induction step.}
Assume that for every iteration $t'<t$:
\begin{enumerate}
    \item Every unmatched applicant in $\app'$ in iteration $t'$ interviews with an unmatched position in iteration $t'$.
    \item No unmatched applicant in $\app'$  proposes to a matched position.
    \item Any applicant who interim liked the position she interviewed in iteration $t'$ became tentatively matched to that position in $t'$ and remained matched until iteration $t-1$.
\end{enumerate}
We prove the statements for iteration $t$.

\begin{itemize}
    \item Since $m \ge n$, at the beginning of iteration $t$ there are at least $|\app''| + (k-1)$ unmatched positions. By the induction hypothesis, the applicants in $\app''$ in iteration~$t$ are precisely those who were in $\app'$ at $t=1$ and have not interim liked any position so far. Since $k$-eventual-happiness for applicants holds, each applicant meets a position that she interim likes within her first $k$ interviews. Thus, at iteration $t$, every applicant in $\app''$ has interviewed fewer than $k$ positions.

    Therefore each applicant in $\app''$ still has at least $|\app''|$ unmatched positions that she has not yet interviewed.
    Therefore, for any subset $S \subseteq \app''$, the set of neighbors has size at least $|\app''|$, and Hall’s condition holds for the bipartite graph whose one side is $\app''$ and the other side consists of unmatched positions.  Hence Subroutine~\ref{algorithm:pick-interviews-bilateral-equivalent} assigns interviews to all applicants in $\app'$ using only unmatched positions.
    \item  Fix some $a_i \in \app''$ and let $p_j$ be the unique unmatched position with which she interviews in iteration $t$.
    \begin{itemize}
        \item If $a_i$ interim likes $p_j$, then
          \[
         \vutility{i}{j}{t} > V_{i,j} = V = V_{i,j'}= \vutility{i}{j'}{t}
          \quad\text{for every } j' \neq j \text{ with } p_{j'} \in \pos_i^*,
          \]
          so $a_i$ will want to propose to $p_j$.  
        \item Otherwise, $\vutility{i}{j}{t} < V_{i,j} = V = V_{i,j'} = \vutility{i}{j'}{t}$ for all $j' \neq j$ with $p_{j'} \in \pos_i^*$. Since she has interviewed fewer than $k$ positions so far, there exists a position 
        $p_{j^*} \in \pos_i^{*}$ that she prefers and that has not rejected her, so she will not propose to $p_j$.
    \end{itemize}
    Therefore, an applicant $a_i \in \app''$ proposes in iteration~$t$ only if she interim likes the position she interviews in that iteration, and in that case she proposes to that position and to no other. Since each applicant interviews exactly one position, each position receives at most one proposal in this iteration. Because all these positions are unmatched, every proposal is accepted.  

    In particular:
    \begin{itemize}
        \item no unmatched applicant proposes to a matched position. Consequently, no existing tentative match is broken, and every applicant who was matched prior to iteration $t$ remains matched afterward. and
        \item every applicant who interim liked the position she interviewed in iteration $t$, became tentatively matched to it in that iteration.
    \end{itemize}
\end{itemize}

This completes the induction and the proof.
\end{proof}
\subsubsection{Proof of \cref{lemma:parallel-log-iterations}}\label{appendix:parallel-log-iterations}
\KRounds*
\begin{proof}
    By \cref{lemma:parallel-matched-stays-matched}, an applicant stops interviewing at Phase~1 once she is interviewed with a position that she interim likes. Until then, in each iteration, every unmatched applicant conducts an interview.  
    Since $k$-eventual-happiness for applicants holds, there can be at most $k$ interview rounds.
\end{proof}

\subsubsection{Proof of \cref{lem:log-rounds}}\label{appendix:log-rounds}
We begin with basic corollaries of \cref{lemma:parallel-log-iterations,lemma:parallel-matched-stays-matched} that will be used throughout the proof of \cref{lem:log-rounds}.

\begin{corollary}\label{cor:k-rounds}
    When the realization is not a $k$-bad realization, Phase~1 terminates after most $k$ iterations.
\end{corollary}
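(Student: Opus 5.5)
The corollary follows directly from \cref{lemma:parallel-log-iterations} once we check that its hypothesis holds whenever the realization is not $k$-bad. The plan is to reduce the statement to $k$-eventual-happiness for applicants and then invoke that lemma.

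First, recall that a realization that is not $k$-bad satisfies $k$-eventual-mutual-happiness. Its first bullet says that every applicant who has participated in $k$ interviews has, among her interviewing positions, one with whom she interim likes each other. In particular she interim likes that position. Hence $k$-eventual-happiness for applicants holds, which is the inclusion $E_G\subseteq B_1\subseteq D$ already used in the proof of \cref{lem:2n-for-small-m}.

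One point needs care: these happiness events were defined for executions of \cref{algorithm:serial-adaptive}, whereas here we apply them to \cref{algorithm:parallel-bilateral-instantiation}. The cleanest route is the pre-sampled quantile table from \cref{lem:mutual-happiness-app} and \cref{lem:happiness-app-for-general-k}. There, the value of an applicant's $q$-th interview is read from column $q$ of her row, whichever algorithm schedules the interview. Under this coupling, each happiness property is a property of the table (the first $k$ columns of each row). It therefore holds for the hybrid execution exactly when it holds for the sequential one. So "not $k$-bad" can be taken as an event on the shared randomness, and it implies $k$-eventual-happiness for applicants in the execution of \cref{algorithm:parallel-bilateral-instantiation}.

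With the hypothesis in place, \cref{lemma:parallel-log-iterations} gives that Phase~1 ends after at most $k$ interview rounds. Equivalently, the while-loop of \cref{algorithm:parallel-bilateral-instantiation} runs at most $k$ iterations, since each iteration is one round. The only step needing any thought is the coupling that transfers the event across algorithms; the rest is immediate.
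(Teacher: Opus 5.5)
Your proof is correct and follows the paper's exactly: not being $k$-bad gives $k$-eventual-mutual-happiness, hence $k$-eventual-happiness for applicants, and \cref{lemma:parallel-log-iterations} finishes. The coupling paragraph is unnecessary here, since the paper simply evaluates these conditions on the hybrid execution. It also overstates things slightly: the quantile-table event only \emph{implies} applicant happiness in each execution, because an applicant who stops before $k$ interviews imposes no constraint, so the property need not hold for the hybrid run ``exactly when'' it holds for the sequential one; the one-sided implication is all you actually use.
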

\begin{proof}
    For any realization that is not $k$-bad, the property of $k$-eventual-happiness for applicants holds by definition. Therefore, by \cref{lemma:parallel-log-iterations}, Phase~1 terminates after at most $k$ iterations.
\end{proof}

\begin{corollary}\label{cor:happy-k-rounds}
    Assume that $k$-eventual-happiness for applicants holds for the execution of \cref{algorithm:parallel-bilateral-instantiation}, then \cref{algorithm:parallel-bilateral-instantiation} does not reach Phase~3.
\end{corollary}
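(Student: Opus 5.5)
Phase~3 is entered only through the ``else'' branch of line~13 of \cref{algorithm:parallel-bilateral-instantiation}. That branch is taken exactly when, in some iteration of the main while-loop, the maximum matching $\mathcal{M}$ returned by Subroutine~\ref{algorithm:pick-interviews-bilateral-equivalent} fails to cover all of $\app''$, i.e., $|\mathcal{M}|<|\app''|$. So I would show that, under $k$-eventual-happiness for applicants, $|\mathcal{M}|=|\app''|$ in every iteration of Phase~1. After that, the while-loop exits normally and the algorithm proceeds to Phase~2 (the sequential call), never reaching Phase~3.

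The key step is to invoke \cref{lemma:parallel-matched-stays-matched}, item~1. It says that, under $k$-eventual-happiness for applicants and $m\ge n$, every unmatched applicant in $\app'$ receives an interview with some unmatched position in every iteration of Phase~1. Since the applicants in $\app''$ are precisely the unmatched applicants of $\app'$ not yet rejected by all positions, each of them is covered by $\mathcal{M}$, so $|\mathcal{M}|=|\app''|$. The proof of that lemma establishes this through Hall's condition. Each $a_i\in\app''$ has interviewed fewer than $k$ positions, because otherwise $k$-eventual-happiness would already have matched her by item~3. At least $|\app''|+(k-1)$ positions are unmatched, so every $a_i\in\app''$ has at least $|\app''|$ unmatched, uninterviewed positions. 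Moreover, all of these lie in $\pos_i^*$: in the bilaterally ex-ante equivalent setting every uninterviewed position has interim value $V$, while interviewed ones are either liked (and then she would be matched) or below $V$.

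The only subtle point, and the main obstacle, is that \cref{lemma:parallel-matched-stays-matched} is stated as holding ``during the execution of Phase~1''. I must therefore check there is no circularity: the fallback test is evaluated inside the loop body, before the interviews of that iteration. I would argue by induction over iterations. If no fallback has occurred before iteration $t$, then the lemma's inductive invariants hold at the start of iteration $t$, and by item~1 the check at iteration $t$ passes. Hence the fallback never triggers in any iteration, and the loop ends with all of $\app'$ matched. For this it also helps that no applicant in $\app'$ is ever rejected by all positions while unmatched, since she always retains untouched unmatched positions. This completes the argument that Phase~3 is not reached.
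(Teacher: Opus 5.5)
Your proof is correct and follows the paper's argument. The paper also invokes item~1 of \cref{lemma:parallel-matched-stays-matched} to conclude that Subroutine~\ref{algorithm:pick-interviews-bilateral-equivalent} always returns $|\mathcal{M}|=|\app''|$, so the fallback branch is never taken and the loop continues until all of $\app'$ is matched; your extra steps (re-deriving Hall's condition and making the iteration-by-iteration induction explicit to rule out circularity) are sound but already contained in that lemma's own inductive proof.
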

\begin{proof}
    By \cref{lemma:parallel-matched-stays-matched}, as long as an applicant in $\app'$ remains unmatched, she receives an interview in every iteration. Consequently, Subroutine~\ref{algorithm:pick-interviews-bilateral-equivalent} never outputs a set of interviews that is smaller than $\app''$, and thus \cref{algorithm:parallel-bilateral-instantiation} continues executing the DA phase (Subroutine~\ref{algorithm:truncated-DA}) until all applicants in $\app'$ become matched.
\end{proof}

\begin{corollary}\label{cor:bad-k-rounds}
    If the realization is not a $k$-bad realization, then \cref{algorithm:parallel-bilateral-instantiation} does not reach Phase~3.
\end{corollary}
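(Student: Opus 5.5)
The plan is to reduce \cref{cor:bad-k-rounds} directly to \cref{cor:happy-k-rounds}. By the definition of a $k$-bad realization, a realization that is not $k$-bad satisfies $k$-eventual-mutual-happiness. Its first bullet says that every applicant who has participated in $k$ interviews, together with at least one of her interviewing positions, interim like each other. In particular, she interim likes at least one of her interviewing positions, which is exactly $k$-eventual-happiness for applicants. So the first step is to record this implication: not $k$-bad $\Rightarrow$ $k$-eventual-mutual-happiness $\Rightarrow$ $k$-eventual-happiness for applicants. This is the same observation already used in the proof of \cref{cor:k-rounds} and in the inclusion $E_G\subseteq B_1\subseteq D$ in \cref{lem:2n-for-small-m}.

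The second step invokes \cref{cor:happy-k-rounds}, which states that whenever $k$-eventual-happiness for applicants holds for the execution of \cref{algorithm:parallel-bilateral-instantiation}, the algorithm never reaches Phase~3. Combining the two steps gives the claim.

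The only point requiring care is that the notions of eventual happiness and $k$-bad realization were defined relative to executions of \cref{algorithm:serial-adaptive}, whereas here they are applied to \cref{algorithm:parallel-bilateral-instantiation}. I would address this as \cref{lemma:parallel-matched-stays-matched} and \cref{cor:k-rounds} implicitly do. The conditions are statements about the realized interview outcomes of each agent's first $k$ interviews, for instance via the pre-sampled quantile-table coupling of \cref{lem:mutual-happiness-app}. They therefore make sense verbatim for any execution, and the purely logical implication between them does not depend on which algorithm is run. I do not expect a genuine obstacle; the argument is a two-line chaining of definitions and \cref{cor:happy-k-rounds}.
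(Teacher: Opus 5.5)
Your proposal is correct and matches the paper's proof: a realization that is not $k$-bad satisfies $k$-eventual-mutual-happiness, hence $k$-eventual-happiness for applicants, and \cref{cor:happy-k-rounds} then rules out Phase~3. Your remark on applying the happiness conditions, defined for \cref{algorithm:serial-adaptive}, to executions of \cref{algorithm:parallel-bilateral-instantiation} makes explicit a point the paper leaves implicit, but it does not change the argument.
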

\begin{proof}
    For any realization that is not $k$-bad, the property of $k$-eventual-happiness for applicants holds by definition. Therefore, by \cref{cor:happy-k-rounds}, \cref{algorithm:parallel-bilateral-instantiation} does not reach Phase~3.
\end{proof}

\paragraph{{Claims about the Sequential Algorithm that also Hold for the Hybrid Algorithm under the Non-$k$-Bad Realization Assumption}
}
To prove \cref{lem:log-rounds}, we next show that the claims stated in the lemmas for the sequential adaptive algorithm in \cref{appendix:serial-all-iid}, also hold for \cref{algorithm:parallel-bilateral-instantiation} whenever the realization is not $k$-bad. This allows us to bound the number of interviews performed during Phase~2. Throughout the remainder of this section, we verify that each lemma from the sequential analysis remains valid in the hybrid setting by establishing it for Phase~1. Since Phase~2 coincides with the sequential adaptive algorithm, these lemmas hold unchanged during Phase~2. Moreover, by \cref{cor:bad-k-rounds}, when the realization is not $k$-bad the algorithm never reaches Phase~3. Consequently, it suffices to prove the lemmas for Phase~1.
\begin{corollary}\label{lem:hybrid-position-stops-interview}
    Suppose that the realization is not a $k$-bad realization.
    Let $a_{i^*}$ be the unmatched applicant chosen to interview, and let $p_{j^*}$ be the position selected for the interview.  If $p_{j^*}$ is currently matched to an applicant with whom it interim likes each other, then $p_{j^*}$ rejects $a_{i^*}$ without conducting an interview.  
    This statement coincides with \cref{lem:position-stops-interview}, {but for \cref{algorithm:parallel-bilateral-instantiation} rather than for \cref{algorithm:serial-adaptive}}.
\end{corollary}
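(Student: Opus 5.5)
The plan is to split by phase, since under the non-$k$-bad assumption \cref{cor:bad-k-rounds} rules out Phase~3. Phase~2 is literally \cref{algorithm:serial-adaptive} started from the current $\mu$ and $Z$, so \cref{lem:position-stops-interview} applies there verbatim. The real work is Phase~1 of \cref{algorithm:parallel-bilateral-instantiation}. There, interviews are chosen by Subroutine~\ref{algorithm:pick-interviews-bilateral-equivalent}, and "rejecting without an interview" should be read as: $p_{j^*}$ is never selected to interview $a_{i^*}$.

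\textbf{Phase~1.} I would show that a matched position never appears in any set $\mathcal{M}$ produced in Phase~1. This holds by construction, since the bipartite graph $\Gamma$ is built only on $\pos^{\mathrm{free}}=\{p\mid \mu(p)=p\}$. So every Phase~1 interview is with a currently unmatched position. In particular, a position that is tentatively matched to an applicant, whether or not they interim like each other, conducts no interview in Phase~1. Hence the hypothesis "matched to an applicant it interim likes each other with" can only produce an interview-less rejection, and the statement holds vacuously in this phase.

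As a consistency check, \cref{lemma:parallel-matched-stays-matched} gives that no Phase~1 proposal breaks an existing match. So any such pair formed in Phase~1 is still matched when Phase~2 begins.

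\textbf{Phase~2.} The argument of \cref{lem:position-stops-interview} transfers unchanged. Suppose $p_{j^*}$ is matched to $\mu(p_{j^*})$ and they interim like each other, so $u^{Z}_{j^*,\mu(p_{j^*})}>U=U_{j^*,i^*}=u^{Z}_{j^*,i^*}$, where the last equality uses $(a_{i^*},p_{j^*})\notin Z$. Then the test in line 5 of \cref{algorithm:serial-adaptive} fails, and line 10 rejects $a_{i^*}$ without an interview. The inputs passed from Phase~1 (the matching $\mu$, the set $Z$, and the recorded rejections) do not affect this local comparison.

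\textbf{Main obstacle.} The only delicate point is interpretive: making precise what "chosen to interview" and "rejects without an interview" mean in the parallel phase, where no per-applicant rejection step exists. I would resolve it by proving the stronger Phase~1 statement that only unmatched positions are interviewed, which is immediate from the definition of $\pos^{\mathrm{free}}$.
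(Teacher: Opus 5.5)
Your proposal is correct and follows the paper's own argument. The paper likewise excludes Phase~3 via \cref{cor:bad-k-rounds}, lets Phase~2 inherit \cref{lem:position-stops-interview} because it is literally \cref{algorithm:serial-adaptive}, and settles Phase~1 exactly as you do, by noting that Subroutine~\ref{algorithm:pick-interviews-bilateral-equivalent} only pairs applicants with unmatched positions. Your observation that the non-$k$-bad hypothesis is needed only to rule out the fallback stage (where all remaining interviews are conducted) is correct, and it states the same proof a little more explicitly than the paper does.
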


\begin{proof}
    In Phase~1, the claim holds trivially, since applicants only interview with unmatched positions. 
\end{proof}

\begin{corollary} \label{cor:hybrid-one-sided-propose-accept}
    Suppose that the realization is not a $k$-bad realization.
    Fix a position $p_j$ and an applicant $a_i$, and assume that at iteration $t$ they have met for an interview ($(a_i,p_j)\in Z^t$) and that $a_i$ and $p_j$ interim like each other. Then applicant $a_i$ will propose to $p_j$ and $p_j$ will accept the proposal.
    This statement coincides with \cref{lem:one-sided-propose-accept}, {but for \cref{algorithm:parallel-bilateral-instantiation} rather than for \cref{algorithm:serial-adaptive}}.
\end{corollary}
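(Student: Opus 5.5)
The claim only concerns Phase~1, since Phase~2 is \cref{algorithm:serial-adaptive} verbatim, where \cref{lem:one-sided-propose-accept} already applies, and Phase~3 is never reached under a non-$k$-bad realization by \cref{cor:bad-k-rounds}. So I will fix an iteration $t$ of the main while-loop in which $a_i$ and $p_j$ interview and interim like each other, and show that in the truncated DA call (Subroutine~\ref{algorithm:truncated-DA}) that immediately follows, $a_i$ proposes to $p_j$ and $p_j$ accepts.

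\emph{Step 1: $p_j$ is unmatched and $a_i$ ranks it first.} A non-$k$-bad realization satisfies $k$-eventual-happiness for applicants, so \cref{lemma:parallel-matched-stays-matched} applies. Item~1 of that lemma, together with the construction of $\Gamma$ in Subroutine~\ref{algorithm:pick-interviews-bilateral-equivalent}, gives that $p_j$ was unmatched at the start of iteration $t$. Since $a_i$ interim likes $p_j$,
\[
v^{Z^t}_{i,j} > V = v^{Z^t}_{i,j'}
\]
for every position $p_{j'}$ she has not interviewed. She has interviewed fewer than $k$ positions, and all earlier interviewed positions she did not like, so those have interim value below $V$. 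Hence $p_j$ is her strict top choice, and it sits in the prefix $\succ^{trunc}_{a_i}$. Therefore $a_i$'s first proposal in the DA call is to $p_j$.

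\emph{Step 2: $p_j$ receives no competing proposal.} The truncated lists contain only interviewed positions, and $p_j$ interviewed only $a_i$ while it was unmatched. By Item~2 of \cref{lemma:parallel-matched-stays-matched}, no unmatched applicant proposes to a matched position, and already matched applicants do not propose. It follows that the only applicant who can propose to $p_j$ in this DA call is $a_i$.

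\emph{Step 3: acceptance.} Because $p_j$ is unmatched and values are non-negative, $p_j$ accepts $a_i$. By Item~3 of \cref{lemma:parallel-matched-stays-matched}, the match persists through Phase~1.

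The main obstacle is Step~2. We must rule out that $p_j$ is listed above the first non-interviewed position on some other applicant's truncated list, and that some earlier interviewee of $p_j$ proposes to it. If $p_j$ was unmatched and had already interviewed some $a_{i'}$ earlier, then $a_{i'}$ did not like $p_j$. Since $a_{i'}$ has unrejected, non-interviewed alternatives valued at $V$, her truncation cuts above $p_j$. I would verify this using the fewer-than-$k$-interviews count inside the proof of \cref{lemma:parallel-matched-stays-matched}.

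Note that the proposal happens within the same outer iteration $t$, rather than at ``iteration $t+1$'' as in the sequential lemma. I would state this timing difference explicitly.
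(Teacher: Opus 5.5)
Your proposal is correct and follows the paper's route. The paper likewise reduces to Phase~1: Phase~2 is the sequential algorithm, and Phase~3 is excluded by \cref{cor:bad-k-rounds}. It then simply cites \cref{lemma:parallel-matched-stays-matched}, whose inductive proof contains your Steps~1--3. Your explicit check that an earlier interviewee of $p_j$ who did not like it cannot have $p_j$ in her truncated list is a detail the paper leaves implicit. Note that the truncation ignores rejections, so you only need that she has some non-interviewed position, which is valued at $V$.
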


\begin{proof}
    In Phase~1, since not being a $k$-bad realization implies that $k$-eventual-happiness for applicants holds, the claim follows from \cref{lemma:parallel-matched-stays-matched}.   
\end{proof}

\begin{corollary}\label{cor:hybrid-applicant-like-propose}
    Suppose that the realization is not a $k$-bad realization. Fix a position $p_j$ and an applicant $a_i$, and assume that at some iteration $t$ they have met for an interview ($(a_i,p_j)\in Z^t$) and that $a_i$ interim likes $p_j$. Then applicant $a_i$ will propose to $p_j$. Moreover, if $p_j$ is unmatched, it will accept the proposal. This statement coincides with \cref{lem:applicant-like-propose}, {but for \cref{algorithm:parallel-bilateral-instantiation} rather than for \cref{algorithm:serial-adaptive}}.
\end{corollary}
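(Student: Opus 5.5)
The plan is to follow the same template as the two preceding corollaries. Since Phase~2 coincides with \cref{algorithm:serial-adaptive} started from the matching and interview set at the end of Phase~1, and since \cref{cor:bad-k-rounds} ensures Phase~3 is never reached when the realization is not $k$-bad, it suffices to establish the statement for interviews and proposals occurring in Phase~1. The Phase~2 case then follows verbatim from \cref{lem:applicant-like-propose}. The one point to check is the boundary: the interview may happen in the last Phase~1 round while the proposal happens inside the truncated DA of that same round. I would note that the truncated-DA step (Subroutine~\ref{algorithm:truncated-DA}) is executed after the last parallel round, before control passes to the sequential algorithm. So any proposal triggered by a Phase~1 interview is already made within Phase~1.

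For Phase~1, I would use that a non-$k$-bad realization satisfies $k$-eventual-happiness for applicants, so \cref{lemma:parallel-matched-stays-matched} applies. Suppose $a_i$ interviews $p_j$ in round $t$ and interim likes it. By item~1 of that lemma, $p_j$ was unmatched at the start of round $t$, and $a_i$ was an unmatched member of $\app''$. After the interview,
\[
\vutility{i}{j}{t} > V = \vutility{i}{j'}{t}
\]
for every non-interviewed $p_{j'}$. Every other interviewed position has value below $V$, since otherwise $a_i$ would already be matched by item~3. Hence $p_j$ is the top entry of $\succ_{a_i}^{trunc}$, and $a_i$ proposes to $p_j$ in the truncated DA of round $t$.

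It remains to argue that if $p_j$ is unmatched, it accepts. Here I would reuse the argument from the induction step of \cref{lemma:parallel-matched-stays-matched}. In round $t$, $p_j$ interviewed only $a_i$, because $\mathcal{M}$ is a matching. Any other applicant whose truncated list contains $p_j$ must have interviewed it earlier and not liked it. Such an applicant is either matched to a position she liked, and so does not propose, or still unmatched, in which case her list is truncated above $p_j$ by a non-interviewed position. Therefore $a_i$ is the sole proposer, and an unmatched $p_j$ accepts since every value is nonnegative.

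I expect the main obstacle to be this bookkeeping around the end of Phase~1 and the sole-proposer claim. The rest is a direct appeal to \cref{lemma:parallel-matched-stays-matched} and \cref{lem:applicant-like-propose}.
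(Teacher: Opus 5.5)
Your proposal is correct and follows the paper's route. The paper also reduces to Phase~1 (Phase~3 is excluded by \cref{cor:bad-k-rounds}, and Phase~2 is just \cref{algorithm:serial-adaptive}) and then cites \cref{lemma:parallel-matched-stays-matched}, whose induction step you re-derive in more detail. One wording fix in the sole-proposer step: an unmatched applicant $a_{i'}$ who interviewed with $p_j$ in an earlier round does not have $p_j$ in her truncated list at all. She did not like $p_j$, since otherwise item~3 would keep $p_j$ matched to her, so $p_j$ lies below her non-interviewed positions of value $V$ and falls outside the truncated list.
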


\begin{proof}
    In Phase~1, since not being a $k$-bad realization implies that $k$-eventual-happiness for applicants holds, the claim follows from \cref{lemma:parallel-matched-stays-matched}. 
\end{proof}

\begin{corollary}\label{lem:hib-minus-breaks+}
    Suppose that the realization is not a $k$-bad realization. Let $a_{i^*}$ be the unmatched applicant
    and suppose that in iteration $t$ applicant $a_{i^*}$ proposes to $p_{j^*}$. If $p_{j^*}$ is currently tentatively matched to an applicant such that they interim like each other, then $a_{i^*}$ does not interim like $p_{j^*}$. This statement coincides with \cref{lem:minus-breaks+}, {but for \cref{algorithm:parallel-bilateral-instantiation} rather than for \cref{algorithm:serial-adaptive}}.
\end{corollary}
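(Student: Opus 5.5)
We need to prove the hybrid analogue of \cref{lem:minus-breaks+}. As with the preceding corollaries, Phase~3 is never reached when the realization is not $k$-bad (\cref{cor:bad-k-rounds}). Phase~2 coincides with \cref{algorithm:serial-adaptive} started from the state reached at the end of Phase~1. So it suffices to check two things: that the claim holds trivially in Phase~1, and that the argument of \cref{lem:minus-breaks+} still applies in Phase~2 even though part of the history was produced by the parallel phase.

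\emph{Phase~1.} Since $k$-eventual-happiness for applicants holds, the second item of \cref{lemma:parallel-matched-stays-matched} applies: no unmatched applicant in $\app'$ proposes to a matched position. Applicants outside $\app'$ make no proposals during Phase~1 at all. Hence no proposal in Phase~1 is made to a position that is tentatively matched, and in particular not to one whose current match is an applicant it interim likes each other with. The statement therefore holds vacuously for Phase~1.

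\emph{Phase~2.} Suppose $a_{i^*}$ proposes at iteration $t$ in Phase~2 to $p_{j^*}$, and $p_{j^*}$ is tentatively matched to some $a_i$ such that they interim like each other. I would repeat the argument of \cref{lem:minus-breaks+} in three steps.

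\begin{enumerate}
\item From the moment $p_{j^*}$ became matched to an applicant it interim likes each other with, it conducts no further interviews. During Phase~1 this holds because only unmatched positions interview. During Phase~2 it follows from \cref{lem:hybrid-position-stops-interview}, together with the argument of \cref{lem:position-stops-interview2}. That argument relies only on the ex-ante equivalence of positions and on the fact that a position's tentative match only improves, so it carries over unchanged.
\item Consequently, $a_{i^*}$ interviewed with $p_{j^*}$ before that match was formed.
\item By \cref{cor:hybrid-applicant-like-propose} (in Phase~1) and \cref{lem:applicant-like-propose} (in Phase~2), if $a_{i^*}$ interim liked $p_{j^*}$, she would have proposed to it right after the interview. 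Moreover, the position was then not matched to a mutually-liked partner, so the match between $a_{i^*}$ and $p_{j^*}$ would have formed at that point.
\end{enumerate}

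The main obstacle is the third step when the interview took place in Phase~1. In that case ``the immediately next iteration'' must be read as the truncated-DA step of the same parallel round. Suppose $a_{i^*}$ liked $p_{j^*}$ and was matched to it then. By \cref{lemma:parallel-matched-stays-matched} she stays matched to it through Phase~1, so she could be displaced only later. Displacing her requires another proposal to $p_{j^*}$. A mutually-liked match of $p_{j^*}$ could have been formed only by such a displacement. By the Phase~1 argument above, any such displacement must occur in Phase~2, and that is exactly the situation we are analyzing.

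To handle this, I would consider the first Phase~2 proposal that breaks a mutually-liked match. At that point, $a_{i^*}$ was matched at the end of Phase~1 and has not yet been displaced. She therefore cannot be the unmatched proposer, which is a contradiction. In the alternative case, $a_{i^*}$ did not like $p_{j^*}$, and the claim follows.
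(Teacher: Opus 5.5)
Your Phase~1 paragraph is exactly the paper's proof. The paper only observes that, under $k$-eventual-happiness, \cref{lemma:parallel-matched-stays-matched} forbids proposals to matched positions in Phase~1. Everything else rests on its blanket reduction: Phase~3 is never reached, and Phase~2 \emph{is} \cref{algorithm:serial-adaptive}, so \cref{lem:minus-breaks+} applies there. You are right that this reduction hides a point, since the interview behind a Phase~2 proposal may date from Phase~1. However, your handling of it has a gap.

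First, in step~3 the claim that the match of $a_{i^*}$ and $p_{j^*}$ ``would have formed'' is false for Phase~2 interviews. There $p_{j^*}$ interviews whenever it values $U$ above its current partner; if it then dislikes $a_{i^*}$, it may keep its partner and reject her. Second, and more seriously, your closing contradiction rests on ``$a_{i^*}$ \dots\ has not yet been displaced'', which nothing justifies. Suppose $p_{j^*}$ did not like $a_{i^*}$. A Phase~2 applicant can then interview $p_{j^*}$, like it and be liked by it, and displace $a_{i^*}$. That proposal breaks no mutually-liked match, so your choice of a ``first'' violating proposal does not exclude it. Right afterwards $a_{i^*}$ is unmatched while $p_{j^*}$ sits in a mutually-liked match. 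This is exactly the configuration you must rule out, and $a_{i^*}$ has already been displaced.

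The missing idea is that being displaced is being rejected. Both \cref{algorithm:DA} and \cref{algorithm:serial-adaptive} execute ``$p_j$ rejects $\mu(p_j)$'' when a tentative match is broken. These rejections carry over from Phase~1 to Phase~2, since the matching object records them (see the remark following \cref{algorithm:parallel-bilateral-instantiation}). No applicant ever proposes to a position that has rejected her.

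So suppose $a_{i^*}$ liked $p_{j^*}$. She proposed to it right after their interview: in the truncated-DA step of the same round if the interview was in Phase~1, by \cref{lemma:parallel-matched-stays-matched}; in the next iteration if it was in Phase~2, by \cref{lem:applicant-like-propose}. From then on she is either still matched to $p_{j^*}$ or has been rejected by it. Neither is compatible with her being unmatched and proposing to $p_{j^*}$ at iteration $t$.

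Your steps~1--2 are then needed only to exclude that this immediate proposal is the one at $t$, i.e., that the interview was at $t-1$. The ``first violation'' argument can be dropped entirely. This rejection bookkeeping is also the implicit reason the sequential proof of \cref{lem:minus-breaks+} works.
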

\begin{proof}
     In Phase~1, since not being a $k$-bad realization implies that $k$-eventual-happiness for applicants holds, the claim follows trivially from \cref{lemma:parallel-matched-stays-matched}, since applicants don't propose to matched positions.
\end{proof}

\begin{corollary}\label{cor:hybrid-parallel-matched-stays-matched}
Suppose that the realization is not a $k$-bad realization. Fix a position $p_j$ and an applicant $a_i$, and assume that in iteration $t$ they have met for an interview and that $a_i$ and $p_j$ interim like each other. 
Then, applicant $a_i$ will propose to $p_j$, and $p_j$ will accept the proposal. After accepting, $p_j$ will reject any further interviews. This statement coincides with \cref{lem:position-stops-interview-after++}, {but for \cref{algorithm:parallel-bilateral-instantiation} rather than for \cref{algorithm:serial-adaptive}}.
\end{corollary}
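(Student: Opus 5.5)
The claim has two parts: (i) once a mutually-liking pair has met, $a_i$ proposes to $p_j$ and $p_j$ accepts; (ii) after accepting, $p_j$ takes no further interviews. Part (i) is exactly \cref{cor:hybrid-one-sided-propose-accept}, so the work is in (ii). I would follow the sequential proof of \cref{lem:position-stops-interview-after++}, which combines \cref{lem:position-stops-interview}, \cref{lem:one-sided-propose-accept} and \cref{lem:position-stops-interview2}, and check each ingredient phase by phase. Since the realization is not $k$-bad, $k$-eventual-happiness for applicants holds. By \cref{cor:bad-k-rounds} Phase~3 is never reached, so only Phases~1 and~2 matter.

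If the interview happens in Phase~2, the execution from that point on is literally \cref{algorithm:serial-adaptive} started from the current $(\mu,Z)$. The proofs of \cref{lem:position-stops-interview,lem:one-sided-propose-accept,lem:position-stops-interview2} use only the algorithm's local rules and the ex-ante equivalence $U_{j,i}=U$, not the empty initial state. So they apply verbatim, and in particular $p_j$ never interviews again.

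If the interview happens in Phase~1 at iteration $t$, part~3 of \cref{lemma:parallel-matched-stays-matched} says $a_i$ becomes matched to $p_j$ in that iteration's truncated-DA step and stays matched throughout Phase~1. In Phase~1, Subroutine~\ref{algorithm:pick-interviews-bilateral-equivalent} only pairs applicants with positions in $\pos^{\mathrm{free}}$, so the matched $p_j$ receives no Phase~1 interviews. What remains is Phase~2, where the state at entry has $p_j$ matched to some applicant $a$ with $u_{j,a}>U$. This holds because $p_j$ is still matched to $a_i$, and in any case every later tentative match of $p_j$ is interim preferred to $a_i$, hence also valued above $U$.

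The final step is to rerun the argument of \cref{lem:position-stops-interview2} inside the Phase~2 sequential run. Any applicant $a_{i'}$ with $(a_{i'},p_j)\notin Z$ has $u_{j,i'}=U<u_{j,\mu(p_j)}$. By the condition in line~5 of \cref{algorithm:serial-adaptive}, $p_j$ rejects $a_{i'}$ without an interview. The match of $p_j$ only improves over time, so this inequality persists.

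The main obstacle is the bookkeeping at the Phase~1/Phase~2 boundary. One has to confirm that the invariant ``$p_j$'s current match has interim value above $U$'' is established by the end of Phase~1 and is compatible with how \cref{algorithm:serial-adaptive} is invoked with an initial $(\mu,Z)$. One also has to check that truncated DA never breaks such a match within Phase~1; this is guaranteed by part~2 of \cref{lemma:parallel-matched-stays-matched}.
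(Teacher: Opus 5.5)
Your proof is correct and follows the paper's route. For Phase~1 you use part~3 of \cref{lemma:parallel-matched-stays-matched} together with the fact that Subroutine~\ref{algorithm:pick-interviews-bilateral-equivalent} only schedules interviews with unmatched positions, and for Phase~2 you use that the execution coincides with \cref{algorithm:serial-adaptive}; your explicit handling of the Phase~1/Phase~2 boundary (carrying the invariant $u_{j,\mu(p_j)}>U$ into the argument of \cref{lem:position-stops-interview2}) spells out a step the paper leaves implicit.
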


\begin{proof}
    In Phase~1, the claim holds from \cref{lemma:parallel-matched-stays-matched}, and since only unmatched positions interview.  
\end{proof}

\begin{corollary}\label{lem:hib:before-to-no-propose-to-no-like}
    Suppose that the realization is not a $k$-bad realization.
    Then before time $t_0$, an unmatched applicant will not propose to any position that she does not interim like. This statement coincides with \cref{lem:before-to-no-propose-to-no-like}, {but for \cref{algorithm:parallel-bilateral-instantiation} rather than for \cref{algorithm:serial-adaptive}}.
\end{corollary}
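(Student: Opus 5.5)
In Phase~1 the hybrid algorithm never reaches the transition point's analogue until all of $\app'$ is matched. So ``before time $t_0$'' here means during Phase~1, and I would derive the claim from \cref{lemma:parallel-matched-stays-matched}.

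First, since the realization is not $k$-bad, $k$-eventual-mutual-happiness holds. In particular $k$-eventual-happiness for applicants holds, so all three items of \cref{lemma:parallel-matched-stays-matched} apply throughout Phase~1.

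Next I would look at the truncated DA step (Subroutine~\ref{algorithm:truncated-DA}) in any Phase~1 iteration. Take an unmatched applicant $a_i\in\app''$. By item~1 she has interviewed fewer than $k$ positions, all of them unmatched. Since there are at least $|\app''|+(k-1)$ unmatched positions, some position in $\pos_i^*$ has not interviewed her; it has interim value $V$ for her. Every interviewed position she does not interim like has value below $V$, so it lies after that non-interviewed position and is truncated away. Hence her truncated list contains only positions she interim likes, and she proposes only to such a position. By item~3 that position is the one she just interviewed with, and it is unmatched.

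Applicants outside $\app''$ are either matched (and stay matched by item~3, so they do not propose) or in $\app\setminus\app'$ with no interviews (so their truncated list is empty). This covers every unmatched applicant.

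The step needing most care is the existence of a non-interviewed, non-rejecting position ranked above every disliked one. I would argue this from the counting in the proof of \cref{lemma:parallel-matched-stays-matched}, noting that unmatched positions never reject without an interview. Finally, once Phase~2 begins, the execution coincides with \cref{algorithm:serial-adaptive} from its transition point. So any ``before $t_0$'' statement beyond Phase~1 is already covered by \cref{lem:before-to-no-propose-to-no-like}.
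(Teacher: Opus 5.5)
Your proposal is correct and follows the paper's route. The paper's proof is the one-line observation that in Phase~1, which is exactly ``before $t_0$'' for the hybrid algorithm, the claim follows from \cref{lemma:parallel-matched-stays-matched}; your truncated-DA argument simply unpacks the step in that lemma's proof showing that an applicant in $\app''$ proposes only to a position she interim likes. Two small remarks: the closing sentence about Phase~2 is unnecessary, and ``fewer than $k$ interviews'' holds at the start of the iteration rather than at the DA step, which suffices because an applicant who likes her newest interviewee ranks it first and is accepted there.
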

\begin{proof}
    In Phase~1, the claim holds from \cref{lemma:parallel-matched-stays-matched}.
\end{proof}

\begin{observation}\label{obs:serial-before-transition}
    \cref{lemma:parallel-matched-stays-matched} coincides with \cref{lem:applicant-propose-to-unmatched}, \cref{lem:applicant-matched-to-liked} and \cref{cor:applicant-didnt-interview}, {but for \cref{algorithm:parallel-bilateral-instantiation} rather than for \cref{algorithm:serial-adaptive}}.
\end{observation}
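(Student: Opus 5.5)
The observation asserts that the three statements of \cref{lemma:parallel-matched-stays-matched} reproduce, for Phase~1 of \cref{algorithm:parallel-bilateral-instantiation}, the conclusions of \cref{lem:applicant-propose-to-unmatched}, \cref{lem:applicant-matched-to-liked} and \cref{cor:applicant-didnt-interview}. I will check each sequential claim against the corresponding item of \cref{lemma:parallel-matched-stays-matched}, assuming throughout $k$-eventual-happiness for applicants (implied by the realization not being $k$-bad). The transition point $t_0$ will be identified with the end of Phase~1: this is the moment all applicants of $\app'$ are matched, at which point exactly $m-\min\{n,m-k+1\}$ positions are unmatched, at least $k-1$ of them.

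For \cref{lem:applicant-propose-to-unmatched}, items~1 and~2 already say what is needed. Every Phase~1 interview is drawn by Subroutine~\ref{algorithm:pick-interviews-bilateral-equivalent} from the graph on $\pos^{\mathrm{free}}$, so it is with an unmatched position. Item~2 forbids proposals by unmatched applicants in $\app'$ to matched positions. Applicants outside $\app'$ do not act in Phase~1, since Subroutine~\ref{algorithm:truncated-DA} truncates their lists at the first non-interviewed position, which leaves them empty.

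For \cref{lem:applicant-matched-to-liked}, item~3 gives the claim directly: each applicant of $\app'$ is matched to the first position she interim likes and stays there throughout Phase~1, hence until $t_0$.

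For \cref{cor:applicant-didnt-interview}, I need to show that applicants with index above $\min\{n,m-k+1\}$ have no interviews by the end of Phase~1. This holds because Subroutine~\ref{algorithm:pick-interviews-bilateral-equivalent} is only ever called on $\app''\subseteq\app'$. These are exactly the indices $m-k+2,\dots,n$ in the sequential statement, so the two statements coincide.

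The only delicate point is this alignment of $t_0$ with the end of Phase~1: checking that the count of unmatched positions then matches the definition of $t_0^k$ up to the $\max$ in the choice of $k$. I would handle it by a short case split. When $k=m-n+1$, $\app'=\app$ and Phase~2 is vacuous. When $k=\lceil 10\log n\rceil$, exactly $k-1$ positions remain unmatched, as in the definition of the transition point.
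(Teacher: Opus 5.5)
Your verification is correct and is the justification the paper relies on: the paper states the observation without a separate proof. It rests on matching items 1--3 of \cref{lemma:parallel-matched-stays-matched} to the three sequential claims, together with the facts that Subroutine~\ref{algorithm:pick-interviews-bilateral-equivalent} is only called on $\app''\subseteq\app'$ and that applicants with no interviews have empty truncated lists, with the end of Phase~1 playing the role of $t_0$. (Your case split is not really needed: $k\ge m-n+1$ forces $\min\{n,m-k+1\}=m-k+1$, so exactly $k-1$ positions remain unmatched in both cases.) The one step worth stating explicitly is that item~1 also excludes the fallback branch, as in \cref{cor:happy-k-rounds}. Under $k$-eventual-happiness, Phase~1 therefore really does end with all of $\app'$ matched, so your identification of $t_0$ is well defined.
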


\begin{lemma}\label{lemma:kr}
    Suppose that the realization is not a $k$-bad realization. The number of interviews (and consequently, rounds) in Phase~2 is at most {$k^2\cdot(k-1)$}.
\end{lemma}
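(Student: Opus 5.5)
The plan is to reduce \cref{lemma:kr} to \cref{lem:interviews-after-t0}, because Phase~2 is just \cref{algorithm:serial-adaptive} started from the state left by Phase~1. The key observation is that, for a realization that is not $k$-bad, the end of Phase~1 plays the role of the transition point $t_0$. By \cref{cor:bad-k-rounds}, Phase~3 is never reached. By \cref{lemma:parallel-matched-stays-matched}, every applicant in $\app'$ ends Phase~1 matched to the first position she interim likes, and no tentative match is ever broken. Since $|\app'|=\min\{n,m-(k-1)\}$, at the start of Phase~2 there are exactly $k-1$ unmatched positions when $m<n+k-1$. In that case the applicants outside $\app'$ (indices $i\ge m-k+2$) have not interviewed at all. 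This is exactly the configuration described at $t_0$ by \cref{lem:applicant-matched-to-liked} and \cref{cor:applicant-didnt-interview}. If instead $m\ge n+k-1$, then $\app'=\app$, Phase~2 performs no interviews, and the bound holds trivially.

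Next I would check that the hypotheses used in the proof of \cref{lem:interviews-after-t0} still hold under this identification.

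\emph{Before $t_0$.} The relevant lemmas are \cref{lem:position-stops-interview}, \cref{lem:one-sided-propose-accept}, \cref{lem:applicant-like-propose}, \cref{lem:minus-breaks+}, \cref{lem:position-stops-interview-after++} and \cref{lem:before-to-no-propose-to-no-like}. Their Phase~1 analogues have already been recorded in Corollaries~\ref{lem:hybrid-position-stops-interview}--\ref{lem:hib:before-to-no-propose-to-no-like} and \cref{obs:serial-before-transition}.

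\emph{After $t_0$.} From this point the execution is literally \cref{algorithm:serial-adaptive}, so those lemmas hold verbatim.

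\emph{The good event.} The $d$-positions-unhappy condition must refer to the same state. At the end of Phase~1 each matched position holds the first applicant who interim liked it, independently of the position's own value. So the number of unhappy positions among the $m-k+1$ matched ones is again a sum of independent Bernoulli$(1/2)$ variables. I would simply take the definition of ``$k$-bad'' for the hybrid algorithm to refer to this state, and the probability argument of \cref{lem:prob-d-unhappy} carries over unchanged.

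With these in place, I would rerun the Phase~2 argument in order. First, \cref{lem:interviewed-applicant-number} gives that at most $k(k-1)$ applicants interview in Phase~2. Its pigeonhole argument uses only three facts: every interview involves an applicant who has passed through one of the $k-1$ positions unmatched at $t_0$; the unmatched-first tie-breaking of line~4 holds in Phase~2; and $k$-eventual-mutual-happiness holds, which makes each such position stop after at most $k$ interviews. Second, the induction of \cref{lem:new-dont-propose-to++}, together with \cref{cor:why-++-dont-break}, \cref{cor:interview-is-unmatched} and \cref{lem:++dont-break}, gives \cref{lem:applicant-do-max-k-interviews}: each applicant participates in at most $k$ interviews. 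Multiplying, Phase~2 has at most $k\cdot k(k-1)=k^2(k-1)$ interviews. Phase~2 is sequential, with one interview per round, so the same number bounds the rounds.

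The main obstacle is the base case of the induction in \cref{lem:new-dont-propose-to++}. It relies on \cref{lem:before-to-no-propose-to-no-like} and \cref{lem:minus-breaks+} describing everything that happened before $t_0$, and Phase~1 runs truncated DA rather than the sequential rule. I would handle this by appealing to the second item of \cref{lemma:parallel-matched-stays-matched}: no proposal to a matched position occurs during Phase~1, so no mutually-liking pair formed in Phase~1 can be broken before Phase~2 begins, and every Phase~1 proposal is to a liked position. That is exactly what the induction needs at $t_0$.
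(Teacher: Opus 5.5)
Your proposal is correct and follows the paper's own route. You identify the end of Phase~1 with the transition point $t_0$, transfer the pre-$t_0$ lemmas through \cref{lemma:parallel-matched-stays-matched} and its Phase~1 corollaries, observe that Phase~2 is literally \cref{algorithm:serial-adaptive}, and invoke \cref{lem:interviews-after-t0} with one interview per round. Two points the paper leaves implicit are that the $d$-positions-unhappy condition is read at the start of Phase~2, and that the induction in \cref{lem:new-dont-propose-to++} relies on the Phase~1 facts (no displacements, and proposals only to liked positions); you state both and handle them correctly.
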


\begin{proof}
    Since the realization is not $k$-bad, by \cref{cor:bad-k-rounds}, \cref{algorithm:parallel-bilateral-instantiation} never reaches Phase~3. By \cref{lem:hybrid-position-stops-interview,cor:hybrid-applicant-like-propose,lem:hib-minus-breaks+,lem:hib:before-to-no-propose-to-no-like,cor:hybrid-one-sided-propose-accept,cor:hybrid-parallel-matched-stays-matched}, all general lemmas established for the sequential \cref{algorithm:serial-adaptive} apply in this setting as well.
Furthermore, by \cref{obs:serial-before-transition}, every {claim stated} in the lemmas that applies before the transition point in the sequential algorithm {also holds} during Phase~1 of \cref{algorithm:parallel-bilateral-instantiation}. Since Phase~2 of \cref{algorithm:parallel-bilateral-instantiation} is exactly the execution of the sequential \cref{algorithm:serial-adaptive} after $t_0$, all lemmas that apply after the transition point in the sequential algorithm apply during Phase~2 as well.

In \cref{algorithm:parallel-bilateral-instantiation}, each interview in Phase~2 is interpreted as a single parallel interview round.
Therefore, by \cref{lem:interviews-after-t0}, the number of interview rounds in Phase~2 is at most $k^2\cdot(k-1)$.
\end{proof}
\paragraph{Proof of the interview rounds bound}
We are now ready to prove the bound on the expected number of interview rounds. We first bound the number of interview rounds in the case where Phase~2 is reached {(\cref{lem:rounds-log3n})}. Subsequently, we no longer rely on the lemmas from the sequential analysis, as we derive a bound on the number of interview rounds in executions where Phase~2 is never reached {(\cref{lem:phase1-termination,lem:rounds-k-log-is-log})}.
\begin{lemma}\label{lem:rounds-log3n}
Consider a bilaterally ex-ante equivalent {instance} with $n$ applicants and $ n +  \lceil 10 \log n \rceil -1>m \ge n$ positions. The expected number of interview rounds in \cref{algorithm:parallel-bilateral-instantiation} is $O(\log^3 n)$.
\end{lemma}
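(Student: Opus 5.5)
We split on whether the realization is $k$-bad, with $k=\lceil 10\log n\rceil$. In this regime $m-n+1\le k$, so the algorithm sets this value of $k$, and $|\app'|=m-k+1$. Let $R$ denote the number of interview rounds and $E_G$ the event that the realization is not $k$-bad. By the law of total expectation,
\[
\mathbb{E}[R]\le \mathbb{E}[R\mid E_G]+\mathbb{E}[R\mid \overline{E_G}]\cdot p_{bad}.
\]
We bound the two terms separately.

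\textbf{Good realizations.} On $E_G$, $k$-eventual-happiness for applicants holds. By \cref{cor:k-rounds}, Phase~1 therefore takes at most $k$ rounds. By \cref{cor:bad-k-rounds}, Phase~3 is never reached. By \cref{lemma:kr}, Phase~2 takes at most $k^2(k-1)$ rounds, since each Phase~2 interview is its own round. Rounds with no interviews (pure proposal steps) should not be counted as interview rounds, so
\[
R\le k+k^2(k-1)=O(\log^3 n)
\]
deterministically on $E_G$.

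\textbf{Bad realizations.} We need a worst-case deterministic bound on $R$ that holds for every realization.
\begin{itemize}
\item Phase~1: each while-loop iteration either performs one parallel round in which every applicant in $\app''$ interviews a new position, or enters the fallback. Each applicant has at most $m$ possible interviews, so Phase~1 has at most $m$ rounds.
\item Phase~3: Subroutine~\ref{algorithm:all-interviews-algorithm} uses at most $m$ rounds.
\item Phase~2: it is sequential, with at most $nm$ interviews in total.
\end{itemize}
Hence $R\le 2m+nm=O(n^2)$ always, using $m<n+k$.

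\textbf{Combining.} By \cref{lem:prob-k-bad}, $p_{bad}\le 3/n^3$ for large $n$. Then
\[
\mathbb{E}[R]\le k^3+O(n^2)\cdot 3/n^3=O(\log^3 n).
\]
Small values of $n$ are absorbed into the constant.

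\textbf{Main obstacle.} The delicate part is justifying that the conditional bound on $E_G$ really holds. It relies on \cref{lemma:kr}, which transfers the sequential after-$t_0$ analysis to Phase~2. That transfer requires Phase~2 to start from exactly the state the sequential algorithm is in at $t_0$: all applicants in $\app'$ matched to the first position they interim like, $k-1$ free positions, and $d$-positions-unhappy evaluated at that moment.

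I would check two things. First, that the end of Phase~1 corresponds to $t_0$, which \cref{obs:serial-before-transition} gives, including the fact that applicants outside $\app'$ have not interviewed. Second, that the $k$-bad event as defined for the hybrid run has the same probability bound. The $d$-positions-unhappy count at the end of Phase~1 is again a sum of i.i.d. Bernoulli$(1/2)$ variables, conditioned on applicant happiness, so I would rerun \cref{lem:prob-d-unhappy} verbatim. The mutual-happiness bounds go through with the same quantile-table coupling, since each applicant's $q$-th interview still uses fresh independent quantiles regardless of parallelism.
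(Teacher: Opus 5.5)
Your proposal is correct and follows the paper's proof essentially step for step. You condition on the non-$k$-bad event $E_G$, bound the number of rounds there by $k+k^2(k-1)\le k+k^3$ via \cref{cor:k-rounds}, \cref{cor:bad-k-rounds} and \cref{lemma:kr}, and absorb the bad event by multiplying a polynomial worst-case bound on the rounds by $p_{bad}\le 3/n^3$ from \cref{lem:prob-k-bad}. The only differences are cosmetic: your worst-case bound is $2m+nm$ instead of the paper's $nm$, and you explicitly note that the $k$-bad probability bound, which is stated for the sequential run, must be re-checked for the hybrid run, whereas the paper applies it without comment.
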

\begin{proof}
    For a given $n$, let $k=\lceil 10\log n\rceil$.  
    Let $E_G$ denote the event that a realization is not a $k$-bad realization, and let $\overline{E_G}$ denote the complementary event.  
    Let $Y$ be the random variable counting the number of interview rounds conducted by \cref{algorithm:parallel-bilateral-instantiation}.  
    By the law of total expectation,
    \begin{equation}\label{eq:1}
        \begin{split}
            \mathbb{E}[Y] 
            &= \mathbb{E}[Y \mid E_G]\Pr(E_G) + \mathbb{E}[Y \mid \overline{E_G}]\Pr(\overline{E_G}) \\
            &\le \mathbb{E}[Y \mid E_G] + \mathbb{E}[Y \mid \overline{E_G}]\Pr(\overline{E_G}).
        \end{split}
    \end{equation}

    Let $Y_1$ be the number of interview rounds before the transition point $t_0$ (Phase~1), and let $Y_2$ be the number of rounds after $t_0$ (Phase~2).  
    Since when a realization is not a $k$-bad realization, $Y = Y_1 + Y_2$ (\cref{cor:bad-k-rounds}), we have from \cref{lemma:kr} and \cref{cor:k-rounds}:
    \begin{equation}\label{eq:2}
        \mathbb{E}[Y \mid E_G] 
        = \mathbb{E}[Y_1 \mid E_G] + \mathbb{E}[Y_2 \mid E_G] 
        \le k + k^3.
    \end{equation}

    Since the number of interview rounds is always at most $n\cdot m$, it follows that $\mathbb{E}[Y \mid \overline{E_G}] \le n \cdot m$.
    
    Moreover, by \cref{lem:prob-k-bad}, $\Pr(\overline{E_G}) \le \frac{3}{n^3}$.  
    Using \cref{eq:1,eq:2}, we obtain:
    \begin{align*}
        \mathbb{E}[Y]
        &\le (k + k^3) + n\cdot m \cdot \frac{3}{n^3}= k + k^3 + \frac{3m}{n^2} \\
        &< k + k^3 + \frac{3(n +  \lceil 10 \log n \rceil -1)}{n^2} \\
        &= O(\log^3 n).
    \end{align*}
\end{proof}

\begin{lemma}
\label{lem:phase1-termination}
If the condition of $k$-eventual happiness for applicants holds throughout the execution of \cref{algorithm:parallel-bilateral-instantiation}, and assuming $m\ge n +k -1$, then Phase~1 terminates after at most $3+\log n$ interview rounds in expectation. 
\end{lemma}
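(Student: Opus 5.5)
We bound the number of Phase~1 rounds by the maximum of $|\app'| \le n$ i.i.d.\ geometric random variables with success probability $\tfrac12$, each capped at $k$. Assume $k$-eventual-happiness for applicants and $m\ge n+k-1$. Then $\app'=\app$, because $\min\{n,m-(k-1)\}=n$. By \cref{lemma:parallel-matched-stays-matched}, every applicant who is still unmatched receives, in every round, an interview with an unmatched position she has not met before. She becomes matched, and stays matched for the rest of Phase~1, exactly in the first round in which she interim likes the position she interviews with.

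We use the quantile-table coupling from the proofs of \cref{lem:happiness-app-for-general-k,lem:mutual-happiness-app}. Applicant $a_i$'s $q$-th interview value is determined by an independent uniform quantile $w_{i,q}$, and she interim likes that position iff $w_{i,q}>\tfrac12$. So the round $R_i$ in which $a_i$ gets matched is the index of the first $q$ with $w_{i,q}>\tfrac12$. Unconditionally, $R_i$ is geometric with parameter $\tfrac12$. Conditioning on $D$ (the event that $k$-eventual-happiness holds) restricts each $R_i\le k$ independently across $i$, since $D=\bigcap_i D_i$ and the rows of the table are independent. The number of Phase~1 rounds is $Y_1=\max_i R_i$.

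Next we compute $\mathbb{E}[\max_i R_i\mid D]$ by the tail sum $\sum_{r\ge 0}\Pr(\max_i R_i>r\mid D)$. For $r<k$, the union bound gives
\[
\Pr(R_i>r\mid D_i)=\frac{2^{-r}-2^{-k}}{1-2^{-k}}\le 2^{-r},
\]
and the conditional probability is $0$ for $r\ge k$. Hence $\Pr(\max_i R_i>r\mid D)\le\min\{1,n2^{-r}\}$. Split the sum at $r_0=\lceil\log n\rceil$. The terms with $r<r_0$ contribute at most $r_0\le \log n+1$. The terms with $r\ge r_0$ form a geometric series bounded by $\sum_{j\ge0}n2^{-r_0-j}\le 2$. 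Altogether $\mathbb{E}[Y_1\mid D]\le \log n+3$, which is the claimed bound.

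The main obstacle is conceptual rather than computational. We must justify that, conditioned on $D$, the $R_i$ are independent and that the round count is exactly $\max_i R_i$, with no rounds wasted. This needs \cref{lemma:parallel-matched-stays-matched}(1), which says the maximum matching in Subroutine~\ref{algorithm:pick-interviews-bilateral-equivalent} covers all of $\app''$ each round. It also needs the coupling, so that which position an applicant meets does not affect whether she likes it; this holds because the instance is bilaterally ex-ante equivalent and the median equals the expectation. If the statement's ``in expectation'' is meant unconditionally, the same computation applies with $R_i$ uncapped geometric, and the bound $\log n+3$ still holds.
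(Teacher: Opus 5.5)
Your proposal is correct and follows essentially the same route as the paper. The paper uses the tail-sum formula for the number of Phase~1 rounds, bounds $\Pr(R\ge q\mid D)$ by $\mathbb{E}[T_q\mid D]$ via Markov (your union bound over the $R_i$ is exactly this), uses the geometric decay $\mathbb{E}[T_q\mid D]<n/2^{q-1}$, and splits the sum near $\log n$ to get $3+\log n$. Your explicit conditional tail $\Pr(R_i>r\mid D_i)=\frac{2^{-r}-2^{-k}}{1-2^{-k}}\le 2^{-r}$ establishes the halving step more cleanly than the paper's recursion $\mathbb{E}[T_q\mid D]<\tfrac12\mathbb{E}[T_{q-1}\mid D]$.
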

\begin{proof}
By \cref{lemma:parallel-matched-stays-matched}, \cref{algorithm:parallel-bilateral-instantiation} terminates once every applicant has interviewed a position they interim like. Let $D$ denote the event that the $k$-eventual happiness condition holds for all applicants. Let $T_q$ denote the number of applicants interviewing in round $q$. For Phase~1 to continue to round $q$, there must be applicants who have not yet found a position they interim like. We initialize $T_1 = n$ (since $m \ge n+k-1$, all applicants participate initially). For $q\leq k$, let $X_{i,q}$ be the indicator variable that applicant $a_i$ does not interim like the position interviewed in round $q$. From the proof of \cref{lem:number-of-interviews-before-t0}, we know that $\Pr(Z_{i,q} \mid D) \ge \frac{1}{2}$, where $Z_{i,q} = 1 - X_{i,q}$ is the indicator that the applicant does interim like the position. Consequently, $\Pr(X_{i,q} \mid D) < \frac{1}{2}$. Note that if some applicant $i$ interviewed in iteration $q$, then $X_{i,q-1} =1.$ Therefore, the expected number of applicants interviewing in round $q$ is:
$$\mathbb{E}[T_q\mid D] = \sum_{i } \Pr(X_{i,q-1} \mid D) < \frac{1}{2} \mathbb{E}[T_{q-1}\mid D]$$
Since $T_1 = n$,
$\mathbb{E}[T_q\mid D] < \frac{n}{2^{q-1}}$ or every $q>1$.
Let $R$ be the random variable representing the total number of rounds in Phase~1. The phase continues to round $q$ only if there is at least one applicant interviewing in that round. Thus, the event $R \ge q$ implies $T_q \ge 1$. By the monotonicity of probability and Markov's inequality, we have:$$\Pr(R \ge q \mid D) \le \Pr(T_q \ge 1 \mid D) \le \frac{\mathbb{E}[T_q \mid D]}{1} = \mathbb{E}[T_q \mid D]$$Using the tail sum formula for expectation, we split the summation at the index $1+\lfloor \log n \rfloor$. For the first part of the sum, we bound the probability by $1$: 
$$\begin{aligned}
\mathbb{E}[R \mid D] &= \sum_{q=1}^\infty \Pr(R \ge q \mid D) \\
&= \sum_{q=1}^{1+\lfloor\log n\rfloor} \Pr(R \ge q \mid D) + \sum_{q=2+\lfloor\log n\rfloor}^\infty \Pr(R \ge q \mid D) \\
&\le \sum_{q=1}^{1+\lfloor\log n\rfloor} 1 + \sum_{q=2+\lfloor\log n\rfloor}^\infty \mathbb{E}[T_q \mid D]
\end{aligned}$$

Substituting the bound $\mathbb{E}[T_q \mid D] < \frac{n}{2^{q-1}}$, the first term sums to $1+\lfloor\log n\rfloor$, and the second term becomes a geometric series:$$\begin{aligned}
\mathbb{E}[R \mid D] &\le (1+\lfloor\log n\rfloor) + \sum_{q=2+\lfloor\log n\rfloor}^\infty \frac{n}{2^{q-1}} = (1+\lfloor\log n\rfloor) + n \cdot \sum_{j=1+\lfloor\log n\rfloor}^\infty \frac{1}{2^j}
\\&\le 1+\log n +  \frac{n}{2^{\lfloor\log n\rfloor}}
\le  1+\log n +2 = 3+ \log n
\end{aligned}$$

\end{proof}

\begin{lemma}\label{lem:rounds-k-log-is-log}
Consider a bilaterally ex-ante equivalent {instance} with $n$ applicants and $ m \ge n +  \lceil 10 \log n \rceil -1$ positions. The expected number of interview rounds in \cref{algorithm:parallel-bilateral-instantiation} is at most {$4+\log n$}.
\end{lemma}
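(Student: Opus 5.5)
We will prove \cref{lem:rounds-k-log-is-log} by conditioning on $k$-eventual-happiness for applicants and bounding the contribution of its failure separately.

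\textbf{Setup.} By line~1 of \cref{algorithm:parallel-bilateral-instantiation}, $k=\max\{\lceil 10\log n\rceil, m-n+1\}$. Since $m\ge n+\lceil 10\log n\rceil-1$, we get $k=m-n+1$. Hence $m-(k-1)=n$, so $\app'=\app$ and all applicants take part in Phase~1. Let $D$ be the event that $k$-eventual-happiness for applicants holds, and let $Y$ be the number of interview rounds. The law of total expectation gives
\[
\mathbb{E}[Y]\le \mathbb{E}[Y\mid D]+\mathbb{E}[Y\mid \overline{D}]\Pr(\overline{D}).
\]

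\textbf{The main term.} Under $D$, \cref{cor:happy-k-rounds} shows that Phase~3 is never reached. By \cref{lemma:parallel-matched-stays-matched}, every applicant in $\app'=\app$ ends Phase~1 matched, since she matches to the first position she interim likes. Phase~2, the call to \cref{algorithm:serial-adaptive}, then starts with no unmatched applicants and performs zero interviews. So under $D$, $Y$ equals the number of Phase~1 rounds. The hypotheses of \cref{lem:phase1-termination} hold (in particular $m\ge n+k-1$), so $\mathbb{E}[Y\mid D]\le 3+\log n$.

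\textbf{The failure term.} On $\overline{D}$, the algorithm may run the fallback, which takes at most $m$ rounds of \cref{algorithm:all-interviews-algorithm}. It may also run Phase~2 sequentially, which takes at most $nm$ rounds because each round conducts at least one new interview. Phase~1 itself has at most $m$ rounds, since every round gives each participating applicant a new interview. So $Y\le nm+2m\le 3nm$ deterministically. By \cref{cor:happiness-app-for-general-k}, $\Pr(\overline{D})\le n/2^k$. It therefore suffices to show $3nm\cdot n/2^{k}\le 1$, which gives $\mathbb{E}[Y]\le 4+\log n$.

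We split on $m$ as in the proof of \cref{lem:2n-for-big-m}.
\begin{itemize}
\item If $m\le n^{6}$, then $2^k\ge 2^{10\log n}=n^{10}$. The term is at most $3n^2\cdot n^6/n^{10}=3/n^2\le 1$ for $n\ge 2$.
\item If $m>n^6$, then $k=m-n+1\ge m-m^{1/6}+1$. The term is at most $3m^{3}/2^{m-m^{1/6}}$, which is tiny.
\end{itemize}

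\textbf{Main obstacle.} The delicate point is the additive constant. The bound must be exactly $4+\log n$, so the failure term must be at most $1$ for every $n\ge 2$, not only asymptotically. The crude bound $nm$ on $Y$, combined with $n/2^k$, needs checking at small $n$.

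If the estimate $n/2^k$ is too loose at small $n$, I would first try the exact expression from \cref{lem:happiness-app-for-general-k}, namely $1-p_D=1-(1-2^{-k})^n$. As a second fallback, I would use the tighter bound $Y\le m+nm$ instead of $3nm$. Either refinement should absorb the slack, since $2^{-k}\le n^{-10}$.
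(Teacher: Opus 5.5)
Your proof is correct and is essentially the paper's argument: condition on $k$-eventual-happiness with $k=m-n+1$ (so $\app'=\app$), use \cref{lem:phase1-termination} for the $3+\log n$ term, and absorb the failure event as a crude deterministic bound on the rounds times $n/2^k$, splitting on whether $m$ is polynomial in $n$ (the paper uses the bound $nm$ and threshold $n^7$ where you use $3nm$ and $n^6$). One justification slip: on $\overline{D}$, Phase~1 is \emph{not} necessarily limited to $m$ rounds, because an applicant who has interviewed every position can propose to a position she dislikes and displace a matched applicant, who then resumes interviewing; $Y\le 3nm$ still holds, though, since every Phase~1 or Phase~2 interview round conducts at least one new interview (so these rounds total at most $nm$) and the fallback adds at most $m$ rounds.
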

\begin{proof}
    For a given $n$ and $m\ge n+\lceil10\log n\rceil-1$, let $k=m-n+1$.  
    Let $D$ be the probability that $k$-eventual-happiness for applicants holds and let $p_D$ be the probability for that event. Let $Y$ be a random variable counting the number of interview rounds conducted by \cref{algorithm:parallel-bilateral-instantiation}. By the law of total expectation: 
    \begin{equation}\label{eq:condition-on-k-happ-rounds}  
    \begin{split}
        \mathbb{E}\left[Y\right]&=\mathbb{E}\left[Y \mid D\right]\Pr(D)+\mathbb{E}\left[Y \mid \overline{D}\right]\Pr(\overline{D})
        {\le\mathbb{E}\left[Y \mid D\right]+\mathbb{E}\left[Y \mid \overline{D}\right]\Pr(\overline{D})}
    \end{split}
    \end{equation}
    Since $m\ge n+ k -1$, the algorithm stops before reaching Phase~2. Consequently, by \cref{lem:phase1-termination}, we have $\mathbb{E}\left[Y \mid D\right]\le 3+\log n$. Furthermore, if $k$-eventual-happiness does not hold, the number of interview rounds is at most $ n\cdot m$.
    Therefore, by \cref{eq:condition-on-k-happ-rounds}:
    \begin{equation}\label{eq:after-numbers-rounds}
        \begin{split}
            \mathbb{E}\left[Y\right] &\leq\mathbb{E}\left[Y \mid D\right] +\mathbb{E}\left[Y \mid \overline{D}\right]\Pr(\overline{D})
            \\&\leq 3+\log n+n\cdot m \cdot \Pr(\overline{D})
        \end{split}
    \end{equation}
    By \cref{cor:happiness-app-for-general-k}, it holds that $ \Pr(\overline{D})=1-p_D\le\frac{n}{2^k}$.
    Thus, by \cref{eq:after-numbers-rounds}, 
    \begin{equation}\label{eq:before-two-cases}
           \mathbb{E}\left[Y\right]\le 3+\log n+\frac{n\cdot n\cdot m}{2^k}.
    \end{equation}
    When $m\le n^7$, since $k\ge\lceil10\log n\rceil$,
    \begin{equation*}
        \begin{split}
        \mathbb{E}\left[Y\right]&\le 3+\log n+\frac{ n^9}{2^{\lceil 10\log n\rceil}}\le 3+\log n+\frac{n^9}{2^{ 10\log n}} \\
        &= 3+\log n+\frac{n^9}{n^{10}}= 3+\log n+\frac{1}{n} \\
        &\le 4+\log n
    \end{split}
    \end{equation*}
Else, since $n\ge2$ and $m>n^7$, it holds that $m>2^7$ and that $k=m-n+1> m-m^{\left(\frac{1}{7}\right)}$. Therefore, $ m^3\le 2^{m-m^{1/7}}$ and $\frac{n^2\cdot m}{2^k}<{\frac{ m^3}{2^{m-m^{1/7}}}<1}$. In total, by \cref{eq:before-two-cases}, $\mathbb{E}\left[Y\right] \leq 4 + \log n$.
\end{proof}

From \cref{lem:rounds-log3n,lem:rounds-k-log-is-log}, and since {$O(\log ^3 n)$ also upper bounds the expression $4+\log n$}, we get the following lemma immediately. 
\LogRounds*
\subsubsection{Proof of \cref{prop:2-interviews} (Bounding the Expected Number of Interviews)}\label{appendix:2-parallel}
To prove \cref{prop:2-interviews}, we next show that the claims stated in the lemmas in \cref{appendix:serial-all-iid} on the expected number of interviews of the sequential adaptive algorithm \cref{algorithm:serial-adaptive}, also hold for \cref{algorithm:parallel-bilateral-instantiation}. \begin{lemma}
    When $k$-eventual-happiness for applicants holds for the execution of \cref{algorithm:parallel-bilateral-instantiation}, the expected number of interviews conducted in Phase~1 is $2n$. This statement coincides with \cref{lem:number-of-interviews-before-t0}, { but for \cref{algorithm:parallel-bilateral-instantiation} rather than for \cref{algorithm:serial-adaptive}.}
\end{lemma}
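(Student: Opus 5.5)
The plan is to mirror the proof of \cref{lem:number-of-interviews-before-t0}, using the structural facts for Phase~1 from \cref{lemma:parallel-matched-stays-matched} instead of those for the sequential algorithm. First, under $k$-eventual-happiness for applicants, \cref{lemma:parallel-matched-stays-matched} gives three facts. Every unmatched applicant in $\app'$ interviews with an unmatched position in every round until she meets a position she interim likes. At that moment she is matched to that position and stays matched through Phase~1. No applicant outside $\app'$ interviews in Phase~1, and each matched applicant stops interviewing. Hence the number of Phase~1 interviews is $\sum_{a_i\in\app'} N_i$, where $N_i$ is the index of the first of $a_i$'s interviews whose result she interim likes. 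By $k$-eventual-happiness, $N_i\le k$.

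Next, I will bound $\mathbb{E}[N_i\mid D]$. I will use the same quantile-table coupling as in \cref{lem:happiness-app-for-general-k}: the $q$-th interview of $a_i$ uses the pre-sampled quantile $w_{i,q}$, regardless of which position is assigned by Subroutine~\ref{algorithm:pick-interviews-bilateral-equivalent}. This is legitimate because the matching in Subroutine~\ref{algorithm:pick-interviews-bilateral-equivalent} depends only on past outcomes, and $\rho_{i,j}=\frac12$ for every pair. Then $N_i$ is the first index $q$ with $w_{i,q}>\frac12$. The event $D=\bigcap_i D_i$ factorizes across applicants, so $\Pr(\cdot\mid D)$ restricted to row $i$ equals $\Pr(\cdot\mid D_i)$.

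Conditioned on $D_i$, $N_i$ is a geometric variable with parameter $\frac12$ truncated to $\{1,\dots,k\}$. Its expectation is at most the untruncated expectation $2$, since $\Pr(N_i=q\mid D_i)=2^{-q}/(1-2^{-k})$ and $\sum_{q\le k} q2^{-q}\le 2(1-2^{-k})$. Alternatively, I can reuse the Bayes computation $\Pr(Z_{i,q}=1\mid D_i)\ge\frac12$ from \cref{lem:number-of-interviews-before-t0}, though the direct truncated-geometric bound is cleaner and avoids the subtlety of conditioning on survival.

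Finally, linearity gives an expected number of Phase~1 interviews of at most $2|\app'|\le 2n$, which matches the statement's ``$2n$'', read as an upper bound as in \cref{lem:number-of-interviews-before-t0}.

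The main obstacle is justifying the coupling in the parallel setting. In each round the set of interview pairs is chosen by a maximum matching that depends on other applicants' outcomes, so I must argue that the $q$-th outcome of $a_i$ is still a fresh Bernoulli$(\frac12)$ independent of the past. This holds because each pair is interviewed at most once and every $F_{i,j}$ has median $V$. I would state this explicitly before applying the truncated-geometric computation.
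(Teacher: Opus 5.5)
Your proposal is correct and follows essentially the paper's route: the paper simply reuses the proof of \cref{lem:number-of-interviews-before-t0}, with \cref{lemma:parallel-matched-stays-matched} supplying the Phase~1 structural facts, and then bounds each applicant's interview count by a geometric count conditioned on $D_i$. Your explicit truncated-geometric computation, $\mathbb{E}[N_i\mid D_i]=\bigl(2-(k+2)2^{-k}\bigr)/(1-2^{-k})\le 2$, is a cleaner replacement for the paper's Bayes step, which bounds $\Pr(Z_{i,q}=1\mid D_i)$ rather than the per-trial conditional success probability given earlier failures that the ``at most $2$ trials'' conclusion actually needs.
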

\begin{proof}
    The proof follows immediately from \cref{lem:number-of-interviews-before-t0}, as the sequential lemmas correspond to lemmas of Phase~1.
\end{proof}
\begin{lemma}
    Consider any bilaterally ex-ante equivalent {instance} with  $n$ applicants and $n+\lceil10\log n\rceil-1>m\ge n$ positions.
    The expected number of interviews in \cref{algorithm:parallel-bilateral-instantiation} is  $2\cdot n+O(\log^3n)$. 
    This statement coincides with \cref{lem:2n-for-small-m}, but for \cref{algorithm:parallel-bilateral-instantiation} rather than for \cref{algorithm:serial-adaptive}.
\end{lemma}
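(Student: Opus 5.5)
We will mirror the proof of \cref{lem:2n-for-small-m} essentially line by line, replacing each ingredient about \cref{algorithm:serial-adaptive} with its counterpart for \cref{algorithm:parallel-bilateral-instantiation}. Set $k=\lceil 10\log n\rceil$; since $m<n+k-1$, line~1 of \cref{algorithm:parallel-bilateral-instantiation} sets $k$ to this value, and $\app'$ consists of the first $m-k+1$ applicants. The end of Phase~1 therefore plays the role of the transition point $t_0$: exactly $k-1$ positions are unmatched there, and the applicants outside $\app'$ have not interviewed (\cref{obs:serial-before-transition}). Let $Y$ be the total number of interviews, and let $D$ be the event that $k$-eventual-happiness for applicants holds. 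We split $\mathbb{E}[Y]\le \mathbb{E}[Y\mid D]+nm\Pr(\overline D)$. This is valid because, even in the fallback Phase~3, the number of interviews is at most $nm$, since each pair interviews at most once.

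Given $D$, we write $Y=Y_1+Y_2$, where $Y_1$ counts Phase~1 interviews and $Y_2$ counts Phase~2 interviews; by \cref{cor:happy-k-rounds}, Phase~3 is never reached. To bound $\mathbb{E}[Y_1\mid D]\le 2n$ we use the preceding lemma, which is the hybrid analogue of \cref{lem:number-of-interviews-before-t0}. By \cref{lemma:parallel-matched-stays-matched}, each applicant in $\app'$ interviews only until her first interim-liked position. Conditioned on $D_i$, each of these interviews is a success with probability at least $1/2$, exactly as in the sequential proof. For $Y_2$ we split further on $E_G$, the event that the realization is not $k$-bad. Since $E_G\subseteq D$, we repeat the computation leading to \cref{equation1}, now using \cref{lemma:kr} in place of \cref{lem:interviews-after-t0}. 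This gives $\mathbb{E}[Y_2\mid D]\le \frac{1}{p_D}\bigl(k^3+nm(p_D-1+p_{bad})\bigr)$.

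Adding the pieces as in \cref{lem:2n-for-small-m}, the $(1-p_D)$ terms cancel and we obtain $\mathbb{E}[Y]\le 2n+\frac{1}{p_D}(k^3+nm\,p_{bad})$. The bad-event probabilities need some care. \cref{lem:prob-k-bad} and \cref{lem:happiness-app} were stated for the run of \cref{algorithm:serial-adaptive}, but their proofs use only the quantile-table coupling, indexed by each agent's $q$-th interview. The same coupling applies verbatim to the hybrid run, with one caveat: the $d$-positions-unhappy bound (\cref{lem:prob-d-unhappy}) relies on each position at the transition being matched to the first applicant who likes it, independently of the position's own value. The hybrid algorithm has this property by part~3 of \cref{lemma:parallel-matched-stays-matched}. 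Hence $p_{bad}\le 3/n^3$ and $p_D\ge 1-n^{-9}$. With $m\le n+k-1$, this yields $\mathbb{E}[Y]\le 2n+2(k^3+3)=2n+O(\log^3 n)$.

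The main obstacle is justifying that the probabilistic lemmas transfer to the hybrid execution. In Phase~1, many interviews happen simultaneously and the order of interviews differs from the sequential run, so we must check that the coupling argument still produces i.i.d.\ Bernoulli$(1/2)$ indicators for the positions' opinions at the transition point. We will argue this directly: a position's value for its matched applicant is drawn from the position's side of the quantile table and is never consulted by Phase~1 decisions under $D$. Everything else is a mechanical substitution into the sequential proof.
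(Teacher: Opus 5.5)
Your proposal is correct and takes the paper's route. The paper's proof is a one-line appeal saying the argument of \cref{lem:2n-for-small-m} carries over because the sequential lemmas have Phase~1 counterparts. You spell out that substitution (the hybrid analogue of \cref{lem:number-of-interviews-before-t0}, \cref{lemma:kr}, \cref{cor:happy-k-rounds}) and also justify why \cref{lem:prob-k-bad} and \cref{lem:prob-d-unhappy} apply to the hybrid run, a step the paper uses without comment.
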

\begin{proof}
    The proof follows immediately from \cref{lem:number-of-interviews-before-t0}, as the sequential lemmas correspond to lemmas of Phase~1.
\end{proof}
\begin{lemma}
    Consider any bilaterally ex-ante equivalent {instance} with  $n$ applicants and $m\ge n+\lceil10\log n\rceil-1$ positions.
    The expected number of interviews in \cref{algorithm:parallel-bilateral-instantiation} is {$2\cdot n+O(\log^3n)$}. This statement coincides with \cref{lem:2n-for-big-m}, {but for \cref{algorithm:parallel-bilateral-instantiation} rather than for \cref{algorithm:serial-adaptive}.} 
\end{lemma}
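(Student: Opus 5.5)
The plan is to mirror the proof of \cref{lem:2n-for-big-m} directly for \cref{algorithm:parallel-bilateral-instantiation}. Set $k=m-n+1\ge\lceil 10\log n\rceil$, which equals the value $k=\max\{\lceil10\log n\rceil,m-n+1\}$ chosen in line~1 of the algorithm. Since $m-(k-1)=n$, we have $\app'=\app$, so every applicant takes part in Phase~1. Let $D$ be the event that $k$-eventual-happiness for applicants holds, and let $Y$ count the interviews. By total expectation,
\[
\mathbb{E}[Y]\le \mathbb{E}[Y\mid D]+\mathbb{E}[Y\mid\overline{D}]\Pr(\overline{D}).
\]

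For the first term, condition on $D$. By \cref{cor:happy-k-rounds}, Phase~3 is never reached. Because $\app'=\app$, once Phase~1 ends no unmatched applicant remains, so Phase~2 (the sequential call) performs no interviews. By \cref{lemma:parallel-matched-stays-matched}, each applicant interviews only with unmatched positions in Phase~1, stops as soon as she interim likes a position, and stays matched to it. Hence the number of interviews of $a_i$ is the index of her first success in the row of pre-sampled quantiles, truncated at $k$, exactly as in the proof of \cref{lem:number-of-interviews-before-t0}.

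That proof shows that, conditioned on $D_i$, each interview succeeds with probability at least $\frac12$. Independence across applicants gives $\mathbb{E}[\text{interviews of }a_i\mid D]\le 2$. I would make this precise by noting that the conditional law of the index of the first success (a geometric variable conditioned on being at most $k$) is stochastically dominated by $\mathrm{Geom}(1/2)$. Summing over applicants gives $\mathbb{E}[Y\mid D]\le 2n$.

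For the second term, use the trivial bound $Y\le nm$, since each pair interviews at most once, including in the fallback Subroutine~\ref{algorithm:all-interviews-algorithm}. Together with $\Pr(\overline D)\le n/2^k$ from \cref{cor:happiness-app-for-general-k}, this gives $\mathbb{E}[Y]\le 2n+n^2m/2^k$.

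The remaining step is the same case split as in \cref{lem:2n-for-big-m}. If $m\le n^7$, then $n^2m/2^k\le n^9/n^{10}=1/n$. If $m>n^7$, then $k=m-n+1\ge\sqrt m$, so $n^2m/2^k\le m^3/2^{\sqrt m}=O(1)$. Either way $\mathbb{E}[Y]=2n+O(1)\subseteq 2n+O(\log^3 n)$.

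The step I expect to need the most care is the claim that under $D$ the parallel execution does not waste interviews. This requires every interview in Phase~1 to be with an unmatched, not-yet-interviewed position, so that each interview uses a fresh quantile column of that applicant and the coupling with the quantile table holds. Item~1 of \cref{lemma:parallel-matched-stays-matched}, whose proof rests on Hall's condition, gives this. I would also check that the maximum matching chosen in Subroutine~\ref{algorithm:pick-interviews-bilateral-equivalent} does not bias which quantile is used. It cannot, because the $q$-th interview of $a_i$ always reads column $q$ regardless of which position is involved.
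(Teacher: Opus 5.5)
Your proof is correct and follows the same route as the paper. The paper's proof simply asserts that the argument of \cref{lem:2n-for-big-m} transfers to \cref{algorithm:parallel-bilateral-instantiation} because the Phase~1 lemmas mirror the sequential pre-transition lemmas. You carry that transfer out explicitly: you note that $k=m-n+1$ forces $\app'=\app$, so Phase~2 is empty under $D$, and you justify the per-applicant bound of $2$ by stochastic domination of the truncated geometric by $\mathrm{Geom}(1/2)$. This fills in details the paper leaves implicit.
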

\begin{proof}
    The proof follows immediately from \cref{lem:2n-for-big-m}, as the sequential lemmas correspond to lemmas of Phase~1.
\end{proof}
{\begin{lemma}\label{lem:agent-k-interviews-parallel}
Consider any bilaterally ex-ante equivalent instance with $n$ applicants and $m= n$ positions.
In \cref{algorithm:parallel-bilateral-instantiation}, with probability $1 - O(n^{-3})$, every agent conducts at most $O(\log n)$ interviews. This statement coincides with \cref{lem:agent-k-interviews}, but for \cref{algorithm:parallel-bilateral-instantiation} rather than \cref{algorithm:serial-adaptive}.
\end{lemma}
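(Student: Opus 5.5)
The plan is to mirror the proof of \cref{lem:agent-k-interviews} in its first case, since $m=n$ falls within the regime $n+\lceil 10\log n\rceil-1>m\ge n$. Set $k=\lceil 10\log n\rceil$ and let $E_G$ be the event that the realization is not $k$-bad. By \cref{lem:prob-k-bad}, for large enough $n$ we have $\Pr(\overline{E_G})\le 3/n^3$. It therefore suffices to show that on $E_G$ every applicant and every position of \cref{algorithm:parallel-bilateral-instantiation} participates in at most $k=O(\log n)$ interviews. Here the interview values are modeled through the same pre-sampled quantile tables indexed by the agent and the interview count, so the events $B_1,B_2,C$ defining $E_G$ are well defined for the hybrid execution as well.

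On $E_G$, \cref{cor:bad-k-rounds} guarantees that Phase~3, where all remaining interviews are conducted, is never reached. So only Phases~1 and~2 contribute interviews.

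\textbf{Positions.} In Phase~1, by \cref{lemma:parallel-matched-stays-matched}, only unmatched positions interview. A position that interviews an applicant such that they interim like each other becomes matched (\cref{cor:hybrid-one-sided-propose-accept}) and refuses further interviews (\cref{cor:hybrid-parallel-matched-stays-matched}). In Phase~2 the same holds through \cref{lem:position-stops-interview-after++}, which transfers because Phase~2 is exactly \cref{algorithm:serial-adaptive} started from the current state; \cref{lemma:kr} packages this transfer. The $k$-eventual-mutual-happiness condition, part of $E_G$, ensures that within its first $k$ interviews every position meets an applicant it mutually likes. Hence each position interviews at most $k$ times.

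\textbf{Applicants.} We invoke \cref{lem:applicant-do-max-k-interviews}, whose proof relies on \cref{lem:++dont-break}. That lemma in turn rests on the chain \cref{lem:new-dont-propose-to++}, \cref{cor:why-++-dont-break}, \cref{lem:interviewed-applicant-number}, which \cref{lemma:kr} asserts carries over to the hybrid algorithm on $E_G$. For the Phase~1 portion, the argument is direct: an applicant in $\app'$ stops interviewing once she interim likes a position (\cref{lemma:parallel-matched-stays-matched}), which by $k$-eventual-happiness happens within $k$ interviews. A mutually liked pair formed in Phase~1 is never broken afterwards (\cref{lem:++dont-break}). Therefore every applicant performs at most $k$ interviews in total, counting Phases~1 and~2 together.

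\textbf{Main obstacle.} The delicate step is making sure the transfer of the post-$t_0$ lemmas is legitimate. The hybrid algorithm enters Phase~2 from a state produced by parallel rounds rather than by sequential iterations. The Phase~2 analysis (\cref{lem:interviewed-applicant-number} and \cref{lem:new-dont-propose-to++}) needs three properties at the start of Phase~2:
\begin{enumerate}
\item exactly $k-1$ positions are unmatched, which holds since $|\app'|=m-k+1$;
\item the applicants outside $\app'$ have never interviewed;
\item at least $k^3$ positions do not interim like their match.
\end{enumerate}
The third property is where I would be careful. Each position in Phase~1 is matched to the first applicant who interim likes it, independently of the position's own value, so the Chernoff argument of \cref{lem:prob-d-unhappy} applies unchanged. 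Combining the two bounds gives, with probability at least $1-3/n^3=1-O(n^{-3})$, that every agent conducts at most $O(\log n)$ interviews.
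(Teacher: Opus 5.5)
Your proposal is correct and follows the paper's proof. You condition on the realization not being $k$-bad for $k=\lceil 10\log n\rceil$, which has probability at least $1-3/n^3$ by \cref{lem:prob-k-bad}; you bound each position's interviews by $k$ via \cref{cor:hybrid-parallel-matched-stays-matched} and $k$-eventual-mutual-happiness; and you bound each applicant's interviews by carrying \cref{lem:applicant-do-max-k-interviews} over to the hybrid algorithm. Your explicit check that the start of Phase~2 reproduces the state at $t_0$ is something the paper leaves implicit when it applies \cref{lem:prob-k-bad} to the hybrid execution, so it strengthens the same argument rather than changing it. That check covers exactly $k-1$ free positions, no interviews yet for applicants outside $\app'$, and the Chernoff bound for $d$-positions unhappy still applying.
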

\begin{proof}
   When $m=n$ the algorithm sets $k=\lceil10\log n\rceil$. Assume that the realization is not $k$-bad. In \cref{algorithm:parallel-bilateral-instantiation}, by \cref{cor:hybrid-parallel-matched-stays-matched}, a position stops interviewing once it meets an applicant they interim like each other. Under the not $k$-bad assumption, this occurs after at most $k$ interviews. 
  Moreover, by \cref{lem:hybrid-position-stops-interview,cor:hybrid-applicant-like-propose,lem:hib-minus-breaks+,lem:hib:before-to-no-propose-to-no-like,cor:hybrid-one-sided-propose-accept,cor:hybrid-parallel-matched-stays-matched}
  , when the realization is not $k$-bad, \cref{lem:applicant-do-max-k-interviews} applies for \cref{algorithm:parallel-bilateral-instantiation}. Consequently, every applicant will participate in at most  $k$ interviews. Therefore, when the realization is not $k$-bad, which holds a probability of at least $1-\frac{3}{n^3}$ for large enough $n$ (by \cref{lem:prob-k-bad}), every agent conducts at most $k$ interviews, which is $ O(\log n)$.
\end{proof}}
From all the lemmas above, we obtain the correctness of the following result.
\TwoInterviewsParallel*

\section{Proof of the Lower Bound from Section~3.4 (Minimizing the Number of Interviews)}\label{appendix:lower-bouds}

In this section we present the proof of \cref{thm:lower-bound-iid}. 

\LowerBoundIid*
\begin{proof}
{Let $c=\min\{m,n\}$, and let $\varepsilon=\frac{2}{c}>\frac{1}{c}$, and note that as $\varepsilon\cdot c=2$, the algorithm performs no more than $(2-\varepsilon)\cdot c$ interviews.} A sufficient condition for the matching to be non-interim stable is the existence of an applicant-position pair $(a_i, p_j)$ such that:
\begin{enumerate}
\item $a_i$ and $p_j$ have not interviewed each other;
\item $a_i$ has interviewed only once and did not interim like the position she interviewed with; and
\item $p_j$ has interviewed only once and did not interim like the applicant it interviewed.
\end{enumerate}
If these conditions hold, both agents strictly interim prefer each other to their final match (that is, $a_i \succ_{p_j}^Z \mu(p_j)$ and $p_j \succ_{a_i}^Z \mu(a_i)$). Consequently, $(a_i, p_j)$ forms a blocking pair.
Let $B$ denote the event that the matching is not interim stable, and let $C$ be the event that the sufficient condition described above occurs. Thus, $C \subseteq B$.
We assume that every matched agent participates 
in at least one interview (otherwise the matching is trivially interim unstable). We focus our analysis on the set of interviews performed by the $c$ matched agents on each side. Let $J$ be the random variable counting the number of applicants who, after their first interview, do not interim like the interviewed position. We observe that $\mathbb{E}[J] \geq \frac{c}{2}$. Consider the event where $J > (1-\varepsilon)\cdot c$. We claim that if this occurs, at least one of these applicants must have interviewed only once. 
To see this, suppose for the sake of contradiction that all the $J$ applicants interviewed at least twice, while the remaining $c-J$ applicants interviewed at least once. The total number of interviews would be at least $  2J + (c-J) = c + J  > c + (1-\varepsilon)\cdot c = (2-\varepsilon)\cdot c .$ This contradicts the lemma's assumption that the total number of interviews is at most $(2-\varepsilon)\cdot c$. Therefore, if $J > (1-\varepsilon)\cdot c$, there exists at least one applicant $a_i$ that interviewed only once and who does not interim like the position she interviewed with.

By symmetry, let $L$ be the random variable counting the number of positions who, after their first interview, do not interim like the interviewed applicant. Following the same logic, if $L > (1-\varepsilon)\cdot c+1$, there exists at least two positions that interviewed only once and who do not interim like the applicant they interviewed. Thus, at least one of them, denoted as $p_j$, did not interview with $a_i$, and together they form a blocking pair.
If both $\{J > (1-\varepsilon)\cdot c\}$ and $\{L > (1-\varepsilon)\cdot c + 1\}$ occur, we can guarantee a blocking pair. Notice that $J > (1-\varepsilon)\cdot c + 1$ implies $J > (1-\varepsilon)\cdot c$. {Moreover, since $\varepsilon>\frac{1}{c}$, it holds that $c>(1-\varepsilon)\cdot c+1$, and $J=c$ implies $J > (1-\varepsilon)\cdot c + 1$ and $L=c$ implies $L > (1-\varepsilon)\cdot c + 1$.} Thus

{$$ \{L =c\} \cap \{J =c\} \subseteq C \subseteq B $$}

Thus, the probability that the matching fails (is not interim stable) is at least the probability of the event $ \{L =c\} \cap \{J =c\}$.
The events $ \{L =c\}$ and $\{J =c\}$ are independent, since every matched agent necessarily had a first interview, and the result of that first interview is independent of the result of other agents' first interviews (as depending on any past history, every agent interim like the interviewed agent with probability of $\frac{1}{2}$). and thus:
\[
\begin{split}
    \Pr(B) &\geq \Pr(C) \\&\geq {\Pr\left(\{L =c\} \cap \{J =c\}\right)}\\&\ge\frac{1}{2^c}\cdot \frac{1}{2^c}=\frac{1}{2^{2c}}>0.
\end{split}\] 
\end{proof}

 \section{Deferred Proof and Example for Section \ref{sec:interview-system}}\label{appendix:interview-system}
\subsection{An Example Showing Necessity of the Condition in \cref{cor:opt-interviews}}
\label{appendix:example-counter}
\begin{example}
    Consider a bilaterally ex-ante equivalent {instance} with applicants $\mathcal{A} = \{a_1,\dots,a_5\}$ and positions $\mathcal{P} = \{p_1,\dots,p_5\}$. We assume that the prior distributions are uniform on $[0,1]$. Hence, the prior expected value for all pairs is $0.500$. The utilities from the applicants' and positions' perspectives following the execution of \cref{algorithm:serial-adaptive} are presented in \cref{tab:app-utilities,tab:pos-utilities}.

\begin{table}[h]
\caption{Applicants' utilities over positions {(prior expected values are highlighted)}.}
\label{tab:app-utilities}
\centering
\begin{tabular}{lccccc}
\toprule
 & $p_1$ & $p_2$ & $p_3$ & $p_4$ & $p_5$ \\
 \midrule
$a_1$ & 0.602 &  \textcolor{highlightcolor} {0.500} & \textcolor{highlightcolor} {0.500} & \textcolor{highlightcolor} {0.500} & 0.781 \\
$a_2$ & 0.791 & 0.121 & 0.040 & 0.371 & 0.303 \\
$a_3$ & 0.566 & 0.590 & \textcolor{highlightcolor} {0.500} & \textcolor{highlightcolor} {0.500} & 0.418 \\
$a_4$ & \textcolor{highlightcolor} {0.500} & \textcolor{highlightcolor} {0.500} & 0.894 & \textcolor{highlightcolor} {0.500} & \textcolor{highlightcolor} {0.500} \\
$a_5$ & 0.349 & 0.489 & \textcolor{highlightcolor} {0.500} & \textcolor{highlightcolor} {0.500} & 0.216 \\
\bottomrule
\end{tabular}
\end{table}

\begin{table}[h]
\caption{Positions' utilities over applicants {(prior expected values are highlighted)}.}
\label{tab:pos-utilities}
\centering
\begin{tabular}{lccccc}
\toprule
 & $a_1$ & $a_2$ & $a_3$ & $a_4$ & $a_5$ \\
 \midrule
$p_1$ & 0.409 & 0.142 & 0.900 & \textcolor{highlightcolor} {0.500} & 0.870 \\
$p_2$ & \textcolor{highlightcolor} {0.500} & 0.572 & 0.049 & \textcolor{highlightcolor} {0.500} & 0.122 \\
$p_3$ & \textcolor{highlightcolor} {0.500} & 0.153 & \textcolor{highlightcolor} {0.500} & 0.889 & \textcolor{highlightcolor} {0.500} \\
$p_4$ & \textcolor{highlightcolor} {0.500} & 0.991 & \textcolor{highlightcolor} {0.500} & \textcolor{highlightcolor} {0.500} & \textcolor{highlightcolor} {0.500} \\
$p_5$ & 0.475 & 0.486 & 0.573 & \textcolor{highlightcolor} {0.500} & 0.819 \\
\bottomrule
\end{tabular}
\end{table}

The interview algorithm outputs the matching:
    \[
    \mu = \{(a_1,p_5), (a_2,p_4), (a_3,p_1), (a_4,p_3), (a_5,p_2)\}.
    \]
    Note that applicant $a_5$ is matched to $p_2$ with a realized utility of $0.489$. Since $0.489 < 0.500$, $a_5$ does not interim like her match.

    When we run the applicant-proposing DA on these final utilities, the resulting applicant-optimal matching is:
    \[
     \mu^{\app} =\{(a_1,p_5), (a_2,p_1), (a_3,p_2), (a_4,p_3), (a_5,p_4)\}.
    \]
    This matching is not interim stable. Specifically, the pair $(a_5, p_4)$ is matched in $\mu^{\app}$, but $a_5$ never interviewed with $p_4$. The sequence of interviews is shown below. Here, $(a, p)$ denotes an interview between applicant $a$ and position $p$. The right-hand side indicates the tentative match formed after that sequence of interviews:
    \[
    \begin{aligned}
    & (a_1, p_1) &&  a_1 = \mu(p_1) \\
    & (a_2, p_2), (a_2, p_3), (a_2, p_4),(a_2, p_5), (a_2, p_1) &&  a_2 = \mu(p_4) \\
    & (a_3, p_2) &&  a_3 = \mu(p_2) \\
    & (a_4, p_3) &&  a_4 = \mu(p_3) \\
    & (a_5, p_5), (a_5, p_2), (a_5, p_1) && a_5 = \mu(p_2) \\
    & (a_3, p_5), (a_3, p_1) &&  a_3 = \mu(p_1) \\
    & (a_1, p_5) &&  a_1 = \mu(p_5)
    \end{aligned}
    \]
\end{example}
\subsection{Proof of \cref{prop:opt}}\label{appendix:interview-system-generic}
\GeneralDecouple*
\begin{proof}
    Let $\mu^{\app}$ denote the output of the applicant-proposing DA. A matching is defined as interim stable if (1) there are no blocking pairs with respect to the {final interim utility}, and (2) every matched pair has met for an interview. The first condition is satisfied since the DA algorithm guarantees stability with respect to the input preferences. We prove the second condition: for every $a_i$, the pair $(a_i, \mu^{\app}(a_i))$ has interviewed.
    First, observe that since $\mu$ is interim stable, it is stable with respect to the final interim utilities. Consequently, because $\mu^{\app}$ is the applicant-optimal stable matching, every applicant $a_i$ weakly prefers $\mu^{\app}(a_i)$ to $\mu(a_i)$.
    Now, suppose, towards a contradiction, that there exists an applicant $a_i$ such that the pair $(a_i,\mu^{\app}(a_i))$ did not interview. By assumption, $a_i$ strictly interim-prefers her match in $\mu$ to any position she did not interview, implying $\mu(a_i) \succ_{a_i}^Z \mu^{\app}(a_i)$, which contradicts the optimality of $\mu^{\app}(a_i)$. Therefore, $\mu^{\app}$ must consist only of interviewed pairs and is interim stable.
    Moreover, by transitivity, every applicant strictly prefers $\mu^{\app}(a_i)$ to any position she did not interview. Since the applicant-proposing DA proceeds in decreasing order of preference and terminates at $\mu^{\app}$, no applicant ever reaches the point of proposing to an uninterviewed position. Therefore, it suffices to execute the algorithm using preference lists restricted only to interviewed positions.
\end{proof}
\subsection{Proof of \cref{cor:opt-interviews}}\label{appendix:interview-system-cor}
\OptIsStable*
\begin{proof}
    By the definition of interim liking  when applicants view positions as ex-ante equivalent, every applicant $a_i$ interim strictly prefers $\mu(a_i)$ to any position she did not interview. The claim therefore follows directly from \cref{prop:opt}.
\end{proof}

\subsection{Proof of \cref{lem:no-extra-interviews}}\label{appendix:interview-system-proof}
\Decouple*
\begin{proof}
Assume that $ n + \lceil10\log n\rceil - 1> m$, and let $k=\lceil10\log n\rceil$ and assume that the realization is not $k$-bad. First, in \cref{algorithm:serial-adaptive}, by \cref{lem:applicant-matched-to-liked}, before time $t_0$ an applicant never proposes to a position she does not interim like. Moreover, when the realization is not $k$-bad, \cref{lem:new-dont-propose-to++} applies. Consequently, after $t_0$ no applicant proposes to a position she does not interim like.

We next consider \cref{algorithm:parallel-bilateral-instantiation}.
When the realization is not $k$-bad, \cref{cor:bad-k-rounds} implies that
the algorithm consists only of Phase~1 and Phase~2.
By \cref{lemma:parallel-matched-stays-matched}, during Phase~1 no applicant
is matched to a position she does not interim like.
Since all the {claims stated in the lemmas about the sequential algorithm also hold} for the hybrid algorithm
under a non-$k$-bad realization, the same conclusion applies to Phase~2 as
well: no applicant is matched to a position she does not interim like.

Therefore, when the realization is not $k$-bad (which holds a probability of at least $1-\frac{3}{n^3}$ for large enough $n$ by \cref{lem:prob-k-bad}) an
applicant never proposes to, and hence is never matched with, a position
she does not interim like.

Finally, consider the case $ m\ge n + \lceil10\log n\rceil - 1$. Assume that $k$-eventual happiness for applicants holds for $k=m-n+1\ge\lceil10\log n\rceil$. {By \cref{cor:k-happ-better-than-good} }this event occurs with a probability of {at least $1-\frac{1}{n^{9}}>1-\frac{3}{n^3}$.} Conditioned on this event, neither algorithm proceeds beyond
time $t_0$ (In \cref{algorithm:serial-adaptive} it follows from \cref{lem:applicant-matched-to-liked}, and in \cref{algorithm:parallel-bilateral-instantiation} since $\app'=\app$, it follows from \cref{cor:happy-k-rounds}).
By \cref{lem:applicant-matched-to-liked,lemma:parallel-matched-stays-matched},
it follows that no applicant is matched to a position she does not interim
like in this case either.

Overall, with a probability of at least $1-\frac{3}{n^3}$, both
\cref{algorithm:serial-adaptive,algorithm:parallel-bilateral-instantiation}
terminate with a matching in which every applicant interim likes her
assigned position. Together with \cref{cor:opt-interviews}, this completes the proof.
\end{proof}

\section{Simulations}\label{sec:simulations}
\subsection{Simulations of the Sequential Algorithm (\cref{algorithm:serial-adaptive})}\label{sec:serial-simulations}
\subsubsection{Simulation of the Bilaterally Ex-Ante Equivalent Setting}\label{appendix:simulations-iid-sequential}
{While the theoretical bound established in \cref{thm:2n} is asymptotic and applies only for sufficiently large $n$, our simulations show that the expected number of interviews per applicant is close to 2, even for very small markets (\cref{fig:plot1}).}
\begin{figure}[h]
    \centering
    \includegraphics[width=1\linewidth]{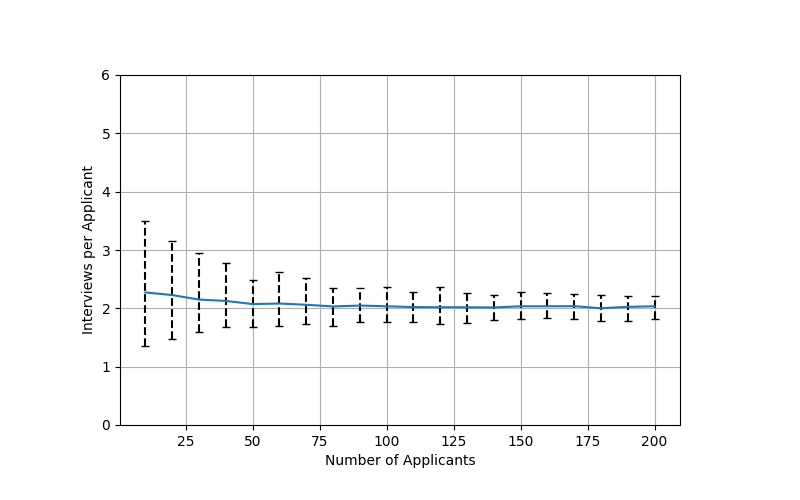}
    \caption{Simulation of \cref{algorithm:serial-adaptive}: the average number of interviews per applicant in a bilaterally ex-ante equivalent instance with $n$ applicants and $n$ positions. For all $i,j \in [n]$, the match values are drawn independently from $F_{i,j}=G_{j,i}=\mathcal{U}[0,1]$. Results are averaged over 100 trials{, with error bars indicating 95\% confidence intervals}. We note that the simulation outcomes are very similar when the distributions are replaced by distributions supported on two points. 
    }

    \label{fig:plot1}
\end{figure}
{\subsubsection{Simulation of a Two-Sided Ordered Setting}\label{appendix:simulations-exp-sequential}

Our simulations indicate that the expected number of interviews per applicant obtained by \cref{algorithm:serial-adaptive} remains constant even in a setting more general than the bilaterally ex-ante equivalent setting. Specifically, when applicants ex-ante agree on the order of positions, and positions ex-ante agree on the order of applicants. 
To illustrate this, \cref{fig:simulation-exp} presents the average number of interviews per applicant for a simulation of the setting of \cref{ex:exp-dist}, as a function of $n$. We observe that the average number of interviews per applicant is not much more than 4.
We present the construction used in the simulation presented in \cref{fig:simulation-exp} and the illustration of it in \cref{fig:dist_G}.\begin{restatable}{example}{ExpDist}\label{ex:exp-dist}
Given $n$ and $m$ we construct the value distributions as follows.  
For every applicant $i\in [n]$ and position $j\in [m]$, define

\begin{equation}\label{eq:f_ij}
    F_{i,j} =
\begin{cases}
2^{\,m-(j-1)} - \dfrac{j}{m+1}, & \text{with probability } \tfrac{1}{2}, \\[1.2ex]
\dfrac{j}{m+1}, & \text{with probability } \tfrac{1}{2}.
\end{cases}
\end{equation}

The expected value of $F_{i,j}$ is
$V_{i,j} = 2^{\,m-j}$, which is strictly decreasing in $j$. Hence, applicants (ex-ante) agree on the order of positions.

Symmetrically, for every position $j\in [m]$ and applicant $i\in [n]$ we define

\begin{equation}\label{eq:g_ij}
    G_{j,i} =
\begin{cases}
2^{\,n-(i-1)} - \dfrac{i}{n+1}, & \text{with probability } \tfrac{1}{2}, \\[1.2ex]
\dfrac{i}{n+1}, & \text{with probability } \tfrac{1}{2}.
\end{cases}
\end{equation}

The expected value of $G_{j,i}$ is
$
U_{j,i} = 2^{\,n-i} ,
$
which is strictly decreasing in $i$. Hence, positions (ex-ante) agree on the order of applicants.  \footnote{We remark that as all positions have the same prior over applicants, ex-ante agreement on an order necessarily holds (and similarly for applicants over positions). 
}
\end{restatable}

{\begin{remark} 
    The construction in \cref{ex:exp-dist} uses two-point distributions for clarity. The example can be extended to continuous distributions by replacing each point mass by a uniform distribution on a sufficiently small interval around that value.
\end{remark}}

\begin{figure}[h]
    \centering
    \includegraphics[width=0.7\textwidth]{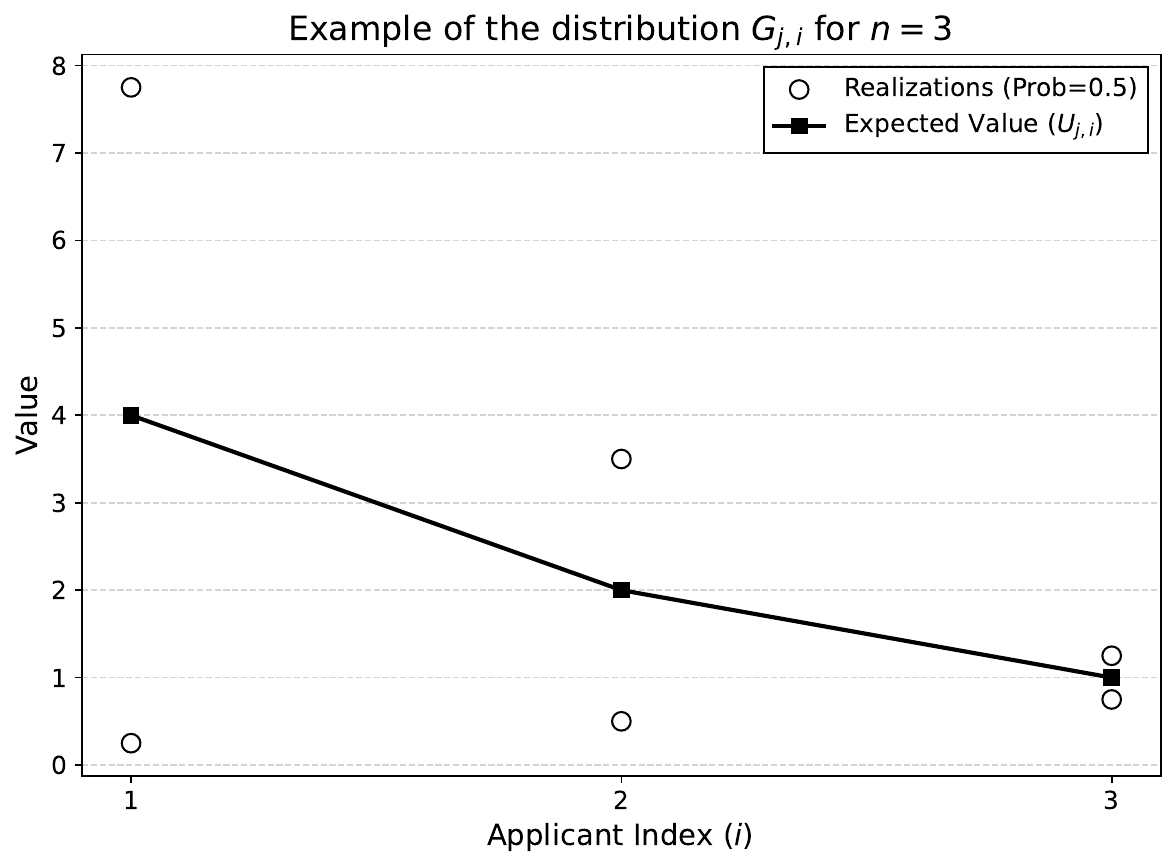}
    \caption{Illustration of the value distributions $G_{j,i}$ for $n=3$ described in \cref{ex:exp-dist}. Notice that the lower realizations increase linearly while the upper realizations decrease exponentially.
    Two nice properties are visible: 
    (1) For every $i' > i$, the entire support of $G_{j,i'}$ lies strictly below the expected value $U_{j,i}$. 
    (2) For every $i'$, the lower realization $\frac{i{'}}{n+1}$ is strictly below the expected value $U_{j,i'}$.}
    \label{fig:dist_G}
\end{figure}
\begin{figure}[h]
    \centering
    \includegraphics[width=1\linewidth]{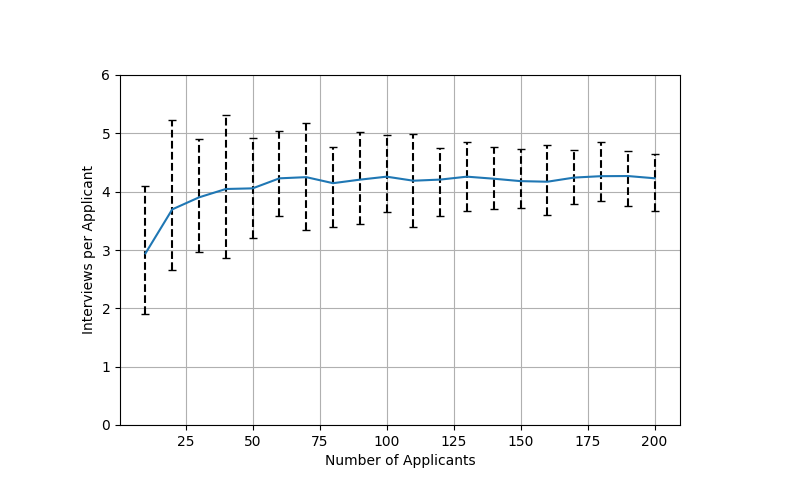}
    \caption{Simulation results for \cref{algorithm:serial-adaptive} with $n$ applicants and $n$ positions. Here, agents on both sides ex-ante agree on the ranking of the other side. Match values for all $i,j \in [n]$ are drawn independently from the distribution described in \cref{ex:exp-dist}. The plot shows the average number of interviews per applicant. {Results are averaged over 100 trials, with error bars indicating 95\% confidence intervals}.}
    \label{fig:simulation-exp}
\end{figure}

{
\subsection{Simulation of a Parallel Algorithm in the Bilaterally Ex-Ante Equivalent Setting}\label{appendix:simultation-prallel} 
Our hybrid algorithm was constructed to correspond to the analysis of \cref{algorithm:serial-adaptive} in the bilaterally ex-ante setting. However, we conjecture that a fully parallel algorithm---which attempts in every round to assign an interview to any unmatched applicant that needs to interview---has expected number of interview rounds which is $O(\log n)$. In \cref{fig:parallel}, we present simulation results for such an algorithm, which prioritizes interviews with unmatched positions and positions that are tentatively matched to applicants they do not interim like. It indicates that the average number of interview rounds grows at the order of $O(\log n)$. 
\begin{figure}[h]
    \centering
    \includegraphics[width=1\linewidth]{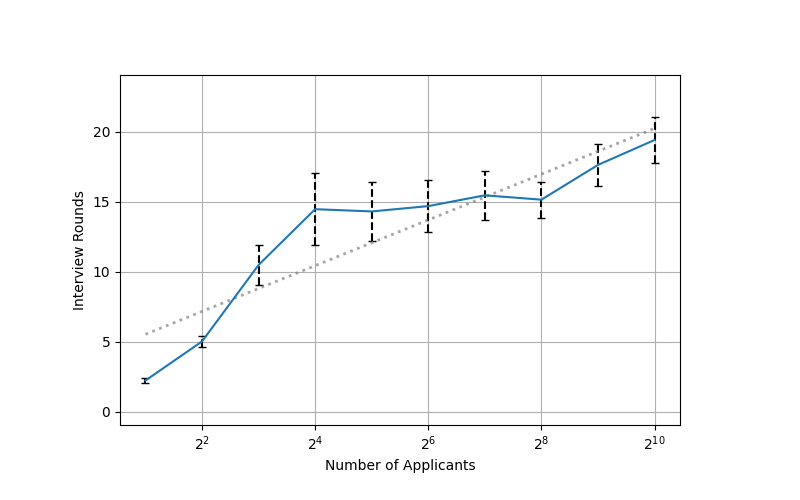}
    \caption{Simulation of a fully parallel algorithm showing the number of interview rounds. The instance is bilaterally ex-ante equivalent with $n$ applicants and $n$ positions. For all $i,j \in [n]$, the match values are drawn independently from $F_{i,j}=G_{j,i}=\mathcal{U}[0,1]$. Results are averaged over 100 trials{, with error bars indicating 95\% confidence intervals}. The dashed line denotes the least-squares linear fit. The horizontal axis is plotted on a {logarithmic} scale, consequently, linear growth in the plot corresponds to logarithmic growth in $n$.}
    \label{fig:parallel}
\end{figure} \section{Extensions }\label{sec:extensions}
\subsection{Ex-Ante Almost Equivalence}\label{appendix:extensions-almost}

{We proved our main results under the assumption that agents are ex-ante equivalent. In this subsection, we show that these results extend to settings in which agents are only ex-ante \emph{almost} equivalent.} This, in particular, includes settings that all agents are ex-ante equivalent (same expected value) yet each agent breaks ties arbitrarily (possibly in a different way).

Fix two small parameters $\varphi_U>0$ and $\varphi_L>0$. 
We say that \emph{applicants view positions as ex-ante almost equivalent} 
if there exist two values $V_H$ and $V_L$ such that for all $i,j\in[n]$, 
\[
V_H > V_{i,j} > V_L,
\qquad
\Pr_{v \sim F_{i,j}}[\,v > V_H\,] = \tfrac12 - \varphi_U,
\qquad
\Pr_{v \sim F_{i,j}}[\,v < V_L\,] = \tfrac12 - \varphi_L.
\]
Similarly, we say that \emph{positions view applicants as ex-ante almost equivalent} 
if there exist values $U_H$ and $U_L$ such that for all {$i \in [n], j \in [m]$},
\[
U_H > U_{j,i} > U_L,
\qquad
\Pr_{u \sim G_{j,i}}[\,u > U_H\,] = \tfrac12 - \varphi_U,
\qquad
\Pr_{u \sim G_{j,i}}[\,u < U_L\,] = \tfrac12 - \varphi_L.
\]
We say that a setting is \emph{bilaterally ex-ante almost equivalent}   if applicants view positions as ex-ante almost equivalent and positions view applicants as ex-ante almost equivalent.
For the ex-ante almost equivalent setting (where the upper thresholds $V_H$ and $U_H$ are defined),
we use the following notion.
\begin{definition} [interim strong like]
    Fix an applicant $a_i$ and a position $p_j$, and assume that $(a_i,p_j)\in Z$, i.e., they have already met for an interview. We say that \emph{$a_i$ interim strong likes $p_j$} if $\vutility{i}{j}{}>V_H$. Similarly, we say that \emph{$p_j$ interim strong likes $a_i$} if it holds that $\uutility{j}{i}{}> U_H$. We say that  $a_i$ and $p_j$ \emph{interim strong like each other}, if \emph{$a_i$ interim strong likes $p_j$} and \emph{$p_j$ interim strong likes $a_i$}.
\end{definition}

First, we obtain the following result, which holds for \cref{algorithm:serial-adaptive} as originally stated.
\begin{theorem}
    Consider an instance in which $m$ positions view $n$ applicants as ex-ante almost equivalent, for sufficiently small $\varphi_U>0$ and $\varphi_L>0$.  Then
\cref{algorithm:serial-adaptive} 
terminates with an interim-stable matching, and its adaptive interviewing process performs at most $4m\cdot \frac{1}{1-2\varphi_U}$ interviews in expectation. 
\end{theorem}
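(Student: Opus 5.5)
The proof mirrors that of \cref{thm:4m}. Interim stability is immediate: \cref{lemma:serial-stable} was proved for arbitrary instances, with no use of ex-ante equivalence, so it applies unchanged. The work is in the interview bound. I would replace ``interim like each other'' by ``interim strong like each other'' and re-establish an analogue of \cref{lem:position-stops-interview-after++}. The claim is that once $p_j$ interviews an applicant $a_i$ such that they interim strong like each other, $a_i$ proposes to $p_j$ in the next iteration, $p_j$ accepts, and $p_j$ never interviews again.

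For the applicant side of this analogue, the interview happened at the step where $a_i$ weakly preferred $p_j$, whose interim value was then $V_{i,j}$, to every position that had not rejected her. After the interview her value for $p_j$ exceeds $V_H$. Every position she has not interviewed still has interim value $V_{i,j'}<V_H$. Every interviewed position that has not rejected her had value at most the value that $p_j$ had when it was chosen, namely $V_{i,j}<V_H$; otherwise it would have been ranked above $p_j$. Hence $p_j$ is her strict top choice and she proposes to it.

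For the position side, $p_j$ agreed to interview $a_i$, so its current match was worth less than $U_{j,a_i}$. Its value for $a_i$ is now above $U_H>U_{j,a_i}$, so it accepts $a_i$. From then on, every later tentative match of $p_j$ has value above $U_H$, because tentative matches only improve. Any new applicant $a_{i'}$ it has not interviewed has prior value $U_{j,i'}<U_H$, so $p_j$ rejects her without an interview. This reproduces \cref{lem:position-stops-interview} and \cref{lem:position-stops-interview2}, with the strict gap $U_H>U_{j,i}$ now doing the work that ``$>U$'' did before.

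The counting step follows the proof of \cref{lem:4m}. Each interview of $p_j$ independently ends in mutual strong like with probability $(\tfrac12-\varphi_U)^2$, because the two realized values are independent. By \cref{lem:4m}'s argument, the expected number of interviews per position is then at most $1/(\tfrac12-\varphi_U)^2=4/(1-2\varphi_U)^2$.

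This is weaker than the stated bound $4m/(1-2\varphi_U)$. The main obstacle is therefore to shave one factor of $(1-2\varphi_U)$. I would first try to show that on the applicant side it suffices for $a_i$ to interim like $p_j$ in the ordinary sense, i.e.\ $v>V_{i,j}$, an event of probability exactly $\tfrac12$ under the standing median assumption. The hope is that this alone already makes $p_j$ her strict top choice, reproducing the proof of \cref{lem:one-sided-propose-accept}. The strong threshold would be needed only on the position side, where we must compare against the different priors $U_{j,i'}$. That would give per-interview success probability $\tfrac12(\tfrac12-\varphi_U)$ and the expected bound $4/(1-2\varphi_U)$ per position.

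The delicate point is that the applicants' priors $V_{i,j'}$ differ across positions. In the argument above, the positions $a_i$ had not interviewed had prior at most $V_{i,j}$ when $p_j$ was chosen, so ``$v>V_{i,j}$'' indeed dominates them all, and interviewed positions are handled as before. Since the statement assumes only positions are almost equivalent, applicant priors may be arbitrary, so this is exactly the argument of \cref{lem:one-sided-propose-accept}. Summing the per-position bound over the $m$ positions gives the claimed $4m/(1-2\varphi_U)$.
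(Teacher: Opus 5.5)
Your final argument is correct and is the intended one: you keep ordinary interim like ($v>V_{i,j}$, probability $\tfrac12$) on the applicant side and require interim strong like ($u>U_H$, probability $\tfrac12-\varphi_U$) on the position side. This reproduces \cref{lem:position-stops-interview-after++} and the counting of \cref{lem:4m} with per-interview success probability $\tfrac12(\tfrac12-\varphi_U)=\tfrac{1-2\varphi_U}{4}$, which gives exactly the stated constant (the paper states this extension without a written proof, as a direct adaptation of \cref{thm:4m}). Drop the initial mutual-strong-like detour, though: when only positions are assumed almost equivalent there is no threshold $V_H$ on the applicant side, so your claim $V_{i,j'}<V_H$ there was unfounded anyway.
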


When the setting is bilaterally ex-ante almost equivalent, we may modify line 4 in \cref{algorithm:serial-adaptive} as follows:
``Let $\pos^*$ denote the set of $a_{i^*}$’s most preferred positions from which she has not yet been rejected. All positions that $a_{i^*}$ has not yet interviewed with are treated as indifferent." Specifically, for every such $j$ with $V_{i^*,j}$, we temporarily assign value $V_H$ for the purpose of this step, however, ties are broken according to the original ex-ante expectations rather than lexicographically.
With the corresponding change in Subroutine~\ref{algorithm:pick-interviews-bilateral-equivalent}, the following equivalent theorems hold.

\begin{theorem}
    There exists an algorithm (\cref{algorithm:serial-adaptive}) that for any instance terminates with a matching that is interim stable, and in a bilaterally ex-ante almost equivalent setting, for sufficiently small $\varphi_U>0$ and $\varphi_L>0$, with $n$ applicants and $m\ge n$ positions, the expected number of interviews is
    $2n\cdot \frac{1}{1-2\varphi_U}+O(\log^3n)$.  
\end{theorem}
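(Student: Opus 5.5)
Stability is immediate from \cref{lemma:serial-stable}: its Gale--Shapley argument uses only that $Z$ grows monotonically and that a position's interim value for an applicant is frozen once they have interviewed. The modified line~4 changes only which position is chosen, and this choice plays no role in that argument. The substance is therefore the interview count, and the plan is to rerun the analysis of \cref{thm:2n} with ``interim likes'' replaced throughout by ``interim strong likes''.

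First, I would re-establish the structural lemmas in this language. The analogue of \cref{lem:applicant-like-propose} is: if $a_i$ interim strong likes $p_j$ after interviewing it, then $v^{Z}_{i,j}>V_H>V_{i,j'}$ for every non-interviewed $p_{j'}$. Since all non-interviewed positions are treated as having value $V_H$, $p_j$ becomes her top choice and she proposes next. The analogue of \cref{lem:position-stops-interview-after++} uses $u_{j,i}>U_H>U_{j,i'}$ for every non-interviewed $a_{i'}$. Such a position therefore rejects all new applicants without interviewing them. Any replacement it later accepts is valued above $U_H$, so this property persists, exactly as in \cref{lem:position-stops-interview2}.

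Next I need the analogue of \cref{lem:minus-breaks+}, which requires care. A proposal to an already-interviewed position not made immediately after the interview must come from an applicant who does not strong-like that position. I would also need that an applicant not strongly liking a position never proposes to it while some ``fresh'' position remains available. This holds if her realized value lies below $V_H$, which is the treated value of fresh positions. The awkward case is a value in $(V_{i,j},V_H)$. I would handle it by having the modified line~4 rank fresh positions at $V_H$ ahead of interviewed ones with values below $V_H$, under the tie-breaking stated in the text.

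With the structural lemmas in place, I would replace each success probability. A single strong-like occurs with probability $q=\tfrac12-\varphi_U$. A mutual strong-like occurs with probability $q^2$, which is bounded away from $0$ for small $\varphi_U$. For $d$-positions unhappy, each matched position fails to strong-like its partner with probability $1-q\ge\tfrac12$.

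The phase-one bound of \cref{lem:number-of-interviews-before-t0} becomes a geometric count with success probability $q$. This gives $n/q=2n\cdot\frac{1}{1-2\varphi_U}$ interviews in expectation.

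I would then choose $k=\lceil c\log n\rceil$ with $c$ large enough, depending on $\varphi_U$. This makes $(1-q^2)^k\le n^{-4}$ and $(1-q)^k\le n^{-10}$, after which \cref{lem:mutual-happiness-app}, \cref{lem:mutual-happiness-pos}, \cref{lem:prob-d-unhappy} and \cref{lem:prob-k-bad} go through verbatim. \cref{lem:interviews-after-t0} still gives $O(k^3)=O(\log^3 n)$ interviews after $t_0$, and the tail computation of \cref{lem:2n-for-small-m} and \cref{lem:2n-for-big-m} is unchanged.

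The main obstacle is the third step: the induction of \cref{lem:new-dont-propose-to++}, which guarantees that a non-strong-liking applicant never proposes to a taken position while at least $k$ unhappy positions remain. Here $\varphi_L$ enters. An applicant can value an interviewed position between $V_L$ and $V_H$, which is neither ``liked'' nor ``clearly disliked,'' and I would check that such middle-range values never outrank the fresh-position value $V_H$. If that ordering cannot be secured by the tie-breaking, the fallback is to argue that middle-range outcomes occur with probability $\varphi_U+\varphi_L$. For sufficiently small $\varphi$ this can be absorbed into the high-probability bad event, at the cost of worse constants in $k$.
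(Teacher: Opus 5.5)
Your overall plan is the right one: rerun the proof of \cref{thm:2n} with ``interim strong like'' and success probability $q=\tfrac12-\varphi_U$. The phase-one count $n/q$, the position-stopping lemmas, and the tail bounds with $k=\lceil c\log n\rceil$ are fine. However, you have placed the obstacle on the wrong side of the market.

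The applicant-side middle range needs no extra work. Under the modified line~4, fresh positions are scored at $V_H$. So an applicant who strongly likes none of her interviewed positions always approaches a fresh, non-rejecting position before proposing to an interviewed one. This is also the real reason stability survives. The Gale--Shapley argument does use line~4: it needs every proposal to go to the applicant's truly top non-rejecting position. That holds because an interviewed position outranks fresh ones only if its value exceeds $V_H>V_{i,j}$. Your remark that the choice ``plays no role'' is therefore not correct as stated.

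The genuine gap is on the position side, in the step of \cref{lem:new-dont-propose-to++} that produces a position willing to interview $a_{i^*}$. With your notion of unhappiness (the position does not strongly like its match), that step fails. Take a position $p_j$ whose value for its current match lies in $(U_{j,i^*},U_H)$. It does not strongly like its match, yet it interim prefers that match to the uninterviewed $a_{i^*}$, whom it values at $U_{j,i^*}$. So it rejects her without an interview. If all her fresh positions are of this type, she exhausts them and must propose to an interviewed position she does not strongly like. That position may still prefer her to its current partner, even a mutually strongly-liked one. This is exactly the cascade the induction is meant to exclude, so the induction and your $d$-positions-unhappy event do not carry over verbatim.

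The fix is to use the lower threshold: count positions whose value for their current match is below $U_L$.
\begin{itemize}
\item Such a position prefers every uninterviewed applicant, since $U_{j,i}>U_L$.
\item It cannot have rejected $a_{i^*}$ earlier. It would then have held a match worth at least $U_{j,i^*}>U_L$, and positions only improve.
\item At $t_0$, each matched position lies in this set independently with probability $\tfrac12-\varphi_L$. This is where the smallness of $\varphi_L$ is used, and Chernoff still yields $\Theta(k^3)$ such positions.
\end{itemize}
The bookkeeping must change as well. A position leaves this set whenever it accepts a proposer it values in $(U_L,U_H)$, which involves no mutual strong-like interview. So the decrease is no longer bounded by $Q(t)$. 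You must instead bound it by the number of proposals after $t_0$. Inside the induction this is at most $k$ for each of the at most $k(k-1)$ applicants active after $t_0$, hence below $k^3$, so you can take, say, $d=2k^3$.

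Your fallback of absorbing middle-range outcomes into the bad event cannot work. For fixed $\varphi_U,\varphi_L$, a constant fraction of the $\Theta(n)$ interviews land in the middle range. Such outcomes therefore have constant probability, whereas the bad event must have probability small enough that its contribution $nm\cdot p_{bad}$ stays $O(\log^3 n)$.
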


\begin{theorem}
       There exists an algorithm that conducts interviews in parallel 
     (\cref{algorithm:parallel-bilateral-instantiation}) and always terminates with an interim-stable matching, and in a bilaterally ex-ante almost equivalent setting, for sufficiently small $\varphi_U>0$ and $\varphi_L>0$, with $n$ applicants and $m\ge n$ positions, has the following guarantee:
     its expected number of interview rounds is $O(\log^3 n)$  and its expected number of interviews is $2n\cdot \frac{1}{1-2\varphi_U}+O(\log^3n)$. Additionally, when $m\ge n +  \lceil 10 \log n \rceil $ its  expected number of interview rounds is $O(\log n)$.
\end{theorem}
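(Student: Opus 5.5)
The plan is to rerun the proof of \cref{thm:hybrid} with every occurrence of ``interim like'' replaced by \emph{interim strong like}, using the thresholds $V_H,U_H$ in place of the common expectations $V,U$. Only the per-interview success probabilities change, so the constants change and nothing structural does.

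\textbf{Step 1: re-establish the structural lemmas.} With the modified line~4 and the matching change in Subroutine~\ref{algorithm:pick-interviews-bilateral-equivalent}, every uninterviewed position counts as value $V_H$ for the applicant. Three consequences follow.
\begin{itemize}
\item If $a_i$ strongly likes $p_j$, then $v_{i,j}>V_H$, so $p_j$ is her strict top choice. Hence she proposes immediately; this is the analogue of \cref{lem:applicant-like-propose}.
\item If $a_i$ does not strongly like the positions she has interviewed, she still prefers an uninterviewed position that has not rejected her. This is what drives the analogues of \cref{lem:before-to-no-propose-to-no-like} and \cref{lemma:parallel-matched-stays-matched}.
\item If $p_j$ is matched to an applicant it strongly likes, then any new tentative partner $a$ also satisfies $u_{j,a}>U_H>U_{j,i'}$ for every uninterviewed $a_{i'}$. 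So $p_j$ never interviews again, which is the analogue of \cref{lem:position-stops-interview-after++} and \cref{lem:position-stops-interview2}.
\end{itemize}
The analogues of \cref{lem:one-sided-propose-accept} and \cref{lem:minus-breaks+} go through verbatim. Stability is unaffected: \cref{lemma:serial-stable} and \cref{lem:hybrid-stable} use no distributional assumption.

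\textbf{Step 2: redefine the bad events and redo the probabilities.} The success events now have probability $\tfrac12-\varphi_U$ for an applicant and $(\tfrac12-\varphi_U)^2$ for a mutual strong like. I redefine $k$-eventual-happiness and $k$-eventual-mutual-happiness with strong likes and take $k=\lceil c\log n\rceil$ for a constant $c$ depending on $\varphi_U$, chosen so that $(1-(\tfrac12-\varphi_U)^2)^{k}\le n^{-5}$. The quantile-table arguments of \cref{lem:mutual-happiness-app,lem:mutual-happiness-pos,lem:happiness-app-for-general-k} then give $p_{bad}=O(n^{-3})$ as before.

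The Bayes computation in \cref{lem:number-of-interviews-before-t0} now gives success probability at least $\tfrac12-\varphi_U$ per interview in Phase~1. The expected number of Phase~1 interviews is therefore at most $n/(\tfrac12-\varphi_U)=2n/(1-2\varphi_U)$. In \cref{lem:phase1-termination} the number of still-interviewing applicants shrinks by a factor $\tfrac12+\varphi_U<1$ per round, giving $O(\log n)$ expected rounds when $m\ge n+k$. The fallback terms are handled exactly as in \cref{lem:rounds-k-log-is-log,lem:2n-for-big-m}.

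\textbf{Step 3, the main obstacle: the Phase~2 analysis, \cref{lem:new-dont-propose-to++}.} The inductive step needs many positions that will agree to interview an unmatched applicant they have not met. Under almost-equivalence, a position whose match value lies in $[U_L,U_H]$ may refuse, because $U_{j,i'}$ can fall below that value.

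I would therefore redefine ``$d$-positions unhappy'' to count positions whose match value is below $U_L$, together with unmatched positions. Every such position strictly prefers any uninterviewed applicant, since $U_{j,i'}>U_L$. At $t_0$ each matched position's value for the first applicant who strongly liked it is an independent draw, and it falls below $U_L$ with probability $\tfrac12-\varphi_L$. The Chernoff bound of \cref{lem:chernof} with mean $(\tfrac12-\varphi_L)(m-k+1)\gg k^3$ then recovers \cref{lem:prob-d-unhappy}.

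In the induction, a proposal can only raise a position's match value, so such positions leave the unhappy set only through a strong-mutual-like match, which is counted by $Q$; the bound $k^3-Q(t)$ survives. The counting in \cref{lem:interviewed-applicant-number,lem:++dont-break,lem:interviews-after-t0} then gives $O(k^3)=O(\log^3 n)$ Phase~2 interviews and rounds. Combining this with Step~2 as in \cref{lem:rounds-log3n} and \cref{lem:2n-for-small-m} yields the stated bounds.
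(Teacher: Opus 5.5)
The paper states this extension without proof, so there is nothing detailed to compare against. Your overall plan is the natural one: switch to \emph{strong} likes via $V_H,U_H$, redo the tail bounds, and count positions that value their match below $U_L$. But Step~3 relies on a claim that is false. You assert that a position leaves the set of positions that are free or value their match below $U_L$ only through a strong mutual like, so the bound $k^3-Q(t)$ survives. This fails once $U_L<U_H$. Such a position $p_j$ agrees to interview any new applicant $a$, since $U_{j,a}>U_L$. With probability $(\frac12-\varphi_U)(\varphi_U+\varphi_L)>0$, $a$ strongly likes $p_j$ while $u_{j,a}\in[U_L,U_H]$. Then $a$ proposes immediately, $p_j$ accepts, and $p_j$ leaves your set. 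No strong mutual like occurred, so $Q$ does not move. In the exactly equivalent case the two thresholds coincide at $U$, and that is the only reason the paper's $Q$-accounting closes. Here these one-sided events go uncharged. So the invariant in your version of \cref{lem:new-dont-propose-to++} breaks, and with it the bootstrap $Q(T)\le k(k-1)$ in \cref{lem:++dont-break} that you invoke.

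To repair this, charge every accepted post-$t_0$ proposal instead of $Q$. Bound these by the number $I(t)$ of interviews in $[t_0,t)$; under the inductive claim, each such proposal follows its own interview of a strongly liked position.
\begin{itemize}
\item Bound the applicants first, without $Q$ and without the induction. Every applicant who interviews after $t_0$ has at some point interviewed one of the $k-1$ positions free at $t_0$. This holds because, with $m\ge n$, a free position exists whenever someone is unmatched; free positions are tried first; and they never reject. Each such position conducts at most $k$ interviews before a strong mutual like, after which it never interviews again. Hence at most $k(k-1)$ applicants interview after $t_0$.
\item Now use the induction. If the claim held at all earlier iterations, strong mutual matches are still intact, so each of these applicants has at most $k$ interviews. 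Hence $I(t)\le k^2(k-1)$, and at least $k^3-k^2(k-1)=k^2\ge k$ positions remain free or below $U_L$. This closes the induction and gives the Phase~2 bound directly.
\end{itemize}

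Separately, your choice $k=\lceil c\log n\rceil$ with $(1-(\frac12-\varphi_U)^2)^k\le n^{-5}$ forces $c>12$ even at $\varphi_U=0$. But \cref{algorithm:parallel-bilateral-instantiation} hard-codes $k=\max\{\lceil 10\log n\rceil,m-n+1\}$. The Hall argument in \cref{lemma:parallel-matched-stays-matched} needs the happiness parameter to be at most that $k$. Either keep $c=10$ and target $n^{-4}$ per agent, or explicitly decouple the two roles of $k$. Keeping $c=10$ is exactly what ``sufficiently small $\varphi_U$'' provides, via $10\log(\frac34+\varphi_U-\varphi_U^2)\le -4$.
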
  \end{document}